%% file: main.tex
\documentclass[11pt,letterpaper,final]{article}
\input{header}

\usepackage{enumitem}
\usepackage{parskip}
\usepackage[notcite,notref]{showkeys}

\makeatletter
\def\thm@space@setup{%
  \thm@preskip=\parskip \thm@postskip=0pt
}
\makeatother

\newcommand{\newkappa}{\kappa(C^*)}

\newcommand{\LSred}{\LS'_{\text{red}}}
\newcommand{\LSblue}{\LS'_{\text{blue}}}
\newcommand{\lsd}{D_0}
\newcommand{\lsdres}{\smash{\widehat{D}_0}}

\newcommand{\err}{\blue{\delta}}

\usepackage{setspace}

\title{A Better-than-$2$ Approximation for $k$-Median\\
  Part II: Stable Instances}

\author{
Rares-Darius Buhai\thanks{EPFL, Lausanne and supported by the Swiss State Secretariat for Education, Research and Innovation (SERI) under contract number MB22.00054.}
\and
Vincent Cohen-Addad\thanks{Google Research, New York NY.}
\and
Henry Fleischmann\thanks{Computer Science Department, Carnegie Mellon
  University, Pittsburgh PA. Work supported by a National Science Foundation Graduate
Research Fellowship Program under Grant No(s).\ DGE2140739 and DGE2631988. }
\and
Anupam Gupta\thanks{Computer Science Department, Courant Institute of
  Mathematical Sciences, New York University, New York, and Google Research. Work supported in part by NSF awards CCF-2422926 and CCF-2608359.}
\and 
Antoni Jub\'es Monforte\thanks{EPFL, Lausanne}
\and 
Madhusudhan Reddy Pittu\thanks{Computer Science Department, Courant Institute of
  Mathematical Sciences, New York University, New York, and  H.\ Milton
  Stewart School of Industrial and Systems Engineering, Georgia Tech,
  Atlanta GA.}
\and 
Miltiadis Stouras\footnotemark[1]
\and 
Ola Svensson\footnotemark[1]
\and 
Lukas Vogl\footnotemark[1]
}

\date{}
\begin{document}

\maketitle

\begin{abstract}
  We consider the $k$-median problem on general metric spaces, subject
  to the condition that the instances are $\beta$-stable (i.e., the
  optimal solution using $k-1$ centers has a cost at least $(1+\beta)$
  times that of the optimal solution with $k$ centers). We give a
  $(1.89+\eps)$-approximation algorithm for such instances running in
  time $\exp(1/(\beta \times \poly(\eps)))\cdot \poly_{\eps}(n)$. In
  particular, for $\beta =\poly(\eps)/\log n$ and $\eps$ a tiny constant, this is a polynomial-time algorithm. This improves
  on previous work showing a $(2 + \eps)$-approximation algorithm with
  similar running time \cite{2approx}. Combined with the pseudo-approximation algorithm in our companion work~\cite{companion-paper}, this  
  results in a better-than-$2$ approximation algorithm for general $k$-median instances.

  Our main technical contribution is a novel nuanced analysis of local
  search that allows us to isolate a subset of clients on
  which we lose essentially nothing compared to the optimal solution,
  while losing an approximation factor of $3$ on the rest of the
  clients. We complement this with an algorithm based on sampling
  and partial enumeration, achieving a better-than-$2$
  approximation on the rest of the clients. Trading off these two algorithms yields our final
  approximation. 
\end{abstract}

\newpage 

\tableofcontents

\input{intro}

\input{prelims}

\input{outline}

\input{merged-structure}
\input{new-submod}

\input{new-local}




{\small
\bibliographystyle{alpha}
\bibliography{references}
}

\newpage
\appendix

\input{app-local-search}
\input{app-submod}
\input{cost-bd-lemma}

\end{document}

%% file: header.tex
\usepackage[T1]{fontenc}
\usepackage{babel}
\usepackage{verbatim}
\usepackage{bm}
\usepackage{amsmath,amstext,amsthm,amssymb}
\usepackage{float}
\usepackage{xspace}
\usepackage{graphicx}
\usepackage{mathtools}
\usepackage{subcaption,array}
\usepackage{wrapfig}
\usepackage{booktabs}
\usepackage{makecell}
\usepackage{array}
\usepackage{multirow}
\usepackage[usenames,dvipsnames]{xcolor}
\usepackage{boxedminipage,wrapfig}
\usepackage{pifont}%

\usepackage[letterpaper,top=1in,bottom=2cm,left=1in,right=1in,marginparwidth=1.75cm,footskip=1cm]{geometry}
\usepackage[font={small,it}]{caption}
\usepackage{amsmath, amssymb}
\usepackage{amsthm}
\usepackage{comment}
\usepackage{mdframed}
\usepackage{xspace}
\usepackage{multicol}
\usepackage{tabularx}
\usepackage{makecell}

\usepackage[ruled,vlined,linesnumbered,algonl,procnumbered]{algorithm2e}
\SetEndCharOfAlgoLine{}
\SetKwComment{Comment}{\footnotesize$\triangleright$\ }{}
\SetCommentSty{mycommfont}

\usepackage{enumitem}
\usepackage{nicefrac}

\usepackage{thmtools} 
\usepackage{thm-restate}

\newtheorem{theorem}{Theorem}[section]
\newtheorem{lemma}[theorem]{Lemma}

\newtheorem{corollary}[theorem]{Corollary}
\newtheorem{proposition}[theorem]{Proposition}

\newtheorem{claim}[theorem]{Claim}
\newtheorem*{claim*}{Claim}
\newtheorem{definition}[theorem]{Definition}

\usepackage{thmtools}

\allowdisplaybreaks

\declaretheoremstyle[
    bodyfont=\normalfont  %
]{upright}

\usepackage{graphicx}
\usepackage[colorlinks=true, allcolors=blue]{hyperref}

\newcommand{\eps}{\varepsilon}
\newcommand{\sse}{\subseteq}

\newcommand{\calA}{\mathcal{A}}
\newcommand{\calB}{\mathcal{B}}

\newcommand{\calD}{\mathcal{D}}

\newcommand{\calG}{\mathcal{G}}
\newcommand{\calH}{\mathcal{H}}

\newcommand{\calL}{\mathcal{L}}
\newcommand{\calM}{\mathcal{M}}

\newcommand{\calO}{\mathcal{O}}

\newcommand{\calS}{\mathcal{S}}

\DeclareMathOperator{\core}{core}
\DeclareMathOperator{\sopt}{\mathsf{opt}}
\DeclareMathOperator{\poly}{poly}
\DeclareMathOperator{\avg}{avg}
\DeclareMathOperator{\per}{per}
\DeclareMathOperator{\deficit}{def}
\DeclareMathOperator{\excess}{ex}
\DeclareMathOperator{\cost}{cost}

\DeclareMathOperator{\early}{early}
\DeclareMathOperator{\late}{late}
\DeclareMathOperator{\test}{test}

\renewcommand{\AA}{\mathbb{A}}
\newcommand{\BB}{\mathbb{B}}

\newcommand{\EE}{\mathbb{E}}
\newcommand{\FF}{\mathbb{F}}

\newcommand{\LL}{\mathbb{L}}
\newcommand{\MM}{\mathbb{M}}

\newcommand{\OO}{\mathbb{O}}

\newcommand{\RR}{\mathbb{R}}
\renewcommand{\SS}{\mathbb{S}}
\newcommand{\TT}{\mathbb{T}}
\newcommand{\UU}{\mathbb{U}}

\newcommand{\XX}{\mathbb{X}}
\newcommand{\YY}{\mathbb{Y}}
\newcommand{\ZZ}{\mathbb{Z}}

\newcommand{\OPT}{\mathbb{O}}
\newcommand{\LS}{\LL}

\usepackage[colorinlistoftodos,textsize=tiny,textwidth=2cm,color=green!50!gray]{todonotes}

\newlength\hwhatcwidth%
\newlength\hwhatuwidth%
\newcommand*\halfwidehat[1]{%
  \settowidth\hwhatcwidth{\ensuremath{#1}}%
  \settowidth\hwhatuwidth{\ensuremath{\mathrm{#1}}}%
  \addtolength\hwhatuwidth{-\hwhatcwidth}%
  \makebox[0pt][l]{%
    \kern\dimexpr0.25\hwhatcwidth-0.667\hwhatuwidth\relax%
    \ensuremath{\widehat{\vphantom{#1}\rule{0.5\hwhatcwidth}{0pt}}}%
  }#1%
}%

\usepackage{todonotes}

\newcommand{\nf}{\nicefrac}

\newcommand{\eat}[1]{}
\renewcommand{\emptyset}{\varnothing}

\newcommand{\blue}[1]{{\color{blue}#1}}

\usepackage[nameinlink]{cleveref}

\crefname{observation}{observation}{observations}
\Crefname{observation}{Observation}{Observations}
\Crefname{proposition}{Proposition}{Propositions}

\Crefname{algocf}{Algorithm}{Algorithms}
\crefname{algocfline}{line}{lines}
\Crefname{invariant}{Invariant}{Invariants}
\Crefname{claim}{Claim}{Claims}
\Crefname{subclaim}{Subclaim}{Subclaims}
\Crefname{table}{Table}{Tables}

\newcommand{\mypara}[1]{\medskip\noindent\textbf{#1}}

\newcommand{\margindef}[1]{\marginpar{~~~~{\footnotesize(\blue{#1})}}}

\newcommand{\proj}{\mathrm{proj}}

\newcommand{\olamearly}{\OO^{\early}_{\Lambda}}
\newcommand{\olamlate}{\OO^{\late}_{\Lambda}}

%% file: intro.tex
\section{Introduction}
\label{sec:introduction}

In the metric $k$-median problem, we are given a metric space
$(D \cup \FF, d)$ on a point set consisting of a set $\FF$ of
potential facility locations and a set of clients/demands $D$. The
goal is to find a subset $\SS \sse \FF$ of $k$ facilities such that
the $k$-median cost $\sum_{p\in D} d(p, \SS)$ is minimized. The
problem is $\mathsf{APX}$-hard, and obtaining a near-optimal
approximation guarantee has long been a central problem in the field
of approximation algorithms. A remarkable recent breakthrough showed a
polynomial-time $(2+\eps)$-approximation algorithm~\cite{2approx};
moreover, we cannot do better than a $(1+2/e)$-approximation unless
$\mathsf{P} = \mathsf{NP}$~\cite{GuhaK98,JMMSV03}.

Given the $\mathsf{APX}$-hardness result, it is natural to ask: \emph{can we get
  better approximations to the optimal solution in beyond-worst-case
  settings?} One such class of $k$-median instances are \emph{stable
  instances}, formalized by Ostrovsky, Rabani, Schulman, and
Swamy~\cite{ORSS12}: these are instances for which reducing the number
$k$ of available centers from $k$ to $k-1$ increases the optimal cost
substantially. Formally, given a metric space $(D \cup \FF, d)$,
define $\sopt_\ell$ to be the optimal cost of $\ell$-median on this
instance. Naturally, fewer centers result in a larger cost, and hence
$\sopt_k \leq \sopt_{k-c}$ for any $c \geq 0$.  An instance is called
\emph{$\beta$-stable} if $\sopt_{k-1} \geq (1+\beta)\sopt_k$. 
\emph{Can we beat the worst-case hardness results for $\beta$-stable
  instances?} Indeed, Awasthi, Blum, and Sheffet~\cite{ABS10} answered
this question in the affirmative, giving a $(1+\eps)$-approximation
in time $(nk)^{\poly(\nf1{\eps\beta})}$; Cohen-Addad and
Schwiegelshohn~\cite{cs17} gave an algorithm with runtime
$n^{\beta^{-1}\eps^{-O(1)}}$.

Since the runtime depends on $n^{1/\beta}$, the algorithms are not
polynomial-time for sub-constant $\beta$. This leads to the next
question: \emph{can we get a fixed-parameter PTAS: a
  $(1+\eps)$-approximation in time (say)
  $\exp\{(\beta\eps)^{-O(1)}\} \poly_\eps(n)$?} Sadly, this seems unlikely:
\cite{2approx} observe that getting a better-than-$(1+2/e)$ approximation
for all $\beta$-stable instances cannot be done (assuming the GAP-ETH
conjecture) in any time $f(\beta, \eps) \poly(n)$, based on the hardness
results from~\cite{AGKLL19}. This leads us to refine the question:
what approximation factor can be achieved in FPT-PTAS time?

Complementing their hardness, \cite{2approx} also gave a
$(2+\eps)$-approximation algorithm with runtime
$\exp(\beta^{-1} \eps^{-O(1)}) \poly_\eps(n)$. We build on and simplify their
ideas, and add several new ones, to achieve our main result for stable
instances.
\begin{restatable}[Main Theorem]{theorem}{mainResult}\label{thm:main-result}
  Let $(D \cup \FF, d)$ be a $\beta$-stable instance of $k$-median and
  $\eps \in (0, \nf{1}{200})$. There exists an algorithm running in
  $\exp\{O(\beta^{-1} \poly(1/\eps))\} n^{\poly(1/\eps)}$ time which with
  probability $1- \eps$ outputs $\SS \subseteq \FF$ with $|\SS| = k$
  such that
  \[ \cost(\SS) \leq (1.89 +
    O(\eps))\sopt_k. \]
\end{restatable}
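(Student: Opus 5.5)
The plan follows the strategy announced in the abstract: run two algorithms, one based on local search and one based on sampling with partial enumeration, and output the cheaper of the two solutions. Both are analyzed against the same partition of the clients.

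\textbf{Structure from stability.} Fix an optimal solution $\OO$ with clusters $O_1,\dots,O_k$. Following \cite{2approx}, $\beta$-stability implies the following. Each optimal cluster $O_i$ has a core: a ball around its center containing most of the cluster's points, with radius comparable to the cluster's average cost. In any near-optimal solution, each optimal center has a distinct solution center close to its core, since otherwise one optimal center could be removed at cost below $\beta\,\sopt_k$. I would use this to run local search with swaps of size $\poly(1/(\beta\eps))$ and obtain a local optimum $\LL$. The intended guarantee is that $\LL$ and $\OO$ are matched one-to-one up to a small set of unmatched centers, where ``small'' is measured by the cost on those clusters rather than by their number.

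\textbf{Refined local search analysis.} This is the main technical step. Using swap pairs from the standard Arya et al.\ analysis restricted to matched pairs, I would split the clients into two groups.
\begin{itemize}[nosep]
\item A set $D_0$ of clients served in $\LL$ by the partner of their optimal center, in a ``well-separated'' configuration. On $D_0$ the swap inequalities give $\cost_\LL(D_0)\le(1+\eps)\cost_\OO(D_0)$.
\item The remaining clients $D_1 = D\setminus D_0$. On these the usual reassignment argument yields a factor $3$.
\end{itemize}
Writing $\alpha = \cost_\OO(D_1)/\sopt_k$, this gives $\cost(\LL)\le (1+\eps)(1-\alpha)\sopt_k + 3\alpha\,\sopt_k$. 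I expect this step to be the main obstacle: each swap charges reassignment costs to clients that are themselves good, and these charges must be prevented from leaking, which needs the submodularity-style accounting carried out in later sections.

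\textbf{Second algorithm and trade-off.} The second algorithm targets the case where $D_1$ carries a large fraction $\alpha$ of the cost. Start from $\LL$. Guess (by enumeration of size $\exp(\poly(1/(\beta\eps)))$, or by sampling from cores) the facilities of $\LL$ that should be replaced, together with approximate cluster radii. Then reopen replacement centers by sampling clients and using a submodular, greedy-style selection of new facilities for the bad clusters. The target bound is $\cost\le(1+\eps)\sopt_k + c\,\alpha\,\sopt_k$, or more precisely some function $g(\alpha)$ that is strictly better than $2$ for large $\alpha$ and that improves on the factor $2$ of \cite{2approx} on the expensive part. Taking the better solution gives
\[\min\bigl\{1+2\alpha,\ g(\alpha)\bigr\}+O(\eps).\]
I would then optimize over $\alpha\in[0,1]$ numerically to obtain the constant $1.89$.

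\textbf{Running time and success probability.} Enumeration over the bad set costs $\exp(O(\beta^{-1}\poly(1/\eps)))$. Local search with constant-size swaps costs $n^{\poly(1/\eps)}$. Repeating the sampling steps boosts the success probability to $1-\eps$.
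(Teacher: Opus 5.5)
\textbf{Verdict: there is a genuine gap.} The split of clients you propose makes the win-win vacuous, and the idea that actually makes the combination work is missing.

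\textbf{The decomposition is misaligned.} You make local search near-optimal on the matched clients (your $D_0$) and let it lose $3$ on $D_1$. The second algorithm keeps the matched centers of $\LL$, so it is also near-optimal on your $D_0$, and it is supposed to beat $2$ on $D_1$. But if it achieved some factor $\gamma<2$ on $D_1$, its cost $1+(\gamma-1)\alpha$ would lie below $1+2\alpha$ for every $\alpha$. Local search would then never help, and sampling alone would beat $2$, which is precisely what it does not achieve.

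\textbf{What the paper isolates instead.} Sampling plus submodular maximization pays about $2$ on a specific set \emph{inside the unsynced clusters}: the cores $D_\kappa=\bigcup_{c^*\in\OPT_R}\core_\kappa(C^*)$ of the ``remaining'' clusters.
\begin{itemize}
\item These cores are invisible to $D$-sampling because they are almost free in $\LL$: $\cost_\LL(\core_\kappa(C^*))<\eps^3\cost_\OO(C^*)$.
\item Hence no dummy is created for them. They can only be sent to a satellite $\sigma(c^*)$ with $d(c^*,\sigma(c^*))\le(1+\eps)\per(C^*)$, at cost $d(p,c^*)+\per(C^*)\approx 2\,d(p,c^*)$.
\end{itemize}

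\textbf{The missing step.} You need to show that local search loses essentially nothing on exactly these clients, even though their clusters have no matched partner in $\LL$. Arya et al.\ swaps restricted to matched pairs cannot give this, and submodularity plays no role here. The paper proceeds as follows.
\begin{itemize}
\item It builds assignments $\mu_\AA,\mu_\BB$ that are cross-free with respect to $D':=D^{\early}_\Lambda\cup D_R$.
\item Under $\mu_\BB$, each $\OPT_R$ cluster absorbs the local-search clusters of its satellites. This is cheap because satellite clusters are hidden in $\LL$.
\item Under $\mu_\AA$, those absorbed clients stay at $\pi_\LL$.
\item It then uses a random spanning collection of swaps on the resulting star graph. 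Hubs of big left stars are never swapped, and left leaves are used only $1+8\eps$ times in expectation.
\end{itemize}
This yields $\cost(\LL)\le\cost_{\mu_\BB}(D')+\cost(\mu_\AA)-\cost_\LL(D')+O(\eps)\bigl(\cost(\mu_\AA)+\cost_{\mu_\BB}(D')\bigr)$. Here $\cost_\LL(D')$ cancels against the same term in the bound on $\cost(\mu_\AA)$.

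\textbf{Running time.} Swaps of size $\poly(1/(\beta\eps))$ give running time $n^{\poly(1/(\beta\eps))}$, which is not of the claimed form. Your last paragraph silently switches to constant-size swaps. The paper uses $O(1/\eps)$-swaps, and stability enters only through the sync lemma. That lemma bounds the \emph{number} of unsynced optimal clusters by $s_0=\eps^{-2}\beta^{-1}$, while those clusters may carry a large part of $\sopt$. It is this count, not a cost bound, that makes client sampling, cluster sampling and color-coding succeed with probability $\exp\{-O(s_0\log(1/\eps))\}$. A smallness-in-cost statement is not what stability gives, and it would not bound your enumeration.

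\textbf{The constant $1.89$.} With the bounds this method delivers, a scalar trade-off of $1+2\alpha$ against $2-(1-\eta)\alpha$ balances at $(5-\eta)/(3-\eta)\approx1.95$ for $\eta=0.89$. Reaching $1.89$ requires the paper's per-cluster balance, which uses:
\begin{itemize}
\item coefficients depending on $x(C^*)=\per(C^*)/\avg_\OO(C^*)$ and on the core saving $y_\kappa(C^*)$;
\item the early/late split of the dummy clusters;
\item a core radius $\kappa(C^*)$ chosen separately for each cluster, subject to the radius--diameter condition needed in the structural theorem;
\item the fixed convex weights $a=3/25$ and $1-a$.
\end{itemize}
Leaving $g(\alpha)$ unspecified and ``optimizing numerically'' supplies none of this.
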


Moreover, in a companion paper~\cite{companion-paper}, we show how
\Cref{thm:main-result} can be combined with a pseudo-approximation
algorithm opening $k + O(\eps^{-O(1)} \log n)$ centers with cost
$(1.95 + \eps)\sopt_k$ to get the first \emph{polynomial-time}
better-than-$2$-approximation algorithm for arbitrary $k$-median
instances:
\begin{theorem}
  \label{thm:kmed-approx}
  There is a randomized polynomial-time $1.95$-approximation algorithm for
  $k$-median.
\end{theorem}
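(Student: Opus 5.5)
The plan is to derive \Cref{thm:kmed-approx} as a black-box combination of \Cref{thm:main-result} with the pseudo-approximation of the companion paper~\cite{companion-paper}. The companion result outputs, in polynomial time, a set of $k + c$ centers with $c = O(\eps^{-O(1)}\log n)$ and cost at most $(1.95+\eps)\sopt_k$. We fix $\eps$ to a small absolute constant, say $\eps = 10^{-3}$, and set $\beta := \poly(\eps)/\log n$. The exact polynomial is chosen so that the running time $\exp\{O(\beta^{-1}\poly(1/\eps))\} n^{\poly(1/\eps)}$ of \Cref{thm:main-result} is polynomial in $n$.

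Consider the chain $\sopt_k \le \sopt_{k-1} \le \dots \le \sopt_{k-c'}$, with $c'$ to be chosen below. We run \Cref{thm:main-result} on the $\ell$-median instance for every $\ell \in [k-c', k]$, and we also run the pseudo-approximation with target $\ell$ for every such $\ell$. The argument then splits into two cases.

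\emph{Case 1: some $\ell \in [k-c'+1, k]$ has $\sopt_{\ell-1} \ge (1+\beta)\sopt_\ell$.} Take the largest such $\ell$. The $\ell$-median instance is then $\beta$-stable, so \Cref{thm:main-result} returns $\ell \le k$ centers of cost at most $(1.89+O(\eps))\sopt_\ell$. By maximality of $\ell$, every step from $\ell$ up to $k$ loses less than a factor $1+\beta$, so
\[
\sopt_\ell \le (1+\beta)^{k-\ell}\sopt_k \le (1+\beta)^{c'}\sopt_k .
\]

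\emph{Case 2: no such $\ell$ exists.} Then $\sopt_{k-c'} \le (1+\beta)^{c'}\sopt_k$. Running the pseudo-approximation with target $k-c'$ opens $k-c'+c$ centers, so we need $c' \ge c$ for this to be at most $k$. The cost is at most $(1.95+\eps)(1+\beta)^{c'}\sopt_k$.

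Choosing $c' = c = O(\eps^{-O(1)}\log n)$ gives $(1+\beta)^{c'} \le \exp(\beta c') = \exp(\poly(\eps)\cdot \eps^{-O(1)})$. This is $1+O(\eps)$ provided the $\poly(\eps)$ in $\beta$ has a larger exponent than the $\eps^{-O(1)}$ in $c$. That requirement is what fixes the polynomial in $\beta$, and it keeps the running time polynomial because $\beta^{-1}\poly(1/\eps) = O(\log n)\cdot\poly(1/\eps)$.

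We do not know which case holds, so the algorithm outputs the cheapest of all candidate solutions. Each candidate has at most $k$ centers, and padding with arbitrary facilities only lowers cost. We have $O(\log n)$ polynomial-time runs. For the randomized runs we amplify the success probability by repetition and take the minimum. The resulting guarantee is $\max(1.89,1.95)\cdot(1+O(\eps)) \le 1.95+O(\eps)$.

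The main obstacle is reaching exactly $1.95$ rather than $1.95+O(\eps)$. I would handle it by assuming the companion paper's factor is really $1.95-\Omega(1)$ before the $\eps$ slack, as in a bound like $1.949+\eps$, and then absorbing the loss with a small constant $\eps$. Otherwise the theorem should be read with the stated constant rounded.
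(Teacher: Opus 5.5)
Your proposal is correct and follows the route the paper indicates. The paper does not prove \Cref{thm:kmed-approx} itself; it says the result follows by running \Cref{thm:main-result} with $\beta=\poly(\eps)/\log n$ and combining it with the companion paper's $(1.95+\eps)$ pseudo-approximation opening $k+O(\eps^{-O(1)}\log n)$ centers, which is exactly the Li--Svensson-style case split you carry out (the largest stable $\ell$ versus $\sopt_{k-c}\le(1+\beta)^c\sopt_k$). The paper is equally loose about $1.95$ versus $1.95+O(\eps)$, so reading the constant as rounded is reasonable, and the only detail worth adding is the trivial case $k\le c$, where the convention $\sopt_0=\infty$ (or solving $1$-median exactly) makes Case~1 apply.
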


\subsection{Technical Overview}

Our algorithm starts with the following observation: the $k$-median
function $\cost(\SS)$ is a \emph{supermodular} decreasing function of
the set $\SS$---so given any \emph{base solution} $\SS_0$, the
\emph{improvement function} $\cost(\SS_0) - \cost(\SS_0 \cup \SS)$ is
increasing and submodular. Now we can use submodular function
maximization (with side constraints) on this improvement function to
find a set $\AA$ with $|\AA| = k$ such that
\begin{gather}
  \cost(\AA \cup \SS_0) \leq (1-\nf1e) \, \cost(\OPT \cup \SS_0) +
  (\nf1e) \, \cost(\SS_0). \label{eq:tradeoff}
\end{gather}
Note that $\AA \cup \SS_0$ has too many centers. So if we could choose
$\SS_0$ to ensure: (a)~$\cost(\SS_0) \leq 3\sopt$, and (b)~the
solution $\AA \cup \SS_0$ could be converted into a solution only
using $\AA$ with little increase in cost---e.g., if the clusters
corresponding to centers in $\SS_0$ in the clustering for
$(\AA \cup \SS_0)$ were empty---we would get the optimal result
$\cost(\AA) \leq (1+2/e)\, \sopt$. How can we find a good base
solution $\SS_0$? This approach was previously used by \cite{AGKLL19}
who guess a client in each optimal cluster, along with its distance to
the optimal center, create a ``dummy'' center at that distance, and
use the set of $k$ dummies as the base solution. However, this
approach runs in $f(k)\poly(n)$ time for exponentially growing $f$,
whereas we want polynomial-time solutions.

\paragraph{Step 1: Sync, Substitute, and Sample.}
We start off with the approach used in \cite{2approx}: we use
$\beta$-stability to infer that the local search solution is already
very ``close to'' an optimal solution.  Indeed, consider the
local-search solution $\LS$ using $O(1/\eps)$-sized swaps, which is
known to be a $(3+\eps)$-approximation~\cite{arya2001local}. But that
is not all: \cite{cs17} show that all but
$s_0 = O(\eps^{-2}\beta^{-1})$ of the centers in $\LS$ can be matched
(``synced'') to the optimal centers, such that substituting these
centers for the corresponding optimal centers incurs very small cost
(\Cref{lem:few-not-in-sync}). If we knew the identity of these
``unsynced'' $s_0$-many centers in $\LS$, we could fix the remainder
and try to find replacements for just the few bad ones. In fact, we do
not need to replace all the unsynced centers: if an optimal center
$c^*$ has a center $c$ in $\LS$ at distance somewhat less than the
average cost of a client in $c^*$'s cluster, we can just use $c$ to
substitute for $c^*$ with little increase in cost. There are other
substitutions possible as well. Our approach, like in \cite{2approx},
is to first sample clients and then centers in $\LS$ (both with probability
proportional to their costs in the local-search solution), performing
some more substitutions along the way to get our main structural
result (\Cref{thm:structural}) which shows the existence of a
partition of the optimal centers $\OPT$ into four parts
$\OPT_S \cup \OPT_{\Lambda} \cup \OPT_P \cup \OPT_R$:
\begin{itemize}
\item \emph{Substituted centers} ($\OPT_S$): the clients in these
  centers' clusters can be reassigned to a nearby center already in
  $\LS$ while increasing their cost by only a factor of roughly
  $1+ \eta \approx 1.89$. Almost all clusters fall into this case.
\item \emph{Centers with dummies ($\OPT_\Lambda$) or backups
    ($\OPT_P$)}: for these centers, we can explicitly find ``dummy''
  centers to add to the instance, or we are promised substitute
  centers in $\LS$ which can serve as good ``backups.''  Dummies and
  backups act as part of the ``low-cost base solution'' $\SS_0$ in the
  sketch at the start of this section.
\item \emph{``Remaining'' centers} ($\OPT_R$): the remaining clusters
  have several somewhat nearby centers in $\LS$ (their
  \textit{satellites}), but the local-search clusters for these
  satellites are too inexpensive to detect via sampling, and
  too spread out to permit cost-efficient substitution.
\end{itemize}

As we mentioned, the dummy centers are explicitly constructed by two
sampling routines: the first samples \emph{clients}
(\Cref{thm:per-client-sampling}) and the second samples \emph{centers}
(\Cref{thm:expensive}). Crucially, this sampling is effective because
$\OPT$ and $\LS$ nearly agree on most centers.  Combining this with
the submodular maximization ideas from \cite{2approx}, we can get that
for some carefully chosen parameter $\eta < 1$ and some
$D_{\kappa} \subseteq D$, we have
\begin{gather}
  \cost(\SS) \leq (1 + \eta)\cost_{\OPT}(D \setminus D_{\kappa}) + 2
  \cost_{\OPT}( D_{\kappa}) + O(\eps) \sopt.   \label{eq:mainblah} 
\end{gather}
I.e., for a ``difficult'' set of clients $D_\kappa$, we lose a factor
of $2$ on the cost they incur in $\OPT$, despite getting a
better-than-$2$ factor on the rest of the clients. The clients $D_{\kappa}$
are precisely the clients in clusters centered in $\OPT_R$ which are
close to the center (i.e., within its ``core''). If we were able to
sample these clients, they would give good dummy centers, so they must be
invisible to sampling. Since we can sample according to client costs
in $\LS$, these clients must have had exceptionally low cost in $\LS$.

This part of the algorithm follows the general approach of the
$(2+\eps)$-approximation~\cite{2approx}, but extends it in two ways:
(a)~we have to exercise care to get better-than-$2$-approximations on
most of the clients, and (b)~we use a color-coding idea to
substantially simplify a center-deletion step and the submodular
optimization portion. Nonetheless, we are left with the worry that
most of the clients may lie in $D_\kappa$ and incur a factor-$2$
loss.

\paragraph{Step 2: Local Search to the Rescue.} Our final novel
ingredient is to show that the local search solution $\LS$ actually
performs better-than-advertised on these clients in $D_\kappa$. We
extend the standard analysis giving a $(3+\eps)$-approximation to give
the following nuanced bound:
\begin{gather}
  \cost(\LS) \leq 3\cost_{\OPT}(D \setminus D_{\kappa}) + \cost_{\OPT}(D_{\kappa})
  + O(\eps) \sopt.  \label{eq:mainLS}
\end{gather}
I.e., we lose nothing on the ``difficult clients'' $D_\kappa$, while
losing the usual factor of three on the rest. The ideas behind this
analysis are the main technical innovation of the paper, and we
elaborate on these in the next section. Actually, to optimize the
approximation factor all the way down to $1.89$, we use a more
fine-grained form of this bound. Before jumping into that discussion,
let us remark that even this weaker local search cost bound yields a
better-than-two approximation: we can balance the two bounds from
\eqref{eq:mainblah} and \eqref{eq:mainLS} to get a
$\left(\frac{5 - \eta}{3 - \eta} + O(\eps)\right)$-approximation,
which is strictly less than two for $\eta < 1$.

\subsubsection{The Nuanced Local Search Guarantee}
\label{sec:LS}

Let us give a sense of our nuanced local search guarantee for
\emph{$\ell$-swap local search}, and how we use its properties in our
work. The algorithm starts from an arbitrary solution $\SS$ of $k$
centers and checks if there exist subsets $S\sse \SS, T \sse \FF$ with $T \cap \SS = \emptyset$ and $|S| = |T| \leq \ell$ so that the $(S,T)$-swap gives a cost reduction  $\cost((\SS \setminus S) \cup T) < \cost(\SS)$. If so, it updates $\SS \gets (\SS \setminus S) \cup T$, and continues. Arya et al.~\cite{arya2001local} show that any locally optimal solution for this process is a $(3+2/\ell)$-approximation. Using a standard idea of starting from some reasonable solution and taking a local swap only if it improves the cost by a $(1+\eps/k)$ factor ensures a $(3+2/\ell+\eps)$-approximation in polynomial time.

\setlength{\intextsep}{-4pt}%
\begin{wrapfigure}{R}{0.45\textwidth}
  \centering
  \includegraphics[width=0.35\textwidth]{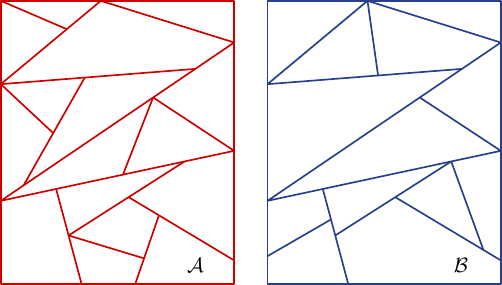}
  \caption{Cross-free clusterings: each pair of blue and red clusters
    are either disjoint, or one is contained within the other.}
  \label{fig:cross-free} 
\end{wrapfigure}

\mypara{A nuanced analysis.}
In our work we will require a more careful bound on the local-search
guarantee. We start by defining a condition for two
clusterings---which we should think of as the local search and optimal
clusterings---to be ``mutually compatible'': this is when no clusters
``cross''. 

\begin{definition}[Cross-Free
  Maps/Clusterings] \label{defn:cross-free} A pair of clusterings $\calA, \calB$ of the client
  set $D$ is called \emph{cross-free} if for any clusters
  $A \in \calA, B \in \calB$, either $A \cap B = \emptyset$, or else
  $A \sse B$ or $B \sse A$.  By extension, a pair of $k$-median
  solutions and assignments $(\AA, \mu_{\AA}),(\BB, \mu_\BB)$ is
  called cross-free if the corresponding clustering pair induced by
  them is cross-free.
\end{definition}

Now suppose that the optimal solution $\OPT$ and the local-search solution $\LS$
were miraculously cross-free (under the closest-facility
assignments $\pi_{\OPT}$ and $\pi_{\LS}$). We show that, surprisingly,
this structure is enough to show that the local search solution has
$\cost(\LS) \leq (1 + O(\eps)) \sopt$. Crucially for our application,
this cost bound can be generalized to $(1 + O(\eps)) \sopt$ plus a
term capturing the ``distance'' between the clustering induced by
$\pi_{\LS}$ and another assignment $\mu_{\AA} : D \to \LS$ ensuring
cross-freeness. We state the formal guarantees below.

\begin{theorem} \label{thm:simpler-ls} Let $(D \cup \FF, d)$ be an
  instance of $k$-median and $\eps > 0$. Let $\LS \subseteq \FF$ with
  $|\LS| = k$ be a solution for local search with $O(1/\eps)$-sized
  swaps, and let $\BB$ be another benchmark solution with $|\BB| =
  k$. Given arbitrary assignments $\mu_\AA: D \to \LS$ and
  $\mu_\BB: D \to \BB$ such that $(\LS, \mu_\AA), (\BB, \mu_\BB)$ is
  cross-free, the clustering induced by the local-search solution
  $(\LS, \pi_\LS)$ has cost
  \begin{equation}
     \cost(\LS) \leq \cost(\mu_\BB) + 
    (\cost(\mu_\AA) - \cost(\LS)) + O(\eps) (\cost(\mu_\AA)+\cost(\mu_\BB)).
  \end{equation}
\end{theorem}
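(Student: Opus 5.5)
The plan is to run a local-search swap analysis in the style of Arya et al., but with swaps built from the cross-free structure instead of the usual capture-graph pairing. Cross-freeness of $(\LS,\mu_\AA)$ and $(\BB,\mu_\BB)$ makes the clusters form a laminar family. Each $\mu_\BB$-cluster $B$ (center $b$) is either contained in some $\mu_\AA$-cluster, or it is a union of $\mu_\AA$-clusters. The same holds with the roles of $\AA$ and $\BB$ exchanged.

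I would build a forest whose nodes are all clusters of both clusterings, ordered by containment. Each connected component is a tree. If its root is an $\AA$-cluster, the root contains several $\BB$-clusters, each of which contains several $\AA$-clusters, and so on. For a component $T$, let $a(T)$ be the number of $\LS$-centers appearing in it and $b(T)$ the number of $\BB$-centers. Both clusterings partition the same set $D$, and every center has a nonempty cluster (empty clusters are handled separately as free centers). So the components partition both center sets, although $a(T)$ and $b(T)$ may differ.

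\textbf{Step 1: per-component swaps.} For each component, or each piece of a component once large components are cut into $O(1/\eps)$-sized pieces, I consider the swap that removes the $\LS$-centers of that piece and adds its $\BB$-centers. Each client $p$ in the piece is reassigned to $\mu_\BB(p)$, at cost $d(p,\mu_\BB(p))$. The key point is that cross-freeness controls the clients displaced from removed centers: a client in an $\AA$-cluster of the piece lies, by laminarity, inside a $\BB$-cluster of the same component. So if the piece is the whole component, every displaced client goes to its own $\BB$-center. Local optimality then gives $0\le \sum_{p\in\text{piece}}\big(d(p,\mu_\BB(p))-d(p,\LS)\big)$, up to the $(1+\eps/k)$ slack. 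Summed over components, this would give $\cost(\LS)\le\cost(\mu_\BB)$ plus error. The term $\cost(\mu_\AA)-\cost(\LS)$ appears because clients whose $\pi_\LS$ differs from $\mu_\AA$ may lose their $\pi_\LS$-center in some swap while lying in a different component. Such a client $p$ is reassigned to $\mu_\AA(p)$ if that center is still open, at cost $d(p,\mu_\AA(p))-d(p,\LS)$. This is charged once per swap in which $\pi_\LS(p)$ is removed, and each center is removed in $O(1)$ swaps, so the total is $O(1)\cdot(\cost(\mu_\AA)-\cost(\LS))$. To get coefficient exactly $1$ I would use fractional, averaged swaps, so that each center is removed with total weight $1$.

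\textbf{Step 2: unbalanced and large components.} This is the main obstacle. When $a(T)\ne b(T)$, or the component has more than $1/\eps$ centers, the swap must be cut into pieces. Within a piece, some displaced clients then have their $\BB$-center in another piece. I would handle this with an averaging argument on the tree. Order the tree's nodes, for instance by a DFS, and partition them into consecutive groups of size about $1/\eps$, over $1/\eps$ random shifts. This way each tree edge (a containment between an $\AA$-cluster and its parent $\BB$-cluster, or the reverse) is cut with probability $O(\eps)$.

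A client whose chain crosses a cut edge is rerouted through a nearby open center. That is, it goes to the $\BB$-center of its $\BB$-cluster if that center is open, or else to the $\LS$-center of its $\AA$-cluster. Its cost is bounded by the triangle inequality by $O(d(p,\mu_\AA(p))+d(p,\mu_\BB(p)))$. The cut probability is $O(\eps)$, which yields the error term $O(\eps)(\cost(\mu_\AA)+\cost(\mu_\BB))$. The triangle-inequality detour needs the rerouted client's target center to be reachable through a client of the relevant cluster. I would handle this by routing to the center of the parent cluster and averaging over that cluster's members, as in the standard $3$-approximation proof.

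Imbalance $a(T)>b(T)$ is fine: we simply close extra centers, whose clients are covered by $\BB$-centers in the same component. For $a(T)<b(T)$, the global counts are equal, so the deficits are compensated by components with surplus; I would pair them across components using the $\LS$-centers that are empty, or closed, in $\mu_\AA$. Finally, summing the swap inequalities over all pieces and shifts, each client appears with weight $1$ in its main term, which gives the claimed bound.
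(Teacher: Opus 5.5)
Your overall shape matches the paper: bounded-size swaps, displaced clients sent to $\mu_\BB$, clients that merely lose their $\pi_\LS$-center sent to $\mu_\AA$, and averaging over a distribution of swap collections. The gap is that Step~2, which is really the whole proof, does not work as proposed.

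\textbf{The components are stars, and rerouting fails on them.} Since $\calA$ and $\calB$ both partition $D$, a chain $A\supsetneq B\supsetneq A'$ is impossible. So every component is a star: one cluster of one clustering that is a union of clusters of the other. Hence $a(T)=b(T)$ only for paired clusters, and Step~1 handles almost nothing. Now take an $\LS$-center $c$ whose cluster $A_c$ contains $s\gg 1/\eps$ clusters of $\calB$.
\begin{itemize}
\item Grouping the DFS order (hub, then its $s$ children) into consecutive blocks of size $1/\eps$ cuts almost all of the star's edges, not an $O(\eps)$ fraction.
\item In any swap that closes $c$ without opening some child $b$, a client $p\in B_b\sse A_c$ has $\mu_\AA(p)=c$ closed and $\mu_\BB(p)=b$ unopened. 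Your reroute target does not exist (possibly $\pi_\LS(p)=c$ as well), so no bound of the form $O(d(p,\mu_\AA(p))+d(p,\mu_\BB(p)))$ is available.
\item The same failure occurs when a leaf of a $\BB$-star is closed without opening its hub.
\end{itemize}
So the condition that every removed $\LS$-center has all of its $\BB$-neighbours added cannot be traded for an $O(\eps)$ rerouting cost. It must hold in every swap. It is exactly what guarantees that the fallback center $\mu_\AA(p)$ is open for every client outside the closed clusters.

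\textbf{The $O(\eps)$ loss must come from the swap budget.} With that constraint in force, the loss cannot come from rerouting, and your proposal has no mechanism for absorbing it into the budget.
\begin{itemize}
\item Hubs of big $\AA$-stars (at least $2/\eps$ leaves) can never be closed. This leaves a deficit of one per such hub, at most an $\eps/2$ fraction of the right leaves.
\item The only surplus comes from $\BB$-stars, $|S(w)|-1$ each. Whenever a $\BB$-star's leaves are split across swaps, its hub must be reopened, which wastes surplus.
\item Closing the extra centers of surplus components on their own therefore throws away exactly the slack you need. Empty $\LS$-centers need not exist at all.
\end{itemize}
The paper resolves this as follows. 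It groups the right leaves into sets $Z$ of size $\Theta(1/\eps)$, each together with the small $\AA$-hubs $H(Z)$. It concatenates the left leaves of all $\BB$-stars into a cycle. It then matches each $Z$ to the shortest consecutive block $Y$ with $\excess(Y)=\deficit(Z)$, starting from a uniformly random offset. Because every $\deficit(Z)\ge 1/(2\eps)$, there are few block boundaries and hence few reopened hubs. So the counter wraps around once and overshoots by at most an $8\eps$ fraction of the cycle. Every left leaf is closed at most twice, and twice with probability at most $8\eps$. This gives coefficient $1+8\eps$ on both $\cost(\mu_\BB)$ and $\cost(\mu_\AA)-\cost(\LS)$.

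This also shows that removal weight exactly $1$ per center, as you aim for, is not achievable when big hubs exist. Weight $1+O(\eps)$ is achievable and suffices.
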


Note that \Cref{thm:simpler-ls} bounds the cost of the local
search solution $\LS$ using the nearest-neighbor map $\pi_\LS$ in
terms of the cost of a clustering around $\LS$ given by another map
$\mu_\AA$, and an arbitrary benchmark solution $(\BB, \mu_\BB)$, as
long as the solution pair $(\LS, \mu_\AA), (\BB, \mu_\BB)$ is
cross-free. We defer its proof to
\S\ref{sec:local-revisited}.

It is illustrative to see how \Cref{thm:simpler-ls} %
implies a $(3 + \eps)$-approximation in general.
\begin{corollary}[(Weaker) Version of \protect{\cite[Theorem
    4.3]{arya2001local}}]
  \label{cor:3-approx} Let $(D \cup \FF, d)$ be an
  instance of $k$-median and $\eps \in (0, 1/200)$. Let $\LS \subseteq \FF$ be a
  local search solution with $|\LS| = k$ with $O(1/\eps)$ sized
  swaps. Then, $\cost(\LS) \leq (3 + O(\eps)) \sopt$.
\end{corollary}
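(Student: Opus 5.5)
We derive the corollary from \Cref{thm:simpler-ls} with $\BB = \OPT$, $\mu_\BB = \pi_\OPT$, so $\cost(\mu_\BB) = \sopt$. It remains to build a map $\mu_\AA : D \to \LS$ such that $(\LS,\mu_\AA),(\OPT,\pi_\OPT)$ is cross-free and $\cost(\mu_\AA) \le \cost(\LS) + 2\sopt$. Plugging in, the additive term $\cost(\mu_\AA)-\cost(\LS)$ is at most $2\sopt$, so
\[
\cost(\LS) \le 3\sopt + O(\eps)\bigl(\cost(\LS) + 3\sopt\bigr).
\]
For $\eps$ small enough (e.g.\ $\eps < 1/200$ with the hidden constant absorbed) we can rearrange to get $\cost(\LS) \le (3+O(\eps))\sopt$.

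\textbf{Constructing $\mu_\AA$.} We make every optimal cluster lie inside a single $\mu_\AA$-cluster. For each $o \in \OPT$, let $\phi(o) = \pi_\LS(o)$ be the local-search center nearest to $o$. For a client $p$ with $\pi_\OPT(p) = o$, set $\mu_\AA(p) = \phi(o)$. Each $\mu_\AA$-cluster is then a union of whole optimal clusters. Any $\mu_\AA$-cluster $A$ and $\OPT$-cluster $B$ therefore satisfy either $B \subseteq A$ or $A\cap B=\emptyset$, which is cross-freeness.

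\textbf{Cost of $\mu_\AA$.} By the triangle inequality and the choice of $\phi(o)$ as the nearest $\LS$-center to $o$,
\[
d(p,\phi(o)) \le d(p,o) + d(o,\phi(o)) \le d(p,o) + d(o,\pi_\LS(p)) \le 2d(p,o) + d(p,\pi_\LS(p)).
\]
Summing over $p$ gives $\cost(\mu_\AA) \le 2\sopt + \cost(\LS)$, which is exactly the bound needed above. The swap-size hypothesis of \Cref{thm:simpler-ls} is satisfied by assumption, since $\LS$ is locally optimal for $O(1/\eps)$-sized swaps.

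The main obstacle is the cross-free construction. Naively using $\pi_\LS$ itself generally creates crossings, so we must coarsen at the level of whole optimal clusters. The nearest-center rerouting through $o$ is what keeps the loss at $2\sopt$. Everything else is bookkeeping.
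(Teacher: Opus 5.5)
Your proposal is correct and is essentially the paper's proof: you take $\BB=\OPT$ and $\mu_\BB=\pi_\OPT$, send each optimal cluster to the $\LS$-center nearest its optimal center to get cross-freeness, and rearrange the bound from \Cref{thm:simpler-ls}. The only cosmetic difference is how you get $\cost(\mu_\AA)\le\cost(\LS)+2\sopt$: you bound $d(o,\phi(o))\le d(o,p)+d(p,\LS)$ client by client, whereas the paper averages over the cluster to get $d(c^*,\LS)\le \avg_\OPT(C^*)+\avg_\LS(C^*)$.
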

\begin{proof}
  Set $\BB = \OPT$ and $\mu_\BB$ as the closest facility map
  $\pi_\OPT$. Set $\mu_\AA : D \to \LS$ where $\mu_\AA(p) = c$ if
  $p \in C^* \in \calO$ and $c = \arg \min_{c' \in \LS} d(c^*, c')$. That
  is, we assign the clients in cluster $C^*$ to the center $c \in \LS$
  which is closest to $c^*$; this ensures cross-freeness. Moreover,
  for any center $c^* \in \OPT$, its distance to the closest center in
  $\LS$ is $d(c^*, \LS) \leq \avg_{\OPT}(C^*) + \avg_{\LS}(C^*)$, by
  averaging and the triangle inequality. Summing over all clients, we
  get $\cost(\mu_{\AA}) \leq \cost(\LS) + 2 \sopt$. Finally, applying
  \Cref{thm:simpler-ls} gives us that
  \begin{align*}
    \cost(\LS) &\leq 3 \sopt + O(\eps)(\cost(\LS) + 2\sopt).
  \end{align*}
  Rearranging gives $ \cost(\LS) \leq (3 + O(\eps))\sopt$.
\end{proof}

This proof set $\mu_\BB$ to be the optimal map
and $\mu_\AA$ to be a natural map ensuring cross-freeness.  In
our proof, we will carefully hand-craft more elaborate maps
$\mu_\AA, \mu_\BB$ which give a $1$-approximation on the subset $D_\kappa$
of clients, to get the guarantee in~(\ref{eq:mainLS}). 

We will construct $\mu_\BB$ to be a near-optimal assignment to
$\OPT$. Then, certain clusters in the clustering $\calB$ induced by
$\mu_\BB$ will be (roughly speaking) partitioned into local search
clusters induced by the nearest-neighbor map $\pi_{\LS}$. The set of
clients in these \emph{partitioning} local search clusters is
$D_{\kappa}$. Moreover, the centers of these partitioning
local search clusters have the special property that they are not the
closest center in $\LS$ to non-partitioned $B \in \calB$. We can then
construct $\mu_\AA$ by setting $\mu_\AA(p) = \pi_{\LS}(p)$, for all
$p \in D_{\kappa}$, and assigning remaining non-partitioned
$B \in \calB$ consistently to good substitute local search clusters, à
la~\Cref{cor:3-approx}. This naturally results in cross-free maps with
$\cost(\mu_\AA) \leq 2 \cost_{\OPT}(D \setminus D_{\kappa}) +
\cost_{\LS}(D)$. Combining this bound with the near-optimality of $\mu_{\BB}$
and~\Cref{thm:simpler-ls} then gives (\ref{eq:mainLS}). The
discussion above is a slight simplification: the actual proof uses an
extension of \Cref{thm:simpler-ls} applied to a subset $D_0$ of
clients. 

\subsection{Related Work}
\label{sec:related-work}


The first constant-factor approximation algorithm for $k$-median was
due to Charikar, Guha, Tardos, and
Shmoys~\cite{CharikarGTSD99}. Shortly thereafter, Jain and Vazirani
introduced a primal-dual framework based on Lagrangian relaxation for
metric facility location and $k$-median~\cite{JV01}; subsequent
dual-fitting analyses of facility location produced improved
Lagrangian-multiplier-preserving (LMP) algorithms, which became a
standard route to $k$-median approximations~\cite{JMS02,JMMSV03}. In
parallel, Arya et al.\ analyzed local search, showing that local
optima under $p$-swaps achieve a $3+2/p$ approximation, and hence a
$(3+\varepsilon)$-approximation for  $p = O(1/\eps)$~\cite{arya2001local}.
Charikar and Li gave a dependent LP-rounding
scheme~\cite{Charikar-Li12} to get a constant-factor approximation.

LP-rounding and pseudo-approximations subsequently advanced the state
of the art: Li and Svensson introduced the pseudo-approximation
paradigm~\cite{LiS16} to break past the factor-of-$3$ barrier, and
Byrka et al.\ refined the bi-point rounding
framework~\cite{ByrkaPRST17}. More recent work has revisited both
major paradigms: Cohen-Addad et al.\ improved the local-search
analysis~\cite{CGHOS22}; Cohen-Addad, Grandoni, Lee, and
Schwiegelshohn breached the $2$-LMP barrier for facility
location~\cite{CGLS23}; and Gowda et al.\ further improved bi-point
rounding~\cite{GPST23}. As mentioned above, the best approximation
factor currently known for general metric $k$-median is due to another
breakthrough, getting a factor of $(2+\varepsilon)$~\cite{2approx}.
Thereafter, another $2+\varepsilon$-approximation was obtained by
Byrka et al.~\cite{ByrkaGHLWW26} using an iterative rounding
framework. There has been comparatively less progress on the
hardness-of-approximation, where the classical hardness threshold is
$1+2/e \approx 1.736$~\cite{GuhaK98,JMS02}, leaving a tantalizing gap
between the best known upper and lower bounds.


\subsection*{AI Acknowledgment and Concurrent Work}

The first result for stable instances obtained by the authors was a
$(1.95+\eps)$-approximation without AI assistance; it improved over
the $2$-approximation result from \cite{2approx} by using a nuanced
local search combined with a more careful sampling approach; the
second part of the algorithm using submodular maximization closely
followed the steps in \cite{2approx}. Giving this draft to Gemini 3.1
Pro led it to suggest the color-coding approach to simplify the second
part of the argument. Thereafter, GPT-6 Astra suggested the use of
$\olamearly$ and $\olamlate$, and cluster-dependent $\kappa(C^*)$
values to improve the approximation factor to its current value of
$1.89+\eps$. These models, as well as Claude Opus~4.6 and Gemini
Argon, were used for proof-checking and to obtain feedback on the
presentation.

In concurrent work announced on October 6, 2026, OpenAI presented
two different better-than-two approximations for general 
$k$-median instances~\cite{OAI:The-Approximation-Threshold-for-Metric-k-Median-September-24-2026,OAI:Single-Exponential-Recovery-and-Bounded-Price-Strictness-for-Metric-k-Median-September-24-2026}. The first extends the iterative rounding-based $2$-approximation of~\cite{ByrkaGHLWW26} to achieve an optimal $(1 + 2/e + \eps)$-approximation. The other instead builds on the stable instance/pseudo-approximation framework of~\cite{2approx}, thus sharing many similarities with this work and its companion paper. Although this latter algorithm only gives a $(2 -\sigma)$-approximation for general instances of $k$-median, for some absolute constant $\sigma > 0$, it does achieve a $(1 + 2/e + \eps)$-approximation algorithm for stable instances. 
Finally, through personal communication, we learned that Jaroslaw Byrka, Yuhao Guo, Yang Hu, Shi Li, Yongho Shin, Phanu Vajanopath, Chengzhang Wan and Zaixuan Wang have independently acheived a $2-\sigma$-approximation algorithm for a small but constant $\sigma >0$. Their approach further develops the iterative randomized rounding framework of~\cite{ByrkaGHLWW26}, providing an alternative route to a better-than-$2$ approximation.



%% file: prelims.tex
\section{Preliminaries} \label{sec:prelims}

\textbf{$k$-median.} Given an instance $(D \cup \FF,
d)$ for $k$-median, we want to find a subset of $k$ facilities/centers
$\SS \in \binom{\FF}{k}$ minimizing 
$\sum_{p\in D} d(p, \SS)$. Let $\OPT$ be an optimal choice for $\SS$, having cost $\sopt$. \margindef{$\OPT, \sopt$} 

\mypara{Assignments and clusterings.}
For any subset $\SS\subseteq \FF$ of open facilities, an
\emph{assignment} $\mu: D\rightarrow \SS$ is a map from clients to
these facilities. This assignment $\mu$ induces a clustering
$\calS_\mu = \{\mu^{-1}(c)\}_{c\in \SS}$, which is the partition of
$D$ induced by $\mu^{-1}$. We will interchangeably think of
$\calS_\mu$ as a partition $\{S_c\}_{c \in \SS}$ of the point set $D$
and as a collection $\{(S_c, c)\}$ of (set, center) pairs. In general,
we will use blackboard-bold letters such as $\AA, \BB$ to denote sets
of centers; $\mu_\AA, \mu_\BB$ to denote corresponding assignments;
and calligraphic symbols $\calA, \calB$ to denote the associated
clusterings. When the clustering is clear from context, we will
usually write lowercase $c$ (or $c'$ or $c^*$) to denote a facility
and uppercase $C$ (or $C'$ or $C^*$) to denote the cluster in the
clustering centered at that facility. When there is potential for
confusion, e.g., when $\calA$ and $\calB$ are both clusterings that
share a center $c$, we instead write $A_c$ (resp., $B_c$) to denote
the cluster in $\calA$ (resp., $\calB$) centered at $c$. Given a set
of centers $\SS$, one canonical assignment is the \emph{projection} or
\emph{nearest-center assignment}, which is denoted by $\pi_\SS$; i.e.,
$\pi_\SS(p) \coloneqq \arg\min_{c \in \SS} d(p,c)$. \margindef{$\pi_\SS(\cdot)$} (W.l.o.g., we can
assume that distances are distinct and hence the minimizer is unique.)

\mypara{Stability.}  An instance $(D \cup \FF, d)$ of $k$-median is
called \emph{$\beta$-stable} if
$\sopt_{k-1} \geq (1 + \beta) \sopt_k$.  In $\beta$-stable instances,
the clustering produced by local search (even with unit-sized swaps)
approximates most clusters in the optimum solution extremely well. %
We explore the direct consequences of stability in more detail in the
proof of our main structural theorem (\S\ref{sec:structural-thm});
e.g., a trivial consequence of stability is that each optimal cluster
is non-empty.

\mypara{Dummies.}
We augment the metric space with auxiliary facilities $\Lambda$ during the algorithm: i.e., the new set of 
facilities is $\FF'=\FF\cup \Lambda$. For each such new center $\lambda \in \Lambda$, there
is a ``leader'' $\ell_{\lambda} \in D \cup \FF$ and a distance $r_\lambda$; we
explicitly set the distance
$d(\ell_{\lambda}, \lambda)=r_\lambda$, and the remaining
distances to the new center $\lambda$ are induced by the
triangle inequality: 
$d(q, \lambda) \coloneqq d(q, \ell_{\lambda}) + r_\lambda$
and $d(\lambda, \lambda') \coloneqq r_{\lambda} + d(\ell_{\lambda}, \ell_{\lambda'}) + r_{\lambda'}$.

\mypara{Assignment cost.} 
For any assignment $\mu: D \to \FF \cup \Lambda$ and a subset of clients $D' \subseteq D$, define \margindef{$\cost$}
\[ \textstyle
\cost_{\mu}(D') := \sum_{p \in D'} d(p, \mu(p)).\]
The average cost of $D' \sse D$ is \margindef{$\avg$}
\( \textstyle \avg_{\mu}(D') := \frac{\cost_{\mu}(D')}{|D'|},\) whenever
$D' \neq \emptyset.$ We usually consider $\avg_{\mu}(D')$ where
$D' = C$ is a cluster in some clustering. 
When the assignment is via the projection map (closest-facility map) $\pi_\SS$ of
$\SS$, we write $\cost_{\SS}(D')$ (and similarly
$\avg_{\SS}(D')$). We denote $\cost_{\mu}(D)$ as $\cost(\mu)$; for
$\mu$ being the projection assignment $\pi_{\SS}$, we denote
$\cost_{\mu}(D)$ as $\cost(\SS)$.

\mypara{Percentile Costs.} Let $C^*$ be a non-empty optimal cluster centered at $c^* \in \OPT$, and let $\eps \in [0,1)$. We define
$\per(C^*)$ \margindef{$\per$} as the \emph{$\eps$-percentile
  distance in $C^*$}: at least $\eps |C^*|$ clients in $C^*$ have
$d(p, c^*) \leq \per(C^*)$ and at least $(1 - \eps) |C^*|$
clients have $d(p, c^*) \geq \per(C^*)$. Notice that
$(1 - \eps)|C^*| \per(C^*) \leq |C^*|\avg_{\OPT}(C^*)$ and so
$\per(C^*) \leq \frac{1}{1- \eps}\avg_{\OPT}(C^*)$. When
$\eps \leq \nf12$, this implies
\begin{equation}
\per(C^*) \leq (1 + 2\eps)\avg_{\OPT}(C^*). \label{eq:per-bd} 
\end{equation}

\mypara{Cluster Cores.} Given the optimal clustering $\calO$, a
cluster $C^*$ centered at $c^*$, and a function
$\kappa: \calO \to [0,1]$, define the core \margindef{core} of a
cluster $C^*$ to be 
\begin{equation*}
  \core_\kappa(C^*) := \{p \in C^* \mid d(p, c^*) \leq (1 + \newkappa) \per(C^*)\},
\end{equation*}
i.e., the points that are at most $(1+\newkappa)$ times the
$\eps$-percentile distance of $C^*$. We may also specify a constant
function (e.g., a scalar value like $0$) for $\kappa$.


\mypara{Parameters and constants.}
We use the following parameters throughout the argument.
\begin{itemize}[nosep]
    \item $\eps \in (0,1/200)$: the accuracy/error parameter.
    \item $\beta:$ the stability parameter of the instance.
    \item $s_0 := \eps^{-2} \beta^{-1}$: \margindef{$s_0$} the bound on the number of non-synchronized optimal clusters (\Cref{lem:few-not-in-sync}). This also bounds the sizes of various ``special'' sets of centers.
    \item $\eta := 89/100$: \margindef{$\eta$} the separation/closeness threshold used as both a sufficient condition for substitution and establishing the far-from-$\LS$ condition of~\Cref{thm:structural}. Note that $\nf1\eta - 1 \geq 6\eps$.
\end{itemize}


%% file: outline.tex
\section{Deducing our Main Result}
\label{sec:deducing-main}

Our main result is the following better-than-$2$ approximation for
$\beta$-stable instances of $k$-median.

\mainResult*

Fix $\eps > 0$. The starting point for our algorithm is a local search solution
$\LL \subseteq \FF$ with $|\LL| = k$ arising from 
$O(1/\eps)$-sized swaps; let $\calL$ denote the clustering induced by
the nearest-center map $\pi_{\LL}$. Let $\OPT \subseteq \FF$ be an
optimum solution with $|\OPT| = k$, and let the nearest-center map
$\pi_{\OPT}$ induce the corresponding clustering $\calO$.
We define some notation to simplify terms in the following discussion:
for a client $p$, let $C^*$ be the optimal cluster containing it, and
let
\begin{gather}
  \err_p := \avg_\OPT(C^*) + \avg_\LS(C^*). \label{def:error}
\end{gather}
Since the local search solution $\LS$ is a constant-factor
approximation, we get that $\sum_p \err_p = O(\sopt)$, and hence
errors of the form $\eps \err_p$ only contribute $O(\eps)$ terms to
the approximation.

\input{supporting-sections/structural-main-result}

\input{supporting-sections/color-coding}

\input{supporting-sections/cost-of-local-search}

\subsection{Win-win Argument and Final Approximation Guarantee}
\label{sec:wrapup}

To conclude \Cref{thm:main-result}, we combine this local-search bound
with the submodular maximization bound from
\Cref{thm:simple-submod}.  To optimize our final approximation factor, it will be convenient to have cluster-wise bounds; to that end, for each cluster $C^* \in \calO$, define
\[
x(C^*) :=\frac{\per(C^*)}{\avg_\OPT(C^*)},
\qquad
y_\kappa(C^*) :=
\frac{\cost_\OPT(\core_\kappa(C^*))
      +|\core_\kappa(C^*)|\per(C^*)}{\cost_\OPT(C^*)}.
\]
The parameter $y_\kappa(C^*)$ captures the savings from the core in the local-search
bound, relative to $\cost_\OPT(C^*)$. With this notation, the two
aggregate cost bounds from \Cref{thm:simple-submod} and \Cref{thm:applied-local-search-bd} have the following
coefficients of $\cost_\OPT(C^*)$, where we combine $\OPT_P$ with $\OPT_{\Lambda}^{\mathrm{late}}$ for simplicity:
\\
\begin{table}[htbp]
\centering
\(
\begin{array}{c|c|c}
 &\LS&\TT\\ \hline
\OPT_S&1+\eta&1+\eta\\[4pt]
\olamearly&3&1+2x(C^*)/e\\[4pt]
\olamlate\cup\OPT_P&1+x(C^*)&1+2(1+\newkappa)x(C^*)/e\\[4pt]
\OPT_R&2+x(C^*)-y_\kappa(C^*)&1+x(C^*)
\end{array}
\)
\caption{Comparison of local search and submodular maximization cluster costs}
\label{tab:winwin-bounds}
\end{table}
\\ \\
The table hides error terms totaling
$O(\eps)\sopt$ in both columns. Balancing the costs from the two columns, we give weight $a:=3/25$ to the former and
$1-a$ to the latter.
The following lemma shows that, with this choice of $a$, we can define $\newkappa$ as required for \Cref{thm:structural} and achieve an overall $1 + \eta +O(\eps)$ approximation guarantee.
\begin{restatable}{lemma}{winwin}\label{lem:winwin-numerical}
Let $a=3/25$, $\eta=89/100$, and $\eps\in(0,1/200)$.
For each optimal cluster $C^*$, there exists
$\newkappa\in[1/\eta-1,1/2]$ such that
\eqref{eq:radius-diameter-condition} holds whenever
$\per(C^*)\geq\eta\avg_\OPT(C^*)/(1+\eps)$, and the $(a,1-a)$ convex combination of the two entries in
the row of \Cref{tab:winwin-bounds} containing $c^*$
is at most $1+\eta+O(\eps)$.
\end{restatable}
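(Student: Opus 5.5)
The plan is to choose $\newkappa$ separately for each cluster as an explicit function of the single parameter $x := x(C^*)$, and then check the four rows of \Cref{tab:winwin-bounds} one at a time. With $a = 3/25$ and $1-a = 22/25$, each row is an affine expression in $x$, $\newkappa$, and $y_\kappa(C^*)$. The coefficients are $0.12$ on the $\LS$ column and $0.88\cdot 2/e \approx 0.6475$ on the terms in the $\TT$ column that carry $2/e$. Throughout I use $x \le 1+2\eps$, which is \eqref{eq:per-bd}, and I let all $\eps$-slack go into the $O(\eps)$ term.

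\textbf{Rows that do not involve $\newkappa$.} The $\OPT_S$ row equals $1+\eta$ trivially. The $\olamearly$ row equals $3a + (1-a)(1+2x/e) = 1.24 + 0.6475\,x$. At $x = 1+2\eps$ this is $1.8875 + O(\eps) \le 1+\eta+O(\eps)$. So neither row constrains $\newkappa$.

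\textbf{The $\olamlate\cup\OPT_P$ row.} This row is $1 + 0.12\,x + 0.6475(1+\newkappa)x$, and it is increasing in $\newkappa$. So it caps $\newkappa$: it suffices that
\[
1+\newkappa \le \frac{\eta/x - 0.12}{0.6475}.
\]
I would define $1+\newkappa$ to be the minimum of $3/2$ and this right-hand side. This is the largest admissible $\newkappa$, because the $\OPT_R$ row prefers a large core. At $x = 1+2\eps$ the right-hand side is about $1.189 - O(\eps)$, which exceeds $1/\eta \approx 1.1236$. Since the right-hand side decreases in $x$, the lower bound $\newkappa \ge 1/\eta - 1$ holds for every admissible $x$, and the upper bound $1/2$ is enforced by the minimum. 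Next I check \eqref{eq:radius-diameter-condition} in the regime $x \ge \eta/(1+\eps)$. I expect it to be an inequality of the form $(1+\newkappa)\per(C^*) \le (\text{const})\cdot\avg_\OPT(C^*)$, or a separation condition in terms of $\eta$. On the interval $x\in[\eta/(1+\eps),\,1+2\eps]$ the chosen $\newkappa$ is explicit and monotone in $x$, so the check reduces to evaluating the condition at the endpoints. If the condition instead requires $\newkappa$ large, the lower bound $1/\eta - 1$ is what it needs.

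\textbf{The $\OPT_R$ row, which I expect to be the main obstacle.} This row is $1.12 + x - 0.12\,y_\kappa(C^*)$. Here $y$ is not a function of $x$, so I need a lower bound on $y$ that holds for every distribution of client distances with the given $x$ and $\newkappa$. Normalize $\avg_\OPT(C^*) = 1$, write $m = |\core_\kappa(C^*)|/|C^*|$, and let $Q$ be the normalized core cost. Then $y = Q + m x$. Every non-core client has distance greater than $(1+\newkappa)x$, so $1 \ge Q + (1-m)(1+\newkappa)x$. Also, at least $(1-\eps)|C^*|$ clients have distance at least $x$.

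Minimizing $y$ subject to these constraints is a linear program over distance distributions. Its extreme points are distributions with at most two or three atoms: core mass at distance exactly $x$ (with an $\eps$-mass at $0$), and the remaining mass just beyond $(1+\newkappa)x$. This gives roughly $y \ge 2mx$ with $m \ge 1 - \frac{1-x}{(1+\newkappa)x}\cdot\frac{1}{\ldots}$. More precisely, $m$ is forced toward $1$ when $x$ is close to $1$. Substituting the chosen $\newkappa(x)$ yields a one-variable function of $x$. I would verify numerically and monotonically that this function is at most $0.77$ (equivalently, that the row is at most $1.89$) over the range of $x$ relevant to $\OPT_R$ clusters.

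The tight point should be near $x \approx 1$. There $y$ must be about $1.92$, which requires the core to contain nearly all of $C^*$. That is exactly what $1+\newkappa \ge 1/\eta$ buys: if $x \ge \eta$, then any mass beyond $(1+\newkappa)x \ge 1$ forces the average above $1$. The remaining work is bookkeeping for the small-$x$ end, where the row is easily satisfied because $x$ itself is small.
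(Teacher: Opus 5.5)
Your plan is the paper's plan. You choose $\kappa$ as a function of $x$ so that the $\olamlate\cup\OPT_P$ row is tight; the paper takes $1+\kappa=(\eta-at)/(\tfrac{13}{20}t)$ with $t$ the truncation of $x$ to $[\eta,1]$, which is your formula with $2(1-a)/e$ rounded up to $13/20$. You then check the $\kappa$-free rows and lower-bound $y_\kappa$ by an extremal distance profile. However, the proposal never establishes the two inequalities that carry the lemma's content, and the reasoning you sketch for each fails as stated.

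\textbf{The technical condition.} Condition \eqref{eq:radius-diameter-condition} is $(2+\kappa)x\ge 2+6\eps$, which is a \emph{lower} bound on $\kappa$. Your fallback, that $\kappa\ge 1/\eta-1$ suffices, is false. With $\kappa=1/\eta-1$ one gets $(2+\kappa)x\ge 2$ only for $x\ge 2\eta/(1+\eta)\approx 0.94$, whereas near $x\approx\eta$ one needs $\kappa\ge 2/\eta-2\approx 0.25$, and more once the $\eps$ terms are included. The condition does hold for your specific $\kappa$, but for a different reason, and you need to say so. Set $c:=2(1-a)/e$. The cap at $3/2$ is inactive for $x\ge\eta/(1+\eps)$, and on that range $(2+\kappa)x=(1-a/c)x+\eta/c$. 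This is increasing in $x$ and is already at least $2.09>2+6\eps$ at the left endpoint.

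\textbf{The $\OPT_R$ row.} You need two bounds: $1\ge\bigl((r-\eps)+(1+\kappa)(1-r)\bigr)x$ and $y_\kappa\ge(2r-\eps)x$. The first gives $r\ge 1-\frac{1-x}{\kappa x}-O(\eps)$, so the row is at most $1+a+(1-2a)x+2a(1-x)/\kappa+O(\eps)$. Your formula for $m$ is left unfinished. Your heuristic that mass beyond $(1+\kappa)x\ge 1$ forces the average above $1$ is also false when $x<1$.

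More importantly, the binding point is not $x\approx 1$; there $r\to 1$ and the row is about $1+a+(1-2a)=1.88$. It is $x\approx 0.96$, where $\kappa\approx 2a$, and the slack there is only of order $10^{-3}$. So ``verify numerically'' is not a proof and needs an exact argument.

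The paper closes this exactly. Substituting $t=89/(77+65\kappa)$ reduces the inequality to $(\tfrac12+\tfrac1{2000})\kappa+2a^2/\kappa\ge 2a-\tfrac1{2000}$, which follows from AM--GM. For your $\kappa$ you can instead write $1/\kappa=cx/(\eta-(a+c)x)$ and multiply through by the positive denominator. The requirement becomes $-(0.0912+c)x^2+(0.7688+1.01c)x-0.6853\le 0$. This concave quadratic has discriminant about $-6\times 10^{-4}$, so it holds for every $x$.

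With these two verifications supplied, your argument goes through.
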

The proof of \Cref{lem:winwin-numerical} appears in
\Cref{sec:winwin-technical-proof}.
Applying the lemma and summing over the clusters $C^* \in \calO$ gives
\[
\min\{\cost(\LS),\cost(\TT)\}
\leq a\cost(\LS)+(1-a)\cost(\TT)
\leq\left(1 + \eta +O(\eps)\right)\sopt.
\]
We return the cheapest solution among $\LS$ and the candidates
produced by the algorithm. Amplifying the randomized routines,
rescaling $\eps$, and using $\eta = 0.89$ gives a $(1.89+\eps)$-approximation
with probability at least $1-\eps$, in time
$\exp\{O(\beta^{-1}\poly(1/\eps))\}\poly_{\eps}(n)$.

In the following sections, we give the details missing from above. We
prove the main structural results \Cref{thm:structural} and \Cref{thm:struct2}, showing the
existence of many substitutions, backups, and dummies in
\Cref{sec:structural-thm}. Then we present the submodularity-based algorithm and prove \Cref{thm:simple-submod}
in \Cref{sec:submodular-section} and conclude with the nuanced local search
guarantee of \Cref{thm:local-search-bd} in \Cref{sec:local-revisited}.


%% file: supporting-sections/structural-main-result.tex
\subsection{Structural Theorem}
\label{sec:structural-theorem}

\begin{figure}[t]
  \centering
  \includegraphics[width=3in]{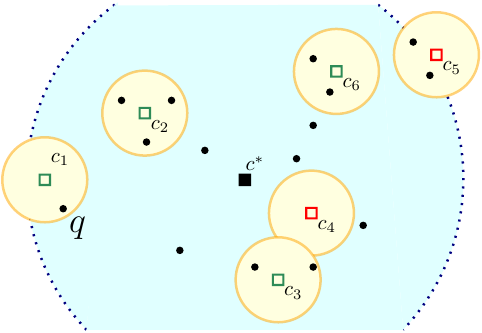}
  \caption{The dotted circle shows the ball of radius
    $(1 + \newkappa + 2\eps)\per(C^*)$ around $c^*$; the dots are clients
    in $C^*$, and the orange circles have radius $\eps\,\per(C^*)$. The green centers are the satellites $\Gamma(c^*)$; 
    $c_5$ is too far to be a satellite, and $c_4$ does not have any
    witness clients from $C^*$.}
  \label{fig:sat-def}
\end{figure}

We begin with a structural theorem giving low-cost substitutions of
centers in $\OPT$ by centers in $\LL$, as well as dummy and
backup centers supporting other centers in $\OPT$. We state the result
in two parts. For the first part, 
define a non-negative ``core-radius'' $\newkappa \in [\nf1\eta - 1,1)$ \margindef{$\kappa(\cdot)$} for each optimal cluster $C^*$. We choose $\newkappa$ so that, whenever
$\per(C^*)\geq\eta\avg_\OPT(C^*)/(1+\eps)$, i.e., $\per(C^*)$ is quite large relative to $\avg_{\OPT}(C^*)$, we have 
\begin{equation}
\label{eq:radius-diameter-condition}
(2+\newkappa)\per(C^*)
\geq(2+6\eps)\avg_\OPT(C^*).
\end{equation}
This bound will be useful in simplifying some of our structural
conditions. (The quantity $\kappa(C^*)$ depends on the unknown optimal
clustering, and we use it only for the analysis.) Also define the set of \emph{satellites}
for any center $c^* \in \OPT$ with corresponding cluster
$C^* \in \calO$ to be
\begin{equation}
  \Gamma(c^*) := \Big\{c \in \LS \mid d(c, c^*) \leq (1 + \newkappa +
  2\eps)\per(C^*), d(c, C^*) \leq 
  \eps\,\per(C^*) \Big\}. 
  \label{eq:Gamma}
\end{equation}
Informally, these are local search centers $c \in \LS$ which are
``close'' to $c^*$, and moreover, which have some ``witness'' client
from the corresponding optimal cluster $C^*$ close to them. (See~\Cref{fig:sat-def}.)

\begin{restatable}{thm}{structural}
  \label{thm:structural} Let $(D \cup \FF, d)$ be a $\beta$-stable
  instance of $k$-median and $\eps \in (0, 1/200)$. There exists an
  algorithm running in $\poly(n)$
  time which, with probability $\exp\{-O(s_0 \log(\eps^{-1})) \}$, outputs a subset of the
  local search solution $\LS' \subseteq \LS$ and a collection of
  \emph{dummy} centers $\Lambda$, such that there exists a partition
  $\OPT = \OPT_\Lambda \sqcup \OPT_P \sqcup \OPT_R \sqcup \OPT_{S} $
  with the following properties.
  \begin{enumerate}[label = (\alph*), nosep]
  \item \emph{(All special sets are small)} We have $|\OPT_{\Lambda}|
    + |\OPT_P| + |\OPT_R| \leq s_0$.\label{cond:struct-special-small} 
  \item \emph{(Centers with corresponding dummy centers)} \label{cond:struct-dummies} There
    exists a partition $\OPT_{\Lambda} = \olamearly \sqcup \olamlate$ and a dummy center map
    $\varphi : \OPT_{\Lambda} \to \Lambda$ such that $\varphi$ is a
    bijection, and for each center $c^* \in \OPT_{\Lambda}$ and client
  $p \in C^* \in \calO$:
    \begin{align*}
      d(c^*, \ell_{\varphi(c^*)})
      &\leq d(\ell_{\varphi(c^*)},
        \varphi(c^*)) = r_{\varphi(c^*)} \\
      &\leq (1 +
        \newkappa\,\mathbf{1}_{(c^* \in \olamlate)}) \per(C^*) + O(\eps \err_p).
        \end{align*}
  \item \emph{(Centers with backups)}  \label{cond:struct-backups} There exists $\LS_P \subseteq \LS'$ and a surjective map
    $\psi : \OPT_P \to \LS_P$ such that, for each $c^* \in \OPT_P$ and $p \in C^* \in \calO$:
    \begin{equation*}
      d(c^*, \psi(c^*)) \leq (1 + \newkappa) \per(C^*) + O(\eps \err_p).
    \end{equation*} 

  \item \emph{(Substituted centers)} \label{cond:struct-subs} There exists a substitution map
    $\sigma : \OPT_S \sqcup \OPT_R \to \LS'$ such that $\LS_P
    \subseteq \LS_\sigma := \sigma(\OPT_S \sqcup \OPT_R) \subseteq
    \LS'$, and
    \begin{enumerate}[label = (\roman*), nosep]
    \item Replacing centers in $\OPT_S$ with centers in $\LS$ according
      to the substitution map $\sigma$ increases the cost by at most an 
      approximately $(1 + \eta)$ factor: 
      \begin{equation*}
        \sum_{c^* \in \OPT_S} \sum_{p \in C^*} d(p, \sigma(c^*)) \leq (1 + \eta) \sum_{c^* \in \OPT_S} \cost_{\OPT}(C^*) + O(\eps) \sopt.
      \end{equation*} \label{cond:struct-substitution-strong}
    \item
      \label{cond:struct-substitution-weak}
For every $c^*\in\OPT_R$, we have
$\sigma(c^*)\in\Gamma(c^*)$ and
$d(c^*,\sigma(c^*))\leq(1+\eps)\per(C^*)$.
Consequently,
\[
\sum_{c^*\in\OPT_R}\sum_{p\in C^*}d(p,\sigma(c^*))
\leq\sum_{c^*\in\OPT_R}
\bigl(\cost_\OPT(C^*)+|C^*|\per(C^*)\bigr)
+O(\eps)\sopt.
\]
    \item \label{item:unique-preimage} For all $c^* \in \OPT_R$, we have $\Gamma(c^*) \subseteq \LS'$, and 
      $\sigma(c^*) \neq \sigma(c^{**})$ for any $c^{**} \in (\OPT_R
      \cup \OPT_S) \setminus \{c^*\}$.
    \end{enumerate}
  \end{enumerate}
\end{restatable}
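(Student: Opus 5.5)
The plan is to build the partition in rounds, starting from the ``sync'' structure of \Cref{lem:few-not-in-sync}. That lemma gives a set $\OPT_{\mathrm{sync}}$ of all but $s_0$ optimal centers, each matched injectively to a center of $\LS$, such that substituting along the matching costs only $O(\eps)\sopt$ extra in total. We place these synced centers into $\OPT_S$ from the outset. For every other $c^*$, we use the percentile radius $\per(C^*)$ and the ratio $\per(C^*)/\avg_\OPT(C^*)$ to decide where it goes.

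First, any $c^*$ that has an $\LS$-center within distance $\eta\,\avg_\OPT(C^*)$ is substituted by the nearest such center. By the triangle inequality each client then pays at most $d(p,c^*)+\eta\avg_\OPT(C^*)$, which gives \ref{cond:struct-substitution-strong}. The centers that remain are far from $\LS$, and for these we use \eqref{eq:radius-diameter-condition} to control the geometry.

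For each remaining center we then make guesses. The algorithm runs the client-sampling routine (\Cref{thm:per-client-sampling}) and the center-sampling routine (\Cref{thm:expensive}), both sampling proportional to local-search cost. Each guess succeeds with probability $\poly(\eps)$, and we need at most $O(s_0)$ guesses, which gives the success probability $\exp\{-O(s_0\log\eps^{-1})\}$.

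\begin{itemize}[nosep]
\item If $C^*$ carries a non-negligible share of its $\LS$-cost in its core, a sampled client $p$ from the core, together with a guessed radius $r$ (rounded to powers of $(1+\eps)$), yields a dummy $\lambda$ with leader $p$. Its radius is $d(p,c^*)\le(1+\kappa)\per(C^*)$ in the late case, and at most $\per(C^*)$ in the early case, where the sampled client lies within the percentile ball. We put $c^*$ in $\OPT_\Lambda$. The dummy distances satisfy $d(c^*,\ell)\le r$ by construction, which gives \ref{cond:struct-dummies}.
\item If some $\LS$-center within $(1+\kappa)\per(C^*)$ of $c^*$ is expensive enough to be hit by center sampling, we take it as a backup $\psi(c^*)$ and put $c^*$ in $\OPT_P$.
\item Everything else goes into $\OPT_R$. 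We guess which satellites $\Gamma(c^*)$ to keep, and take $\sigma(c^*)$ to be a satellite with a witness client within $\eps\per(C^*)$ of the core. This gives $d(c^*,\sigma(c^*))\le(1+\eps)\per(C^*)$, and the per-client triangle inequality yields the summed bound in \ref{cond:struct-substitution-weak}.
\end{itemize}

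Finally, $\LS'$ is obtained from $\LS$ by deleting the guessed unsynced centers that serve no role, using color-coding to decide which ones. We must check three things: that $\LS_P\subseteq\LS_\sigma$, that $\Gamma(c^*)\subseteq\LS'$, and that $\sigma$-images of $\OPT_R$ centers are unique. For uniqueness, the key observation is the following. If some $c\in\Gamma(c^*)$ were also $\sigma(c^{**})$ for a substituted $c^{**}$, we re-route: either $c^{**}$ moves to a different nearby center at cost $O(\eps\err_p)$, or $c^*$ can itself be substituted within factor $1+\eta$, which contradicts its membership in $\OPT_R$.

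The main obstacle is this last exclusivity condition \ref{item:unique-preimage}, together with keeping the total substitution cost within $(1+\eta)$. The difficulty is that the synced matching and the closeness-based substitutions may claim satellites of $\OPT_R$ centers. To handle this, I would first try a charging argument: each conflict is resolved by swapping to the next-nearest $\LS$-center, and the extra cost is paid from the $O(\eps)\sopt$ slack that stability provides.
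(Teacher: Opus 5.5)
You have a genuine gap exactly where you flag it, and the proposed repair does not work.

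\textbf{Exclusivity and backups.} Re-routing a conflicting $c^{**}\in\OPT_S$ is not available in general. A synced $c^{**}$ is cheap only along its matched partner, no other center of $\LS$ need be near it, and stability provides no per-conflict slack. The alternative branch (moving $c^*$ into $\OPT_S$, where sharing is allowed) rests on a false premise. Leftover centers are not $\eta$-close, so $d(c^*,\LS)>\eta\avg_\OPT(C^*)$. Routing $C^*$ through a center at distance about $\per(C^*)\approx\avg_\OPT(C^*)$ then only gives roughly $2\cost_\OPT(C^*)$, not $(1+\eta)\cost_\OPT(C^*)$.

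Your $\OPT_P$ has a related defect. A backup that is merely some center of $\LS$ hit by center sampling need not lie in $\LS_\sigma$, so $\LS_P\subseteq\LS_\sigma$ is never established.

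The missing idea is that $\OPT_P$ is the bin for exactly these conflicts, and backups are drawn only from centers already serving as substitutes. The paper first puts into $\OPT_P$ the leftover centers with a point of $\LS_X=\sigma(\OPT_S)$ within $(1+\newkappa+4\eps)\per(C^*)$. It then processes the remaining ones greedily:
\begin{itemize}
\item if some center in the \emph{current} image of $\sigma$ lies within $(1+\eps)\per(C^*)$ of $c^*$, then $c^*$ joins $\OPT_P$ with that center as $\psi(c^*)$;
\item otherwise $c^*$ joins $\OPT_R$ with $\sigma(c^*)$ its candidate satellite, which by construction is not yet in the image.
\end{itemize}
This gives (c), $\LS_P\subseteq\LS_\sigma$, and (d)(iii) with no cost accounting at all. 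Expensive centers found by center sampling serve as dummy leaders, not as backups.

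\textbf{Existence of the satellite.} You never justify why each center sent to $\OPT_R$ has some $\sigma(c^*)\in\Gamma(c^*)$ with $d(c^*,\sigma(c^*))\le(1+\eps)\per(C^*)$. This needs a client of $\core_0(C^*)$ within $\eps\per(C^*)$ of $\LS$. The paper derives it from the per-cluster bound $\cost_\LS(\core_\kappa(C^*))<\eps^3\cost_\OPT(C^*)$:
\begin{itemize}
\item averaging over $\core_0(C^*)$ yields a client $q$ with $d(q,\LS)<\eps^2\avg_\OPT(C^*)$;
\item the same averaging combined with non-closeness gives $\avg_\OPT(C^*)\le(1+\eps)\per(C^*)/\eta$.
\end{itemize}
Sampling proportional to $\cost_\LS$, however, only guarantees the global bound $\sum\cost_\LS(\core_\kappa(C^*))<\eps^5\sopt$ over missed clusters. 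So your per-cluster test cannot be enforced: a globally cheap cluster with a relatively expensive core may be missed. The paper handles these with an extra class $\OPT_{S_1}$ of missed clusters satisfying $\cost_\LS(\core_\kappa(C^*))\ge\eps^3\cost_\OPT(C^*)$. Their total optimal cost is below $\eps^2\sopt$, so substituting each by $\pi_\LS(c^*)$ costs $O(\eps)\sopt$, and they go into $\OPT_S$.

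\textbf{Smaller points.}
\begin{itemize}
\item The radius guess needs a known scale. The paper takes $r_0=d(\ell,\LS)$, restricts to punched cores (so $r_0\ge\eps^6\per(C^*)$), and uses the $\eps\avg_\LS(C^*)$ slack when $r_0$ is large.
\item For this statement, every inclusion involving $\LS'$ holds with $\LS'=\LS$. Guessing which satellites to keep is therefore unnecessary, and it endangers $\Gamma(c^*)\subseteq\LS'$.
\item Deleting redundant centers is needed only for \Cref{thm:struct2}. The paper does it with the cluster-sampling algorithm; color-coding only enters the later submodular step.
\end{itemize}
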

Running the algorithm of \Cref{thm:structural} a total of $\exp\{O(s_0
\log{\eps^{-1}}) \}$ times ensures that at least one output has all
desired guarantees with probability $(1 - \eps)$. It will be
convenient to have notation for the partition of $D$ induced by the
partition of $\OPT$ from
\Cref{thm:structural}. Henceforth,\margindef{$D_\Lambda, D_P$}
\margindef{$D_R, D_S$} define client sets $D_\Lambda, D_P, D_R$, and
$D_S$ containing the clients assigned to $\OPT_\Lambda$, $\OPT_P$,
$\OPT_R$, and $\OPT_S$ respectively in the optimal
assignment. Moreover, call a client $p$ \emph{late} if $\pi_\OPT(p)
\in \olamlate$, and define the natural partition $D_\Lambda =
D^{\early}_\Lambda \cup D^{\late}_\Lambda$.

\begin{figure}[t]
  \centering
  \includegraphics[width=4in,page=1]{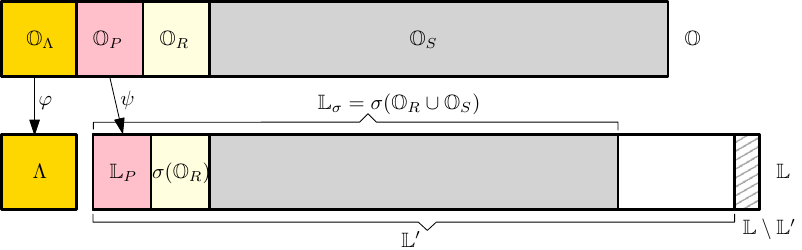}
  \caption{Maps from \Cref{thm:structural}: Substitution map $\sigma$
    has image $\LS_\sigma := \sigma(\OPT_S \sqcup \OPT_R)$ with
    $\LS_P \sse \LS_\sigma$, backup map $\psi$ maps $\OPT_P$ to
    $\LS_P$, dummy center map $\varphi$ is a bijection between
    $\OPT_\Lambda$ and dummy centers $\Lambda$. Note: we have the
    inclusions $\LS_P \sse \LS_{\sigma} \sse \LS' \sse \LS$, and the
    set $\LS \setminus \LS'$ is discarded immediately. }
  \label{fig:structural-maps}
\end{figure}

\Cref{fig:structural-maps} illustrates the partition of $\OPT$
together with the substitution map $\sigma$, the backup map $\psi$,
and the dummy center map $\varphi$ produced by \Cref{thm:structural}. The proof of \Cref{thm:structural} is deferred
to~\Cref{sec:structural-thm}, and goes as follows. 

\textbf{Step I:} Using $\beta$-stability
and~\Cref{lem:few-not-in-sync}, most clusters in $\OPT$ can be
``synced'' with clusters in the local search solution $\LS$, allowing
for very efficient substitution.  Some other cluster centers $c^*$ in
$\OPT$ can be substituted by nearby centers $c$ in $\LS$ (at distance
less than their average cluster distance).

\textbf{Step II:} Next, we find dummy centers for many of the
remaining $\OPT$ clusters by sampling clients according to their costs
in $\LS$. (This sampling procedure yields many additional structural
guarantees, which allow us to do some more ``clean-up''
substitutions.) We analyze this client sampling in two phases. The first phase seeks only the
best possible dummy centers. The second round finds additional dummies
with worse but still acceptable guarantees. This is precisely where the distinction between $\OPT_{\Lambda}^{\mathrm{early}}$ and $\OPT_{\Lambda}^{\mathrm{late}}$ arises from: $\OPT_{\Lambda}^{\mathrm{early}}$ are the centers for which we find dummies when optimistically sampling clients. 

\textbf{Step III:} We then do a second sampling step, this time
sampling \emph{cluster centers} according to the cluster costs in
$\LS$, and then according to their costs in a solution with some
redundant centers from $\LS$ removed and some dummy centers
added. (The new solution resulting from redundant center removal is
the set $\LS' \subseteq \LS$ guaranteed by \Cref{thm:structural}.)
During this step, we find some additional dummy centers and show the
existence of backup centers for other optimum clusters. These will %
help us control costs during the submodular maximization subroutine
used later in the algorithm of \Cref{thm:simple-submod}. A final round
of ``clean-up'' substitutions completes the proof. The only
algorithmic steps here are the two sampling procedures.

Let us also record some more structural properties of the output from
the above theorem.
\begin{restatable}{thm}{structuraltwo} \label{thm:struct2}
  Successful runs of the algorithm from~\Cref{thm:structural} also ensure the following properties:
\begin{enumerate}[label = (\alph*)]
   \item \emph{(Redundant centers are
      hidden in the mixed solution)} \label{cond:struct2-redund-hidden} Define $\MM' := \LS'
    \cup \Lambda$ and $\calM'$ as the clustering corresponding to
    $\pi_{\MM'}$. Then,
    \begin{equation*}
      \sum_{c \in \Gamma(\OPT_R) \cup (\LS' \setminus \LS_{\sigma})} \cost_{\MM'}(M_c) < \eps^3 \sopt,
    \end{equation*}
    where $M_c$ is the cluster in $\calM'$ centered at $c$.
    \item \emph{(Far from $\LS$)} For all $c^* \not\in \OPT_S$ and $p \in C^* \in \calO$,  $d(c^*,
      \LS) > \eta \avg_{\OPT}(C^*) + 2\eps \avg_{\LS}(C^*)$. \label{cond:struct2-not-close-new} 
          \item  \emph{(Hidden cores)} $\sum_{c^* \in \OPT_P \cup \olamlate}
      \cost_{\LS}(\core_{6\eps}(C^*)) < \eps^5
      \sopt$. \label{cond:struct2-hidden-core-gen} 
    \item \emph{(Remaining
      centers)} \label{cond:struct2-props-remaining} The set of remaining centers $\OPT_R$ satisfies the following.
    \begin{enumerate}[label = (\roman*), nosep]
    \item  \emph{($\OPT_R$ Hidden cores)} $\sum_{c^* \in \OPT_R}
      \cost_{\LS}(\core_\kappa(C^*)) < \eps^5
      \sopt$, and, for all $c^* \in \OPT_R$, we have
      $\cost_{\LS}(\core_\kappa(C^*)) < \eps^3
      \cost_{\OPT}(C^*).$ \label{cond:struct2-hidden-core-R} 
    \item \emph{(Satellites are hidden in $\calL$)} 
      $ \sum_{c\in \Gamma(\OPT_R)} \cost_{\LS}(C)
      < \eps^3 \sopt.$ \label{cond:struct2-hidden-satellites} 
    \item \emph{($\per(C^*) \approx \avg_{\OPT}(C^*)$)}  For all $c^*
      \in \OPT_R$, $\frac{1}{1 + 2\eps} \per(C^*) \leq
      \avg_{\OPT}(C^*) \leq \frac{(1 + \eps)}{\eta} \per(C^*)$. 
      \label{cond:struct2-per-approx-avg}
    \end{enumerate}
    \end{enumerate}
\end{restatable}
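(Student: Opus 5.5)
Theorem~\ref{thm:struct2} has no standalone proof. Each of its properties is an invariant that the construction in the proof of \Cref{thm:structural} enforces as it goes, and I would prove them together with that construction. The strategy is to examine how each center ends up outside $\OPT_S$: it is not substituted in Step~I, it survives both sampling rounds, and so on. Each property then becomes the contrapositive of ``the algorithm would have handled this center differently.''

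\emph{Far from $\LS$, property~\ref{cond:struct2-not-close-new}.} In Step~I, every $c^*$ with $d(c^*,\LS)\le \eta\avg_\OPT(C^*)+2\eps\avg_\LS(C^*)$ is put into $\OPT_S$ with $\sigma(c^*)$ the closest center in $\LS$. For a client $p\in C^*$ this gives $d(p,\sigma(c^*))\le d(p,c^*)+d(c^*,\sigma(c^*))$. Summing over the cluster, the cost is at most $(1+\eta)\cost_\OPT(C^*)+2\eps|C^*|\avg_\LS(C^*)$, and the $\eps$ terms add up to $O(\eps)\cost(\LS)=O(\eps)\sopt$. This is consistent with \ref{cond:struct-substitution-strong}, so every center outside $\OPT_S$ satisfies the stated lower bound. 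For clusters that are synced in the sense of \Cref{lem:few-not-in-sync}, substitution is even cheaper, so again only unsynced centers remain. The same lemma gives part~\ref{cond:struct-special-small}.

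\emph{Hidden cores and hidden satellites, properties~\ref{cond:struct2-hidden-core-gen}, \ref{cond:struct2-hidden-core-R}, \ref{cond:struct2-hidden-satellites}.} These come from the client sampling of Step~II. We sample clients with probability proportional to their $\LS$-cost, $\poly(s_0/\eps)$ times. Suppose the total $\LS$-cost of the $\kappa$-cores of the centers not yet handled were at least $\eps^5\sopt$. Then with probability $\poly(\eps)$ we would sample a core client $p$, and we guess its distance to $c^*$ up to a $(1+\eps)$ factor. Placing a dummy at $p$ with radius $d(p,c^*)\le(1+\kappa)\per(C^*)+O(\eps\err_p)$ would then move $c^*$ into $\OPT_\Lambda$. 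Repeating the argument $s_0$ times (a potential argument on the number of unhandled clusters) shows that at termination the remaining core mass is below $\eps^5\sopt$. The per-cluster bound $\eps^3\cost_\OPT(C^*)$ for $\OPT_R$ is obtained the same way: we use a cluster-relative threshold by also sampling proportionally to $\OPT$-style weights $\err_p$, and we declare clusters violating it to be dummy-found.

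For satellites, every $c\in\Gamma(c^*)$ has a witness client within $\eps\per(C^*)$. If $\cost_\LS(C)$ were large, the center sampling of Step~III would pick $c$ with noticeable probability. We would then use $c$, or its witness, to build a dummy or backup for $c^*$, contradicting $c^*\in\OPT_R$. Property~\ref{cond:struct2-redund-hidden} is the analogous statement in $\MM'$. Centers in $\LS'\setminus\LS_\sigma$ and satellites whose $\MM'$-clusters are expensive get sampled in the second center-sampling round and either yield backups (moving a center into $\OPT_P$) or get deleted from $\LS'$. The color-coding fixes which guess is correct, at success probability $\exp\{-O(s_0\log\eps^{-1})\}$.

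\emph{Property~\ref{cond:struct2-per-approx-avg}.} The lower bound $\per\le(1+2\eps)\avg_\OPT$ is \eqref{eq:per-bd}. For the upper bound, suppose $\per(C^*)<\eta\avg_\OPT(C^*)/(1+\eps)$. Then the center $\sigma(c^*)\in\Gamma(c^*)$, which the construction of $\OPT_R$ places within $(1+\eps)\per(C^*)$ of $c^*$, satisfies $d(c^*,\LS)\le(1+\eps)\per<\eta\avg_\OPT(C^*)$. This contradicts property~\ref{cond:struct2-not-close-new}.

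\emph{Main obstacle.} The hardest step is the sampling arguments for \ref{cond:struct2-redund-hidden} and the satellite bound. There, the thresholds ($\eps^3$ versus $\eps^5$) must be matched with the failure probabilities and with the fact that deleting redundant centers changes the $\MM'$-costs. I would handle this with a potential argument: each unsuccessful sampling round either certifies the bound or strictly decreases the number of special centers, and that number is at most $s_0$.
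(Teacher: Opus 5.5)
Your arguments for (b) and (d)(iii) are fine. Deriving the upper bound in (d)(iii) from $d(c^*,\sigma(c^*))\le(1+\eps)\per(C^*)$ in \Cref{thm:structural} plus (b) is a neat shortcut. For the remaining items, the paper does not re-run any sampling. It reads (c), (d)(i) and (d)(iii) off \Cref{thm:per-client-sampling}, and (a) and (d)(ii) off \Cref{thm:expensive}, using the inclusions $\OPT_R\subseteq\OPT_B\subseteq\OPT_2\subseteq\OPT_1\setminus\OPT_{\Lambda_1}$, $\OPT_P\cup\olamlate\subseteq\OPT_1\setminus\olamearly$, $\Gamma(\OPT_R)\subseteq\Gamma(\OPT_B)$ and $\LS_X\subseteq\LS_\sigma$. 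Your from-scratch sketches of those theorems have genuine gaps.

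\emph{Per-cluster core bound.} The bound $\cost_\LS(\core_\kappa(C^*))<\eps^3\cost_\OPT(C^*)$ cannot come from sampling proportionally to $\err_p$, because that weight depends on the unknown $\OPT$. Nor can you ``declare'' a cluster dummy-found without producing a dummy satisfying \Cref{thm:structural}(b). The missing idea is that violators are cheap to \emph{substitute}. The aggregate bound caps their total $\OPT$-cost by $\eps^{-3}\cdot\eps^5\sopt=\eps^2\sopt$. Also $d(c^*,\LS)\le\per(C^*)+\cost_\LS(\core_\kappa(C^*))/(\eps|C^*|)$, so sending them to $\pi_\LS(c^*)$ costs at most $\eps\sopt$ in total. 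They therefore go into $\OPT_S$ (as $\OPT_{S_1}$), and the per-cluster bound then holds on $\OPT_2\supseteq\OPT_R$ by definition.

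\emph{Property (c).} This concerns $\olamlate$, i.e.\ centers that \emph{did} receive dummies. Showing that the cores of unhandled centers are hidden says nothing about them. You need the two-phase client sampling. In the first $s_0$ rounds only leaders from punched $6\eps$-cores are kept, which makes the $6\eps$-cores of all non-early centers hidden. Only afterwards are $\kappa$-core leaders used, and the early/late split records which kind of leader a center received.

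\emph{Radius scales.} A $(1+\eps)$-accurate random guess succeeds with probability $\poly(\eps)$ only relative to an observable scale within a $\poly(1/\eps)$ factor of the target radius, and your sketches never supply one. In client sampling the paper uses $d(\ell,\LS)$. This is a valid scale only because leaders come from punched cores, where $d(p,\LS)\ge\eps^6\per(C^*)$. In center sampling, neither the sampled center nor its witness gives a scale, since the witness can be arbitrarily close to the center. The paper instead uses the distance from $\ell$ to the nearest other surviving center. That becomes a valid scale only after repeatedly deleting too-close centers, and \Cref{lem:no-dense-cores} (satellite sets have diameter at least $\eps\per(C^*)$) guarantees this terminates with a good estimate.

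\emph{Safe deletions.} The deletions must avoid $\LS_X$ and must not strip satellites from clusters that stay in $\OPT_B$. This is why the paper first puts every center having an $\LS_X$ center within $(1+\kappa(C^*)+4\eps)\per(C^*)$ into $\OPT_X$ as a backup. That step is deterministic, not the result of sampling as you suggest. The paper then uses \Cref{cla:comparable-per} so that every cluster losing a satellite gets a dummy from the same leader. Also, color-coding plays no role in the structural theorem; the guesses there are independent coin flips.

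The simplest repair is to cite \Cref{thm:per-client-sampling} and \Cref{thm:expensive} directly with the inclusions above, as the paper does.
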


\subsection{The Dangerous Set $D_\kappa$}
\label{sec:d_kappa}

After applying~\Cref{thm:structural}, each center in $\OPT$ has either
a corresponding dummy center in $\Lambda$, a backup center in $\LS_P$,
or a substitute center in $\LS_{\sigma}$. Most of these will incur a
cost less than a factor of 2; the one exception will be the clients in
the set \margindef{$D_\kappa$}
\begin{gather}
D_\kappa:=\bigcup_{c^*\in\OPT_R}\core_\kappa(C^*) \subseteq D_R. 
\end{gather}
This ``dangerous set'' will be an important character in the rest of
the discussion. The following claim gives a sense of why $D_\kappa$
arises naturally:
\begin{claim}
  \label{clm:dangerous}
  The cost of sending clients in $D_R$ to their substitutes is:
  \begin{gather}
    \sum_{c^*\in\OPT_R}\sum_{p\in C^*}d(p,\sigma(c^*))
    \leq 2\cost_\OPT(D_\kappa)
    +(1+\eta)\cost_\OPT(D_R\setminus D_\kappa)
    +O(\eps)\sopt.
  \end{gather}
\end{claim}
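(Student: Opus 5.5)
Start from \Cref{thm:structural}\ref{cond:struct-substitution-weak}. It bounds the left-hand side by $\sum_{c^*\in\OPT_R}(\cost_\OPT(C^*)+|C^*|\per(C^*))+O(\eps)\sopt$. So it suffices to show, for each $c^*\in\OPT_R$ separately,
\begin{equation*}
\cost_\OPT(C^*)+|C^*|\per(C^*) \leq 2\cost_\OPT(\core_\kappa(C^*)) + (1+\eta)\cost_\OPT(C^*\setminus\core_\kappa(C^*)) + O(\eps)\cost_\OPT(C^*).
\end{equation*}
Summing this over $\OPT_R$ gives the claim, because $\sum_{c^*\in\OPT_R}\cost_\OPT(C^*)\le\sopt$.

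Fix $c^*\in\OPT_R$ and split $|C^*|\per(C^*)$ clientwise over the core and the non-core. The first piece is $|\core_\kappa(C^*)|\per(C^*)$. The second is $|C^*\setminus\core_\kappa(C^*)|\per(C^*)$.

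For the non-core, every $p\notin\core_\kappa(C^*)$ has $d(p,c^*)>(1+\newkappa)\per(C^*)$. Since $\newkappa\ge 1/\eta-1$, this gives $\per(C^*)\le \eta\, d(p,c^*)$. The non-core part of $|C^*|\per(C^*)$ is therefore at most $\eta\,\cost_\OPT(C^*\setminus\core_\kappa(C^*))$, and together with its share of $\cost_\OPT(C^*)$ it contributes $(1+\eta)\cost_\OPT(C^*\setminus\core_\kappa(C^*))$.

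The core is the harder step. There we must show that $|\core_\kappa(C^*)|\per(C^*)$ exceeds $\cost_\OPT(\core_\kappa(C^*))$ by at most $O(\eps)\cost_\OPT(C^*)$ plus an amount we can absorb into the slack on the non-core. Pointwise this fails, since core points near $c^*$ have $d(p,c^*)\ll\per(C^*)$. The plan is a global averaging argument inside the cluster. By \Cref{thm:struct2}\ref{cond:struct2-per-approx-avg}, $\per(C^*)\le(1+2\eps)\avg_\OPT(C^*)$, so
\begin{equation*}
|C^*|\per(C^*)\le(1+2\eps)\cost_\OPT(C^*).
\end{equation*}
The left side of the target inequality is thus at most $(2+2\eps)\cost_\OPT(C^*)$.

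We then interpolate between two bounds. The non-core bound gives an LHS coefficient of at most $1+\eta$ on the non-core cost. The averaging bound gives total LHS at most $2\cost_\OPT(C^*)$ up to $O(\eps)$. Write $A=\cost_\OPT(\core_\kappa(C^*))$, $B=\cost_\OPT(C^*\setminus\core_\kappa(C^*))$, $m=|\core_\kappa(C^*)|$ and $P=\per(C^*)$. We want $A+B+mP+(|C^*|-m)P\le 2A+(1+\eta)B+O(\eps)(A+B)$. Since $(|C^*|-m)P\le\eta B$, it remains to show $mP\le A+O(\eps)(A+B)$, which again fails pointwise.

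Instead, use $mP+(|C^*|-m)P=|C^*|P\le(1+2\eps)(A+B)$ directly. This yields LHS at most $(2+2\eps)A+(2+2\eps)B$, which is too weak on $B$. Combine it with the first bound: LHS is at most $\min\{A+(1+\eta)B+mP,\ (2+2\eps)(A+B)\}$. If $mP\le A$, the first bound finishes. Otherwise $mP>A$, and I would try to use the fact that $(|C^*|-m)P\le\eta B$ is strictly slack to transfer the excess $mP-A$ into $B$'s budget. Concretely, $|C^*|P\le(1+2\eps)(A+B)$ together with $mP\ge A$ gives $(|C^*|-m)P\le B+2\eps(A+B)$, and I would check how this interacts with $B$'s coefficient. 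This balancing of the core excess is the main obstacle. If the pure per-cluster inequality does not close, I would fall back on a weaker form with coefficient $2$ on $A$ and $1+\eta$ on $B$ plus $O(\eps)\cost_\OPT(C^*)$, obtained by mixing the two bounds at a per-cluster convex weight.
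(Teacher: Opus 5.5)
Your reduction to a per-cluster inequality is correct and matches the paper's. So is your non-core step: for $p\notin\core_\kappa(C^*)$ you have $\per(C^*)\le d(p,c^*)/(1+\newkappa)\le\eta\,d(p,c^*)$.

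\textbf{The gap is the core step, which you never close.} In your notation you need $mP\le A+O(\eps)(A+B)$, and your workaround cannot replace it:
\begin{itemize}
\item The bound $(|C^*|-m)P\le B+2\eps(A+B)$ that you derive is weaker than the bound $(|C^*|-m)P\le\eta B$ you already have, so it frees no budget.
\item Any convex combination of $A+(1+\eta)B+mP$ with $(2+2\eps)(A+B)$ puts a coefficient strictly above $1+\eta$ on $B$, unless all the weight sits on the first bound. That case again requires $mP\le A+O(\eps)(A+B)$.
\item Your ``fallback'' is not weaker; it is just the target inequality restated.
\end{itemize}
In fact the facts you invoke cannot suffice. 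Take half of $C^*$ at distance $0$ from $c^*$ and half at distance $2$, with threshold $P=1$ and $1+\newkappa=1/\eta$. Then $\frac{1}{1+2\eps}P\le\avg_\OPT(C^*)\le\frac{1+\eps}{\eta}P$ holds. Yet the left side of your per-cluster inequality is $2\cost_\OPT(C^*)$, while the right side is $(1+\eta)\cost_\OPT(C^*)$, a loss of order $(1-\eta)\cost_\OPT(C^*)$ rather than $O(\eps)\cost_\OPT(C^*)$.

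\textbf{The missing idea is that $\per(C^*)$ is the $\eps$-\emph{percentile} distance.} By definition, at least $(1-\eps)|C^*|$ clients of $C^*$ satisfy $d(p,c^*)\ge\per(C^*)$. So at most $\eps|C^*|$ core clients lie closer to $c^*$ than $\per(C^*)$. Hence
\[
|\core_\kappa(C^*)|\per(C^*)\le\cost_\OPT(\core_\kappa(C^*))+\eps|C^*|\per(C^*).
\]
Using $\per(C^*)\le(1+2\eps)\avg_\OPT(C^*)$, you get $\eps|C^*|\per(C^*)\le\eps(1+2\eps)\cost_\OPT(C^*)$, which sums to $O(\eps)\sopt$ over $\OPT_R$. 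This is exactly how the paper bounds the $D_\kappa$ terms, and with it your argument closes in one line. The core clients with $d(p,c^*)\ll\per(C^*)$ that made you say the bound ``fails pointwise'' number at most $\eps|C^*|$, and that is precisely what the error term absorbs.
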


\begin{proof}
  Using 
  \Cref{thm:structural}\ref{cond:struct-subs}\ref{cond:struct-substitution-weak},
  and splitting the summation over clients in $D_\kappa$ and $D_R
  \setminus D_\kappa$ gives:
  \begin{align}
    \sum_{c^*\in\OPT_R}\sum_{p\in C^*}d(p,\sigma(c^*))
    &\leq \cost_\OPT(D_\kappa)  + \sum_{c^*\in\OPT_R}
      |\core_\kappa(C^*)| \per(C^*) \notag \\
    &~~~~+ \cost_\OPT(D_R \setminus D_\kappa)
    +
    \sum_{c^*\in\OPT_R}
    |C^* \setminus D_\kappa| \per(C^*)
      +O(\eps)\sopt \label{eq:dangermouse}
  \end{align}
  To bound the $D_\kappa$ terms in~(\ref{eq:dangermouse}), consider any cluster $C^*$. The
  definition of percentile implies that
  \[
    |\core_\kappa(C^*)|\per(C^*) \leq \cost_\OPT(\core_\kappa(C^*))
    +\eps|C^*|\per(C^*),
  \]
  since all but $\eps|C^*|$ of the clients from the core are at
  distances at least $\per(C^*)$ from $c^*$. Finally, the
  $\eps|C^*|\per(C^*)$ terms summed over all centers in $\OPT_R$
  contribute another $O(\eps) \sopt$, so 
  \begin{gather}
    \cost_\OPT(D_\kappa)  + \sum_{c^*\in\OPT_R}
    |\core_\kappa(C^*)| \per(C^*) \leq 2\cost_\OPT(D_\kappa)  +
    O(\eps)\sopt. \label{eq:dangermouse-1}
  \end{gather}
  To bound the non-$D_\kappa$ terms in (\ref{eq:dangermouse}), 
fix $c^* \in \OPT_R$
  and $p \in C^* \setminus D_{\kappa}$. Since
  $p \not\in \core_{\kappa}(C^*)$,
  $\per(C^*) \leq \frac{d(p, c^*)}{1 + \newkappa}$ and hence
  \begin{equation}
    d(p, c^*) + \per(C^*) \leq \frac{(2 + \newkappa)d(p, c^*)}{1 + \newkappa} \leq (1 + \eta) d(p, c^*), \label{eq:dangermouse-2}
  \end{equation}
  with the last inequality following from
  $\newkappa \geq \nf1\eta - 1$. Summing~(\ref{eq:dangermouse-2})
  over $p \in D_R \setminus D_{\kappa}$, and substituting
  into~(\ref{eq:dangermouse}) along with (\ref{eq:dangermouse-1})
  completes the proof.
\end{proof}


In the rest of the paper, we pursue a win-win analysis: (i)~If the
cost contribution of clients in $D_\kappa$ is a small fraction of
$\sopt$, we can compensate for this factor-of-$2$ loss by the smaller
loss on the remaining clients. (ii)~On the other hand, if their cost
contribution is large, we show that local search already gives an
excellent overall approximation.


%% file: supporting-sections/color-coding.tex
\subsection{Color-Coding and Submodular Maximization}
\label{sec:color-coding-submod}

\Cref{thm:structural} and \Cref{clm:dangerous} show the \emph{existence} of a substitution map
$\sigma$, a backup map $\psi$, and an auxiliary center map $\varphi$
that loses a factor of $2$ on $D_\kappa$, but less on the
rest. However, we are not yet done, because: (a)~$\MM' := \LS' \cup
\Lambda$ \margindef{$\MM'$} may have more than $k$ centers, and
(b)~the dummy centers $\Lambda$ are imaginary and need to be replaced
by real centers in $\FF$, plus we need to add some more optimal
centers corresponding to the backups. We do this via color-coding and
submodular maximization.

To begin, we define \margindef{$z$}
$z := (|\LS'| + |\OPT_\Lambda \cup \OPT_P| - k)_+$ as the ``excess''
number of centers we would have if we were to add in
$\OPT_\Lambda \cup \OPT_P$ to replace the dummies and account for the
backups. We choose an arbitrary ``deletion'' set \margindef{$\ZZ^*$}
$\ZZ^* \sse \LS' \setminus \LS_\sigma$ of size $|\ZZ^*| = z$. (To see
that such a set exists, note that
$|\LS_\sigma| = |\sigma(\OPT_R \cup \OPT_S)| \leq |\OPT_R \cup \OPT_S| =
k - |\OPT_\Lambda \cup \OPT_P|$, and hence $\LS'\setminus \LS_\sigma$
has size at least $z$.)

\begin{figure}[t]
  \centering
  \includegraphics[width=4in,page=2]{images/maps}
  \caption{The set $\ZZ^*$.}
  \label{fig:structural-maps-2}
\end{figure}

Define the ``target'' solution to be \margindef{$\MM^*$}
$\MM^* := (\LS' \setminus \ZZ^*) \cup \OPT_P \cup \OPT_\Lambda$  and $\MM^*_{\Lambda}$ its counterpart with $\OPT_P \cup \OPT_\Lambda$ replaced by $\Lambda$. 
However, we do not know the ideal deletion set $\ZZ^*$, nor the set
$\OPT_P \cup \OPT_\Lambda$ we would like to add, so we try to find
(an approximation to these) via two tools: (a) color-coding, and
(b)~submodular maximization.

\textbf{Color-Coding.}
First, we color the centers in $\LS'$ red or blue uniformly at
random. Define the good event $\calG$ to be the following:
\begin{enumerate}[nosep]
\item centers in the deletion set $\ZZ^*$ are all colored red, and
\item centers in $\pi_{\LS'
    \setminus \ZZ^*}(\ZZ^*)$ are all colored blue.
\end{enumerate}
Since each of the sets in the definition of $\calG$ is of size at most
$s_0$, we have that $\Pr[\calG] \geq 2^{-2s_0}$.  We can just repeat
the above coloring step $\exp(O(s_0))$ times and take the best
solution among all the runs. So we assume henceforth that event
$\calG$ happens; let $\LSred$ and $\LSblue$ be the red and blue
centers in $\LS'$. The advantage of this coloring is that we can now
restrict the search for the deletion set $\ZZ^*$ to within the red
centers, and we will be sure of maintaining good reassignments for the clients in clusters centered in $\ZZ^*$
(which we have forced to be colored blue).

\textbf{Submodular Maximization.} Next, we define a monotone
non-increasing \emph{supermodular} function $g := g_1 + g_2$, where
\begin{enumerate}[label = (\roman*),nosep]
\item
  $g_1(\YY) := \cost(\MM' \cup \YY)$, and 
\item
  $g_2(\YY) = \min_{\ZZ \subseteq \LSred : |\ZZ| = z} \sum_{c
    \in \ZZ} [\cost_{\MM'}(M_c) + |M_c|\, d(c, \LSblue \cup \YY \cup \Lambda)]$.
\end{enumerate}
Observe: the set $\ZZ^*$ is a candidate in the above minimization.
These functions $g_1$ and $g_2$ are designed to capture the costs of
clients assigned in $\calM'$ to the non-deleted centers from
$\LS' \setminus \ZZ^*$, and to the deleted centers from $\ZZ^*$,
respectively. (Some points are counted in both terms, and we need to
control this excess.)

We now define the submodular ``improvement'' function
$f(\YY) := g(\emptyset) - g(\YY)$, and apply the
$(1 - \nf{1}{e})$-approximation algorithm for submodular maximization
subject to matroid constraints
\cite{calinescu2011maximizing}. Specifically, we cast constraints of
finding a set $\YY$ of size $|\OPT_P| + |\OPT_\Lambda|$, subject to
opening at least one center in the ball
$B(\ell_{\lambda}, r_{\lambda})$ for each $\lambda \in \Lambda$, as
that of finding a base subject to a partition matroid constraint.
We then take the returned solution $\widehat\YY$, and the set
$\widehat\ZZ$ that achieves the minimizer for $g_2(\widehat\YY)$, and
return the $k$-median solution:
\[ \TT := (\LS' \setminus \widehat\ZZ) \cup \widehat\YY. \] Note that, although we do not know $|\OPT_\Lambda| + |\OPT_P|$ a priori, we can easily enumerate all possibilities of $O(s_0)$. To complete this
part of the argument, we relate the cost of solution $\TT$ to
$g(\widehat\YY)$ to show:

\begin{restatable}{thm}{simplesubmod} \label{thm:simple-submod}
  Conditioned on the event $\calG$ (which happens with probability
  $\exp\{-O(\eps^{-2} \beta^{-1})\}$), the solution $\TT \subseteq \FF$
  with $|\TT| = k$ satisfies
  \begin{align}
    \cost(\TT)
    &\leq (1 - \nf1e) \cost(\MM^*) + \nf1e\,
      \cost(\MM^*_{\Lambda}) + O(\eps) \sopt. \label{eq:first-part}
  \end{align}
  Moreover,
\begin{align*}
\cost(\TT)\leq{}&
(1+\eta)\cost_\OPT(D_S)+\cost_\OPT(D\setminus D_S)\\
&+\frac2e\sum_{c^*\in\olamearly}|C^*|\per(C^*)
+\frac2e\sum_{c^*\in\olamlate}(1+\newkappa)|C^*|\per(C^*)\\
&+\frac1e\sum_{c^*\in\OPT_P}(1+\newkappa)|C^*|\per(C^*)
+\sum_{c^*\in\OPT_R}|C^*|\per(C^*)
+O(\eps)\sopt.
\end{align*}
\end{restatable}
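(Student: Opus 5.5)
The proof has two parts. First we establish \eqref{eq:first-part} by analyzing the matroid-constrained submodular maximization. Then we evaluate $\cost(\MM^*)$ and $\cost(\MM^*_\Lambda)$ cluster by cluster using \Cref{thm:structural} and \Cref{clm:dangerous}.

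\textbf{Part 1: the sandwich $\cost(\TT)\approx g(\widehat\YY)$.} We first show $\cost(\TT)\le g(\widehat\YY)+O(\eps)\sopt$.
\begin{itemize}
\item Clients served in $\calM'\cup\widehat\YY$ by a center outside $\widehat\ZZ$ are charged to $g_1$.
\item A client whose $\calM'$-center $c$ lies in $\widehat\ZZ\subseteq\LSred$ is rerouted to $d(c,\LSblue\cup\widehat\YY\cup\Lambda)$. By the triangle inequality this is charged to $g_2$.
\item Blue centers are never deleted, so their rerouting targets survive.
\item Reroutes to dummies $\lambda$ are converted to the real center opened in $B(\ell_\lambda,r_\lambda)$ by the partition-matroid constraint, at extra cost at most $2r_\lambda$ per client.
\end{itemize}
We must make sure no client is charged to a dummy without support. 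For this we use that $\Lambda$-clients' costs appear in $g_1$ at $\Lambda$, and we repeat the ball argument.

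Next we compare $g$ at two points.
\begin{itemize}
\item At $\YY^*=\OPT_P\cup\OPT_\Lambda$ (a feasible base, since $\OPT_\Lambda$ lies in the dummies' balls by \Cref{thm:structural}\ref{cond:struct-dummies}), we show $g(\YY^*)\le\cost(\MM^*)+O(\eps)\sopt$. Take $\ZZ=\ZZ^*$ in $g_2$. Under $\calG$, every $c\in\ZZ^*$ has $\pi_{\LS'\setminus\ZZ^*}(c)$ blue, so $d(c,\LSblue)\le d(c,\LS'\setminus\ZZ^*)$. The $g_1$ term is at most $\cost(\MM^*)$ plus the double-counted mass of clients in $M_c$ for $c\in\ZZ^*$.
\item The double-counted excess is $\sum_{c\in\LS'\setminus\LS_\sigma}\cost_{\MM'}(M_c)$ and similar terms, which are $<\eps^3\sopt$ by \Cref{thm:struct2}\ref{cond:struct2-redund-hidden}. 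Bounding $|M_c|\,d(c,\cdot)$ for deleted $c$ needs care; we would route those clients through their $\MM^*$-center.
\item At the empty set, $g(\emptyset)\le\cost(\MM^*_\Lambda)+O(\eps)\sopt$ by the same argument with the dummies standing in.
\end{itemize}
The guarantee of \cite{calinescu2011maximizing} then gives
\[ f(\widehat\YY)\ge(1-\nf1e)f(\YY^*), \]
which rearranges to $g(\widehat\YY)\le(1-\nf1e)g(\YY^*)+\nf1e\,g(\emptyset)$. This yields \eqref{eq:first-part}.

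\textbf{Part 2: evaluating the two costs.} We assign clients in $\MM^*$ as follows.
\begin{itemize}
\item $D_S$ clients go to $\sigma(c^*)$; by \ref{cond:struct-substitution-strong} the total is $(1+\eta)\cost_\OPT(D_S)$.
\item $D_R$ clients go to $\sigma(c^*)$; by \ref{cond:struct-substitution-weak} the total is $\cost_\OPT+|C^*|\per(C^*)$.
\item $D_P$ and $D_\Lambda$ clients go to $c^*$ itself, at total cost $\cost_\OPT$.
\end{itemize}
All these targets survive because $\ZZ^*\cap\LS_\sigma=\emptyset$ and $\LS_P\subseteq\LS_\sigma$.

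In $\MM^*_\Lambda$, we change only the $D_\Lambda$ and $D_P$ clients.
\begin{itemize}
\item $D_\Lambda$ clients go to $\varphi(c^*)$, costing $d(p,c^*)+d(c^*,\ell)+r\le d(p,c^*)+2r$. By \ref{cond:struct-dummies}, $2r\le2(1+\kappa\mathbf 1_{\text{late}})\per(C^*)+O(\eps\err_p)$.
\item $D_P$ clients go to $\psi(c^*)\in\LS_P\subseteq\MM^*_\Lambda$, at extra cost $(1+\newkappa)\per(C^*)$ by \ref{cond:struct-backups}.
\end{itemize}

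Taking the $(1-\nf1e,\nf1e)$ combination gives exactly the stated coefficients: $\frac2e$ on the $\Lambda$ terms and $\frac1e$ on the $P$ term. The $D_R$ term $\sum_{c^*\in\OPT_R}|C^*|\per(C^*)$ appears in both solutions with coefficient $1$. The $O(\eps\err_p)$ errors sum to $O(\eps)\sopt$.

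\textbf{Main obstacle.} The hardest step is Part 1's upper bound $g(\YY^*)\le\cost(\MM^*)+O(\eps)\sopt$, specifically controlling the term $|M_c|\,d(c,\LSblue\cup\YY^*\cup\Lambda)$ for $c\in\ZZ^*$. The plan is to use the hiddenness bound \Cref{thm:struct2}\ref{cond:struct2-redund-hidden}: clusters of $\LS'\setminus\LS_\sigma$ carry tiny cost in $\calM'$. We then bound $d(c,\pi_{\LS'\setminus\ZZ^*}(c))$ by averaging over the clients of $M_c$ via the triangle inequality through their $\MM^*$-centers. This shows the whole $g_2$ contribution is absorbed into $\cost(\MM^*)$ up to $O(\eps)\sopt$.
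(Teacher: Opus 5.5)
Your skeleton matches the paper's. You sandwich $\cost(\TT)$ by $g(\widehat\YY)$, then evaluate $g$ at $\emptyset$ and at $\OPT_\Lambda\cup\OPT_P$ by plugging $\ZZ^*$ into $g_2$, using the coloring and \Cref{thm:struct2}\ref{cond:struct2-redund-hidden}. You then bound $(1-\nf1e)d(p,\MM^*)+\nf1e\,d(p,\MM^*_\Lambda)$ client by client, and your Part 2 is essentially \Cref{lem:mixed-bd}.

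\textbf{The gap: converting dummies to real centers.} You send a client that is charged to a dummy $\lambda$ to the real center $c_\lambda\in\widehat\YY\cap B(\ell_\lambda,r_\lambda)$ ``at extra cost at most $2r_\lambda$ per client''. The same applies in the $g_2$ reroute, where the rerouted point is a deleted center $c\in\widehat\ZZ$. You then assert $\cost(\TT)\le g(\widehat\YY)+O(\eps)\sopt$, but nothing in your argument makes this loss small.
\begin{itemize}
\item $r_\lambda$ can be of order $\per(C^*)$. So, summed over the clients your accounting charges to dummies, the loss is a priori comparable to $\sum_{c^*\in\OPT_\Lambda}|C^*|\per(C^*)$, i.e.\ to $\cost_\OPT(D_\Lambda)$.
\item The terms $|M_c|\cdot 2r_\lambda$ for $c\in\widehat\ZZ$ are not controlled at all, since $\widehat\ZZ$ is chosen by the algorithm and need not lie in $\LS'\setminus\LS_\sigma$.
\end{itemize}
A loss of this size would break both inequalities of the theorem.

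\textbf{The fix.} The augmented metric makes every dummy dominated by any real facility in its ball. Since $d(q,\lambda)=d(q,\ell_\lambda)+r_\lambda$ for every point $q$, any $c_\lambda\in\FF\cap B(\ell_\lambda,r_\lambda)$ satisfies $d(q,c_\lambda)\le d(q,\ell_\lambda)+d(\ell_\lambda,c_\lambda)\le d(q,\lambda)$. Your triangle inequality went through $\lambda$ rather than through the leader $\ell_\lambda$, and that detour is what costs $2r_\lambda$. Consequently $d(q,\TT)=d(q,\TT\cup\Lambda)$ for all points $q$, clients and deleted facilities alike. So $\cost(\TT)\le g(\widehat\YY)$ holds exactly, and no ``support'' argument is needed.

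\textbf{Two smaller points.}
\begin{itemize}
\item When bounding $g(\OPT_\Lambda\cup\OPT_P)$, do not bound $g_1$ by $\cost(\MM^*)$ wholesale and then add the reroute terms. The bound $|M_c|\,d(c,\LSblue\cup\YY\cup\Lambda)\le\sum_{p\in M_c}\bigl(d(p,c)+d(p,(\LS'\setminus\ZZ^*)\cup\YY\cup\Lambda)\bigr)$ already pays the $\MM^*$-cost of the clients in $M_c$, $c\in\ZZ^*$. That cost is not small, since these clients lost their center, so counting it again in $g_1$ is not covered by \Cref{thm:struct2}\ref{cond:struct2-redund-hidden}. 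Instead, charge $g_1$ only $d(p,c)$ for these clients, which is valid because $c\in\MM'$. This gives $g(\YY)\le\cost((\LS'\setminus\ZZ^*)\cup\YY\cup\Lambda)+3\sum_{c\in\ZZ^*}\cost_{\MM'}(M_c)$ for every $\YY$, handling both evaluation points at once.
\item The balls $B(\ell_\lambda,r_\lambda)$ may overlap, so the ball constraint is not literally a partition matroid on $\FF$. The paper lifts to the ground set $\FF\times(\Lambda\cup\{\bot\})$ and uses that $f\circ\proj$ remains monotone submodular.
\end{itemize}
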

We prove this result in \S\ref{sec:submodular-section}.

%% file: supporting-sections/cost-of-local-search.tex
\subsection{The Cost of Local Search}
\label{sec:cost-local-search}

To show part (ii) of the win-win approach, that local-search is a good
approximation for the client set $D_\kappa$, we use an extension of
our nuanced local-search analysis from \Cref{thm:simpler-ls}
based on cross-free pairings. In particular, the cross-free pairing
used for the simple $3$-approximation proof in \Cref{cor:3-approx} was
quite simple, forcing each cluster in $\calB$ to be contained in a
cluster in $\calA$. While we can show a good cost bound for just $D_{\kappa}$ with only \Cref{thm:simpler-ls} and more careful definitions for both $\mu_\AA$ and $\mu_\BB$, to simultaneously preserve our substitution cost bounds we need
a slight extension of cross-freeness. We
now define the notion of cross-freeness \emph{with respect to some
  $\lsd \subseteq D$}, and present the extension of
\Cref{thm:simpler-ls} to this setting of generalized cross-freeness.

\bigskip
\begin{figure}[H]
  \centering
  \includegraphics[width=5in]{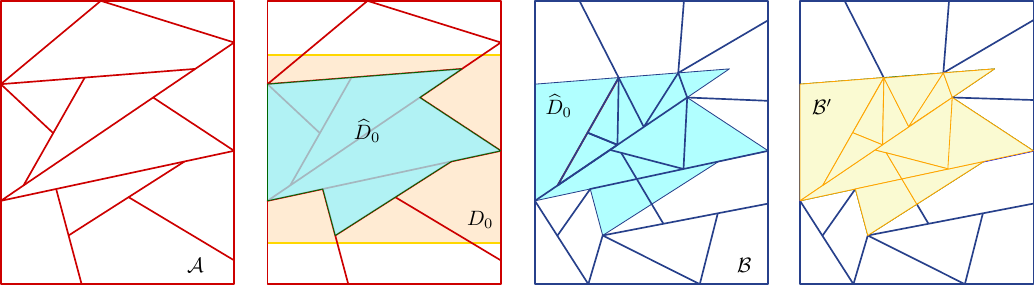}
  \caption{A clustering $\calA$ of $D$ in red; a possible set $\lsd$,
    and the set $\lsdres$, with the clusters $\calA'$ being the
    clusters now colored blue. Another clustering $\calB$ in blue,
    such that $\calA, \calB$ are not cross-free in general. However,
    they are cross-free w.r.t.\ $\lsd$, since each cluster in the
    clustering $\calB'$ obtained by restricting $\calB$ to $\lsdres$
    (given in goldenrod on the right) either contains, or is contained
    in, a cluster of $\calA'$.}
  \label{fig:gen-cross-free}
\end{figure}

\begin{restatable}[Generalized
  Cross-Freeness]{definition}{GenCF} \label{defn:gen-cross-free} Given
  a pair of clusterings $\calA, \calB$ of $D$, and a subset
  $\lsd \subseteq D$, define
  $\calA' := \{A \in \calA \mid A \subseteq \lsd\}$ to be the clusters
  completely contained within $\lsd$, and
  $\lsdres := \cup_{A \in \calA'} A$ to be the points in these
  clusters. Now let $\calB' := \{ B \cap \lsdres \mid B \in \calB \}$
  be the restrictions of these clusters of $\calB$ to the clients in
  $\lsdres$. The clusterings $\calA$ and $\calB$ are \emph{cross-free
    with respect to $\lsd$} if $\calA'$ and $\calB'$ are cross-free.
\end{restatable}


\begin{restatable}[Refined Local Search Bound]
  {thm}{localsearch} \label{thm:local-search-bd} Let $(D \cup \FF, d)$
  be an instance of $k$-median and $\eps > 0$. Let $\LS \subseteq \FF$
  with $|\LS| = k$ be a solution for local search with
  $O(1/\eps)$-sized swaps, and let $\BB \subseteq \FF$ be another
  benchmark solution. Given arbitrary assignments
  $\mu_\AA: D \to \LS$, $\mu_\BB: D \to \BB$ and a subset
  $\lsd \subseteq D$ such that
  (i)~$|\BB| + |\mu_{\AA}(D \setminus \lsd)| \leq k$ and (ii)~the pair
  $(\LS, \mu_\AA), (\BB, \mu_\BB)$ is cross-free with respect to
  $\lsd$, the clustering induced by the local-search solution
  $(\LS, \pi_\LS)$ has cost
  \begin{equation}
     \cost(\LS) \leq \cost_{\mu_\BB}(\lsd) + 
    (\cost(\mu_\AA) - \cost_{\LS}(\lsd)) + O(\eps) (\cost(\mu_\AA)+\cost_{\mu_{\BB}}(D_0)).
  \end{equation}
\end{restatable}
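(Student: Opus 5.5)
The plan is the standard Arya et al.\ swap-pair analysis, with the cross-free structure deciding which swaps to use. Write $\calA$ for the clustering of $\mu_\AA$ and $\calA'$, $\lsdres$, $\calB'$ as in \Cref{defn:gen-cross-free}. Clients in $D\setminus\lsdres$ keep their $\mu_\AA$ assignment in every swap we test. Their local-search cost can then be charged to $\cost_{\mu_\AA}$, which matches the term $\cost(\mu_\AA)-\cost_\LS(\lsd)$ once we note $\cost_\LS(D)-\cost_\LS(\lsd)=\cost_\LS(D\setminus\lsd)$.

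We organize the clusters of $\calA'$ and $\calB'$ as a laminar family. Because they are cross-free, each connected component of the ``intersects'' relation is a tree, whose root is a maximal set (either an $A$ or a $B\cap\lsdres$) and whose leaves partition it. For each $B\in\calB$ with $B\cap\lsdres\neq\emptyset$, we want to open $\mu_\BB$'s center $b$ and close the local-search centers $a$ whose clusters $A\in\calA'$ lie inside $B$. Hypothesis (i), $|\BB|+|\mu_\AA(D\setminus\lsd)|\le k$, supplies the counting. The centers of $\LS$ that serve clusters of $\calA'$ and are not used by $\mu_\AA$ outside $\lsd$ number at least $|\BB|$. So every $b$ can be paired with a distinct closable center, and centers of $\LS$ outside $\mu_\AA(D\setminus\lsd)$ may be closed without hurting clients outside $\lsdres$.

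The swaps come from two cases.
\begin{itemize}
\item \emph{$B'$ contains several $A$'s.} Close those centers $a$ and open $b$. Clients in $B'$ move to $b$ at cost $\cost_{\mu_\BB}$. Clients elsewhere keep their $\mu_\AA$ centers, since their $\calA$ clusters are disjoint from or contain $B'$.
\item \emph{Several $B'$ lie inside one $A$.} Open all the corresponding $b$'s and close $a$. Clients of $A$ not covered by these $B'$ need a reassignment, and we send them to $\mu_\AA$'s target, which is the one place $\mu_\AA$ is used.
\end{itemize}
Large groups, of size more than $1/\eps$, are split in the Arya et al.\ fashion. We close a ``cheap'' $a$ with no designated $b$ and reroute its clients via the triangle inequality through $b$ and back to some $a'$. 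This costs at most $\cost_{\mu_\AA}(p)+\cost_{\mu_\BB}(p)$ extra per client, paid only an $O(\eps)$ fraction of the time by averaging. That averaging produces the $O(\eps)(\cost(\mu_\AA)+\cost_{\mu_\BB}(\lsd))$ term.

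Summing the local-optimality inequalities $0\le\cost(\text{swap})-\cost(\LS)$, each client in $\lsdres$ is charged $d(p,\mu_\BB(p))-d(p,\pi_\LS(p))$ once. Each client outside is charged $d(p,\mu_\AA(p))-d(p,\pi_\LS(p))$ at most once plus $O(\eps)$ extra. Rearranging gives $\cost_\LS(\lsdres)\le\cost_{\mu_\BB}(\lsdres)+\ldots$ plus the remaining terms.

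Next we pass from $\lsdres$ to $\lsd$. Points in $\lsd\setminus\lsdres$ lie in $\calA$ clusters not contained in $\lsd$, and are simply bounded by $\mu_\AA$. We have $\cost_{\mu_\AA}(p)-\cost_\LS(p)\ge0$ pointwise and $\cost_{\mu_\BB}\ge0$, so the slack is absorbed into $\cost(\mu_\AA)-\cost_\LS(\lsd)$ after adding $\cost_\LS$ on both sides. The same pointwise fact keeps the $\lsdres$ versus $\lsd$ bookkeeping in the stated direction, because $\cost_{\mu_\BB}(\lsdres)\le\cost_{\mu_\BB}(\lsd)$.

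The main obstacle is the capacity and counting step, where the multiset of swaps must have each facility opened and closed a bounded number of times. The laminar trees mix both directions of containment, and a single $\calA'$ cluster can be both a child of some $B'$ and the parent of other $B'$s. I would first try a bottom-up matching along each tree: assign each $b$ to a leaf $A$ beneath it, or to its parent $A$. Then invoke hypothesis (i) to guarantee enough free closable centers, handling the overflow with the $1/\eps$-grouping trick.
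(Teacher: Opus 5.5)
\textbf{The passage from $\lsdres$ to $\lsd$ fails.} Your swaps open $b$ only for clusters $B$ meeting $\lsdres$, and they move only clients of $\lsdres$ to $\mu_\BB$. Clients of $\lsd\setminus\lsdres$ are bounded through $\mu_\AA$. After adding $\cost_\LS(D\setminus\lsdres)$ to both sides, what you actually obtain is
\[
\cost(\LS)\le \cost_{\mu_\BB}(\lsdres)+\cost_{\mu_\AA}(D\setminus\lsdres)+O(\eps)(\cdots).
\]
The theorem's right-hand side minus yours equals $\bigl[\cost_{\mu_\AA}(\lsdres)-\cost_\LS(\lsdres)\bigr]+\bigl[\cost_{\mu_\BB}(\lsd\setminus\lsdres)-\cost_\LS(\lsd\setminus\lsdres)\bigr]$, and the second bracket can be arbitrarily negative. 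The pointwise fact $d(p,\LS)\le d(p,\mu_\AA(p))$ does not rescue this. For these clients the target only offers the gap $d(p,\mu_\AA(p))-d(p,\LS)$, not $d(p,\mu_\AA(p))$ itself, so they must genuinely be charged to $\mu_\BB$.

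For instance, if $\lsdres=\emptyset$, hypothesis (i) says $\mu_\AA$ leaves at least $|\BB|$ centers of $\LS$ unused. Your argument then gives only the trivial $\cost(\LS)\le\cost(\mu_\AA)$. The theorem instead asserts $\cost_\LS(\lsd)\le\cost_{\mu_\BB}(\lsd)+(\cost(\mu_\AA)-\cost(\LS))+O(\eps)(\cdots)$, which follows by swapping unused centers for centers of $\BB$. This case matters downstream. In \Cref{lem:new-maps-cost}, clients of $\olamearly$ whose $\mu_\AA$-center is shared with clients outside $\lsd$ lie in $\lsd\setminus\lsdres$, and their $\cost_\LS$ terms must cancel.

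The missing idea is that \emph{every} $b\in\BB$ is opened in some swap, including those with $B\cap\lsdres=\emptyset$; the paper attaches these as isolated right vertices to an auxiliary left vertex. Each $p\in\lsd\setminus\lsdres$ is then sent to $\mu_\BB(p)$ in exactly one swap: the unique swap containing $b$ when $B'_b$ is dominated, or one designated swap otherwise. This is safe because $\mu_\AA(p)\notin\AA'$ is never closed, and it is exactly what hypothesis (i) pays for.

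\textbf{The combinatorial core is left open.} Because $\calA'$ and $\calB'$ both partition $\lsdres$, cross-freeness forces the intersection graph to be a disjoint union of stars. Indeed, $B'_2\subseteq A\subseteq B'_1$ with $B'_2\neq\emptyset$ forces $B'_1=B'_2$, so there are no mixed-direction trees. The real difficulty behind your ``main obstacle'' is the following.
\begin{itemize}
\item A strictly dominating $A$ (a left hub) may be closed at most once. Otherwise its clients go to $\mu_\BB$ repeatedly and its local-search cluster pays the $\mu_\AA$-gap repeatedly. A hub with too many leaves cannot be closed at all.
\item To avoid rerouting, closing a left vertex must open all its neighbors ($N(L_i)\subseteq R_i$).
\item Consequently the deficit of left stars must be paid by the excess $|S(w)|-1$ of right stars, i.e.\ a $B'$ split into many $A$'s whose leaves are closed while the hub is opened.
\item This excess must be packed into swaps of size $O(1/\eps)$ so that each left leaf is closed only $1+O(\eps)$ times in expectation.
\end{itemize}
The paper does this with the deficit/excess blocks of \Cref{alg:make-swaps} along a uniformly random cyclic offset (\Cref{lem:graph-swaps}). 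A ``bottom-up matching plus grouping'' does not by itself deliver these guarantees.

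Your Arya-style fallback is also unjustified as stated: closing $a$ without its $b$ and routing $p$ ``through $b$ back to some $a'$''. The leg $d(b,a')$ is not bounded by $d(p,\mu_\AA(p))+d(p,\mu_\BB(p))$, and would need a mirror-client charging scheme you have not set up. Relatedly, in your second case the clients of $A$ are fully covered by the $B'$'s inside it. The clients that actually need $\mu_\AA$ are those in the local-search cluster of $a$ outside $A$. Their $\mu_\AA$-center is open unless they are already being moved to $\mu_\BB$.
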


We prove \Cref{thm:local-search-bd} in
\S\ref{sec:local-revisited}. Observe that setting $\lsd = D$ in
\Cref{defn:gen-cross-free} and~\Cref{thm:local-search-bd} gives us
back \Cref{defn:cross-free} and \Cref{thm:simpler-ls}
respectively. Despite generalizing \Cref{thm:simpler-ls}, after defining the right abstraction, the proof is nearly identical (see \S\ref{sec:graph-theor-abstr}). The intuition for \Cref{thm:local-search-bd} is that
$\BB$ is a near-optimal solution on $D_0$ of size
$k - |\mu_{\AA}(D \setminus D_0)|$, but possibly a very bad solution
outside of $D_0$. Next, we present our claim about the existence of
good (generalized) cross-free pairs, which are cognizant of the set
$D_\kappa$.
\begin{restatable}[A ``Nice'' Cross-Free Pair]{lemma}{NicePair}
  \label{lem:new-maps-cost}
  Let $\BB := \olamearly \cup \OPT_R$ and
  $\lsd := D^{\early}_\Lambda \cup D_R$. Then there
  exist mappings $\mu_\AA: D \to \LS, \mu_\BB: D \to \BB$ such that
  the associated clusterings $(\calA, \calB)$ form a cross-free pair
  with respect to $\lsd$ and $|\BB| + |\mu_{\AA}(D \setminus D_0)| \leq k$. Moreover,
  \begin{align*}
    \cost_{\mu_\BB}(D_0) &\leq \cost_{\OPT}(D_0) + O(\eps) \sopt, \text{and} \\
    \cost(\mu_\AA) &\leq  \sum_{c^*\in\olamlate\cup\OPT_P}
 \bigl(\cost_\OPT(C^*)+|C^*|\per(C^*)\bigr) +(1+\eta)\cost_\OPT(D_S)  \\
 &+ \cost_\LS(\lsd)
 +2\cost_\OPT(D_\Lambda^{\early}) +  \cost_\OPT(D_R\setminus D_\kappa)\\
&+\sum_{c^*\in\OPT_R}|C^*\setminus D_\kappa|\per(C^*)
 +O(\eps)\sopt.
  \end{align*}
\end{restatable}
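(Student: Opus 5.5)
We will construct the two maps explicitly, cluster by cluster of $\calO$, and then check the three requirements in turn: generalized cross-freeness, the size condition, and the cost bounds.

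\textbf{Construction of $\mu_\BB$.} For $p\in \lsd$ we set $\mu_\BB(p):=\pi_\OPT(p)$. This lies in $\BB$ because $\lsd=D^{\early}_\Lambda\cup D_R$ consists exactly of the clients whose optimal center is in $\olamearly\cup\OPT_R=\BB$. It immediately gives $\cost_{\mu_\BB}(\lsd)=\cost_\OPT(\lsd)$. Outside $\lsd$ the cost of $\mu_\BB$ does not appear in the bound, so we are free to choose $\mu_\BB$ there to help cross-freeness, e.g.\ map all of $D\setminus\lsd$ to a single fixed center of $\BB$.

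\textbf{Construction of $\mu_\AA$.} We go cluster by cluster of $\calO$.
\begin{itemize}
\item For $C^*$ with $c^*\in\OPT_S$, map all of $C^*$ to $\sigma(c^*)$. By \Cref{thm:structural}\ref{cond:struct-subs}\ref{cond:struct-substitution-strong} this costs $(1+\eta)\cost_\OPT(D_S)+O(\eps)\sopt$.
\item For $c^*\in\OPT_P$, map $C^*$ to $\psi(c^*)$. The triangle inequality and \Cref{thm:structural}\ref{cond:struct-backups} give $\cost_\OPT(C^*)+(1+\newkappa)|C^*|\per(C^*)$ up to $O(\eps\err_p)$ terms. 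The statement only allows $|C^*|\per(C^*)$, so the backup bound with $(1+\newkappa)$ looks too weak. Instead I would map $C^*$ to the nearest center of $\LS_\sigma$ to $c^*$, or use the hidden-core property \Cref{thm:struct2}\ref{cond:struct2-hidden-core-gen}: a core client $q$ with $d(q,c^*)\le(1+6\eps)\per(C^*)$ has tiny $\LS$-cost, hence is very near $\pi_\LS(q)$. Averaging over the core (which has $\ge\eps|C^*|$ clients, each of small cost) produces a center $c\in\LS$ with $d(c^*,c)\le(1+O(\eps))\per(C^*)+$ negligible. We map all of $C^*$ to such a center, chosen inside $\LS'$.
\item Do the same for $c^*\in\olamlate$.
\item For $p\in\lsd$, set $\mu_\AA(p):=\pi_\LS(p)$. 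This contributes exactly $\cost_\LS(\lsd)$.
\end{itemize}
The remaining terms in the bound, $2\cost_\OPT(D^{\early}_\Lambda)+\cost_\OPT(D_R\setminus D_\kappa)+\sum|C^*\setminus D_\kappa|\per(C^*)$, are slack. They pay for clients of $\lsd$ that we cannot leave with $\pi_\LS$ because their local-search cluster crosses into a non-$\lsd$ cluster. Such a client is re-routed to a substitute instead: to $\sigma(c^*)$ for $D_R\setminus D_\kappa$, at cost $\le d(p,c^*)+\per(C^*)$ by \Cref{thm:structural}\ref{cond:struct-subs}\ref{cond:struct-substitution-weak}. For $D^{\early}_\Lambda$ it goes to the nearest $\LS$ center of $c^*$, at cost $\le 2d(p,c^*)$ plus $\LS$-cost, via the \Cref{cor:3-approx} averaging.

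\textbf{Cross-freeness and size.} The key to cross-freeness is to refine the rule for $p\in\lsd$: keep $\mu_\AA(p)=\pi_\LS(p)$ only when the whole $\calL$-cluster of $\pi_\LS(p)$ lies inside $\lsd$. Then the $\calA$-clusters contained in $\lsd$ are whole local-search clusters on $\lsd$. Every other client of $\lsd$ is sent to a center that also receives non-$\lsd$ clients, so its $\calA$-cluster is not contained in $\lsd$ and is excluded from $\calA'$. The restriction $\calB'$ consists of subsets of optimal clusters, and each $\calA'$-cluster is a whole $\calL$-cluster. To avoid crossing, I would additionally require that each such $\calL$-cluster lie inside a single $C^*$, pushing offenders to the substitute as above. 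This is where the satellite and hidden-core facts \Cref{thm:struct2}\ref{cond:struct2-props-remaining} are needed: they show that $D_\kappa$ clients sit in $\LS$-clusters either wholly within $C^*$ or of negligible cost. For the size condition, $\mu_\AA(D\setminus\lsd)$ uses centers in $\sigma(\OPT_S)$ together with one center per $\OPT_P\cup\olamlate$ cluster. Hence $|\mu_\AA(D\setminus\lsd)|\le|\OPT_S|+|\OPT_P|+|\olamlate|=k-|\BB|$, provided re-routed $\lsd$ clients only use centers already in this image. This is why they must be sent to centers of non-$\lsd$ clusters and not to $\sigma(c^*)$ for $c^*\in\OPT_R$, which by \Cref{thm:structural}\ref{item:unique-preimage} is not shared.

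\textbf{Main obstacle.} I expect the hardest step to be this last tension. Clients of $D_R\setminus D_\kappa$ whose $\calL$-cluster crosses must be sent somewhere both cheap (about $d(p,c^*)+\per(C^*)$) and already counted in the size budget. My first attempt is to count $\sigma(c^*)$ for $c^*\in\OPT_R$ on the $\BB$ side: let $\calA$ give all of $C^*\setminus(\text{good }\calL\text{-clusters})$ to $\sigma(c^*)$, and argue that this $\calA$-cluster lies in $\lsd$, so it is part of $\calA'$ and contained in the $\calB'$-cluster $C^*$. That $\calA$-cluster does not touch $D\setminus\lsd$, so it does not consume the budget $|\mu_\AA(D\setminus\lsd)|$. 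What remains to check is that no non-$\lsd$ client is $\pi$-assigned to $\sigma(c^*)$, which follows from the uniqueness of preimages in \Cref{thm:structural}\ref{item:unique-preimage}.
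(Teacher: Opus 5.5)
There is a genuine gap. It comes from your decision to take $\mu_\BB=\pi_\OPT$ on $\lsd$ and repair cross-freeness only by rerouting under $\mu_\AA$.

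The bound you must prove reserves for a client of $D_\kappa$ only its local-search cost. The only term covering it is $\cost_\LS(\lsd)$, and $\cost_\LS(D_\kappa)<\eps^5\sopt$. So essentially every core client of an $\OPT_R$ cluster must keep $\mu_\AA(p)=\pi_\LS(p)$, which is a satellite. Rerouting it to $\sigma(c^*)$ costs about $d(p,c^*)+\per(C^*)\approx 2\per(C^*)$, and nothing in the bound pays for that.

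However, a single satellite can serve, at negligible local-search cost, core clients of two different $\OPT_R$ clusters. The hidden-satellite bound \Cref{thm:struct2}\ref{cond:struct2-props-remaining}\ref{cond:struct2-hidden-satellites} only says such clients are close to their satellite. It does not say they are few, or that they come from a single $C^*$. So your claim that $D_\kappa$ clients sit in $\calL$-clusters ``wholly within $C^*$ or of negligible cost'' does not give what you need.

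With $\mu_\BB=\pi_\OPT$, such an $\calA'$-cluster meets two $\calB'$-clusters and would have to contain both entirely. Suppose the core of $C^*_1$ is split over satellites $c,c'$, each shared with a different cluster. Then both $A_c$ and $A_{c'}$ would have to contain $B'_{c^*_1}$, which is impossible since they are disjoint. That leaves only the unaffordable rerouting. Your extra requirement that the whole $\calL$-cluster lie in $\lsd$ is also unnecessary, since what matters is whether $A_c\subseteq\lsd$, and it forces more such reroutings.

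The missing idea is to do the repair on the $\BB$ side.
\begin{itemize}
\item Let $D_\Gamma:=\{p\in\lsd:\pi_\LS(p)\in\Gamma(\OPT_R)\}$. Pick $h:\Gamma(\OPT_R)\to\OPT_R$ with $c\in\Gamma(h(c))$ and $h(\sigma(c^*))=c^*$; this is well defined by \Cref{thm:structural}\ref{cond:struct-subs}\ref{item:unique-preimage}.
\item Set $\mu_\BB(p):=h(\pi_\LS(p))$ for $p\in D_\Gamma$. Then each satellite's $\lsd$-clients lie in a single $\calB$-cluster.
\item Keep $\mu_\AA=\pi_\LS$ on $D_\Gamma$, and send $C^*\setminus D_\Gamma$ to $\sigma(c^*)$ for $c^*\in\OPT_R$.
\item For $c^*\in\olamearly$, send $C^*\setminus D_\Gamma$ to the closest \emph{non-satellite} center of $\LS$. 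Your ``nearest $\LS$ center of $c^*$'' may be a satellite and recreate a crossing.
\end{itemize}

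The price now appears in $\cost_{\mu_\BB}(\lsd)$, and bounding it is the real work. Take a witness $q_w\in C^*$ with $d(q_w,c)\le\eps\per(C^*)$. Combined with \Cref{thm:struct2}\ref{cond:struct2-not-close-new}, it shows that the client $p_m$ nearest to $c$ has $d(p_m,c^*)\le(1+5\eps)\,d(p_m,\OPT)$. Hence $d(p,h(c))-d(p,\OPT)\le(4+10\eps)\,d(p,c)+5\eps\,d(p,\OPT)$ for all $p\in C_c\cap\lsd$, which sums to $O(\eps)\sopt$ by hidden satellites.

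Core clients outside $D_\Gamma$ have $d(p,\LS)>\eps\per(C^*)$. So they contribute only $O(\eps^4)\sopt$ by \Cref{thm:struct2}\ref{cond:struct2-props-remaining}\ref{cond:struct2-hidden-core-R}.

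Two smaller corrections:
\begin{itemize}
\item The size condition never depends on where $\lsd$-clients are sent, so your worry about rerouted $\lsd$ clients consuming budget is misplaced.
\item Uniqueness of preimages does not prevent $\pi_\LS(c^{*\prime})$, for $c^{*\prime}\in\olamlate\cup\OPT_P$, from equaling $\sigma(c^*)$. This is harmless: that $\calA$-cluster then simply leaves $\calA'$.
\end{itemize}
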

(We prove \Cref{lem:new-maps-cost} in
\S\ref{sec:app-proofs-local-search}.) We set $D_0 := D^{\early}_\Lambda \cup D_R$ here because this is precisely the set of clients in clusters without efficient substitutes in $\LS$. For all other clusters, we simply assign them to their substitutes under $\mu_{\AA}$ without having to worry about ensuring cross-freeness. This explains the $\olamlate\cup\OPT_P$ and $D_S$ terms in the bound on $\cost(\mu_{\AA})$. 

Then, within $D_0$, we set $\mu_{\BB}$ to be the assignment under $\pi_{\OPT}$ except that the $D_0$-clients in local search clusters centered at satellites $\Gamma(\OPT_R)$ are ``absorbed'' into $\OPT_R$ clusters. The assignment is constructed such that, for each $c^* \in \OPT_R$, the cluster centered at $c^*$ absorbs the local search cluster centered at $\sigma(c^*) \in \Gamma(\OPT_R)$. This change ensures that (1) each $\OPT_R$ cluster can be sent to its substitute under $\mu_{\AA}$ while preserving cross-freeness and (2) each client in a satellite cluster can be assigned to its satellite center under $\mu_{\AA}$. The combination of these two properties ensures that the low local search cost of the cores of $\OPT_R$ clusters is preserved under $\mu_{\AA}$ and non-core clients incur the substitution cost guaranteed by $\sigma$. This gives the $\OPT_R$-related contributions in the bound on $\cost(\mu_{\AA})$. Finally, the $D_{\Lambda}^{\mathrm{early}}$ bound follows from straightforwardly bounding the cost of reassigning to the nearest local search center.

Given the existence of this
``nice'' cross-free pair, we can infer a bound on the cost of the
local search which has low cost for the clients in $D_\kappa$, as
desired.
\begin{theorem}[Nuanced Local Search Guarantee]
  \label{thm:applied-local-search-bd}
  We have 
\begin{align*}
\cost(\LS)\leq{}&
(1+\eta)\cost_\OPT(D_S)
+3\cost_\OPT(D_\Lambda^{\early}) \nonumber\\
&+\sum_{c^*\in\olamlate\cup\OPT_P}
 \bigl(\cost_\OPT(C^*)+|C^*|\per(C^*)\bigr) \nonumber\\
&+\cost_\OPT(D_\kappa)+2\cost_\OPT(D_R\setminus D_\kappa)
+\sum_{c^*\in\OPT_R}|C^*\setminus D_\kappa|\per(C^*)
+O(\eps)\sopt.
\end{align*}
\end{theorem}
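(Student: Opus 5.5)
The plan is to combine \Cref{thm:local-search-bd} with the cross-free pair constructed in \Cref{lem:new-maps-cost}. I would take $\BB := \olamearly \cup \OPT_R$, $\lsd := D^{\early}_\Lambda \cup D_R$, and the maps $\mu_\AA, \mu_\BB$ given by that lemma. The lemma already certifies both hypotheses of \Cref{thm:local-search-bd}: the size condition $|\BB| + |\mu_\AA(D\setminus\lsd)| \leq k$ and cross-freeness with respect to $\lsd$. So \Cref{thm:local-search-bd} yields
\[
\cost(\LS) \leq \cost_{\mu_\BB}(\lsd) + \cost(\mu_\AA) - \cost_\LS(\lsd) + O(\eps)\bigl(\cost(\mu_\AA) + \cost_{\mu_\BB}(\lsd)\bigr).
\]

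Next I substitute the two bounds from \Cref{lem:new-maps-cost}. The term $+\cost_\LS(\lsd)$ in the bound on $\cost(\mu_\AA)$ cancels exactly against $-\cost_\LS(\lsd)$; this cancellation is the point of the construction. The bound $\cost_{\mu_\BB}(\lsd) \leq \cost_\OPT(\lsd) + O(\eps)\sopt$ contributes
\[
\cost_\OPT(D^{\early}_\Lambda) + \cost_\OPT(D_\kappa) + \cost_\OPT(D_R\setminus D_\kappa),
\]
using $D_\kappa \subseteq D_R$ and the disjointness of $D^{\early}_\Lambda$ and $D_R$. Adding the remaining terms of the $\cost(\mu_\AA)$ bound gives the stated coefficients: $3$ on $D^{\early}_\Lambda$ ($1+2$), $1$ on $D_\kappa$, and $2$ on $D_R\setminus D_\kappa$ ($1+1$). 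The $(1+\eta)\cost_\OPT(D_S)$ term, the $\olamlate\cup\OPT_P$ sum and the $\sum_{c^*\in\OPT_R}|C^*\setminus D_\kappa|\per(C^*)$ term carry over unchanged.

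The remaining step is to absorb the error $O(\eps)(\cost(\mu_\AA)+\cost_{\mu_\BB}(\lsd))$ into $O(\eps)\sopt$. This needs both quantities to be $O(\sopt)$, and the main obstacle is the $\cost_\LS(\lsd)$ term inside the bound on $\cost(\mu_\AA)$.
\begin{itemize}
\item $\cost_\LS(\lsd) \leq \cost(\LS) = O(\sopt)$, because $\LS$ is a $(3+O(\eps))$-approximation by \Cref{cor:3-approx}.
\item Each $\per(C^*)$ term is at most $(1+2\eps)\avg_\OPT(C^*)$ by \eqref{eq:per-bd}, so $|C^*|\per(C^*) = O(\cost_\OPT(C^*))$.
\item Every other term in the two bounds is a constant multiple of an $\OPT$-cost plus $O(\eps)\sopt$.
\end{itemize}
Hence $\cost(\mu_\AA) + \cost_{\mu_\BB}(\lsd) = O(\sopt)$, the error term is $O(\eps)\sopt$, and the theorem follows. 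I do not expect a real difficulty here: all the substantive work sits in \Cref{thm:local-search-bd} and \Cref{lem:new-maps-cost}, and this step is bookkeeping.
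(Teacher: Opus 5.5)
Your proposal is correct and matches the paper's proof: instantiate \Cref{thm:local-search-bd} with the pair from \Cref{lem:new-maps-cost}, cancel the $\cost_\LS(\lsd)$ terms, and split $\cost_\OPT(\lsd)$ over $D_\Lambda^{\early}$, $D_\kappa$, and $D_R\setminus D_\kappa$. The paper leaves implicit your check that the $O(\eps)$ error term is $O(\eps)\sopt$, which uses $\cost_\LS(\lsd)\leq\cost(\LS)=O(\sopt)$ and \eqref{eq:per-bd}; you handle it correctly.
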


\begin{proof}
 Applying \Cref{thm:local-search-bd} and \Cref{lem:new-maps-cost}, the
$\cost_\LS(\lsd)$ terms cancel. The desired result then follows from the fact that
$\lsd=D_\Lambda^{\early}\cup D_R$.
\end{proof}



%% file: merged-structure.tex
\newcounter{mycount}
\setcounter{mycount}{0}

\section{Structural Theorem: Proofs of~\Cref{thm:structural}
  and~\Cref{thm:struct2}} \label{sec:structural-thm}

In this section, we prove our main structural result
\Cref{thm:structural}. The proof proceeds by a series of substitution
and sampling steps. Over the course of the algorithm and analysis, we
will accumulate a collection of low cost substitutions of centers in
$\OPT$ by centers in $\LS$.

\paragraph{Notation.} In this section it will often be convenient to
combine maps $f_1 : X_1 \to Y, f_2 : X_2 \to Y, \ldots, f_{r} : X_r
\to Y$ where $X_i \cap X_j = \emptyset$ for $i \neq j$. We write
$[f_1, f_2, \ldots, f_r] : X_1 \sqcup X_2 \sqcup \cdots \sqcup X_r \to Y$ to denote the (coproduct) map applying $f_i$ on inputs in $X_i$.

\subsection{Synchronized and Close Clusters}\label{sec:synchr-clust}

The first two substitutions arise directly from local search on
$\beta$-stable instances, and from optimal centers that have a
local-search center close to them.

\begin{definition}[Synchronized Clusters] \label{def:synced} An
  optimum cluster $C^* \in \calO$ is \emph{synchronized} (or
  \emph{synced}) with respect to $\calL$ if there exists a cluster
  $C' \in \calL$ such that
  $|C' \triangle C^*| \leq \eps \min(|C^*|, |C'|)$. In this case, we
  also say that cluster $C'$ is synced and that $C'$ and $C^*$ are
  $\emph{in-sync}$.
\end{definition}
(In prior work~\cite{2approx}, synchronized clusters were called pure
clusters.) Since $\eps < \nf12$, the in-sync pairs form a matching. By
prior results, most optimum clusters are synced with respect to
$\calL$ in $\beta$-stable instances, and synced clusters can cheaply
substitute one another:

\begin{lemma}[Restated from {\cite[Lemma
    IV.4]{cs17}}] \label{lem:few-not-in-sync} Let $(D \cup \FF, d)$ be
  $\beta$-stable, and $\eps \in (0, \nf1{200})$. Then, at most
  $s_0 = \eps^{-2} \beta^{-1}$ clusters $C^* \in \calO$ are not synced
  with respect to $\calL$.
\end{lemma}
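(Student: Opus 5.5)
Since this is restated from \cite[Lemma IV.4]{cs17}, the plan is to reproduce their argument. It combines two ingredients: the swap inequalities satisfied by a local optimum $\LS$ with $O(1/\eps)$-sized swaps, and the fact that stability makes every optimal center expensive to delete. I expect the constants to need the swap size $\ell$ to be a large enough $\poly(1/\eps)$, which the phrase ``$O(1/\eps)$-sized swaps'' should absorb after rescaling $\eps$.

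\textbf{Step 1 (stability gives a large deletion cost).} For every $c^* \in \OPT$, the set $\OPT \setminus \{c^*\}$ is a feasible $(k-1)$-median solution. Hence $\beta$-stability gives
\[
\cost(\OPT\setminus\{c^*\}) - \sopt \geq \beta\,\sopt .
\]
The same bound holds for any $(k-1)$-subset of $\OPT \cup \LS$ that is ``mostly $\OPT$''. In particular, it holds when a whole unsynced cluster $C^*$ is reassigned elsewhere.

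\textbf{Step 2 (mixed solutions).} Partition $\OPT$ and $\LS$ into groups of size at most $\ell$, following the standard Arya et al.\ capture/pairing construction. For each group $(S,T)$, with $S \subseteq \LS$ and $T \subseteq \OPT$, local optimality gives $\cost((\LS\setminus S)\cup T) \geq \cost(\LS)$. The reassignment bound for this swap can be written as a sum of per-client terms.

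The key observation concerns a group containing an optimal cluster $C^*$ that is unsynced, so $|C^*\triangle C'| > \eps\min(|C^*|,|C'|)$ for every $C' \in \calL$. For such a group, the standard reassignment is wasteful by a margin. Either an $\eps$-fraction of $C^*$ is served in $\LS$ by centers other than the one matched to $c^*$, or an $\eps$-fraction of the matched local-search cluster lies outside $C^*$. In either case, one can exhibit a $(k-1)$-center solution obtained by merging or deleting one center of that group. Its cost exceeds $\sopt$ by at most a quantity bounded by $\eps^{-1}$ times that group's slack in the swap inequality, plus $O(\eps)$ times that group's share of $\cost(\LS)+\sopt$.

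\textbf{Step 3 (summing).} Summing the swap inequalities over all groups bounds the total slack. This is the usual telescoping that yields the $(3+O(\eps))$ bound. Refined with an $\eps$-fraction of slack per group, it shows that the slacks of all groups together are $O(\eps)\,\sopt$.

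Now suppose $m$ clusters are unsynced. Apply Step 1 to each of them via the $(k-1)$-solutions of Step 2, and sum. This gives
\[
m\,\beta\,\sopt \leq \sum (\text{excess costs}) \leq \eps^{-1}\cdot O(\eps)\,\sopt \cdot O(\eps^{-1}).
\]
The final factor $\eps^{-1}$ comes from the group sizes. Therefore $m \leq O(\eps^{-2}\beta^{-1})$, and with constants absorbed this is at most $s_0$.

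\textbf{Main obstacle.} The hard part is Step 2. It requires converting ``unsynced'' into an explicit cheap $(k-1)$-center solution whose excess cost is charged against that group's slack in the local-optimality inequality, and not against the global $3\sopt$ loss. Otherwise the bound degenerates to $m\beta \leq O(1)$ per cluster, which gives nothing.

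My first attempt would follow \cite{cs17}. I would classify each non-synced $C^*$ by whether $|C^*\setminus C'|$ or $|C'\setminus C^*|$ is large, where $C'$ is the local-search cluster capturing most of $C^*$. I would then use the Markov-type bound on the distance from $c^*$ to $\pi_\LS(c^*)$, as in \Cref{cor:3-approx}, to bound the cost of moving the mismatched $\eps$-fraction. The double-counting must be controlled so that each group is charged $O(1/\eps)$ times.
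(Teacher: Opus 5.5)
The paper does not reprove this lemma; it quotes it from \cite{cs17}. Judged on its own, your proposal has a genuine gap. Step~2, which you flag as the hard part, is the entire content of the lemma, and it is only asserted. The ``$(k-1)$-center solution obtained by merging or deleting one center of that group'' is never specified, and neither natural reading works. A deletion from the post-swap solution $(\LS\setminus S)\cup T$ costs at least $\cost(\LS)$, which may already exceed $(1+\beta)\sopt$, so stability says nothing about it. A deletion from $\OPT$ has excess $\cost(\OPT\setminus\{c^*\})-\sopt$, which depends on $\OPT$ alone and cannot be charged to the inequality $\cost((\LS\setminus S)\cup T)\geq\cost(\LS)$. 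More basically, ``unsynced'' is a purely combinatorial condition. The mismatched $\eps$-fraction may consist of clients nearly equidistant from two centers, so by itself it creates no cost margin anywhere.

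Step~3 is also unsupported. Summing the per-group reassignment bounds only shows that their total lies in $[0,(3+O(\eps))\sopt-\cost(\LS)]$. That is an upper bound of order $\sopt$, and nothing forces the total slack down to $O(\eps)\sopt$. Moreover, your product $\eps^{-1}\cdot O(\eps)\cdot O(\eps^{-1})$ gives $\eps^{-1}\beta^{-1}$, not the claimed $\eps^{-2}\beta^{-1}$.

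What is missing is a \emph{metric} use of stability. Deleting $c_i^*$ and sending all of $C_i^*$ to $c_j^*$ gives $|C_i^*|\,d(c_i^*,c_j^*)\geq\beta\sopt$ for every $j\neq i$. By averaging, all but $O(\eps^{-2}\beta^{-1})$ clusters are cheap, meaning $\cost_\OPT(C_i^*)\leq\Theta(\eps^2)\beta\sopt$; this is where $s_0$ comes from. By Markov, a cheap cluster has all but an $O(\eps)$-fraction of its clients within $O(\eps)\beta\sopt/|C_i^*|$ of $c_i^*$. That radius is an $O(\eps)$-fraction of the distance from $c_i^*$ to every other optimal center.

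Call a cheap cluster good if $\LS$ has a center near $c_i^*$. A cheap cluster can then fail to be synced in only three ways.
\begin{enumerate}
\item[(i)] $\LS$ has no center near $c_i^*$. Then $\LS$ pays $\Omega(\eps\beta\sopt)$ more than $\OPT$ on $C_i^*$.
\item[(ii)] Its local cluster absorbs an $\Omega(\eps)$-fraction of foreign clients. Each sits at distance $\Omega(\beta\sopt/|C_i^*|)$ from that local center, which again costs $\Omega(\eps\beta\sopt)$ in $\LS$, on disjoint client sets.
\item[(iii)] A surplus local center, meaning one not nearest to any good cluster, lands in its inner ball.
\end{enumerate}
For case (iii), separation forces the near centers of distinct good clusters to be distinct. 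So each surplus center spoils at most one cluster, and there are at most as many surplus centers as expensive or center-less clusters. For cases (i) and (ii), summing against $\cost(\LS)\leq(3+O(\eps))\sopt$ bounds their number by $O(\eps^{-1}\beta^{-1})$.

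Hence only $\cost(\LS)=O(\sopt)$ and $|\LS|=k$ are needed. No per-group slack accounting or enlarged swap size is involved, so your rescaling worry is moot.
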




\begin{proposition}[Adapted from {\cite[Lemma 4.4]{2approx}}]
  \label{prop:sync-sub}
  There exists $\OPT_{\mathrm{sync}}\sse \OPT$ with
  $|\OPT\setminus \OPT_{\mathrm{sync}}| \leq s_0 = \eps^{-2}\beta^{-1}$, and a
  \emph{substitution map} $\sigma_{\mathrm{sync}} :
  \OPT_{\mathrm{sync}} \to \LS$---where $\sigma_{\mathrm{sync}}(c^*) =
  c'$ for $(c^*,c')$ being an in-sync pair---such
  that $|\LS \setminus \sigma_{\mathrm{sync}}(\OPT_{\mathrm{sync}})| \leq s_0$, and
  \begin{equation} \label{eq:sync-sub} \sum_{c^* \in \OPT_{\mathrm{sync}}} \sum_{p \in C^*} d(p,
    \sigma_{\mathrm{sync}}(c^*)) \leq (1 + O(\eps)) \sum_{c^* \in \OPT_{\mathrm{sync}}} \cost_{\OPT}(C^*) + O(\eps) 
    \cost(\LS).
  \end{equation}
\end{proposition}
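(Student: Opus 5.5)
We take $\OPT_{\mathrm{sync}}$ to be the set of synced optimal centers and $\sigma_{\mathrm{sync}}(c^*)$ to be the center of its in-sync partner $C'\in\calL$. Since $\eps<\nf12$, in-sync pairs form a matching, so $\sigma_{\mathrm{sync}}$ is injective. \Cref{lem:few-not-in-sync} gives $|\OPT\setminus\OPT_{\mathrm{sync}}|\le s_0$. Both $\OPT$ and $\LS$ have $k$ centers, so injectivity also gives $|\LS\setminus\sigma_{\mathrm{sync}}(\OPT_{\mathrm{sync}})| = k-|\OPT_{\mathrm{sync}}| \le s_0$. What remains is the cost bound \eqref{eq:sync-sub}, which we prove one synced pair at a time and then sum.

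Fix an in-sync pair $(C^*,C')$ with centers $c^*$ and $c'$. By the triangle inequality, for every $p\in C^*$ we have $d(p,c')\le d(p,c^*)+d(c^*,c')$. So it suffices to show
\[
|C^*|\,d(c^*,c') \le O(\eps)\,\cost_\OPT(C^*) + O(\eps)\cost_\LS(C'),
\]
up to terms whose sum over all pairs is $O(\eps)\cost(\LS)$. To estimate $d(c^*,c')$, average over $p\in C^*\cap C'$, which has size at least $(1-\eps)|C^*|$ and at least $(1-\eps)|C'|$:
\[
d(c^*,c') \le \frac{\cost_\OPT(C^*\cap C') + \cost_\LS(C^*\cap C')}{|C^*\cap C'|}.
\]
This only gives $|C^*|\,d(c^*,c')\lesssim \cost_\OPT(C^*)+\cost_\LS(C')$, i.e.\ a constant factor rather than the required $O(\eps)$.

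Improving this constant to $O(\eps)$ is the main obstacle. The plan is to use local optimality of $\LS$ instead of the crude average. Consider swapping $c'$ out for $c^*$ (or swapping $c^*$ in and reassigning). Clients of $C'\cap C^*$ then pay $d(p,c^*)$. The few clients of $C'\setminus C^*$, at most $\eps|C'|$ of them, are rerouted through $c^*$ at extra cost about $d(c^*,c')$ each. Summed, this rerouting costs at most $\eps|C'|\,d(c^*,c')$, which is $O(\eps)\bigl(\cost_\OPT(C^*)+\cost_\LS(C')\bigr)$ by the crude bound above. Local optimality should then yield
\[
\cost_\LS(C'\cap C^*) \le \cost_\OPT(C'\cap C^*) + O(\eps)\bigl(\cost_\OPT(C^*)+\cost_\LS(C')\bigr).
\]

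We then bound the substitution cost pointwise differently. For $p\in C^*\cap C'$, directly $d(p,c')=\cost_\LS$-cost of $p$. For the at most $\eps|C^*|$ clients of $C^*\setminus C'$, use $d(p,c')\le d(p,c^*)+d(c^*,c')$ together with the crude bound on $d(c^*,c')$, which contributes $O(\eps)\bigl(\cost_\OPT(C^*)+\cost_\LS(C')\bigr)$. Combining with the previous display gives the per-pair bound $(1+O(\eps))\cost_\OPT(C^*)+O(\eps)\cost_\LS(C')$. Summing over pairs gives \eqref{eq:sync-sub}, since the clusters $C'$ are distinct and $\sum\cost_\LS(C')\le\cost(\LS)$.

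If the single-swap argument is too lossy because each swap must achieve a $(1+\eps/k)$ improvement, we would instead follow \cite[Lemma 4.4]{2approx} directly. That means aggregating swaps, or using the approximate local optimality over $O(1/\eps)$-swaps, and absorbing the additive losses into the $O(\eps)\cost(\LS)$ term.
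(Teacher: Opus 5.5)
Your proof is correct and complete up to two small fixes given below. It also differs from the paper, which gives no proof of \Cref{prop:sync-sub} and simply imports it from \cite[Lemma 4.4]{2approx}. Your route is self-contained. The counting part ($|\OPT\setminus\OPT_{\mathrm{sync}}|\le s_0$ via \Cref{lem:few-not-in-sync}, then $|\LS\setminus\sigma_{\mathrm{sync}}(\OPT_{\mathrm{sync}})|=k-|\OPT_{\mathrm{sync}}|$ by injectivity) is fine. For the cost bound you combine a crude averaging estimate of $d(c^*,c')$ with single-swap local optimality of $\LS$. A side benefit is that the cost bound visibly needs only $1$-swap approximate local optimality and the sync condition; stability enters only through the count $s_0$.

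The two fixes are as follows.

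\textbf{The closing hedge is unnecessary.} The stopping rule gives $\cost((\LS\setminus\{c'\})\cup\{c^*\})\ge(1-\eps/k)\cost(\LS)$. Your display should therefore read $\cost_\LS(C'\cap C^*)\le\cost_\OPT(C'\cap C^*)+|C'\setminus C^*|\,d(c^*,c')+\frac{\eps}{k}\cost(\LS)$. You use one swap per synced pair, and there are at most $k$ pairs. So these slacks sum to at most $\eps\cost(\LS)$, which fits inside the allowed $O(\eps)\cost(\LS)$ term, and you never need to fall back on \cite{2approx}.

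\textbf{The degenerate case $c^*\in\LS$.} The swap $(\{c'\},\{c^*\})$ is only a legal swap if $c^*\notin\LS$. If $c^*\in\LS$, then $d(p,\LS)\le d(p,c^*)$ for every client $p$. Hence $\cost_\LS(C'\cap C^*)\le\cost_\OPT(C'\cap C^*)$ holds directly, with no swap at all.

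With these in place, you get the per-pair bound $(1+O(\eps))\cost_\OPT(C^*)+O(\eps)\cost_\LS(C')+\frac{\eps}{k}\cost(\LS)$. Summing over pairs, using that the $C'$ are distinct clusters of $\calL$, gives \eqref{eq:sync-sub} exactly as you describe.
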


The second substitution map is a trivial one: if an optimal center has an
especially close center in $\LS$, it can be substituted.
For $\eta \in [0,1]$, a cluster $C^* \in \calO$ is called
\emph{$\eta$-close} if the closest center $c'\in \LS$ satisfies
$d(c^*, c') \leq \eta \avg_{\OPT}(C^*) + 2\eps \avg_{\LS}(C^*)$.
The following result just uses %
the triangle inequality.

\begin{proposition}[Close Substitution]
  \label{prop:close} 
  Let 
  $\OPT_{\mathrm{cl}} \sse \OPT \setminus \OPT_{\mathrm{sync}}$ be the centers of
  $\eta$-close unsynced clusters, and define
  $\sigma_{\mathrm{cl}} : \OPT_{\mathrm{cl}} \to \LS$, with $\sigma_{\mathrm{cl}}(c^*) =
  \arg\min_{c' \in \LS} d(c^*, c')$. For each $c^* \in \OPT_{\mathrm{cl}}$ and
  the corresponding cluster $C^*$:
  \begin{equation}
    \sum_{p \in C^*} d(p, \sigma_{\mathrm{cl}}(c^*)) \leq (1 + \eta)
    \cost_{\OPT}(C^*) + 2\eps \cost_\LS(C^*). \label{eq:sync-close}
  \end{equation}
\end{proposition}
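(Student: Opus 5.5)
The plan is to prove \eqref{eq:sync-close} one client at a time with the triangle inequality, and then sum over $C^*$. Fix $c^* \in \OPT_{\mathrm{cl}}$ and write $c' := \sigma_{\mathrm{cl}}(c^*)$. Because $C^*$ is $\eta$-close, $d(c^*, c') \leq \eta \avg_{\OPT}(C^*) + 2\eps \avg_{\LS}(C^*)$. For every $p \in C^*$ we have
\[
d(p, c') \leq d(p, c^*) + d(c^*, c') \leq d(p,c^*) + \eta\avg_{\OPT}(C^*) + 2\eps\avg_{\LS}(C^*).
\]

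Next we sum this over all $p \in C^*$. The first term gives $\sum_{p\in C^*} d(p,c^*)$. Since $\pi_\OPT$ assigns every client of $C^*$ to $c^*$, this sum is exactly $\cost_\OPT(C^*)$. The second and third terms are constant in $p$, so they contribute $|C^*|$ times themselves. By the definition of average cost,
\[
|C^*|\avg_\OPT(C^*) = \cost_\OPT(C^*), \qquad |C^*|\avg_\LS(C^*) = \cost_\LS(C^*).
\]
Here $C^*$ is nonempty, so the averages are well defined; this follows from stability, as noted in the preliminaries. Adding the three pieces gives
\[
\sum_{p\in C^*} d(p, c') \leq \cost_\OPT(C^*) + \eta\cost_\OPT(C^*) + 2\eps\cost_\LS(C^*) = (1+\eta)\cost_\OPT(C^*) + 2\eps\cost_\LS(C^*),
\]
which is exactly \eqref{eq:sync-close}.

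There is no real obstacle here. The only point needing care is bookkeeping: $\avg_\OPT(C^*)$ is taken with respect to the nearest-center map $\pi_\OPT$, which sends all of $C^*$ to $c^*$, and $\avg_\LS(C^*)$ is taken with respect to $\pi_\LS$. With those definitions the per-client additive terms turn into exactly the cluster costs.
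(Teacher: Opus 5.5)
Your proof is correct and matches the paper's approach. The paper gives no written proof beyond the remark that the result ``just uses the triangle inequality'': bound $d(p,\sigma_{\mathrm{cl}}(c^*)) \leq d(p,c^*) + d(c^*,\sigma_{\mathrm{cl}}(c^*))$, apply the $\eta$-closeness bound on $d(c^*,\sigma_{\mathrm{cl}}(c^*))$, and sum over $p \in C^*$ using $|C^*|\avg_{\OPT}(C^*) = \cost_{\OPT}(C^*)$ and $|C^*|\avg_{\LS}(C^*) = \cost_{\LS}(C^*)$.
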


Define $\OPT_{S_0} := \OPT_{\mathrm{sync}} \cup \OPT_{\mathrm{cl}}$ as the set of centers with substitutions so far, and $\sigma_0 = [\sigma_{\mathrm{sync}}, \sigma_{\mathrm{cl}}]$\margindef{$\OPT_{S_0}, \sigma_0$} the combined substitution map. Also define $\OPT_1 := \OPT \setminus \OPT_{S_0}$\margindef{$\OPT_1$} as the set of remaining centers in $\OPT$; note that we have $|\OPT_1| \leq s_0 = \eps^{-2} \beta^{-1}$ from \Cref{lem:few-not-in-sync}. 

\subsection{Adding Auxiliary Centers using Client Sampling}
\label{sec:client-sample}
The next step of the algorithm is to augment the metric space by a set
$\Lambda_1$ of auxiliary/dummy centers. These centers will serve as
backup centers for a certain subset of $\OPT$ in the final submodular
optimization step of our algorithm. 

\emph{Intuition.} We sample clients with probability proportional to their current cost (a ``$D$-sampling''
step), so an expensive optimal cluster is likely to have a sampled client land in its core. Using that
client as a \emph{leader} and guessing the right radius, we place a dummy center close to the optimal center
$c^*$, all without knowing $c^*$ itself. To optimize the quality of the sampled leaders, we consider the sampling in two rounds, with the second round extending the core further from the center.

\begin{restatable*}{thm}{dsampling} \label{thm:per-client-sampling}
There exists a polynomial time algorithm outputting a set of
  auxiliary centers $\Lambda_1$, so that with
  probability $\exp\{- O(s_0 \log (\eps^{-1}))\}$ there exists a partition $\OPT_1 = \OPT_{\Lambda_1} \sqcup \OPT_{S_1} \sqcup \OPT_2$ with all of the following guarantees:
\begin{enumerate}[label=(\alph*)]
\item \label{cond:sampled-with-dummies} \emph{(Auxiliary Centers)}
  There exists a partition $\OPT_{\Lambda_1} = \OPT_{\Lambda_1}^{\mathrm{early}} \sqcup \OPT_{\Lambda_1}^{\mathrm{late}}$, and a dummy center map
  $\varphi_1 : \OPT_{\Lambda_1} \to \Lambda_1$ such that $\varphi_1$ is a bijection, and for each $c^* \in \OPT_{\Lambda_1}$ with $C^* \in \calO$, we
  have
    \begin{equation*}
        d(c^*, \ell_{\varphi_1(c^*)}) \leq d(\ell_{\varphi_1(c^*)}, \varphi_1(c^*)) = r_{\varphi_1(c^*)} \leq (1 + \newkappa\,\mathbf{1}_{(c^* \in \OPT_{\Lambda_1}^{\mathrm{late}})} + 8 \eps) \per(C^*) + \eps \avg_{\LL}(C^*).
    \end{equation*}   
    \item \label{cond:rest-substitutable} \emph{(Substitutions)} There exists a \emph{substitution map} $\sigma_1 : \OPT_{S_1} \to \LS$ such that
    \begin{equation*}
       \sum_{c^* \in \OPT_{S_1}} \sum_{p \in C^* } d(p, \sigma_1(c^*)) \leq  \eps \sopt.
    \end{equation*}
    \item \label{cond:client-struct} \emph{(Structural Properties)} The remaining centers in $\OPT_2$ have the following properties:
      \begin{enumerate}[nosep, label=(\roman*)] 
      \item \emph{(Remaining clusters are hidden)} \label{cond:hidden} We have
      \begin{equation*}
        \max\Big\{ \sum_{c^* \in \OPT_{1} \setminus \OPT_{\Lambda_1}^{\mathrm{early}}} \cost_{\LS}(\core_{6\eps}(C^*)), \,\, \sum_{c^* \in \OPT_{1} \setminus \OPT_{\Lambda_1}} \cost_{\LS}(\core_\kappa(C^*)) \Big\}< \eps^{5} \sopt,
    \end{equation*}
      \end{enumerate}
      and, for $c^* \in \OPT_2$ with $C^* \in \calO$:
    \begin{enumerate}[nosep, resume, label=(\roman*)]
         \item \emph{($C^*$ has proportionate core cost)}  \label{cond:proportionate} $\cost_{\LL}(\core_\kappa(C^*)) <
          \eps^3 \cost_{\OPT}(C^*)$.
        \item \emph{($\avg_{\OPT}(C^*) \approx \per(C^*)$)} $\frac{1}{1 + 2\eps} \per(C^*) \leq \avg_{\OPT}(C^*) \leq \frac{(1 + \eps)}{\eta} \per(C^*)$. \label{cond:per-roughly-avg}
        \item \emph{($C^*$ has some candidate leader)} There exists $c' \in \LL$ with $d(c^*, c') \leq (1 + \eps)\per(C^*)$ and $d(c', C^*) \leq \eps \per(C^*)$. \label{cond:cand-leader} 
    \end{enumerate}
\end{enumerate}    
\end{restatable*}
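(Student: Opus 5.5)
The plan is to run $D$-sampling in two rounds and guess, for each unsynced non-close optimal cluster that is expensive in its core, a sampled client together with a radius. The algorithm samples $T = \poly(s_0/\eps)$ clients independently, each with probability proportional to its cost $d(p,\LS)$. For each sample it guesses whether it is a leader, the index of the center in $\OPT_1$ it is meant to represent, and a radius. The radius is taken from a geometric grid of ratio $(1+\eps)$ anchored at distances realised in the instance, or, equivalently, we guess $\per(C^*)$ up to a $(1+\eps)$ factor relative to $d(p,\LS)$. A dummy is created for each guessed leader. Since $|\OPT_1|\le s_0$ by \Cref{lem:few-not-in-sync}, all guesses together succeed with probability $\exp\{-O(s_0\log\eps^{-1})\}$, provided each round uses only $O(s_0)$ relevant guesses, each from a $\poly(1/\eps)$-sized range. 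This gives the stated success probability.

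\textbf{Round one} targets $\olamearly$ and the $6\eps$-cores. Fix $c^*\in\OPT_1$ with $\cost_\LS(\core_{6\eps}(C^*))$ at least an $\eps^{6}/s_0$ fraction of $\sopt$. Since $\cost(\LS)=O(\sopt)$, a single sample lands in that core with probability $\Omega(\eps^6/s_0)$. By the definition of the percentile, a sampled $p$ in the core satisfies $d(p,c^*)\le(1+6\eps)\per(C^*)$. Setting $\ell=p$ and $r$ to be the grid value just above $d(p,c^*)$ gives $d(c^*,\ell)\le r\le(1+8\eps)\per(C^*)$. The additive term $\eps\avg_\LS(C^*)$ absorbs the discretization, for instance by guessing $r$ relative to $d(p,\LS)$ when $\per(C^*)$ is tiny compared with the $\LS$ cost. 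Clusters captured in this round form $\OPT^{\early}_{\Lambda_1}$.

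\textbf{Round two} repeats the argument with $\core_\kappa$ for the remaining centers, giving $\OPT^{\late}_{\Lambda_1}$ with radius $(1+\kappa+8\eps)\per(C^*)$. After both rounds, a union and Markov-type argument shows that every non-captured cluster has negligible core cost, summing to below $\eps^5\sopt$. This is property (c)(i).

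For the centers left over, I split on (c)(ii). If $\cost_\LS(\core_\kappa(C^*))\ge\eps^3\cost_\OPT(C^*)$ but the core was never sampled, the core cost is below $\eps^5\sopt$. Then $\cost_\OPT(C^*)\le\eps^2\sopt$, and summed over at most $s_0$ such clusters this is only small if the per-cluster thresholds are chosen carefully. Instead, I would put $c^*$ into $\OPT_{S_1}$ with $\sigma_1(c^*)=\pi_\LS(q)$ for a cheapest core client $q$. The triangle inequality gives $\sum_{p\in C^*}d(p,\sigma_1(c^*))\le \cost_\OPT(C^*)+|C^*|(d(q,c^*)+d(q,\LS))$. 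Here I want all of $\cost_\OPT(C^*)$ to be at most an $\eps^2$-fraction of the total, which requires choosing the sampling thresholds so that undetected clusters are cheap in $\OPT$ as well. This bookkeeping, reconciling "undetected" with "cheap enough to substitute at total cost $\eps\sopt$", is the step I expect to be the main obstacle. I would handle it by making the detection threshold depend on $\cost_\OPT(C^*)$ through an $\eps^{3}$-relative condition and charging against $\sopt$ only for clusters that are globally tiny.

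For (c)(iii), the upper bound $\avg_\OPT(C^*)\le(1+2\eps)\per(C^*)$ direction is \eqref{eq:per-bd}. The other direction uses that $c^*$ is not $\eta$-close. If $\per(C^*)<\eta\avg_\OPT(C^*)/(1+\eps)$, then the hidden core has a client $q$ with tiny $d(q,\LS)$, which puts an $\LS$ center within $(1+\eps)\per(C^*)+o(\cdot)\le\eta\avg_\OPT(C^*)+2\eps\avg_\LS(C^*)$ of $c^*$. That would contradict $c^*\notin\OPT_{\mathrm{cl}}$. For (c)(iv), the same client $q$ has $d(q,\LS)\le\eps\per(C^*)$ by averaging the hidden-core bound against $|\core_\kappa(C^*)|\ge\eps|C^*|$. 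So $c'=\pi_\LS(q)$ satisfies $d(c',C^*)\le\eps\per(C^*)$ and $d(c^*,c')\le(1+\eps)\per(C^*)$, after taking $q$ among the innermost $\eps$-fraction of clients.
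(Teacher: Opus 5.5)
\textbf{The success probability does not come out as claimed.} You detect clusters through a \emph{per-cluster} threshold, $\cost_\LS(\core_{6\eps}(C^*))\geq\eps^6\sopt/s_0$, so a useful sample has probability only $\Omega(\eps^6/s_0)$. You therefore need $T=\poly(s_0/\eps)$ samples, and you must guess which $O(s_0)$ of them are leaders and which index in $\OPT_1$ each one represents. Even done economically, this costs $\binom{T}{O(s_0)}\cdot s_0^{O(s_0)}=\exp\{O(s_0\log(s_0/\eps))\}$. A separate ``leader?'' bit per sample would cost $2^{-T}$. Your index guess also violates your own proviso that every guess ranges over a $\poly(1/\eps)$-sized set.

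Since $s_0=\eps^{-2}\beta^{-1}$, the quantity $\log s_0$ contains $\log(1/\beta)$, which is not $O(\log\eps^{-1})$. Downstream, this would make the running time superpolynomial when $\beta=\poly(\eps)/\log n$, instead of $\exp\{O(\beta^{-1}\poly(1/\eps))\}$.

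\textbf{The missing idea} is an \emph{aggregate, adaptive} criterion with only $2s_0$ iterations. Consider the first $s_0$ iterations. As long as the not-yet-hit clusters of $\OPT_1$ have total $6\eps$-core cost at least $\eps^5\sopt$, one $D$-sample lands in one of those cores with probability $\Omega(\eps^6)$, independent of $s_0$. The same holds for $\kappa$-cores in the next $s_0$ iterations. Instead of guessing identities, each iteration flips a fair coin to keep or discard the sample. The good event prescribes ``discard'' exactly when the sample is useless: outside $\OPT_1$, in an already-hit cluster, or outside the relevant core. This automatically makes $\varphi_1$ a bijection. Each kept sample hits a new cluster, so at most $s_0$ successes per round suffice, and the overall probability is $(\eps^8/50)^{2s_0}$.

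\textbf{Radius guessing.} Guessing $r$ relative to $r_0=d(\ell,\LS)$ only works if $r_0$ is not much smaller than $\per(C^*)$, which you do not ensure. Anchoring at instance distances instead would cost a $\log n$-bit guess per dummy. The paper restricts to the punched core $\{p\in\core_\kappa(C^*): d(p,\LS)\geq\eps^6\per(C^*)\}$, and similarly for $\core_{6\eps}$. Removing the other points loses at most $\sum_{C^*}|C^*|\eps^6\per(C^*)\leq(1+2\eps)\eps^6\sopt$ of the aggregate mass, so that mass stays at least $\eps^6\sopt$. The grid $[\eps^3r_0,\eps^{-7}r_0]$ then has only $O(\eps^{-1}\log\eps^{-1})$ points. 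The $\eps\avg_\LS(C^*)$ term covers the case $r_0>\per(C^*)/\eps$, as you suggest.

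\textbf{The $\OPT_{S_1}$ step.} The obstacle you flag disappears once (c)(i) is an aggregate bound. Take $\OPT_{S_1}$ to be the unhit centers with $\cost_\LS(\core_\kappa(C^*))\geq\eps^3\cost_\OPT(C^*)$. Then directly $\sum_{c^*\in\OPT_{S_1}}\cost_\OPT(C^*)\leq\eps^{-3}\sum\cost_\LS(\core_\kappa(C^*))<\eps^2\sopt$, with no per-cluster tuning. Your triangle-inequality estimate then totals $((2+2\eps)\eps^2+\eps^4)\sopt<\eps\sopt$.

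\textbf{Parts (c)(iii)--(iv).} Your arguments match the paper's up to details.
\begin{itemize}
\item \eqref{eq:per-bd} gives the \emph{lower} bound $\avg_\OPT(C^*)\geq\per(C^*)/(1+2\eps)$, not an upper bound.
\item For the upper bound, take $q\in\core_0(C^*)$, so that $d(q,c^*)\leq\per(C^*)$. Combined with (c)(ii), this gives $d(c^*,\LS)<\per(C^*)+\eps^2\avg_\OPT(C^*)$. A $(1+\eps)\per(C^*)$ term would use up the slack needed to contradict non-closeness.
\end{itemize}
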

After a successful application of
\Cref{thm:per-client-sampling}, the subset
$\OPT_2 = \OPT_1 \setminus (\OPT_{\Lambda_1} \cup \OPT_{S_1})$ \margindef{$\OPT_2$} is
precisely the set of optimal centers still without a substitution or
auxiliary map.

\subsection{Satellites of Remaining Clusters are Spread Out}
Next, we show that $\OPT_2$ has a useful additional structural property. Recall~(\ref{eq:Gamma}), where we defined the set of
satellites for an optimal center $c^* \in \OPT$ (with
corresponding cluster $C^* \in \calO$) as:
\begin{equation*}
  \Gamma(c^*) := \Big\{c \in \LS \mid d(c, c^*) \leq (1 + \newkappa +
  2\eps)\per(C^*), d(c, C^*) \leq 
  \eps\,\per(C^*) \Big\}. 
\end{equation*}
(See \Cref{fig:sat-def}.) The clients in $C^*$ close to satellites in
$\Gamma(c^*)$ are called \emph{witnesses}. We now show that, for $c^* \in \OPT_2$, $\Gamma(c^*)$ has a non-trivial diameter. We will use this fact in our next sampling step when estimating radii for new dummy centers.

\begin{lemma}\label{lem:no-dense-cores}
For each $c^*\in\OPT_2$, the diameter of $\Gamma(c^*)$ is at least
$\eps\per(C^*)$.
\end{lemma}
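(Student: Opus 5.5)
Assume for contradiction that some $c^*\in\OPT_2$ has $\operatorname{diam}(\Gamma(c^*))<\eps\,\per(C^*)$; we aim to show that $c^*$ would then have landed in $\OPT_{S_1}$ or $\OPT_{\Lambda_1}$, or violate one of the structural properties of \Cref{thm:per-client-sampling}\ref{cond:client-struct}. First note that $\Gamma(c^*)$ is nonempty: the candidate leader $c'$ of \Cref{thm:per-client-sampling}\ref{cond:client-struct}\ref{cond:cand-leader} satisfies $d(c^*,c')\le(1+\eps)\per(C^*)\le(1+\newkappa+2\eps)\per(C^*)$ and $d(c',C^*)\le\eps\per(C^*)$, so $c'\in\Gamma(c^*)$. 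Under the small-diameter assumption, every satellite lies within $\eps\per(C^*)$ of $c'$, hence within $(1+2\eps)\per(C^*)$ of $c^*$.

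The key step is to look at where the core clients of $C^*$ go in $\LS$. Take $p\in\core_\kappa(C^*)$ and let $c=\pi_\LS(p)$. By the hidden-core property \ref{cond:proportionate}, $\cost_\LS(\core_\kappa(C^*))<\eps^3\cost_\OPT(C^*)$. Combined with \ref{cond:per-roughly-avg}, this gives $\cost_\OPT(C^*)\le \frac{1+\eps}{\eta}|C^*|\per(C^*)$. So all but an $O(\eps)$-fraction of the (at least $\eps|C^*|$) core clients have $d(p,\LS)\le\eps\per(C^*)$, since otherwise the core cost would exceed $\eps^3\cost_\OPT(C^*)$. Such a client's center $c$ satisfies $d(c,C^*)\le\eps\per(C^*)$ and $d(c,c^*)\le d(p,c^*)+\eps\per(C^*)\le(1+\newkappa+\eps)\per(C^*)$, so $c\in\Gamma(c^*)$. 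Thus essentially all core clients are served by satellites, and therefore by centers within $(1+2\eps)\per(C^*)$ of $c^*$.

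To derive the contradiction, I would use the far-from-$\LS$ property: $c^*\notin\OPT_{\mathrm{cl}}$, so $d(c^*,\LS)>\eta\avg_\OPT(C^*)$. The intended payoff is that, with all satellites clustered in a tiny ball, one of them (namely $c'$) is a good single substitute for $c^*$, and this should force $c^*$ into $\OPT_{S_1}$ or $\OPT_{\Lambda_1}$. The step I expect to be the obstacle is making this contradiction precise, because the percentile client lies at distance about $\per(C^*)$ in some direction while $c'$ lies at distance at most $(1+\eps)\per(C^*)$. Concretely, I would show that a large fraction of core clients (those at distance at most $\per(C^*)$ from $c^*$, which number at least $\eps|C^*|$) all sit within $\eps\per(C^*)$ of the small satellite ball. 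By the triangle inequality, many such clients then lie at distance less than roughly $\per(C^*)$ both from $c^*$ and from $c'$.

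To finish, I would try two routes, in this order. The first is to show that the small-diameter case is exactly the situation in which the client-sampling step of \Cref{thm:per-client-sampling} produces a dummy for $c^*$ (placing $c^*$ in $\OPT_{\Lambda_1}$), or in which $\sigma_1(c^*)=c'$ is cheap (placing it in $\OPT_{S_1}$). I expect this is how the paper's definitions were designed. If that does not go through, the fallback is to use the $\eps$-percentile definition: at least $(1-\eps)|C^*|$ clients satisfy $d(p,c^*)\ge\per(C^*)$. Comparing the $\eps$-fraction of core clients that are close to $c'$ against $d(c^*,c')>\eta\avg_\OPT(C^*)\ge\eta\per(C^*)/(1+2\eps)$ should then produce the contradiction via the percentile bound.
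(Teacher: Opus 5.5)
There is a genuine gap. Your first step is right and matches the paper: under the small-diameter hypothesis, all but a negligible part of $\core_\kappa(C^*)$ is served in $\LS$ by satellites, so it lies within about $\eps\per(C^*)$ of $c'$. Neither of your routes to a contradiction works, though.

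\textbf{Your first route.} Membership in $\OPT_{S_1}$ and $\OPT_{\Lambda_1}$ is decided by $\cost_\LS(\core_\kappa(C^*))$ and by whether a sampled client landed in the core. Neither depends on the diameter of $\Gamma(c^*)$. A tight satellite ball is exactly the situation in which the core is cheap in $\LS$ and hence invisible to sampling, and $c^*\in\OPT_2$ already grants you that.

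\textbf{Your fallback.} Core clients sitting near $c'$, at distance between roughly $\eta\per(C^*)$ and $(1+\eps)\per(C^*)$ from $c^*$, are fully compatible with the percentile definition and with $d(c^*,\LS)>\eta\avg_\OPT(C^*)$. So that comparison yields no contradiction. Consider a $\newkappa$ violating \eqref{eq:radius-diameter-condition}, e.g.\ $\newkappa=0.13$. Then a cluster with half its clients on $c'$ at distance $\per(C^*)$, and half on the far side of $c^*$ at distance $1.2\per(C^*)$, satisfies every property you invoke, and even ties under the exchange $c^*\to c'$. Yet $\Gamma(c^*)$ can be just $\{c'\}$. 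So the contradiction must come from the choice of $\newkappa$, which your proposal never uses.

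\textbf{What is missing} is the exchange argument on $\OPT$ itself, combined with \eqref{eq:radius-diameter-condition}.
\begin{enumerate}
\item Replacing $c^*$ by $c'$ cannot lower the optimal cost, so $\sum_{p\in C^*}d(p,c')\ge\cost_\OPT(C^*)$.
\item Your satellite observation gives $d(p,c')\le\eps\per(C^*)+\frac4\eps d(p,\LS)$ for every core client. The few core clients with $d(p,\LS)>\eps\per(C^*)$ pay at most $(2+\newkappa+\eps)\per(C^*)<4\per(C^*)$. By proportionate core cost, the core therefore moves to $c'$ for at most $\eps|\core_\kappa(C^*)|\per(C^*)+4\eps^2\cost_\OPT(C^*)$.
\item Each non-core client pays at most $d(p,c^*)+(1+\eps)\per(C^*)$ at $c'$.
\item Hence $\cost_\OPT(\core_\kappa(C^*))\le|C^*\setminus\core_\kappa(C^*)|\per(C^*)+O(\eps)\cost_\OPT(C^*)$.
\item Combine this with $\cost_\OPT(\core_\kappa(C^*))\ge(|\core_\kappa(C^*)|-\eps|C^*|)\per(C^*)$ and $\cost_\OPT(C^*)\le\frac{1+\eps}{\eta}|C^*|\per(C^*)$. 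This caps the core at $|\core_\kappa(C^*)|\le\frac{1+3\eps}{2}|C^*|$.
\item So at least $\frac{1-3\eps}{2}|C^*|$ clients lie beyond $(1+\newkappa)\per(C^*)$. This forces $\cost_\OPT(C^*)\ge\frac{2+\newkappa-5\eps}{2}|C^*|\per(C^*)$, i.e.\ $(2+\newkappa)\per(C^*)<(2+6\eps)\avg_\OPT(C^*)$.
\end{enumerate}
The last inequality contradicts \eqref{eq:radius-diameter-condition}. That condition applies because $\per(C^*)\ge\eta\avg_\OPT(C^*)/(1+\eps)$ holds for every $c^*\in\OPT_2$ by \Cref{thm:per-client-sampling}.
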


\begin{proof}
Suppose otherwise. Choose $c'\in\Gamma(c^*)$ with
$d(c^*,c')\leq(1+\eps)\per(C^*)$, as guaranteed by
\Cref{thm:per-client-sampling}\ref{cond:client-struct}%
\ref{cond:cand-leader}. We seek a contradiction with our technical condition \Cref{eq:radius-diameter-condition}. As a reminder, this says that, since
$\per(C^*)\geq\eta\avg_\OPT(C^*)/(1+\eps)$, our choice of $\newkappa$ ensures
\[
(2+\newkappa)\per(C^*)
\geq(2+6\eps)\avg_\OPT(C^*).
\]
We will show that at least $(1-3\eps)|C^*|/2$ clients lie outside
$\core_\kappa(C^*)$. This suffices: at most $\eps|C^*|$ clients
have distance strictly less than $\per(C^*)$, and each client outside
the core has distance greater than $(1+\newkappa)\per(C^*)$. Hence
\begin{align*}
\cost_\OPT(C^*)
&\geq\bigl((1-\eps)|C^*|
 +\newkappa|C^*\setminus\core_\kappa(C^*)|\bigr)\per(C^*)\\
&\geq\frac{2+\newkappa-5\eps}{2}|C^*|\per(C^*),
\end{align*}
where the last step uses $\newkappa<1$. It follows that
\[
(2+6\eps)\avg_\OPT(C^*)
>(2+\newkappa)\per(C^*),
\]
contradicting \eqref{eq:radius-diameter-condition}. This uses that $6 \eps \avg_{\OPT}(C^*) > 5 \eps \per(C^*)$, since $\avg_{\OPT}(C^*) \geq \frac{1}{1 + 2\eps} \per(C^*)$.

So it suffices to show that at least $(1 - 3\eps)|C^*|/2$ clients belong to $C^* \setminus \core_{\kappa}(C^*)$. Roughly speaking, the argument goes as follows. By optimality of $\OPT$, we know that the solution replacing $c^*$ with $c'$ cannot decrease the cost. On the other hand, the assumption that $\Gamma(c^*)$ has small diameter, combined with the fact that the core of $C^*$ is cheap in $\LS$ by \Cref{thm:per-client-sampling}\ref{cond:client-struct}\ref{cond:proportionate} means that clients in $\core_{\kappa}(C^*)$ reassign to $c'$ almost for free. Then most of the cost of $C^*$ must come from clients outside the core. But, these clients pay at most $(1 + \eps) \per(C^*)$ extra each, by the triangle inequality. So, intuitively, we must have many such clients.

Let us begin by bounding the cost of reassigning $C^*$ to $c'$.
For $p\in\core_\kappa(C^*)$, we have
\begin{equation}\label{eq:diam-point-reassignment}
d(p,c')\leq\eps\per(C^*)+\frac4\eps d(p,\LS).
\end{equation}
Indeed, if $d(p,\LS)\leq\eps\per(C^*)$, then
$\pi_\LS(p)\in\Gamma(c^*)$ and the diameter bound applies.
Otherwise,
$d(p,c')\leq(2+\newkappa+\eps)\per(C^*)<4\per(C^*)$, giving \eqref{eq:diam-point-reassignment}.
Summing and using proportionate core cost gives
\begin{equation}\label{eq:diam-core-reassignment}
\sum_{p\in\core_\kappa(C^*)}d(p,c')
\leq\eps|\core_\kappa(C^*)|\per(C^*)
+4\eps^2\cost_\OPT(C^*).
\end{equation}
For clients outside the core, the triangle inequality gives the bound
$d(p,c')\leq d(p,c^*)+(1+\eps)\per(C^*)$.
Combining this with \eqref{eq:diam-core-reassignment}, we obtain
\begin{align}
\sum_{p\in C^*}d(p,c')
&\leq\cost_\OPT(C^*\setminus\core_\kappa(C^*))
+|C^*\setminus\core_\kappa(C^*)|\per(C^*) + \eps|C^*|\per(C^*)+4\eps^2\cost_\OPT(C^*)\nonumber\\
&\leq\cost_\OPT(C^*\setminus\core_\kappa(C^*))
 +|C^*\setminus\core_\kappa(C^*)|\per(C^*)
 +(\eps+6\eps^2)\cost_\OPT(C^*),
\label{eq:diam-total-reassignment}
\end{align}
where the last step uses
$|C^*|\per(C^*)\leq(1+2\eps)\cost_\OPT(C^*)$.

Replacing $c^*$ by $c'$ cannot improve $\OPT$, by optimality. 
Thus \eqref{eq:diam-total-reassignment} implies
\begin{equation}\label{eq:diam-core-optimality}
\cost_\OPT(\core_\kappa(C^*))
\leq|C^*\setminus\core_\kappa(C^*)|\per(C^*)
+(\eps+6\eps^2)\cost_\OPT(C^*).
\end{equation}
At most $\eps|C^*|$ clients have distance strictly less than
$\per(C^*)$ from $c^*$. Hence
\begin{equation}\label{eq:diam-core-lower}
\cost_\OPT(\core_\kappa(C^*))
\geq\bigl(|\core_\kappa(C^*)|-\eps|C^*|\bigr)\per(C^*).
\end{equation}
Combining \eqref{eq:diam-core-optimality} and
\eqref{eq:diam-core-lower} and adding $|\core_{\kappa}(C^*)| \per(C^*)$ to both sides bounds the size of the core:
\begin{equation}\label{eq:diam-core-size}
2|\core_\kappa(C^*)|\per(C^*)
\leq(1+\eps)|C^*|\per(C^*)+(\eps+6\eps^2)\cost_\OPT(C^*).
\end{equation}

By \Cref{thm:per-client-sampling}\ref{cond:client-struct}%
\ref{cond:per-roughly-avg}, we have
$\cost_\OPT(C^*)\leq(1+\eps)|C^*|\per(C^*)/\eta$.
Substituting in \eqref{eq:diam-core-size} and dividing by
$2\per(C^*)$ gives
\[
|\core_\kappa(C^*)|
\leq\frac{1+\eps+(\eps+6\eps^2)(1+\eps)/\eta}{2}|C^*|
\leq\frac{1+3\eps}{2}|C^*|,
\]
where $(1+6\eps)(1+\eps)/\eta<2$ for
$\eta=89/100$ and $\eps<1/200$.
This implies $|C^* \setminus \core_\kappa(C^*)| \geq \frac{(1 - 3\eps) |C^*|}{2}$, as required. 
\end{proof}

\subsection{Adding Auxiliary Centers using Cluster Sampling}
\label{sec:cluster-sample}
We now sample a few \emph{centers} (and not \emph{clients}) from the
local search solution $\LS$ and use them as leaders to add more
auxiliary centers; this further reduces the number of optimal centers which do
not have substitutions or auxiliary center maps (and imposes more
structure on these remaining centers). Along the way, we also identify
several substitute centers in $\LS$ as backups and some others as
redundant and removable. %
For brevity, define
$\LS_X := \sigma_0(\OPT_{S_0}) \cup \sigma_1(\OPT_{S_1})$ \margindef{$\LS_X$} %
to be the image of the substitution maps defined so far. 
The following theorem outputs $\LS'
\subseteq \LS$ \margindef{$\LS'$} along with another set of
dummies $\Lambda_2$.

\begin{restatable*}[Cluster
  Sampling]{thm}{expensive} \label{thm:expensive} There exists a
  polynomial time algorithm outputting $\LS' \subseteq \LS$ and a set of dummy centers
  $\Lambda_2$ so that with probability
  $\exp\{-O(s_0 \log(\eps^{-1}))\}$ there exists partition
  $\OPT_2 = \OPT_{\Lambda_2} \sqcup \OPT_X \sqcup \OPT_B$ and
  $\LS_X \sse \LS'$, with the
  following guarantees:
  \begin{enumerate}[nosep,label = (\alph*)]
  \item \label{cond:expensive-dummy} \emph{(More Auxiliary Centers)}
    There exists a bijection $\varphi_2: \OPT_{\Lambda_2} \to \Lambda_2$ such that,
    for each $c^* \in \OPT_{\Lambda_2}$:
    \begin{equation*}
      d(c^*, \ell_{\varphi_2(c^*)}) \leq  r_{\varphi_2(c^*)} = d(\ell_{\varphi_2(c^*)}, \varphi_2(c^*)) \leq (1 + \newkappa + 6\eps) \per(C^*).  
    \end{equation*}
  \item \label{cond:BK2-psuedodummy} \emph{(Centers with Protected Backups)} There exists a map $\psi_X : \OPT_X \to \LS_X$ such that, for each $c^* \in \OPT_X$,
    \[ d(c^*, \psi_X(c^*)) \leq (1 + \newkappa + 4\eps)\per(C^*). \]
  \item \label{cond:expensive-rem-cost-bds} \emph{(Remaining Missed
      Clusters are Cheap)} Define
    $\MM' := \LS' \cup \Lambda_1 \cup \Lambda_2$ and $\calM'$ the
    clustering induced by $\pi_{\MM'}$. Then, we have
    $\Gamma(\OPT_B) \sse \LS'$,
    \begin{equation}
      \sum_{c  \in \Gamma(\OPT_B)} \sum_{p \in C} d(p, \LS) < \eps^3 \sopt,  \label{eqtn:B_K^3Lbd}
    \end{equation}
    and
    \begin{equation} 
      \sum_{\substack{M_c \in \calM'  \\ c \in \Gamma(\OPT_B) \cup (\LS' \setminus \LS_X)}} \sum_{p \in M_c} d(p, \MM') < \eps^3 \sopt. 
    \end{equation}
  \item \emph{(Cost Bound)} We have
    $\cost(\MM') \leq (5 + O(\eps)) \sopt$. \label{cond:expensive-post-rem-cost-bd}
  \end{enumerate}   
\end{restatable*}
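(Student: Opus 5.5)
The algorithm performs a second $D$-sampling step, at the level of local search clusters, seeded by the structure left after \Cref{thm:per-client-sampling}. We guess which optimal centers remain unhandled. For each $c^*\in\OPT_2$ we ask whether its satellites $\Gamma(c^*)$ carry non-negligible cost in $\calL$.

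\textbf{Sampling and dummies.} We sample $O(s_0/\eps^{O(1)})$ centers $c\in\LS$ with probability proportional to $\cost_\LS(C)$. Consider a satellite $c\in\Gamma(c^*)$ whose cluster is expensive, say with $\sum_{c\in\Gamma(c^*)}\cost_\LS(C)$ at least an $\eps^{O(1)}$ fraction of the current total. Such a satellite is hit with probability $\eps^{O(1)}$. Because a satellite satisfies $d(c,c^*)\le(1+\newkappa+2\eps)\per(C^*)$, we can use it as a leader. We then guess $r$ from a geometric grid of $O(\log(1/\eps)/\eps)$ values, anchored at $\eps$-multiples of distances from $c$ to the other sampled or known centers. \Cref{lem:no-dense-cores} is what makes the guessing possible: $\Gamma(c^*)$ has diameter at least $\eps\per(C^*)$, so some pair of satellites gives a polynomially-bounded estimate of $\per(C^*)$, and hence a grid point $r\in[d(c,c^*),(1+\newkappa+6\eps)\per(C^*)]$. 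These $c^*$ form $\OPT_{\Lambda_2}$. Union bounds over $s_0$ rounds give success probability $\exp\{-O(s_0\log\eps^{-1})\}$.

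\textbf{Backups and removal.} Some satellites of $c^*$ may already lie in $\LS_X$, being images of $\sigma_0$ or $\sigma_1$. Those satellites are protected, since we never discard them. We put such $c^*$ into $\OPT_X$ with $\psi_X(c^*)$ equal to that satellite. The satellite bound, together with the $O(\eps)$ slack, gives $d(c^*,\psi_X(c^*))\le(1+\newkappa+4\eps)\per(C^*)$; if needed, we shrink the radius slack by choosing the closest protected satellite. Everything else in $\OPT_2$ becomes $\OPT_B$. For these, the sampling stopped hitting their satellites, so $\sum_{c\in\Gamma(\OPT_B)}\cost_\LS(C)<\eps^3\sopt$, which is \eqref{eqtn:B_K^3Lbd}.

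\textbf{Construction of $\LS'$.} We form $\LS'$ by a second round of sampling in the mixed solution $\LS\cup\Lambda_1\cup\Lambda_2$. Every center of $\LS\setminus\LS_X$ whose $\calM'$-cluster is sampled is kept, together with $\Gamma(\OPT_B)$, which we keep by guessing. Unsampled non-protected centers with negligible mixed cost are discarded. Iterating $O(s_0)$ times until the residual mass of unprotected, non-$\Gamma(\OPT_B)$ clusters falls below $\eps^3\sopt$ gives the second inequality in (c).

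\textbf{Cost bound (d).} We reassign each optimal cluster's clients via $\sigma_0,\sigma_1$ (cost $\le(1+\eta)\sopt+O(\eps)\sopt$), via dummies, or via kept satellites. Each such route pays at most $d(p,c^*)+$ a radius of order $2\per(C^*)\le(2+O(\eps))\avg_\OPT(C^*)$, and the clients of discarded centers are charged the $\eps^3$ residual. In total this is at most $(3+2)\sopt$, which gives the $5+O(\eps)$ bound.

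The main obstacle is the radius guessing. Without knowing $\per(C^*)$ we need a polynomial-size candidate set. I would first try anchoring on the pairwise distances between sampled satellites and using \Cref{lem:no-dense-cores} to bound the ratio; if that fails, I would fall back on distances from leaders to clients of the sampled $\calL$-cluster.
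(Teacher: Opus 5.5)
Your plan has two genuine gaps, and the first is the step you flagged yourself. Neither proposed anchor for the radius works. Nothing in the local-search or stability guarantees stops a center of $\LS$ from lying at distance $\ll\eps^{O(1)}\per(C^*)$ from the sampled leader $\ell$, so a grid of $O(\eps^{-1}\log\eps^{-1})$ values anchored at $\ell$'s nearest neighbour misses $d(\ell,c^*)$. \Cref{lem:no-dense-cores} only says that \emph{some} satellite $s\in\Gamma(c^*)$ is at distance at least $\eps\per(C^*)/2$ from $\ell$. That satellite may have a negligible $\calL$-cluster and never be sampled, and singling it out of $\LS$ by guessing costs a factor $1/k$ per dummy, i.e.\ $n^{-\Omega(s_0)}$ in total. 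Distances from $\ell$ to clients of $C_\ell$ say nothing about $\per(C^*)$ either.

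The missing idea is to \emph{delete} too-close neighbours. Let $c'$ be the nearest surviving center to $\ell$ and guess a bit saying whether $d(\ell,c')\geq\eps^2\per(C^*)$. If not, discard $c'$ and repeat. This is safe only because of three facts.
\begin{enumerate}
\item $\OPT_X$ must contain the centers having \emph{any} $\LS_X$-center within $(1+\kappa(C^*)+4\eps)\per(C^*)$, not merely a satellite in $\LS_X$. Then for every remaining $c_0^*$ the enlarged set $\Gamma_\eps(c_0^*)$ (radius $(1+\kappa(C_0^*)+3\eps)\per(C_0^*)$, witnesses within $2\eps\per(C_0^*)$) is disjoint from $\LS_X$, and the deleted $c'$ lies in it.
\item If $c'$ was also a satellite of another remaining center, \Cref{cla:comparable-per} makes the two percentiles comparable. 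Hence $\ell\in\Gamma_\eps$ of that center too, and it receives a dummy led by $\ell$ in the same iteration.
\item The far satellite $s$ is never deleted. So the loop stops with $r_0=d(\ell,c')\in[\eps^2\per(C^*)/3,\,15\per(C^*)]$ for every center served by $\ell$, and the grid $[\eps^3 r_0,\eps^{-3}r_0]$ contains a $(1+\eps)$-rounding of $d(c^*,\ell)$.
\end{enumerate}
Every deletion consumes one of the at most $s_0$ centers of $\LS\setminus\LS_X$, so there are $O(s_0)$ random choices overall. This is what yields the probability $\exp\{-O(s_0\log\eps^{-1})\}$.

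The second gap is the construction of $\LS'$. The sets $\LS_X$ and $\Gamma(\OPT_B)$ depend on $\OPT$, so the algorithm can neither ``never discard'' protected centers nor identify ``unsampled non-protected'' ones. Guessing membership for unsampled centers is again a $1/k$-type guess. Worse, your rule keeps exactly the sampled, expensive centers of $\LS\setminus\LS_X$, whereas the second inequality in (c) requires the surviving non-protected centers to be cheap in $\calM'$.

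The paper does the reverse: every center is kept by default, and only sampled centers (or the close neighbours above) are deleted. Let $\LS^-$ denote the centers deleted so far. In a second phase, $\ell$ is sampled proportionally to its cluster cost in the mixed solution $(\LS\setminus\LS^-)\cup\Lambda_1\cup\Lambda_2$. While the mass on $\Gamma(\widetilde{\OPT}_B)$ plus the surviving non-$\LS_X$ centers is at least $\eps^3\sopt$, such an $\ell$ is hit with probability at least $\eps^3/6$. A random bit then decides whether to delete it outright or, if it is a remaining satellite, to first use it as a leader and then delete it. Each non-skipped round thus removes a center of $\LS\setminus\LS_X$, so $s_0$ rounds suffice.

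The $\eps^3/6$ bound needs $\cost((\LS\setminus\LS^-)\cup\Lambda_1\cup\Lambda_2)<6\sopt$ at every step, which is your argument for (d) applied to intermediate solutions. That argument, like (d) itself, holds only under the invariant $\LS_X\cup\Gamma(\widetilde{\OPT}_B)\subseteq\LS\setminus\LS^-$. Your scheme asserts this invariant but does not enforce it.
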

This is the last ingredient we need in our structural theorem.

\input{concluding-structure}

\input{client-sampling-new}

\input{expensive-sampling}

%% file: concluding-structure.tex
\subsection{Concluding the Proofs of~\Cref{thm:structural} and~\Cref{thm:struct2}}

Concluding~\Cref{thm:structural} and~\Cref{thm:struct2} is now mostly
a matter of collecting properties and parameters. 

\structural*

\begin{proof}[Proof of~\Cref{thm:structural}]
  We already defined $\LS'$ in~\Cref{thm:expensive}. We define $\Lambda := \Lambda_1
  \cup \Lambda_2$, $\OPT_{\Lambda} := \OPT_{\Lambda_{1}} \sqcup
    \OPT_{\Lambda_{2}}$, $\OPT_{\Lambda}^{\mathrm{early}} := \OPT_{\Lambda_1}^{\mathrm{early}}$, $\OPT_{\Lambda}^{\mathrm{late}} = \OPT_{\Lambda} \setminus \OPT_{\Lambda}^{\mathrm{early}}$, $\OPT_S := \OPT_{S_0} \sqcup \OPT_{S_1}$, $\varphi = [\varphi_1, \varphi_2] : \OPT_{\Lambda} \to \Lambda$, and $\sigma_S = [\sigma_0, \sigma_1] : \OPT_S \to \LS'$.
    
    To finish the definition of the partition of $\OPT$ and define the full substitution and backup maps $\sigma$ and $\psi$, we next define $\OPT_P$ and $\OPT_R$ such that $\OPT_P \sqcup \OPT_R = \OPT_X \sqcup \OPT_B$, $\OPT_X \subseteq \OPT_P$, and $\OPT_R \subseteq \OPT_B$. Consider the following greedy algorithm. 
    \begin{enumerate}
        \item Initialize $\OPT_P \leftarrow \OPT_X$, $\OPT_R \leftarrow \emptyset$, and $\psi: \OPT_P \to \LS'$ as $\psi_X : \OPT_X \to \LS_X \subseteq \sigma_S(\OPT_S)$ as in \Cref{thm:expensive}\ref{cond:BK2-psuedodummy}. 
        \item We also maintain an assignment $\sigma_R : \OPT_R \to \LS'$ such that, for each $c^* \in \OPT_R$ and $\sigma := [\sigma_S, \sigma_R] : \OPT_S \cup \OPT_R \to \LS'$, $\sigma(c^*)$ has a unique preimage, as needed for \Cref{thm:structural}\ref{cond:struct-subs}\ref{item:unique-preimage}.
        \item Iterate over $c^* \in \OPT_B$ in an arbitrary order:
        \begin{itemize}
            \item If there exists $c' \in \sigma(\OPT_S \cup \OPT_R)$ with $d(c^*, c') \leq (1 + \eps) \per(C^*)$, then set $\OPT_P \leftarrow \OPT_P \cup \{c^*\}$ and extend $\psi$ by $\psi(c^*) = c' \in \sigma(\OPT_S \cup \OPT_R)$.
            \item Otherwise, since $c^* \in \OPT_B \subseteq \OPT_2$, by \Cref{thm:per-client-sampling}\ref{cond:client-struct}\ref{cond:cand-leader} and \Cref{thm:expensive}\ref{cond:expensive-rem-cost-bds}, there exists some $c' \in \Gamma(c^*) \subseteq \LS'$ with $d(c^*, c') \leq (1 + \eps)\per(C^*)$. In this case we set $\OPT_R \gets \OPT_R \cup \{c^*\}$ and extend $\sigma$ by $\sigma(c^*) = c'$.
        \end{itemize}
    \end{enumerate} 
    This defines our partition $\OPT := \OPT_\Lambda \sqcup \OPT_P \sqcup \OPT_R \sqcup \OPT_S$, the backup map $\psi: \OPT_P \to \LS_P \subseteq \LS'$ (where $\LS_P := \psi(\OPT_P)$), and the substitution map $\sigma: \OPT_S \sqcup \OPT_R \to \LS'$. The algorithm is to run the algorithms of \Cref{thm:per-client-sampling} and \Cref{thm:expensive} one after another, immediately yielding the running-time and probabilistic guarantees. It remains to check the required properties.
  \begin{enumerate}[label=(\alph*), nosep]
  \item Note that $\OPT_{\Lambda}, \OPT_P, \OPT_R$ are disjoint by
    definition and all subsets of
    $\OPT \setminus \OPT_{\mathrm{sync}}$. Hence,
    by~\Cref{lem:few-not-in-sync}, we have
    $|\OPT_{\Lambda}| + |\OPT_P| + |\OPT_R| \leq s_0$. 
  \item This is immediate by the combination of
    \Cref{thm:per-client-sampling}\ref{cond:sampled-with-dummies}
    and~\Cref{thm:expensive}\ref{cond:expensive-dummy}, using $\err_p$ to absorb all of the error terms. This uses $\per(C^*) \leq (1 + 2\eps) \avg_{\OPT}(C^*)$.
  \item This is immediate on $\OPT_X$ by~\Cref{thm:expensive}\ref{cond:BK2-psuedodummy} and follows from our construction of $\psi$ for $\OPT_P \setminus \OPT_X$. (In fact, these latter centers have the even stronger $(1 + \eps) \per(C^*)$ distance upper bound, but we will not need this.) 
  \item By construction of $\sigma$ and \Cref{thm:expensive}\ref{cond:expensive-rem-cost-bds} we have that 
      $\LS_P  \subseteq \LS_\sigma \subseteq \LS'$.
    \begin{enumerate}[label = (\roman*)]
    \item The cost bound follows from the definition of $\sigma$ and
      the bounds from~\Cref{prop:sync-sub}, \Cref{prop:close}, and
      \Cref{thm:per-client-sampling}\ref{cond:rest-substitutable}. This uses $\cost(\LS) \leq 5 \sopt$.   
    \item This is immediate by construction of $\sigma_R$ and the triangle inequality.
\item By \Cref{thm:expensive}\ref{cond:expensive-rem-cost-bds} we have $\Gamma(\OPT_R) \sse \LS'$, and each $\sigma(c^*)$ for $c^* \in \OPT_R$ has a unique preimage by construction of $\sigma_R$. \qedhere
    \end{enumerate}
  \end{enumerate}    
\end{proof}

\structuraltwo*

\begin{proof}[Proof of~\Cref{thm:struct2}]
  We can again check the required properties one-by-one.
  \begin{enumerate}[label = (\alph*), nosep]
  \item This is immediate from the second guarantee
    of~\Cref{thm:expensive}\ref{cond:expensive-rem-cost-bds}, since we defined $\OPT_R$ as a subset of $\OPT_B$ and $\LS_X \subseteq \LS_{\sigma}$.
  \item This is by the definition of close clusters: in fact, for all
    $c^* \in \OPT \setminus \OPT_{S_0}$ we have
    $d(c^*, \LS) > \eta \avg_{\OPT}(C^*) + 2\eps
    \avg_{\LS}(C^*)$. Notably, $\OPT_{S_0} \subseteq \OPT_S$.
  \item Since $\OPT_P \cup \OPT_{\Lambda}^{\mathrm{late}} \subseteq \OPT_1 \setminus \OPT_{\Lambda_1}^{\mathrm{early}}$, this follows from \Cref{thm:per-client-sampling}\ref{cond:client-struct}\ref{cond:hidden}.
  \item The last set of properties all concern $\OPT_R$. Note that
    by definition $\OPT_R \subseteq \OPT_2$.
    \begin{enumerate}[label = (\roman*), nosep]
    \item This follows
      by~\Cref{thm:per-client-sampling}\ref{cond:client-struct}\ref{cond:hidden}
      and
      \Cref{thm:per-client-sampling}\ref{cond:client-struct}\ref{cond:proportionate}. (Here we use that $\OPT_R \subseteq \OPT_2 \subseteq \OPT_1 \setminus \OPT_{\Lambda_1}$.)
    \item This follows by the first bound of~\Cref{thm:expensive}\ref{cond:expensive-rem-cost-bds}.
    \item This follows from \Cref{thm:per-client-sampling}\ref{cond:client-struct}\ref{cond:per-roughly-avg}. \qedhere
    \end{enumerate}
  \end{enumerate}
\end{proof}


%% file: client-sampling-new.tex
\subsection{$D$-Sampling and Dummy Addition: Proof of \Cref{thm:per-client-sampling}}
\label{sec:d-sampling-proof}

Next, we give the proofs for the various supporting theorems from
\S\ref{sec:structural-thm}. Let us recall
\Cref{thm:per-client-sampling}, our result for client sampling.

\dsampling

To prove this result, we show that \Cref{alg:client-sampling}
satisfies the properties with the appropriate probability.

\bigskip
\begin{algorithm} 
\caption{Client sampling and dummy guessing}
\label{alg:client-sampling}
  $\Lambda_1 \gets \emptyset$ \;
  \For{$i = 1, \ldots, 2s_0$}{
    Sample a random client $\ell \in D$ with probability
    $d(\ell, \pi_{\LS}(\ell))/\cost(\LS)$\;   
    w.p. $1/2$, \textbf{continue}\; \label{line:skip-client-sampling}
    $r_0 \gets d(\ell, \LS)$\;
    $r \gets$ random $(1 + \eps)^x r_0$ with $\eps^3 r_0 \leq r \leq r_0 / \eps^7$, $x \in \mathbb{Z}$\;
    $\Lambda_1 \gets \Lambda_1 \cup \{ \lambda \}$, where $\lambda$ is added with $d(\ell, \lambda) = r$ 
    \tcp*{$\ell = \ell_{\lambda}$, $r = r_{\lambda}$} 
  }
\end{algorithm}
\bigskip

Let $H \subseteq D$ be the random subset of clients sampled (hit) in \Cref{alg:client-sampling} for which the algorithm added a corresponding $\lambda \in \Lambda_1$ (i.e., did not skip the iteration in Line~\ref{line:skip-client-sampling}). We will set $\OPT_{\Lambda_1} = \OPT_{\Lambda_1}^{\mathrm{early}} \cup \OPT_{\Lambda_1}^{\mathrm{late}}$, where 
\begin{equation*}
\OPT_{\Lambda_1}^{\mathrm{early}} := \{c^* \in \OPT_1 \mid \core_{6\eps}(C^*) \cap H \neq \emptyset \}, \quad \text{and} \quad \OPT_{\Lambda_1}^{\mathrm{late}} := \{c^* \in \OPT_1 \setminus \OPT_{\Lambda_1}^{\mathrm{early}} \mid \core_{\kappa}(C^*) \cap H \neq \emptyset \}.
\end{equation*} 

We now define an event which yields our desired property and happens with reasonable probability.

\begin{lemma} \label{lem:good-sampling}
Let $\calG$ be the event that
\begin{enumerate}[label = (\Alph*),nosep]
    \item \label{cond:good-no-repeats}
Every $\lambda\in\Lambda_1$ has its leader in
$\core_\kappa(C^*)$ for some $c^*\in\OPT_1$, and distinct
dummy centers have leaders in distinct optimal clusters.
     \item \label{cond:good-radius-guess} For each $\ell_{\lambda} \in H$, where $\ell_{\lambda} \in C^*$, we have
    \begin{equation*}
        d(\ell_\lambda, c^*) \leq r_{\lambda} \leq (1 + \newkappa \mathbf{1}_{(c^* \in \OPT_{\Lambda_1}^{\mathrm{late}})} + 8 \eps) \per(C^*) + \eps \avg_{\LL}(C^*).
    \end{equation*}
    \item \label{cond:first-core-bd} $\sum_{c^* \in \OPT_1 \setminus \OPT_{\Lambda_1}^{\mathrm{early}}} \cost_{\LL}(\core_{6\eps}(C^*)) < \eps^{5} \sopt$. 
    \item \label{cond:good-hit-all} $\sum_{c^* \in \OPT_1 \setminus \OPT_{\Lambda_1}} \cost_{\LL}(\core_{\kappa}(C^*)) < \eps^{5} \sopt$. 
\end{enumerate}
Then, $\calG$ occurs with probability at least $\exp\{- O(s_0  \log (\eps^{-1}))\}$. 
\end{lemma}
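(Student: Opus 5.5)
We will lower bound $\Pr[\calG]$ by exhibiting one explicit "good trajectory" of the $2s_0$ iterations of \Cref{alg:client-sampling} and multiplying the per-iteration probabilities. The driving quantity is the current unhit core mass. At the start of iteration $i$, let $H_i$ be the set of clients hit so far. Let $U^{(6\eps)}_i$ be the set of centers $c^*\in\OPT_1$ with $\core_{6\eps}(C^*)\cap H_i=\emptyset$, and let $U^{(\kappa)}_i$ be the set of centers $c^*\in\OPT_1$ with $\core_\kappa(C^*)\cap H_i=\emptyset$. Define the potentials
\[
\Phi^{(6\eps)}_i:=\sum_{c^*\in U^{(6\eps)}_i}\cost_\LS(\core_{6\eps}(C^*)),\qquad \Phi^{(\kappa)}_i:=\sum_{c^*\in U^{(\kappa)}_i}\cost_\LS(\core_\kappa(C^*)).
\]
In each iteration the good behaviour is determined as follows. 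If $\Phi^{(6\eps)}_i\geq\eps^5\sopt$, we require that the sample lands in $\core_{6\eps}(C^*)$ for some $c^*\in U^{(6\eps)}_i$ and is not skipped. Otherwise, if $\Phi^{(\kappa)}_i\geq\eps^5\sopt$, we require that it lands in $\core_\kappa(C^*)$ for some $c^*\in U^{(\kappa)}_i$ and is not skipped. Otherwise, we require that the iteration is skipped at Line~\ref{line:skip-client-sampling}, which happens with probability $1/2$. Since $\cost(\LS)\leq 5\sopt$, each non-skipped good event has probability at least $\tfrac12\cdot\eps^5\sopt/\cost(\LS)\geq\eps^5/10$. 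Every productive step removes a fresh center from $\OPT_1$, and $|\OPT_1|\leq s_0$, so after at most $2s_0$ steps both potentials are below $\eps^5\sopt$. This is exactly what \ref{cond:first-core-bd} and \ref{cond:good-hit-all} require, because $\OPT_{\Lambda_1}^{\mathrm{early}}$ and $\OPT_{\Lambda_1}$ are precisely the complements of the final $U^{(6\eps)}$ and $U^{(\kappa)}$. Restricting to unhit clusters in every productive step also gives \ref{cond:good-no-repeats}: each leader lies in some $\core_\kappa(C^*)$ with $c^*\in\OPT_1$, and distinct dummies have leaders in distinct clusters. (A $6\eps$-core is contained in the $\kappa$-core because $\newkappa\geq 1/\eta-1\geq 6\eps$.)

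The second step handles the radius guess, which gives \ref{cond:good-radius-guess}. Let $\ell\in\core(C^*)$ be a productive sample with $r_0=d(\ell,\LS)>0$; the case $r_0=0$ has probability zero, since $\ell$ is sampled proportionally to $d(\ell,\LS)$. Let the target interval be $[d(\ell,c^*),\,(1+\tau+8\eps)\per(C^*)+\eps\avg_\LS(C^*)]$, where $\tau\in\{6\eps,\newkappa\}$ matches the core. Because $d(\ell,c^*)\leq(1+\tau)\per(C^*)$, this interval has multiplicative width at least $1+\eps$, so some power $(1+\eps)^x r_0$ lies inside it, provided it also lies inside the range $[\eps^3r_0,r_0/\eps^7]$. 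That provision is the main obstacle.

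The upper constraint $r_0/\eps^7$ is not a concern if the grid point near $d(\ell,c^*)$ lies below it; the danger is $r_0$ being tiny relative to $\per(C^*)$, or huge. Both cases can be excluded by conditioning the good event on the sample itself. For the huge case, the sample distribution favours large $r_0$, but $r_0=d(\ell,\LS)\leq d(\ell,c^*)+d(c^*,\LS)$. One would then use that $d(c^*,\LS)$ is controlled by $\avg_\OPT+\avg_\LS$ of $C^*$, which bounds $r_0/\eps^3$ against the slack term $\eps\avg_\LS(C^*)$. For the tiny case, clients in the core with $r_0<\eps^7\per(C^*)$ contribute at most $\eps^7|C^*|\per(C^*)=O(\eps^7)\sopt$ in total. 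So if we additionally require that the sampled $\ell$ has $r_0\geq\eps^7\per(C^*)$, the potential threshold $\eps^5\sopt$ still leaves mass $\Omega(\eps^5)\sopt$ for the good event. This bookkeeping is the delicate point, and the slack term $\eps\avg_\LS(C^*)$ is designed to absorb it.

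Given such an $\ell$, the correct exponent $x$ is chosen with probability $\Omega(1/\log_{1+\eps}(\eps^{-10}))=\Omega(\eps/\log\eps^{-1})$. Each iteration therefore succeeds with probability $\poly(\eps)$, and multiplying over the $2s_0$ independent iterations gives $\Pr[\calG]\geq\poly(\eps)^{2s_0}=\exp\{-O(s_0\log\eps^{-1})\}$.
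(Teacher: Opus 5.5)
Your plan is the paper's: one good trajectory driven by the unhit core mass, at most $|\OPT_1|\le s_0$ productive iterations, punched cores so that $r_0$ is not too small, and the slack $\eps\avg_\LS(C^*)$ to absorb large $r_0$. Replacing the paper's fixed split ($i\le s_0$ versus $i>s_0$) by your adaptive potential rule is fine. Distinctness of leaders in case~1 holds because $\Phi^{(6\eps)}$ is non-increasing, so case~1 never follows case~2.

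\textbf{Gap: the large-$r_0$ radius check.} From $r_0\le d(\ell,c^*)+\avg_\OPT(C^*)+\avg_\LS(C^*)$ you would need $\eps^3\avg_\OPT(C^*)$ to be dominated by $(1+\tau+8\eps)\per(C^*)+\eps\avg_\LS(C^*)$. Nothing gives this. The inequality $\per(C^*)\le(1+2\eps)\avg_\OPT(C^*)$ goes the wrong way. Moreover, a few far outliers lying close to $\LS$ can make $\avg_\OPT(C^*)\gg\eps^{-3}\per(C^*)$ while $\avg_\LS(C^*)$ stays small, so your chain does not certify the lower cap $r\ge\eps^3 r_0$.

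The missing idea is to exploit the core rather than the whole cluster. Suppose $r_0>\eps^{-1}\per(C^*)$. Every $p\in\core_\kappa(C^*)$ has $d(p,\ell)\le 2(1+\newkappa)\per(C^*)\le 4\per(C^*)$, hence $d(p,\LS)\ge r_0-4\per(C^*)\ge r_0/2$. Since $|\core_\kappa(C^*)|\ge\eps|C^*|$, this gives $\eps\avg_\LS(C^*)>\eps^2 r_0/2\ge(1+\eps)\eps^3 r_0$, so the target's upper end clears the lower cap. If instead $r_0\le\eps^{-1}\per(C^*)$, then $\eps^3 r_0\le\eps^2\per(C^*)$ and the lower cap is harmless. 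Equivalently, average the triangle inequality for $d(c^*,\LS)$ over $\core_0(C^*)$ instead of $C^*$, which gives $d(c^*,\LS)\le\per(C^*)+\avg_\LS(C^*)/\eps$ and repairs your chain.

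\textbf{Two constant slips.}
\begin{enumerate}
\item Punching at $\eps^7\per(C^*)$ is too weak for the cap $r\le r_0/\eps^7$. When $r_0$ is near $\eps^7\per(C^*)$, the cap is about $\per(C^*)$. This can be below $d(\ell,c^*)$, which ranges up to $(1+\tau)\per(C^*)$, so no admissible grid point exists. Punch at $\eps^6\per(C^*)$ as the paper does; the removed mass is still $O(\eps^6)\sopt$, leaving $\Omega(\eps^5)\sopt$ for sampling.
\item For early leaders, (B) requires the upper end $(1+8\eps)\per(C^*)+\eps\avg_\LS(C^*)$, not $(1+14\eps)\per(C^*)+\eps\avg_\LS(C^*)$ as your choice $\tau=6\eps$ gives. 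Your width argument still works with the correct target, since $d(\ell,c^*)\le(1+6\eps)\per(C^*)$ and $(1+6\eps)(1+\eps)\le 1+8\eps$ give multiplicative width at least $1+\eps$.
\end{enumerate}
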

\begin{proof}
Let $\widetilde{H}$ denote the set $H$ at some iteration of \Cref{alg:client-sampling}. Define $\OPT_{\widetilde{H}}$ analogously to $\OPT_{\Lambda_1}$ but with respect to $\widetilde{H}$. For $c^* \in \OPT_1$ define its \emph{punched core} as
\begin{equation*}
\widehat{\core}_{6\eps}(C^*) := \{p \in \core_{6 \eps}(C^*) \mid d(p, \LL) \geq \eps^{6} \per(C^*)\}.   
\end{equation*} 
Similarly define $\widehat{\core}_{\kappa}(C^*)$.
Assume that~\ref{cond:good-no-repeats} holds for the dummy centers
added so far. It is maintained by the following event.
\begin{itemize}
    \item Let $\ell\in C^*$ be the sampled client.
    If $c^*\notin\OPT_1$, the iteration is skipped on
    line~\ref{line:skip-client-sampling}.
    Otherwise, the iteration is skipped if
    $c^*\in\OPT_{\widetilde H}$ or
    $\ell\notin\widehat{\core}_\kappa(C^*)$.
    If $i\leq s_0$ and
    $\ell\notin\widehat{\core}_{6\eps}(C^*)$,
    the iteration is also skipped.
    In all other cases, the iteration is not skipped.
\end{itemize}
This happens with probability $1/2$ on each iteration. Next, suppose that~\ref{cond:good-radius-guess} holds up to the current iteration, with respect to $\widetilde{H}$. We bound the probability it is maintained in this iteration, given the iteration is not skipped from the event above. Since the above event holds, $r_0 \geq \eps^{6} \per(C^*)$. If $i \leq s_0$, we also have $\ell \in \core_{6\eps}(C^*)$; otherwise we just have $\ell \in \core_{\kappa}(C^*)$. Call $\ell$ early if $\ell\in\core_{6\eps}(C^*)$, and late
otherwise. Since each optimal cluster has at most one accepted
leader, this agrees with the classification of $c^*$ in
$\OPT_{\Lambda_1}^{\mathrm{early}}$ or
$\OPT_{\Lambda_1}^{\mathrm{late}}$. We have
\[
d(\ell,c^*)\leq
\begin{cases}
(1+6\eps)\per(C^*) & \text{if $\ell$ is early},\\
(1+\newkappa)\per(C^*) & \text{if $\ell$ is late}.
\end{cases}
\] If $r_0 > \eps^{-1} \per(C^*)$, then, for all $p \in \core_{\kappa}(C^*)$, 
\begin{equation*}
d(p, \LS) \geq d(\ell, \LS) - d(p, \ell) \geq r_0 - 4 \per(C^*) \geq \frac{r_0}{2}.    
\end{equation*}
In particular, since $|\core_{\kappa}(C^*)| \geq \eps |C^*|$, we get $\cost_\LS(C^*) > \frac{\eps}{2} r_0 |C^*|$. As such, $\eps \avg_\LS(C^*) > \frac{\eps^2}{2} r_0$. In either case, there exists some $r = (1 + \eps)^x r_0$ such that $d(\ell, c^*) \leq r \leq (1 + \newkappa\mathbf{1}_{(\ell \text{ late})}+ 8\eps) \per(C^*) + \eps \avg_\LS(C^*)$ with $x \in \mathbb{Z}$ and $\eps^3 r_0 \leq r \leq \eps^{-7}r_0$. A random guess of $(1 + \eps)^x r_0$ in this interval around $r_0$ then meets the required guarantees with probability at least $\frac{\eps}{20 \log (\eps^{-1})}  \geq \eps^2 / 5$. The arguments for \ref{cond:first-core-bd} and \ref{cond:good-hit-all} are similar. We ensure  \ref{cond:first-core-bd} and \ref{cond:good-hit-all} by bounding the probability that the following event holds.  
\begin{itemize}
    \item If $i \leq s_0$ and $\sum_{c^* \in \OPT_1 \setminus \OPT_{\widetilde{H}}} \cost_{\LS}(\core_{6\eps}(C^*)) \geq \eps^{5} \sopt$, then $\ell$ is sampled so that $\ell \in \widehat{\core}_{6\eps}(C^*)$ with $c^* \in \OPT_1 \setminus \OPT_{\widetilde{H}}$.
    \item If $i > s_0$ and $\sum_{c^* \in \OPT_1 \setminus \OPT_{\widetilde{H}}} \cost_{\LS}(\core_{\kappa}(C^*)) \geq \eps^{5} \sopt$, then $\ell$ is sampled so that $\ell \in \widehat{\core}_{\kappa}(C^*)$ with $c^* \in \OPT_1 \setminus \OPT_{\widetilde{H}}$.
\end{itemize}
Assuming the first event ensuring~\ref{cond:good-no-repeats} holds, each ``if'' condition can only hold in at most $|\OPT_1| \leq s_0$ iterations. Thus~\ref{cond:first-core-bd} holds after the first $s_0$
iterations and remains valid because
$\OPT_{\Lambda_1}^{\mathrm{early}}$ only grows.
Similarly, \ref{cond:good-hit-all} holds after the second
$s_0$ iterations. Now, if $i \leq s_0$ and $\sum_{c^* \in \OPT_1 \setminus \OPT_{\widetilde{H}}} \cost_{\LS}(\core_{6\eps}(C^*)) \geq \eps^{5} \sopt$, then 
\begin{align*}
\sum_{c^* \in \OPT_1 \setminus \OPT_{\widetilde{H}}} \cost_{\LS}(\widehat{\core}_{6\eps}(C^*)) &\geq  \sum_{c^* \in \OPT_1 \setminus \OPT_{\widetilde{H}}} (\cost_{\LS}(\core_{6\eps}(C^*)) - |C^*| \eps^{6} \per(C^*)) \\
&\geq (\eps^{5} - (1 + 2\eps)\eps^{6}) \sopt \geq \eps^{6} \sopt.
\end{align*}
The probability of sampling a good $\ell$ if the condition holds is then at least 
\begin{equation*}
    \frac{\eps^{6} \sopt}{\cost(\LS)} \geq \frac{\eps^{6} \sopt}{5 \sopt} = \frac{\eps^{6}}{5}.
\end{equation*}
An identical argument gives the same lower bound on the probability in the case when $i > s_0$. Combining these bounds shows that
$\Pr[\calG] \geq (\eps^{8}/(50))^{2s_0} = \exp\{- O(s_0 \log
(\eps^{-1}))\}$.
\end{proof}
Assume that $\calG$ holds. Define
\begin{align*}
\OPT_{S_1}&:=\left\{c^*\in\OPT_1\setminus\OPT_{\Lambda_1}:
  \cost_\LS(\core_\kappa(C^*))\geq
  \eps^3\cost_\OPT(C^*)\right\},\\
\OPT_2&:=\OPT_1\setminus(\OPT_{\Lambda_1}\cup\OPT_{S_1}),
\end{align*}
and set $\sigma_1(c^*):=\pi_\LS(c^*)$ for $c^*\in\OPT_{S_1}$.
For each $c^*\in\OPT_{\Lambda_1}$, define $\varphi_1(c^*)$
to be the unique dummy center whose leader belongs to $C^*$.
By~\ref{cond:good-no-repeats} and the definition of
$\OPT_{\Lambda_1}$, this gives a bijection
$\varphi_1:\OPT_{\Lambda_1}\to\Lambda_1$. The auxiliary-center and hiddenness guarantees then follow immediately from
\Cref{lem:good-sampling}, and proportionate core cost holds by definition. (These are \Cref{thm:per-client-sampling}\ref{cond:sampled-with-dummies}, \ref{cond:client-struct}\ref{cond:hidden}, and \ref{cond:client-struct}\ref{cond:proportionate}, respectively.)
We verify the remaining guarantees below.

Since $\core_0(C^*)$ contains at least $\eps|C^*|$ clients, each within
distance $\per(C^*)$ of $c^*$, averaging the triangle inequality gives
\begin{align}
\label{eq:cleanup-nearest-center}
d(c^*,\LS)
&\leq \frac{1}{|\core_0(C^*)|} \sum_{p \in \core_0(C^*)} (d(c^*, p) + d(p, \LS)) \leq \per(C^*)+
  \frac{\cost_\LS(\core_0(C^*))}{|\core_0(C^*)|} \nonumber \\
&\leq \per(C^*)+
  \frac{\cost_\LS(\core_\kappa(C^*))}{\eps|C^*|}.
\end{align}

Now, the definition of $\OPT_{S_1}$ and
\Cref{lem:good-sampling}\ref{cond:good-hit-all} imply
\begin{equation}
\sum_{c^*\in\OPT_{S_1}}\cost_\OPT(C^*)
\leq\eps^{-3}\sum_{c^*\in\OPT_{S_1}}
  \cost_\LS(\core_\kappa(C^*))
<\eps^2\sopt. \label{eq:opts1-bd}    
\end{equation}

By the triangle inequality, definition of $\sigma_1$, \eqref{eq:cleanup-nearest-center},
and the trivial upper bound on $\per(C^*)$ from \eqref{eq:per-bd}, we get 
\begin{align*}
\sum_{c^*\in\OPT_{S_1}}\sum_{p\in C^*}d(p,\sigma_1(c^*))
&\leq \sum_{c^*\in\OPT_{S_1}}\sum_{p\in C^*}(d(p,c^*) + d(c^*, \LS)) \\
&\leq \sum_{c^*\in\OPT_{S_1}}\left(
  \cost_\OPT(C^*)+|C^*|\per(C^*)+
  \eps^{-1}\cost_\LS(\core_\kappa(C^*))\right) \\
  &\leq \sum_{c^*\in\OPT_{S_1}}\left(
  \cost_\OPT(C^*)+(1 + 2\eps)|C^*|\avg_{\OPT}(C^*)+
  \eps^{-1}\cost_\LS(\core_\kappa(C^*))\right) \\
  &= \sum_{c^*\in\OPT_{S_1}}\left(
  (2 + 2\eps)\cost_\OPT(C^*)+
  \eps^{-1}\cost_\LS(\core_\kappa(C^*))\right).
\end{align*}
Then, applying \Cref{eq:opts1-bd} to the first term and \Cref{lem:good-sampling}\ref{cond:good-hit-all} to the second, we get a bound of $((2 + 2\eps)\eps^2 + \eps^4) \sopt < \eps \sopt$, as needed for \Cref{thm:per-client-sampling}\ref{cond:rest-substitutable}.

It remains to show \Cref{thm:per-client-sampling}\ref{cond:client-struct}\ref{cond:per-roughly-avg} and \ref{cond:client-struct}\ref{cond:cand-leader}. For this, let $c^*\in\OPT_2$. Since $c^*\in\OPT_1$ is not close to $\LS$ (as in \Cref{prop:close}),
\eqref{eq:cleanup-nearest-center} and the definition of $\OPT_2$ give
\[
\eta\avg_\OPT(C^*)<d(c^*,\LS)
<\per(C^*)+\eps^2\avg_\OPT(C^*).
\]
Together with the trivial upper bound on $\per(C^*)$ from \eqref{eq:per-bd}, this yields
\[
\frac{\per(C^*)}{1+2\eps}
\leq\avg_\OPT(C^*)
<\frac{\per(C^*)}{\eta-\eps^2}
\leq\frac{1+\eps}{\eta}\per(C^*),
\]
giving \ref{cond:client-struct}\ref{cond:per-roughly-avg}. Finally, choose $q\in\core_0(C^*)$ minimizing $d(q,\LS)$
and set $c':=\pi_\LS(q)$. Then
\[
d(q,c')
\leq\frac{\cost_\LS(\core_\kappa(C^*))}{\eps|C^*|}
<\eps^2\avg_\OPT(C^*)
<\frac{\eps^2}{\eta-\eps^2}\per(C^*)
\leq \eps \per(C^*).
\]
Thus $d(c^*, c') \leq d(c^*, q) + d(q, c') \leq (1 + \eps) \per(C^*)$, proving \ref{cond:client-struct}\ref{cond:cand-leader}.
The inequalities above use $\eta\geq1/2$ and $\eps<1/200$.
This completes the proof of \Cref{thm:per-client-sampling}.

%% file: expensive-sampling.tex
\subsection{Expensive Cluster Sampling: Proof of \Cref{thm:expensive}}
\label{sec:proof-expens-clust-sampl}

We start this section by bounding the cost of the solution we will be
sampling from in \Cref{thm:expensive}. The constant in the proposition
is very loose, but this looseness is not relevant and only changes the
final success probability.
\begin{proposition}
  \label{prop:cost-bound-for-exceptional}
  We have 
  \begin{equation*}
   \sum_{c^* \in \OPT \setminus \OPT_2} \sum_{p \in C^*} d(p, \LS_X \cup \Lambda_1)
    \leq 5 \sum_{c^* \in \OPT \setminus \OPT_2} \cost_{\OPT}(C^*) + O(\eps)\sopt.     
  \end{equation*}
\end{proposition}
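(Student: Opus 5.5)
Since $\OPT \setminus \OPT_2 = \OPT_{S_0} \sqcup \OPT_{S_1} \sqcup \OPT_{\Lambda_1}$, I split the left-hand side into three sums and bound each one separately. For every client $p$ I use $d(p, \LS_X \cup \Lambda_1) \leq \min\{d(p, \sigma(c^*)), d(p, \varphi_1(c^*))\}$ with the map appropriate to $p$'s optimal cluster. The images $\sigma_0(\OPT_{S_0})$ and $\sigma_1(\OPT_{S_1})$ lie in $\LS_X$ by definition.

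\emph{Substituted clusters.} For $c^* \in \OPT_{\mathrm{sync}}$, \Cref{prop:sync-sub} gives cost at most $(1+O(\eps))\cost_\OPT(C^*)$, plus a total of $O(\eps)\cost(\LS) = O(\eps)\sopt$, since $\cost(\LS) \leq 5\sopt$. For $c^* \in \OPT_{\mathrm{cl}}$, \Cref{prop:close} gives $(1+\eta)\cost_\OPT(C^*) + 2\eps\cost_\LS(C^*)$, and the error terms sum to $O(\eps)\sopt$. For $\OPT_{S_1}$, \Cref{thm:per-client-sampling}\ref{cond:rest-substitutable} bounds the whole sum by $\eps\sopt$. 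All of these are well within the coefficient $5$.

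\emph{Clusters with dummies.} For $c^* \in \OPT_{\Lambda_1}$ and $p \in C^*$, the distance to a dummy is $d(p, \lambda) = d(p, \ell_\lambda) + r_\lambda$ by definition. Using $d(p, \ell_\lambda) \leq d(p, c^*) + d(c^*, \ell_\lambda) \leq d(p,c^*) + r_\lambda$, we get
\[ d(p, \varphi_1(c^*)) \leq d(p,c^*) + 2 r_{\varphi_1(c^*)}. \]
By \Cref{thm:per-client-sampling}\ref{cond:sampled-with-dummies}, $r_{\varphi_1(c^*)} \leq (1 + \newkappa + 8\eps)\per(C^*) + \eps\avg_\LS(C^*)$. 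Since $\newkappa \leq 1/2$ (one may take the range from \Cref{lem:winwin-numerical}, or simply $\newkappa < 1$) and $\per(C^*) \leq (1+2\eps)\avg_\OPT(C^*)$ by \eqref{eq:per-bd}, summing over $p \in C^*$ gives at most $(1 + 2(1+\newkappa)(1+O(\eps)))\cost_\OPT(C^*) + 2\eps\cost_\LS(C^*)$.

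The main obstacle is the constant. With $\newkappa < 1$ the coefficient is only $< 5 + O(\eps)$, which is exactly why the proposition uses the loose constant $5$. So I would use just $\newkappa < 1$, as in the definition of core-radius in \S\ref{sec:structural-theorem}. The leftover $(1+\newkappa)\cdot O(\eps)\per(C^*)|C^*|$ terms are absorbed into $O(\eps)\cost_\OPT(C^*)$. The $\eps\avg_\LS$ terms total at most $2\eps\cost(\LS) = O(\eps)\sopt$.

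Summing the three parts, each cluster contributes at most $5\cost_\OPT(C^*)$ up to errors, and the errors total $O(\eps)\sopt$. This proves the proposition.
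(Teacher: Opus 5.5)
Your proof is correct and follows the paper's argument. You split $\OPT\setminus\OPT_2$ into $\OPT_{S_0}\sqcup\OPT_{S_1}\sqcup\OPT_{\Lambda_1}$, reuse the existing substitution bounds for the first two parts, and for $\OPT_{\Lambda_1}$ bound $d(p,\varphi_1(c^*))\le d(p,c^*)+2r_{\varphi_1(c^*)}$ using the radius guarantee, \eqref{eq:per-bd}, and $\newkappa\le 1$ to reach coefficient $5+O(\eps)$; the paper writes the same step as $d(p,c^*)+d(c^*,\varphi_1(c^*))$ with $d(c^*,\varphi_1(c^*))\le 2r_{\varphi_1(c^*)}$.
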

\begin{proof}[Proof of \Cref{prop:cost-bound-for-exceptional}]
Note that $\OPT \setminus \OPT_2 = \OPT_{S_0} \cup \OPT_{S_1} \cup \OPT_{\Lambda_1}$. For $c^* \in \OPT_{\Lambda_1}$, by \Cref{thm:per-client-sampling}\ref{cond:sampled-with-dummies}, there exists $\lambda \in \Lambda_1$ (namely $\lambda = \varphi_{1}(c^*)$) such that $d(c^*, \lambda) \leq (2 + 2\newkappa + 16\eps) \per(C^*) + 2\eps \avg_\LS(C^*)$. So, using that $\per(C^*) \leq (1 + 2 \eps) \avg_{\OPT}(C^*)$ and $\newkappa \leq 1$,
\begin{align*}
 \sum_{c^* \in \OPT_{\Lambda_1}} \sum_{p \in C^*} d(p, \Lambda_1) &\leq   \sum_{c^* \in \OPT_{\Lambda_1}} \sum_{p \in C^*} (d(p, c^*) + d(c^*, \Lambda_1)) \\
 &\leq  \sum_{c^* \in \OPT_{\Lambda_1}} ((3 + 2\newkappa + 25\eps) \cost_{\OPT}(C^*) + 2 \eps \cost_\LS(C^*)).
\end{align*}
Then, combining the above with the substitution cost bounds from \Cref{prop:sync-sub,prop:close} and \Cref{thm:per-client-sampling}\ref{cond:rest-substitutable} and the fact that $\cost(\LS) \leq 5 \sopt$, we get
\begin{align*}
  \sum_{c^* \in \OPT \setminus \OPT_2} \sum_{p \in C^*} d(p, \LS_X \cup \Lambda_1) &\leq  \sum_{c^* \in \OPT_{\mathrm{sync}}} \cost_{\OPT}(C^*) + (1+\eta) \sum_{c^* \in \OPT_{\mathrm{cl}}} \cost_{\OPT}(C^*) \\
&\quad+   5\sum_{c^* \in \OPT_{\Lambda_1}} \cost_{\OPT}(C^*) + O(\eps) \sopt \\
&\leq 5 \sum_{c^* \in \OPT \setminus \OPT_2} \cost_{\OPT}(C^*) + O(\eps)\sopt.  \qedhere
\end{align*}
\end{proof}

We can now prove \Cref{thm:expensive}, which shows how to add further
auxiliary centers by sampling ``expensive'' cluster centers from the
local search solution, and using them as leaders.

\expensive

\newcommand{\remls}{\LS^{-}}
\begin{algorithm}[t]
\SetCommentSty{textit}
\caption{Iterative cluster sampling and dummy creation}
\label{alg:expensive-sampling+dummy}
 $\Lambda_2, \remls \gets \emptyset$\;
 \For{$i \in \{0,1\}$}{
 \For{$j \in  [s_0]$}{ \label{line:inner-big-while}
      w.p.\ $\nf12$, \textbf{continue}\; \label{line:skip}
      \lIf{$i = 0$}{sample $\ell \in \LS \setminus \remls$ w.p. $\propto$ $\cost_{\LS}(C_{\ell})/\cost(\LS)$, where $C_{\ell} \in \calL$ \label{line:sample1}
      }
      \Else{
      Sample $\ell \in \LS \setminus \remls$ w.p. $\propto$ $\sum_{p \in M_{\ell}} d(p, \ell) /\cost((\LS \setminus \remls) \cup \Lambda_1 \cup \Lambda_2)$, $M_\ell$ in clustering of  $(\LS \setminus \remls) \cup \Lambda_1 \cup \Lambda_2$\; \label{line:sample2}
      }

      \lIf{$|\remls| \leq s_0$}{
      w.p. $\nf{1}{2}$, $\remls \gets \remls \cup \{\ell\}$,  and \textbf{continue} \label{line:r-add1}}
      \tcp{$\ell$ is a candidate leader}
      \While{$|\Lambda_2| + |\remls| \leq 2s_0$}{ \label{line:dummy-guessing-loop}
      $c' \gets \arg \min_{c'' \in \LS \setminus (\remls \cup \{\ell\})} d(\ell, c'')$\;
      $b \gets$ True w.p. $\nf{1}{2}$, False o.w.\label{line:b-ready?}\;
      \If{$b$}{
      $r_0 \gets d(\ell,c')$ \tcp*{$d(\ell, \LS \setminus (\remls \cup \{\ell\}))$ is a good estimate of radius}
      \While{$|\Lambda_2| + |\remls| \leq 2s_0$}{ \label{line:radius-guessing-loop}
      $r \gets$ random $(1 + \eps)^x r_0$ with $\eps^3 r_0 \leq r  \leq r_0/\eps^3$, $x \in \mathbb{Z}$\;
      $\Lambda_2 \gets \Lambda_2 \cup \{\lambda\}$, where $\lambda$ is added with $d(\ell, \lambda) = r$\label{line:dummy-addition} 
      \tcp*{$\ell = \ell_{\lambda}$, $r = r_{\lambda}$} 
      w.p. $\nf{1}{2}$ \textbf{break}\;
      }
      $\remls \gets \remls \cup \{\ell\}$, and \textbf{break} \label{line:r-add3}\; 
      }
      \lElse{
      $\remls \gets \remls \cup \{c'\}$ \label{line:r-add2}
      }
      }
    }
}
  \Return $\Lambda_2$ and $\LS' := \LS \setminus \remls$.
\end{algorithm}

\begin{proof}
Define $\OPT_X \subseteq \OPT_2$ as the subset of centers $c^* \in \OPT_2$ such that there exists $c' \in \LS_X$ with $d(c^*, c') \leq (1 + \newkappa + 4\eps)\per(C^*)$. Then define $\psi_X : \OPT_X \to \LS_X$ by setting $\psi_X(c^*)$ equal to some $c'$ as above for each $c^* \in \OPT_X$. By construction this satisfies \ref{cond:BK2-psuedodummy}.

Next we show that, given this definition of $\OPT_X$ and $\psi_X$, \Cref{alg:expensive-sampling+dummy} satisfies the
  guarantees of~\Cref{thm:expensive}. The algorithm is intuitively
  simple. Repeat many rounds of sampling centers in $\LS$ according to
  their cost in some appropriately chosen solution (lines
  \ref{line:sample1} and \ref{line:sample2}). The goal of sampling is
  (1) to detect redundant centers from $\LS$ which we can remove from
  our final solution, and (2) to detect centers which are ``good
  leaders'' insofar as they can be used to give dummy centers for
  remaining centers in $\OPT_2 \setminus \OPT_X$. If the centers in
  cases (1) or (2) make up a substantial portion of the cost in the
  solution we are sampling from then we have a good probability of
  sampling one such cluster. We can guess which case it falls under,
  and we do so in line \ref{line:r-add1}. In case (1), we just go
  ahead and delete the center, adding it to $\remls$. Otherwise, in
  case (2), we use the distance of the sampled center (candidate
  leader) $\ell$ to its nearest center $c'$ as a radius estimate (for
  constructing dummy centers). The distance $d(\ell, c')$ may not be a
  good estimate; we guess whether this is the case, with the bit $b$
  being \texttt{False} if it is a poor estimate. If indeed the nearest center is much too close to $\ell$, we can safely delete $c'$ (line \ref{line:r-add2}). Since $\ell$ and $c'$ are so close they are roughly equivalent in their capacity to serve outstanding centers in $\OPT_2 \setminus \OPT_X$.
  
  We repeat this process of guessing whether $d(\ell, c')$ is a good radius estimate until eventually it becomes one. At that point, we repeatedly guess the correct radii and add dummies for every remaining center in $\OPT_2 \setminus \OPT_X$ for which $\ell$ is a good leader (lines \ref{line:radius-guessing-loop} and \ref{line:dummy-addition}). Afterwards we remove $\ell$ since it is then no longer needed (line \ref{line:r-add3}).

  We now proceed with the formal analysis, first showing
that~\Cref{alg:expensive-sampling+dummy} runs in polynomial
  time. The two nested for loops have $2s_0$ iterations total (Line~\ref{line:inner-big-while}). For the
  dummy center guessing loop (Line~\ref{line:dummy-guessing-loop}), in
  each iteration of the loop, if $b$ is set to True, $|\Lambda_2|$ and
  $|\remls|$ each increase by at least $1$. Otherwise, if $b$ is set to
  False, $|\remls|$ increases by $1$. Either way,
  $|\Lambda_2| + |\remls|$ increases by at least $1$, so this loop occurs at
  most $O(s_0)$ times over the course of the whole algorithm. The last loop
  is the radius guessing loop (Line~\ref{line:radius-guessing-loop})
  in which multiple dummy centers might be added from the same
  $\ell$. In each iteration of this loop $|\Lambda_2|$ increases by $1$,
  so this loop occurs at most $O(s_0)$ times over the entire course of the
  algorithm. Each individual line of the algorithm can be implemented
  in polynomial time, yielding the desired runtime.

  It will be convenient to also define:
  \begin{equation*}
    \Gamma_{\eps}(c^*) := \Big\{c \in \LS \mid d(c, c^*) \leq (1 + \newkappa +
    3\eps)\per(C^*), d(c, C^*) \leq
    2\eps\cdot \per(C^*) \Big\}. 
  \end{equation*}
  We now show that the output of~\Cref{alg:expensive-sampling+dummy}
  has all of the desired properties with probability
  $\exp\{-O(s_0 \log(\eps^{-1}))\}$. Over the course of the algorithm, only for the purposes
  of analysis, we maintain a current candidate partition
  $\OPT_2 \setminus \OPT_X= \widetilde{\OPT}_{\Lambda_2} \sqcup
  \widetilde{\OPT}_B$ and a bijection $\varphi_2: \widetilde{\OPT}_{\Lambda_2} \to \Lambda_2$ as
  follows. We will explicitly set $\widetilde{\OPT}_{\Lambda_2}$ and set $\widetilde{\OPT}_B$ as the complement (in $\OPT_2 \setminus \OPT_X$). Initially $\widetilde{\OPT}_{\Lambda_2} = \emptyset$. The final values of $\widetilde{\OPT}_{\Lambda_2}$ and $\widetilde{\OPT}_B$ are $\OPT_{\Lambda_2}$ and $\OPT_B$. For \emph{valid}
  instantiations of the algorithm, we maintain two invariants:
    \begin{enumerate}[label=(\Roman*)]
  \item \label{inv:B_K^1} Whenever $\lambda$ is added to $\Lambda_2$ such
    that $d(\ell, \lambda) = r$, we move some
    center $c^*_0$ from $\widetilde{\OPT}_B$ to $\widetilde{\OPT}_{\Lambda_2}$ and update
    $\varphi_2$ accordingly with $\varphi_2(c^*_0) = \lambda$ with
    $\ell_{\lambda} = \ell$ and $r_{\lambda} = r$. Each time this occurs, we have $d(c^*_0, \ell) \leq d(\ell, \lambda) \leq (1 + \kappa(C^*_0) + 6\eps) \per(C^*_0)$.
  \item \label{inv:redundant} At the beginning and end of each iteration of the for loop on line~\ref{line:inner-big-while}, $\LS_X \cup \Gamma(\widetilde{\OPT}_B)\subseteq \LS \setminus \remls.$
  \end{enumerate}
  By \Cref{prop:sync-sub} and the definition of $\LS_X$, we have
  $|\LS\setminus\LS_X|\leq s_0$. Thus, we maintain that
  $|\Lambda_2|\leq |\OPT_2\setminus\OPT_X|\leq s_0$ and $|\remls|\leq s_0$.
  In the good-event analysis below we also show that centers from $\LS_X$ are never deleted, so none of the size checks prevents a good iteration from being completed.
  Note that invariant \ref{inv:B_K^1} implies \ref{cond:expensive-dummy} and invariant~\ref{inv:redundant} immediately gives $\Gamma(\OPT_B)\subseteq\LS'$ at termination.

  We next prove the cost and probability bounds. For this, we begin with two helper claims. First we show that if the satellites of two clusters overlap, they have comparable values of $\per(\cdot)$.
 \begin{claim} \label{cla:comparable-per}
Suppose that $c^*, c_1^* \in \OPT_2$ satisfy $\Gamma_{\eps}(c^*) \cap \Gamma_{\eps}(c_1^*) \neq \emptyset$. Then, $\per(C^*)/3 \leq \per(C_1^*) \leq 3 \per(C^*)$.     
 \end{claim}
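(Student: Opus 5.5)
The plan is to argue by contradiction and use symmetry: show that if $\per(C_1^*) > 3\per(C^*)$ then some point of $\LS$ lies too close to $c_1^*$, contradicting that $c_1^*$ is not $\eta$-close. Swapping the roles of $c^*$ and $c_1^*$ then gives the other inequality. The two ingredients are the optimality of the nearest-center assignment $\pi_\OPT$ and the fact that every center of $\OPT_2 \subseteq \OPT_1$ is unsynced and not $\eta$-close. The latter gives $d(c_1^*, \LS) > \eta\avg_\OPT(C_1^*) \geq \eta\per(C_1^*)/(1+2\eps)$, using \eqref{eq:per-bd}.

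First, fix a common point $c \in \Gamma_\eps(c^*) \cap \Gamma_\eps(c_1^*)$; note that $c \in \LS$. From $c \in \Gamma_\eps(c^*)$ we get
\[
d(c, c^*) \leq (1+\newkappa+3\eps)\per(C^*) \leq (2+3\eps)\per(C^*),
\]
using $\newkappa < 1$. From $c \in \Gamma_\eps(c_1^*)$ there is a witness $q_1 \in C_1^*$ with $d(q_1, c) \leq 2\eps\per(C_1^*)$.

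Second, since $q_1$ is assigned to $c_1^*$ under $\pi_\OPT$, we have $d(q_1, c_1^*) \leq d(q_1, c^*)$. Applying the triangle inequality twice then gives
\begin{align*}
d(c_1^*, c) &\leq d(c_1^*, q_1) + d(q_1, c)\\
&\leq d(q_1, c^*) + 2\eps\per(C_1^*)\\
&\leq d(c, c^*) + 4\eps\per(C_1^*).
\end{align*}
Combining this with the first step yields
\[
d(c_1^*, c) \leq (2+3\eps)\per(C^*) + 4\eps\per(C_1^*).
\]

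Third, compare with the lower bound $d(c_1^*, c) \geq d(c_1^*, \LS) > \eta\per(C_1^*)/(1+2\eps)$. Rearranging gives
\[
\Bigl(\tfrac{\eta}{1+2\eps} - 4\eps\Bigr)\per(C_1^*) < (2+3\eps)\per(C^*).
\]
With $\eta = 89/100$ and $\eps < 1/200$, the coefficient on the left is at least $0.85$. Hence $\per(C_1^*) < 2.4\,\per(C^*) < 3\per(C^*)$, and the same argument with the roles exchanged gives $\per(C^*) \leq 3\per(C_1^*)$.

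The main obstacle is producing a lower bound on $d(c_1^*, c)$ at all, since the satellite conditions only give upper bounds. The non-closeness of $\OPT_1$-centers, recorded in \Cref{prop:close} and reused in the proof of \Cref{thm:per-client-sampling}, is exactly what supplies it. The rest is constant checking: it needs $\newkappa < 1$ (so $1+\newkappa+3\eps \leq 2+3\eps$), which follows from $\newkappa \in [\nf1\eta - 1, 1)$, and $\eps$ small enough that $\eta/(1+2\eps) - 4\eps$ stays comfortably above $2/3$.
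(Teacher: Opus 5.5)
Your proof is correct and is essentially the paper's argument. Both use the same four ingredients: the witness of the cluster with the larger percentile, that witness's nearest-center property $d(q_1,c_1^*)\le d(q_1,c^*)$, the $\Gamma_\eps$ radius bound for the other center, and the non-closeness of centers in $\OPT_1$. The only difference is that you anchor the triangle-inequality chain at the shared satellite $c$ and obtain the direct bound $\per(C_1^*)<2.4\,\per(C^*)$, whereas the paper anchors it at the witness and derives a contradiction with that witness being assigned to the larger cluster's center.
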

 \begin{proof}
Let $c \in \Gamma_{\eps}(c^*) \cap \Gamma_{\eps}(c_1^*)$. By definition of $\Gamma_{\eps}(\cdot)$, there exist witness clients $p, p_1 \in D$ such that $p \in C^* \in \calO$, $p_1 \in C_1^* \in \calO$, $d(p, c) \leq 2\eps \per(C^*)$, and $d(p_1, c) \leq 2 \eps \per(C^*_1)$. We also know that $d(c^*, c) \leq (1 + \newkappa + 3\eps)\per(C^*)$ and $d(c^*_1, c) \leq (1 + \kappa(C_1^*) + 3\eps)\per(C_1^*)$. Since $\calO$ is induced by the nearest-center assignment $\pi_{\OPT}$ on $\OPT$, we must have that $d(p, c^*) \leq d(p, c_1^*)$ and $d(p_1, c_1^*) \leq d(p_1, c^*)$. Now, assume for the sake of contradiction that $\per(C_1^*) < \per(C^*)/3$. Then, $d(c_1^*, p) \leq d(c_1^*, c) + d(c,p) \leq (1 + \kappa(C_1^*) + 3 \eps)\per(C_1^*) + 2\eps \per(C^*) < (2/3 + 3\eps) \per(C^*)$. On the other hand, we know that $d(c^*, p) \geq d(c^*, c) - d(p, c) \geq (\eta/(1+2\eps) - 2\eps) \per(C^*)$ since $c^* \not\in \OPT_{S_0}$ and $\per(C^*)\leq(1+2\eps)\avg_{\OPT}(C^*)$. This is a contradiction since $\eta \geq 0.7$ and $\eps<1/200$. Exchanging $C^*$ and $C_1^*$ proves the other inequality.
 \end{proof}
 We now use invariant~\ref{inv:redundant} to conclude \ref{cond:expensive-post-rem-cost-bd}. We actually prove a stronger statement, which we will require in our probability analysis.
\begin{claim} \label{cla:intermediate-cost-bd}
Assuming the invariants, before and after every iteration of the for loop on line \ref{line:inner-big-while}, we have $\cost((\LS \setminus \remls) \cup \Lambda_1 \cup \Lambda_2) \leq (5 + 50 \eps) \sopt < 6 \sopt$. 
\end{claim}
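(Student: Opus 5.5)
We bound the cost of $\MM'' := (\LS \setminus \remls) \cup \Lambda_1 \cup \Lambda_2$ by exhibiting, cluster by cluster of $\calO$, an assignment of clients to centers of $\MM''$, and then summing. The benchmark is \Cref{prop:cost-bound-for-exceptional}. It already shows that clients in clusters of $\OPT \setminus \OPT_2$ can be served by $\LS_X \cup \Lambda_1$ at total cost $5\sum_{c^* \in \OPT\setminus\OPT_2}\cost_\OPT(C^*) + O(\eps)\sopt$. By invariant~\ref{inv:redundant}, $\LS_X \subseteq \LS\setminus\remls$, and $\Lambda_1$ is never removed, so this part of the bound transfers verbatim to $\MM''$. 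It remains to handle clusters $C^*$ with $c^* \in \OPT_2 = \OPT_X \sqcup \widetilde{\OPT}_{\Lambda_2} \sqcup \widetilde{\OPT}_B$ (current partition). For each, we show $\sum_{p\in C^*} d(p,\MM'') \leq (5+O(\eps))\cost_\OPT(C^*)$. In the generic case we route every $p\in C^*$ through $c^*$ to a single center $m\in\MM''$, so the cost is $\cost_\OPT(C^*) + |C^*|\,d(c^*,m)$.

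\emph{Case $c^*\in\OPT_X$.} Take $m=\psi_X(c^*)\in\LS_X\subseteq\MM''$ by~\ref{inv:redundant}. Then $d(c^*,m)\le(1+\newkappa+4\eps)\per(C^*)\le(2+4\eps)(1+2\eps)\avg_\OPT(C^*)$, using $\newkappa\le1$ and \eqref{eq:per-bd}. The cluster therefore costs at most $(3+O(\eps))\cost_\OPT(C^*)$.

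\emph{Case $c^*\in\widetilde{\OPT}_{\Lambda_2}$.} Take $m=\varphi_2(c^*)$. By invariant~\ref{inv:B_K^1} and the triangle inequality in the augmented metric, $d(c^*,\varphi_2(c^*))\le d(c^*,\ell)+r_\lambda\le 2(1+\newkappa+6\eps)\per(C^*)$. This gives at most $(1+2(2+6\eps)(1+2\eps))\cost_\OPT(C^*)\le(5+O(\eps))\cost_\OPT(C^*)$. This is exactly where the constant $5$ comes from.

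\emph{Case $c^*\in\widetilde{\OPT}_B$.} By \Cref{thm:per-client-sampling}\ref{cond:client-struct}\ref{cond:cand-leader}, some $c'\in\LS$ has $d(c^*,c')\le(1+\eps)\per(C^*)$ and $d(c',C^*)\le\eps\per(C^*)$. Hence $c'\in\Gamma(c^*)$ (as $\newkappa\ge0$), and invariant~\ref{inv:redundant} gives $c'\in\LS\setminus\remls$. So this cluster costs at most $(2+O(\eps))\cost_\OPT(C^*)$.

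Summing the three cases with \Cref{prop:cost-bound-for-exceptional}, every cluster pays at most $(5+O(\eps))\cost_\OPT(C^*)$ plus global $O(\eps)\sopt$ error, giving $(5+O(\eps))\sopt$. The main obstacle is tracking the error constant carefully enough to land at $50\eps$. The hidden $O(\eps)\sopt$ in \Cref{prop:cost-bound-for-exceptional} comes from terms like $2\eps\cost_\LS(C^*)$, \Cref{prop:sync-sub}'s $O(\eps)\cost(\LS)$, and \Cref{thm:per-client-sampling}\ref{cond:rest-substitutable}'s $\eps\sopt$, with $\cost(\LS)\le5\sopt$. I would make each explicit: $2\eps\cdot5$, the sync constant times $5\eps$, $\eps$, and the $(25\eps)$-type slack in the dummy cases above. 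Then I would check that the total is below $50\eps\sopt$, shrinking constants via $\eps<1/200$ if needed. The final $<6\sopt$ is then immediate from $50\eps<1$.
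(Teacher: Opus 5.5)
Your proposal is correct and is essentially the paper's proof. The paper also applies \Cref{prop:cost-bound-for-exceptional} to $\OPT\setminus\OPT_2$ using $\LS_X\subseteq \LS\setminus \LS^{-}$, and for $c^*\in\OPT_2$ combines $\psi_X$, invariant~\ref{inv:B_K^1}, and the surviving satellites from invariant~\ref{inv:redundant} into the single worst-case bound $d(p,c)\le d(p,c^*)+(2+2\newkappa+12\eps)\per(C^*)$, which is your $\widetilde{\OPT}_{\Lambda_2}$ case; like you, the paper never makes the constant $50$ explicit (it inherits the unspecified $O(\eps)$ of \Cref{prop:sync-sub}), and downstream only $(5+O(\eps))\sopt$ and $<6\sopt$ are used.
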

\begin{proof}
 The invariant \ref{inv:redundant} ensures that $\LS_X \subseteq \LS \setminus \remls$, so the cost bound from \Cref{prop:cost-bound-for-exceptional} still applies. By definition, $\psi_X(\OPT_X) \subseteq \LS_X$. Then, combining this with the triangle inequality, the distance bound from invariant \ref{inv:B_K^1}, and invariant~\ref{inv:redundant} we get the following: for each $p \in C^* \in \calO$ with $c^* \in \OPT_2$, there exists some $c \in (\LS \setminus \remls) \cup \Lambda_1 \cup \Lambda_2$ with $d(p, c) \leq d(p, c^*) + (2 + 2\newkappa + 12\eps) \per(C^*)$. Summing the costs of these clients and using $\per(C^*) \leq (1 + 2 \eps) \avg_{\OPT}(C^*)$ for each $C^* \in \calO$ and combining with \Cref{prop:cost-bound-for-exceptional}, we get $\cost((\LS \setminus \remls) \cup \Lambda_1 \cup \Lambda_2) \leq (5 + 50 \eps) \sopt < 6 \sopt$, as needed.
\end{proof}

  It remains to show that, with probability $\exp\{-O(s_0\log(\eps^{-1}))\}$, an instantiation of the algorithm is valid (i.e., satisfies the invariants) and $\OPT_B$
  satisfies the cost bound requirements of \ref{cond:expensive-rem-cost-bds} at termination.

  The following events are called \emph{good}. Assume the invariants hold inductively at the beginning of the current outer iteration.
\begin{enumerate}
    \item If either:
    \begin{enumerate}
        \item $i = 0$ and $\sum_{c \in \Gamma(\widetilde{\OPT}_B)} \sum_{p \in C} d(p, \LS) < \eps^3 \sopt$, or
        \item $i = 1$ and $\sum_{c \in \Gamma(\widetilde{\OPT}_B) \cup (\LS \setminus (\LS_X \cup \remls))} \sum_{p \in M_c} d(p, (\LS \setminus \remls) \cup \Lambda_1 \cup \Lambda_2) < \eps^3 \sopt$,
    \end{enumerate}
    then we skip the entire iteration via Line~\ref{line:skip}. Otherwise we do not skip the iteration. (This occurs with probability $1/2$.)
    \item If $i = 0$ and the iteration is not skipped, we sample $\ell \in \Gamma(\widetilde{\OPT}_B)$. This occurs with probability at least $\eps^3/5$ since $\cost(\LS) \leq 5 \sopt$.
    \item If $i = 1$ and the iteration is not skipped, we sample $\ell \in \Gamma(\widetilde{\OPT}_B) \cup (\LS \setminus (\LS_{X} \cup \remls))$ with probability at least $\eps^3/6$ since $\cost((\LS \setminus \remls) \cup \Lambda_1 \cup \Lambda_2) < 6 \sopt$ by \Cref{cla:intermediate-cost-bd}. \label{cond:prob-sampling2}

    \item After either sampling step, we add $\ell$ to $\remls$ on
    line~\ref{line:r-add1} if and only if
    $\ell\notin\Gamma(\widetilde{\OPT}_B)$. In this case,
    $\ell\notin\LS_X$ by the sampling event, so both invariants are
    maintained. If $\ell\in\Gamma(\widetilde{\OPT}_B)$, we instead
    continue to the dummy guessing loop. In either case, the desired outcome occurs
   with probability $1/2$.

    \item Otherwise, the sampled $\ell$ belongs to $\Gamma(\widetilde{\OPT}_B)$. Note that, by definition of $\OPT_X$, $\Gamma_{\eps}(\widetilde{\OPT}_B) \cap \LS_X = \emptyset$, so deleting $\ell$ (as we will eventually do) will not delete a center in $\LS_X$.
    Fix $c_0^*\in\widetilde{\OPT}_B$ with $\ell\in\Gamma(c_0^*)$.
    In each iteration of the dummy guessing loop, write $r_0=d(\ell,c')$.
    The bit $b$ is set to True if and only if there exists
    $c^*\in\widetilde{\OPT}_B$ with $\ell\in\Gamma_\eps(c^*)$ and
    $r_0\geq\eps^2\per(C^*)$. Each required outcome has probability $1/2$.
    We verify the two cases.
    \begin{enumerate}
        \item If $b$ is False, then $r_0<\eps^2\per(C_0^*)$.
        By the triangle inequality, $c'\in\Gamma_\eps(c_0^*)$,
        so $c'\notin\LS_X$. Moreover, if
        $c'\in\Gamma(c^*)$ for any $c^*\in\widetilde{\OPT}_B$,
        \Cref{cla:comparable-per}, applied at $c'$, gives
        $\per(C_0^*)\leq3\per(C^*)$. Consequently,
        $d(\ell,c')<3\eps^2\per(C^*)\leq\eps\per(C^*)$.
        The witness for $c'\in\Gamma(c^*)$ and the triangle
        inequality then show that $\ell\in\Gamma_\eps(c^*)$.
        Thus every remaining cluster whose satellite is deleted
        can use $\ell$ as a leader.

        To see that $b$ is eventually True, note that
        $\Gamma(c_0^*)$ has diameter at least
        $\eps\per(C_0^*)$ by \Cref{lem:no-dense-cores}. There is therefore some
        $s\in\Gamma(c_0^*)$ with
        \[
          \eps\per(C_0^*)/2\leq d(\ell,s)<5\per(C_0^*).
        \]
        Invariant~\ref{inv:redundant} ensures that $s$ is present at
        the beginning of this outer iteration. It is never deleted
        while $b$ is False, since each such deletion has distance
        less than $\eps^2\per(C_0^*)$ from $\ell$.
        Hence the loop eventually reaches a True outcome, and at
        that point $r_0\leq d(\ell,s)<5\per(C_0^*)$.
        All preceding deletions avoid $\LS_X$, so the size checks
        do not prevent this.

        \item When $b$ is True, the definition of the good
        outcome and \Cref{cla:comparable-per}, applied at $\ell$,
        give, for every $c^*\in\widetilde{\OPT}_B$ with
        $\ell\in\Gamma_\eps(c^*)$,
        \[
          \frac{\eps^2}{3}\per(C^*)\leq r_0\leq15\per(C^*).
        \]
        Also,
        \[
          \frac{\eta}{1+2\eps}\per(C^*)<d(c^*,\ell)
          \leq(1+\newkappa+3\eps)\per(C^*),
        \]
        since $c^*\notin\OPT_{S_0}$. For $\eps<1/200$, these bounds
        ensure that rounding $d(c^*,\ell)$ up to a power
        $(1+\eps)^x r_0$ gives a radius in
        $[\eps^3r_0,\eps^{-3}r_0]$.

        Let $c^*$ be any center in $\widetilde{\OPT}_B$ with $\ell \in \Gamma_{\eps}(c^*)$. We implicitly set $\varphi_2(c^*) = \lambda$ in this iteration. The random guess of $r$ is such that $d(c^*, \ell) \leq r \leq (1 + \eps) d(c^*, \ell)$. This happens with probability at least $\frac{\eps}{20 \log (1/\eps)}$. Also note that, since $\ell \in \Gamma_{\eps}(c^*)$, we have $r \leq (1 + \eps)(1 + \newkappa + 3\eps)\per(C^*) \leq (1 + \newkappa + 6\eps) \per(C^*)$. Thus it is valid to add $c^*$ to $\widetilde{\OPT}_{\Lambda_2}$.

        The radius guessing loop continues if and only if
        there is still some $c^*\in\widetilde{\OPT}_B$ with
        $\ell\in\Gamma_\eps(c^*)$; each required outcome has
        probability $1/2$. Each dummy is matched to a distinct
        center, preserving invariant~\ref{inv:B_K^1}.
        Once all these centers have been moved to
        $\widetilde{\OPT}_{\Lambda_2}$, we break the radius guessing
        loop and delete $\ell$ on line~\ref{line:r-add3}.
    \end{enumerate}
    Every cluster that lost a satellite during this outer
    iteration has now been moved to $\widetilde{\OPT}_{\Lambda_2}$.
    No center in $\LS_X$ was deleted. This proves
    invariant~\ref{inv:redundant} at the end of the iteration.
    Throughout its processing, $|\remls|\leq|\LS\setminus\LS_X|\leq s_0$
    and $|\Lambda_2|\leq|\OPT_2|\leq s_0$, so all the required
    iterations satisfy the size checks.
\end{enumerate}

We now prove \ref{cond:expensive-rem-cost-bds}. In each phase, a
good iteration is skipped precisely when the corresponding cost
bound holds. Otherwise, it deletes at least one previously
undeleted center of $\LS\setminus\LS_X$. Since there are at most
$s_0$ such centers, either the bound holds before the phase ends,
or all these centers have been deleted by its end.  In the latter
case, both sums are zero, since then invariant~\ref{inv:redundant} and
the definition of $\OPT_X$ give
\[
\Gamma(\widetilde{\OPT}_B)
\subseteq \LS\setminus(\LS_X\cup\remls)=\emptyset.
\]
Thus each phase establishes its required bound. The first bound
continues to hold during the second phase because
$\widetilde{\OPT}_B$ only shrinks.

Each random choice has a success probability of at least $\eps^4$, and
the total number of random choices is at most $O(s_0)$. The bound on
the number of random choices comes from the range of $j$ and the fact
that random choices in the dummy guessing while loop
(Line~\ref{line:dummy-guessing-loop}) are accompanied by increases to
$|\Lambda_2| + |\remls|$ which is bounded by $2s_0$. Hence, the
probability of a good event happening every time it is possible is at
least $\exp\{-O(s_0\log(\eps^{-1}))\}$. This completes the proof of
\Cref{thm:expensive}.
\end{proof}

%% file: new-submod.tex
\section{Reduction to Submodular Maximization: Proof of \Cref{thm:simple-submod}}
\label{sec:submodular-section}

Recall that we have $\LS' \sse \LS$, whose centers have been colored
red/blue. Our desired solution is to add $\OPT_\Lambda \cup \OPT_P$ to
$\LS'$ and drop a set $\ZZ^* \sse \LS' \setminus \LS_\sigma$ (which is
also contained in $\LSred$, by the color-coding) of size
\margindef{$z$} $z := (|\LS'| + |\OPT_\Lambda \cup \OPT_P| - k)_+$
from it, to get the target solution
\[ \MM^* := (\LS' \setminus \ZZ^*) \cup \OPT_\Lambda \cup \OPT_P. \]
This is computationally intractable; however, we now use submodular
maximization to find a set $\widehat\YY$ to add, and
$\widehat\ZZ \sse \LSred$ with $|\widehat\ZZ| = z$ to drop, such that
the solution $\TT := (\LS' \setminus \widehat\ZZ)\cup \widehat\YY$
satisfies
\[ \cost(\TT) \leq (1-\nf1e) \cost(\MM^*) + \nf1e
  \cost(\MM_\Lambda^*) + \eps \, \sopt, \] for an ``intermediate'' solution
$\MM_\Lambda^* := (\LS' \setminus \ZZ^*) \cup \Lambda$.  We finally
bound the expression on the right, 
thereby finishing the proof of~\Cref{thm:simple-submod}.

\subsection{The Submodular Function: Definition and Optimization}
\label{sec:def-submod}

To define the ``improvement'' function $f$, we begin with some
helper functions:
\begin{enumerate}
\item The first function $g_1$ \margindef{$g_1$} is simply the
  cost
  \begin{gather}
    g_1(\YY) := \cost(\MM' \cup \YY).
  \end{gather}
  Recall that $\MM' := \LS' \cup \Lambda$. The function $g_1$ is
  monotone non-increasing and supermodular due
  to~\Cref{cor:dist-supmod}.
  
\item The second function $g_2$ \margindef{$g_2$} is defined as follows:
  \begin{gather}
    g_2(\YY)=\min_{\ZZ\subseteq\LSred: 
        |\ZZ|=z}
    \sum_{c\in\ZZ}\left[\cost_{\MM'}(M_c)
      +|M_c|\,d(c,\LSblue\cup\Lambda\cup\YY)\right]. \label{eq:g2-defn}
  \end{gather}
  This function has the minimum-of-minimum form given in
  \Cref{cla:minofmin}, and hence is monotone supermodular. Also, given
  any $\YY$, the function is efficiently computable, since it involves
  picking the smallest $z$ entries.
\end{enumerate}  
Finally, define \margindef{$f, g$} $g(\YY) =g_1(\YY)+g_2(\YY)$ and
$f(\YY)=g(\varnothing)-g(\YY)$. It follows that $f$ is monotone
non-decreasing, non-negative, submodular; moreover, it is efficiently
computable on any $\YY$. We now show that the function $f$ can be
approximately maximized with constraints that render the dummy centers
in $\Lambda$ redundant.

\begin{lemma} \label{lem:lifting-app} There exists a polynomial time
  algorithm that, given values $z$ and $|\OPT_P|$, outputs
  a set $\widehat{\YY} \subseteq
  \FF$ 
such that
\begin{enumerate}[label = (\alph*),nosep]
    \item $f(\widehat{\YY}) \geq (1 - \nf{1}{e})f(\OPT_{\Lambda} \cup \OPT_P)$, \label{cond:lift-submod-approx} 
    \item $|\widehat{\YY}| \leq |\Lambda| + |\OPT_P|$, and \label{cond:lift-submod-size}
    \item $|\widehat{\YY} \cap (\FF \cap B(\ell_{\lambda}, r_{\lambda}))| \geq 1$ for
  all $\lambda \in \Lambda$. \label{cond:lift-submod-balls}
\end{enumerate}
\end{lemma}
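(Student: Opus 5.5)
We will cast the problem as monotone submodular maximization over a partition matroid and invoke the continuous-greedy algorithm of \cite{calinescu2011maximizing}.

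\textbf{Ground set and matroid.} Since the balls $B(\ell_\lambda, r_\lambda)$ may overlap, we take disjoint copies of facilities. Let $m := |\OPT_P|$. For each $\lambda \in \Lambda$, form a part $E_\lambda := \{(y,\lambda) : y \in \FF \cap B(\ell_\lambda, r_\lambda)\}$, and add one further part $E_0 := \{(y,0) : y \in \FF\}$. Define the partition matroid $\calM$ with capacity $1$ on each $E_\lambda$ and capacity $m$ on $E_0$.

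Extend $f$ to the ground set by $\tilde f(X) := f(\{y : (y,\cdot) \in X\})$. Because the projection to $\FF$ is a union of images, $\tilde f$ remains monotone, non-negative and submodular. It is also computable in polynomial time, since $g_1$ is a cost evaluation and $g_2$ requires only sorting the per-center contributions and taking the smallest $z$. Here $z$ and $m$ are supplied as inputs; $|\Lambda|$ is known to the algorithm.

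\textbf{Feasibility of the benchmark.} The key observation is that $\OPT_\Lambda \cup \OPT_P$ corresponds to an independent set of $\calM$. By \Cref{thm:structural}\ref{cond:struct-dummies}, each $c^* \in \OPT_\Lambda$ satisfies $d(c^*, \ell_{\varphi(c^*)}) \leq r_{\varphi(c^*)}$, so $c^* \in \FF \cap B(\ell_{\varphi(c^*)}, r_{\varphi(c^*)})$. Because $\varphi$ is a bijection, the copies $(c^*, \varphi(c^*))$ use exactly one element from each part $E_\lambda$. The centers of $\OPT_P$, taken as copies $(c^*,0)$, use $m$ elements of $E_0$.

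Hence $\max_{X \in \calM} \tilde f(X) \geq f(\OPT_\Lambda \cup \OPT_P)$. If $\OPT_\Lambda \cap \OPT_P \neq \emptyset$ or repeats occur, the projection only shrinks, which causes no problem. We may pad to a base, since monotonicity makes this harmless.

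\textbf{Running the algorithm.} Continuous greedy with pipage or swap rounding \cite{calinescu2011maximizing} returns a base $\widehat X$ with $\tilde f(\widehat X) \geq (1-1/e)\max_{X\in\calM}\tilde f(X)$. A small additive loss, or a failure probability, can be absorbed by standard boosting; a deterministic variant is not needed here. Let $\widehat\YY$ be the projection of $\widehat X$ to $\FF$. Then:
\begin{itemize}
\item[(a)] follows from the approximation guarantee together with the previous paragraph;
\item[(b)] follows because $|\widehat\YY| \leq |\widehat X| = |\Lambda| + m$;
\item[(c)] follows because a base contains exactly one element of each $E_\lambda$, and that element lies in $\FF \cap B(\ell_\lambda, r_\lambda)$.
\end{itemize}

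\textbf{Main obstacle.} The main step requiring care is confirming that the ball constraint is satisfied by $\OPT_\Lambda$. This rests on the ordering $d(c^*, \ell) \leq r$ built into the dummies. A secondary technical point is that the rounding guarantee of \cite{calinescu2011maximizing} holds only up to an exponentially small error, which is absorbed into $(1-1/e)$ after a negligible rescaling or randomization.
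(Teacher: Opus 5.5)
Your proposal is correct and follows the paper's proof essentially step for step. The paper likewise lifts to $\FF \times (\Lambda \cup \{\bot\})$ (one copy of each facility per dummy ball containing it, plus a free copy) and imposes the partition matroid with capacity $1$ per dummy and $|\OPT_P|$ on the free part. It uses $d(c^*,\ell_{\varphi(c^*)}) \le r_{\varphi(c^*)}$ and bijectivity of $\varphi$ to see that $\OPT_\Lambda \cup \OPT_P$ lifts to a base, applies \cite{calinescu2011maximizing}, and projects back to $\FF$. Your remark about the small additive or probabilistic loss of continuous greedy is a detail the paper glosses over, and it is harmless, since it can be absorbed into the $O(\eps)\sopt$ terms downstream.
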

\begin{proof}
  We want to find a $(1 - \nf{1}{e})$-approximation of the submodular
  function $f$ according to constraints enforcing opening one center
  per ``dummy ball,'' while also allowing $\OPT_\Lambda \cup \OPT_P$
  as a feasible solution. To do this formally, we work in a lifted
  space. Let
  $\overline{\UU} \subseteq \FF \times (\Lambda \cup \{\bot\})$ be
  defined as follows:
  \begin{itemize}[nosep]
    \item for each $\lambda \in \Lambda$, $(c, \lambda) \in \overline{\UU}$ if and only if $c \in \FF \cap B(\ell_{\lambda}, r_{\lambda})$, and
    \item $(c, \bot) \in \overline{\UU}$ for all $c \in \FF$.
\end{itemize}
Define 
the projection map $\proj: \overline{\UU} \to \FF$, given by $\proj(c, x)
= c$ for any $(c, x) \in \overline{\UU}$; this naturally extends to
$\proj(\overline{\YY}) := \{\proj(c,x) \mid (c,x) \in
\overline{\YY}\}$. We define the \emph{lifted function} $f_\uparrow :
2^{\overline{\UU}} \to \RR_{\geq 0}$ where, for $\overline{\YY}
\subseteq \overline{\UU}$, $f_\uparrow(\overline{\YY}) =
f(\proj(\overline{\YY}))$. By \Cref{cla:lifting-sub}, $f_\uparrow$ is
monotone non-decreasing, non-negative, submodular; moreover, since
both $f$ and the projection operator are efficiently computable, we
get the same for $f_\uparrow$.

We now define our partition matroid $H(\overline{\UU}, \calH)$. A set
$\overline{\YY} \subseteq \overline{\UU}$ belongs to $\calH$ (i.e., is
an independent set of the matroid), if (1) for each
$\lambda \in \Lambda$, there is at most one
$(c, \lambda) \in \overline{\YY}$, and (2) there are at most
$|\OPT_P|$-many $(c, \bot) \in \overline{\YY}$. Since $H$ is a
(partition) matroid, we can approximately maximize $f_\uparrow$ over $\calH$.

Crucially, note that we defined $\Lambda$ so that, for each
$\lambda \in \Lambda$, we have
$d(\varphi^{-1}(\lambda), \ell_{\lambda}) \leq r_{\lambda}$; i.e.,
$\varphi^{-1}(\lambda) \in B(\ell_{\lambda}, r_{\lambda})$. Define the
set
$\overline{\OPT} := \{ (\varphi^{-1}(\lambda), \lambda) \mid \lambda
\in \Lambda\} \cup \{(c^*, \bot) \mid c^* \in \OPT_P\}$; observe that
$\overline{\OPT} \in \calH$ (and is, in fact, a matroid base), and
moreover,
$f_\uparrow(\overline{\OPT}) = f(\proj(\overline{\OPT})) =
f(\OPT_\Lambda \cup \OPT_P)$.  Hence, applying the algorithm
of~\cite{calinescu2011maximizing}, we can efficiently compute a
\emph{base} $\overline{\YY} \in \calH$ such that
\begin{equation}
f(\proj(\overline{\YY})) = f_\uparrow(\overline{\YY}) \geq (1 - \nf{1}{e})  \max_{\overline{\XX} \in \calH}  f_\uparrow(\overline{\XX}) \geq (1 - \nf{1}{e}) f_\uparrow(\overline{\OPT}) = (1 - \nf{1}{e}) f(\OPT_\Lambda \cup \OPT_P). \label{eq:lift-submod-costs}
\end{equation}
We output $\widehat{\YY} := \proj(\overline{\YY})$, so
\Cref{eq:lift-submod-costs} immediately implies property
\ref{cond:lift-submod-approx}. Moreover, the fact that
$\overline{\YY}$ is a base of $H$ means that there is exactly one
$(c, \lambda) \in \overline{\YY}$ for each $\lambda \in \Lambda$.
Then, each such $c$ belongs to
$\proj(\overline{\YY}) = \widehat{\YY}$, so property
\ref{cond:lift-submod-balls} is satisfied as well. Finally, since
there are at most $|\OPT_P|$-many $(c, \bot) \in \overline{\YY}$, we
also have property \ref{cond:lift-submod-size}.
\end{proof}

\subsection{A $k$-Median Solution from Submodularity}
\label{sec:calling-submod-opt}

To get a $k$-median solution, the overall algorithm is simple:
\begin{enumerate}

\item Guess $|\OPT_P|$ (this determines $z$ since $|\OPT_\Lambda| =
  |\Lambda|$ is known). Since $|\OPT_P| \leq s_0$, we can enumerate the $O(s_0)$ guesses and keep the best
  feasible set produced by the steps below.

\item Use the algorithm of~\Cref{lem:lifting-app} to obtain a set
  $\widehat{\YY} \subseteq \FF$ with the given properties. Expanding
  the definition of $f(\cdot) = g(\emptyset) - g(\cdot)$ and
  rearranging, we get
  \begin{gather}
    g(\widehat{\YY}) \leq (\nf{1}{e})\,g(\emptyset) + (1 - \nf{1}{e})\,g(\OPT_\Lambda \cup \OPT_P). \label{eq:g-guarantee}
  \end{gather}

\item Let $\widehat\ZZ$ be the minimizing set in the
  definition of $g_2(\widehat{\YY})$. Return \margindef{$\TT$}
  \begin{gather}
    \TT := (\LS' \setminus \widehat\ZZ) \cup \widehat\YY
  \end{gather}
  as the $k$-median solution. Since
  $|\widehat\ZZ| = z = (|\LS'| + |\OPT_\Lambda \cup \OPT_P| - k)_+$,
  and $|\widehat\YY| \leq |\OPT_P \cup \OPT_\Lambda|$, we are
  guaranteed that $|\TT| \leq k$.
\end{enumerate}

\begin{lemma}\label{lem:g(S) bd}
  The cost of the new solution $\TT$ satisfies:
  \begin{align*}
    \cost(\TT)
    &\leq g(\widehat\YY) 
    \leq (1/e)\cdot g(\emptyset)
      + (1-1/e)\cdot g(\OPT_{\Lambda} \cup \OPT_P). 
  \end{align*}
\end{lemma}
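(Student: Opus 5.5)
The second inequality is exactly \eqref{eq:g-guarantee}, which is just the guarantee of \Cref{lem:lifting-app}\ref{cond:lift-submod-approx} rewritten using $f = g(\emptyset) - g(\cdot)$. So the real work is the first inequality, $\cost(\TT) \leq g(\widehat\YY) = g_1(\widehat\YY) + g_2(\widehat\YY)$. I will prove it by exhibiting an explicit assignment of every client to a center of $\TT = (\LS'\setminus\widehat\ZZ)\cup\widehat\YY$, and bounding the total distance by $g_1(\widehat\YY)+g_2(\widehat\YY)$.

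Set $\MM'' := \MM'\cup\widehat\YY$ and assign each client $p$ according to its nearest center $c = \pi_{\MM''}(p)$. Since $|\TT|\le k$ was already checked, no other feasibility issue arises. There are four cases.
\begin{enumerate}
\item If $c \in \widehat\YY$ or $c \in \LS'\setminus\widehat\ZZ$, then $c\in\TT$ and $p$ pays $d(p,\MM'')$.
\item If $c = \lambda \in \Lambda$, we use \Cref{lem:lifting-app}\ref{cond:lift-submod-balls}: some $y\in\widehat\YY\cap B(\ell_\lambda,r_\lambda)$ exists. The dummy metric gives $d(p,y)\le d(p,\ell_\lambda)+r_\lambda = d(p,\lambda)$. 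So $p$ again pays at most $d(p,\MM'')$; this is precisely why dummies are harmless.
\item If $c\in\widehat\ZZ$, then $p$ lies in the $\calM''$-cluster of $c$, which is contained in $M_c$, the $\calM'$-cluster of $c$ (adding centers only shrinks clusters). Such $p$ is sent to $x_c := \arg\min_{x\in \LSblue\cup\Lambda\cup\widehat\YY} d(c,x)$, and if $x_c=\lambda\in\Lambda$ it is further moved to the ball-center $y$ as above. This costs at most $d(p,c)+d(c,\LSblue\cup\Lambda\cup\widehat\YY)$. Here $\widehat\ZZ\subseteq\LSred$, so $\LSblue\cap\widehat\ZZ=\emptyset$ and $x_c$, or its replacement $y$, lies in $\TT$.
\end{enumerate}

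Summing over clients, the first two cases contribute at most $\sum_p d(p,\MM'') = g_1(\widehat\YY)$. The third case contributes at most
\[
\sum_{c\in\widehat\ZZ}\Bigl[\cost_{\MM'}(M_c) + |M_c|\,d(c,\LSblue\cup\Lambda\cup\widehat\YY)\Bigr] = g_2(\widehat\YY).
\]
For this step I use $d(p,c)\le$ the $\MM'$-cost of $p$, since $p\in M_c$ means $c=\pi_{\MM'}(p)$; I then enlarge the sum from the $\calM''$-cluster to all of $M_c$, which is valid as every term is nonnegative. Clients of the third case are charged only to $g_2$, but even double counting would be harmless, since it only weakens the bound. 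This gives $\cost(\TT)\le g(\widehat\YY)$.

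The main subtlety is the third case, where I must ensure the rerouting target is genuinely open in $\TT$. Blue centers are never deleted, which uses $\widehat\ZZ\subseteq\LSred$. Dummy targets must be replaced by real facilities, which uses the ball constraint. I also need to check the containment of the $\calM''$-cluster of $c$ in $M_c$, so that the $g_2$ term, stated in terms of $\calM'$, dominates.
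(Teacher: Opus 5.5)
Your proof is correct and follows the paper's argument. Clients whose nearest center survives, or is a dummy replaced via the ball constraint, are charged to $g_1(\widehat\YY)$; clients of deleted red centers are rerouted through $\LSblue\cup\widehat\YY\cup\Lambda$ and charged to $g_2(\widehat\YY)$; the second inequality is read off from \eqref{eq:g-guarantee}. The only difference is cosmetic: you partition clients by their nearest center in $\MM'\cup\widehat\YY$, whereas the paper uses their $\calM'$-cluster, so you need the (valid) extra step that the $(\MM'\cup\widehat\YY)$-cluster of $c$ lies inside $M_c$, which the paper's choice makes unnecessary.
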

\begin{proof}
  To bound the $k$-median cost of the solution $\TT$:
  \Cref{lem:lifting-app}\ref{cond:lift-submod-balls} ensures that for
  any point $q$ in the metric space and any dummy center
  $\lambda \in \Lambda$, there is a center $c_\lambda \in \widehat\YY$
  satisfying $d(q,c_\lambda) \leq d(q,\lambda)$. Since
  $\widehat\YY \sse \TT$, we have $d(q,\TT) = d(q,\TT \cup
  \Lambda)$. Now we relate the cost to the constituent portions of
  $g$:
  \begin{enumerate}
  \item For client $p \in M_c \in \calM'$ such that the center $c$
    belongs to $\MM' \setminus \widehat\ZZ$, center $c$ still belongs
    to $\TT \cup \Lambda$. This means
    $d(p,\TT) = d(p, \TT \cup \Lambda) \leq d(p, \MM' \cup
    \widehat\YY)$; this can be charged to $g_1(\widehat\YY)$.
    
  \item For $p \in M_c \in \calM'$ with $c \in \widehat\ZZ$, we have
    \[ d(p, \TT) = d(p, \TT\cup \Lambda) \leq d(p, \LSblue \cup \widehat\YY
      \cup \Lambda) \leq 
      d(p, c) + d(c, \LSblue \cup \widehat\YY \cup \Lambda), 
    \]
    where we use that $\LSblue \cup \widehat\YY \sse \TT$, and the
    triangle inequality. The sum of the final expression over the
    clients in $\cup_{c \in \widehat\ZZ} M_c$ is exactly
    $g_2(\widehat\YY)$.
  \end{enumerate}
  Summing, we get $\cost(\TT) \leq g(\widehat\YY)$;
  the second inequality of \Cref{lem:g(S) bd} follows
  from~(\ref{eq:g-guarantee}).
\end{proof}

\begin{lemma} \label{lem:g-bds}
We have 
$g(\emptyset) \leq \cost(\MM^*_{\Lambda}) + O(\eps) \sopt$, and 
 $g(\OPT_{\Lambda} \cup \OPT_P) \leq \cost(\MM^*) + O(\eps) \sopt$.
\end{lemma}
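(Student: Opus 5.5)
We bound each of $g_1,g_2$ separately at $\YY=\emptyset$ and at $\YY=\OPT_\Lambda\cup\OPT_P$. The common idea is to split the clients according to their cluster $M_c$ in $\calM'$ (the clustering of $\MM'=\LS'\cup\Lambda$). The clients whose center $c$ lies outside $\ZZ^*$ are paid for by $\MM^*_\Lambda$ (respectively $\MM^*$). The clients in $\cup_{c\in\ZZ^*}M_c$, and the double counting between $g_1$ and $g_2$, are absorbed into the $O(\eps)\sopt$ term using \Cref{thm:struct2}\ref{cond:struct2-redund-hidden}. Throughout, we use that $\ZZ^*$ is a feasible candidate in the minimum defining $g_2$, since on the event $\calG$ we have $\ZZ^*\subseteq\LSred$ and $|\ZZ^*|=z$. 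So $g_2(\YY)$ is at most the $\ZZ^*$-term for any $\YY$.

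\textbf{Bounding $g_2$.} Since $\ZZ^*\subseteq\LS'\setminus\LS_\sigma$, \Cref{thm:struct2}\ref{cond:struct2-redund-hidden} gives $\sum_{c\in\ZZ^*}\cost_{\MM'}(M_c)<\eps^3\sopt$. It remains to bound $\sum_{c\in\ZZ^*}|M_c|\,d(c,\LSblue\cup\Lambda\cup\YY)$. On $\calG$, every $c\in\ZZ^*$ has its nearest neighbour $c'=\pi_{\LS'\setminus\ZZ^*}(c)$ colored blue, so $d(c,\LSblue)\le d(c,\LS'\setminus\ZZ^*)$. For each $p\in M_c$, the triangle inequality gives
\[
d(c,\LS'\setminus\ZZ^*)\le d(c,p)+d(p,\LS'\setminus\ZZ^*).
\]
Summing over $p\in M_c$, the first terms again total at most $\eps^3\sopt$. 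The second terms total at most $\cost_{\MM^*_\Lambda}$ of these clients, plus an error for clients whose nearest center in $\MM^*_\Lambda$ is a dummy $\lambda$. For those clients we instead use $d(c,\Lambda)\le d(c,p)+d(p,\lambda)$ directly, which is why $\Lambda$ appears inside the distance in $g_2$. Hence the $\ZZ^*$-term at $\YY=\emptyset$ is at most $\cost_{\MM^*_\Lambda}(\cup_{c\in\ZZ^*}M_c)+O(\eps^3)\sopt$. Since $g_2$ is monotone non-increasing, the same bound holds at $\YY=\OPT_\Lambda\cup\OPT_P$, but we want $\cost_{\MM^*}$ there instead. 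Because $\MM^*\supseteq\LS'\setminus\ZZ^*$, and $\OPT_\Lambda\cup\OPT_P\subseteq\YY$ now appears in the distance, the same argument with $\MM^*$ in place of $\MM^*_\Lambda$ works verbatim.

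\textbf{Bounding $g_1$ and avoiding double counting.} We have $g_1(\emptyset)=\cost(\MM')$. Split $D$ into the clients of $\cup_{c\in\ZZ^*}M_c$ and the rest. For the rest, $\pi_{\MM'}(p)\in\MM^*_\Lambda$ and $\MM'\supseteq\MM^*_\Lambda$, so
\[
d(p,\MM')\le d(p,\MM^*_\Lambda),
\]
meaning these clients are each charged once to $\cost(\MM^*_\Lambda)$. For the clients in $\cup_{c\in\ZZ^*}M_c$, we have $d(p,\MM')=d(p,c)$, and their total is below $\eps^3\sopt$. Thus $g_1(\emptyset)\le\cost_{\MM^*_\Lambda}(D\setminus\cup_{\ZZ^*}M_c)+\eps^3\sopt$. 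Adding the $g_2$ bound, which charges exactly the complementary clients to $\MM^*_\Lambda$, gives $g(\emptyset)\le\cost(\MM^*_\Lambda)+O(\eps)\sopt$. The case $\YY=\OPT_\Lambda\cup\OPT_P$ is identical, using $\MM'\cup\OPT_\Lambda\cup\OPT_P\supseteq\MM^*$.

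\textbf{Main obstacle.} The main obstacle is the clients in $\ZZ^*$-clusters in the $g_2$ bound. Their cost is paid through their center $c$, and we must certify that $c$ has a cheap surviving neighbour. This is exactly where the blue coloring of $\pi_{\LS'\setminus\ZZ^*}(\ZZ^*)$ enters, together with the hiddenness of $M_c$ for redundant centers, which turns $|M_c|\,d(c,\cdot)$ into a sum of client distances. I would check carefully that the nearest surviving center of $c$ (not of $p$) suffices, via the displayed triangle inequality averaged over $M_c$.
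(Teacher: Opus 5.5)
Your proposal is correct and takes essentially the same route as the paper. You use $\ZZ^*$ as the feasible candidate in $g_2$, use the blue coloring of $\pi_{\LS'\setminus\ZZ^*}(\ZZ^*)$ to get $d(c,\LSblue)=d(c,\LS'\setminus\ZZ^*)$, route through each $p\in M_c$ by the triangle inequality, and charge the $\ZZ^*$-clusters three times to \Cref{thm:struct2}\ref{cond:struct2-redund-hidden}, for a total error of $3\eps^3\sopt$.

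The only difference is cosmetic. The paper proves the single inequality $g(\YY)\le\cost((\LS'\setminus\ZZ^*)\cup\YY\cup\Lambda)+3\sum_{c\in\ZZ^*}\cost_{\MM'}(M_c)$ for all $\YY$ at once and then specializes to the two values of $\YY$. It applies the triangle inequality directly to the whole set $(\LS'\setminus\ZZ^*)\cup\YY\cup\Lambda$, which makes your case split on clients served by dummies unnecessary.
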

\begin{proof}
  Recall from \S\ref{sec:color-coding-submod} that
  $\ZZ^* \sse \LS'\setminus \LS_\sigma$ is a feasible set for the
  minimization in $g_2$'s definition; indeed, we defined
  $\ZZ^* \sse \LSred$ and $|\ZZ^*| = z$. Consequently, choosing it in
  $g_2$ gives an upper bound:
  \begin{gather}
    g(\YY) \leq \underbrace{\sum_{p \in D} d(p, \LS' \cup \YY \cup
      \Lambda)}_{= g_1(\YY)} \;+\; \underbrace{\sum_{c \in \ZZ^*}
      \sum_{p \in M_c} 
      (d(p,c) + d(c, \LSblue \cup \YY \cup \Lambda))}_{\geq g_2(\YY)}. \label{eq:g-first}
  \end{gather}
  Moreover, we guaranteed in our color-coding step from \S\ref{sec:color-coding-submod} that
  $\pi_{\LS'\setminus \ZZ^*}(c) \in \LSblue$ for all $c \in \ZZ^*$;
  therefore $d(c,\LSblue) = d(c,\LS'\setminus\ZZ^*)$. Using this fact
  (and the triangle inequality to infer
  $d(c,\SS) \leq d(c,p) + d(p,\SS)$) in the second sum
  from~(\ref{eq:g-first}), and splitting its first sum based on
  whether the centers belong to $\ZZ^*$ or not, we get:
  \begin{align}
    g(\YY)
    &\leq \sum_{c \not\in \ZZ^*} \sum_{p \in M_c}
    d(p, (\LS' \setminus \ZZ^*) \cup \YY \cup
    \Lambda) \;+\; \sum_{c \in \ZZ^*}
    \sum_{p \in M_c} 
    (3\,d(p,c) + d(p, (\LS' \setminus \ZZ^*) \cup \YY \cup \Lambda)) \notag
    \\
    &= \cost((\LS' \setminus \ZZ^*) \cup \YY \cup
      \Lambda) + 3\cdot \sum_{c \in \ZZ^*} \cost_{\MM'}(M_c).
  \end{align}
  Now using $\YY = \emptyset$ and $\YY = \OPT_\Lambda \cup \OPT_P$,
  the corresponding first terms are exactly $\cost(\MM_\Lambda^*)$,
  and at most $\cost(\MM^*)$ respectively. Finally, since
  $\ZZ^* \sse \LS'\setminus \LS_\sigma$,
  \Cref{thm:struct2}\ref{cond:struct2-redund-hidden} implies that the
  last term is at most $3\eps^3 \sopt \leq \eps \sopt$, completing the proof.
\end{proof}

\begin{lemma} \label{lem:mixed-bd} 
We have 
\begin{align*}
(1-1/e)\cost(\MM^*)+(1/e)\cost(\MM^*_{\Lambda})\leq{}&
(1+\eta)\cost_\OPT(D_S)+\cost_\OPT(D\setminus D_S)\\
&+\frac2e\sum_{c^*\in\olamearly}|C^*|\per(C^*)
+\frac2e\sum_{c^*\in\olamlate}(1+\newkappa)|C^*|\per(C^*)\\
&+\frac1e\sum_{c^*\in\OPT_P}(1+\newkappa)|C^*|\per(C^*)
+\sum_{c^*\in\OPT_R}|C^*|\per(C^*)
+O(\eps)\sopt.
\end{align*}
\end{lemma}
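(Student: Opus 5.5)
We will bound both $\cost(\MM^*)$ and $\cost(\MM^*_\Lambda)$ cluster by cluster. For each optimal cluster $C^*$ we exhibit a center in the relevant solution and charge $\sum_{p\in C^*} d(p,\cdot)$. The two solutions differ only on $\OPT_\Lambda\cup\OPT_P$ versus $\Lambda$, and both contain $\LS'\setminus\ZZ^*$. Since $\ZZ^*\sse\LS'\setminus\LS_\sigma$, every substitute $\sigma(c^*)$ survives, and since $\LS_P\sse\LS_\sigma$, every backup $\psi(c^*)$ survives as well.

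The clusters split as follows.

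\textbf{$\OPT_S$.} In both solutions, send $C^*$ to $\sigma(c^*)$. \Cref{thm:structural}\ref{cond:struct-subs}\ref{cond:struct-substitution-strong} gives $(1+\eta)\cost_\OPT(D_S)+O(\eps)\sopt$.

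\textbf{$\OPT_R$.} In both solutions, send $C^*$ to $\sigma(c^*)$. By \ref{cond:struct-substitution-weak} this costs $\cost_\OPT(C^*)+|C^*|\per(C^*)+O(\eps)$. The coefficients sum to $1$, matching the target.

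\textbf{$\OPT_\Lambda$ and $\OPT_P$ in $\MM^*$.} Here $c^*\in\MM^*$ itself, so the cost is $\cost_\OPT(C^*)$, weighted by $1-1/e$.

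\textbf{$\OPT_\Lambda$ in $\MM^*_\Lambda$.} Use the dummy $\varphi(c^*)$. By the dummy definition and \Cref{thm:structural}\ref{cond:struct-dummies},
\[
d(p,\varphi(c^*))=d(p,\ell)+r\le d(p,c^*)+d(c^*,\ell)+r\le d(p,c^*)+2r.
\]
With $r\le(1+\newkappa\mathbf 1_{\mathrm{late}})\per(C^*)+O(\eps\err_p)$, weight $1/e$ yields $\frac1e\cost_\OPT(C^*)+\frac2e(1+\newkappa\mathbf 1_{\mathrm{late}})|C^*|\per(C^*)$. Combined with the $(1-1/e)\cost_\OPT(C^*)$ from $\MM^*$, this gives exactly the $\olamearly$ and $\olamlate$ terms. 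The $\err_p$ errors sum to $O(\eps)\sopt$ because $\sum_p\err_p=O(\sopt)$.

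\textbf{$\OPT_P$ in $\MM^*_\Lambda$.} Use the backup $\psi(c^*)\in\LS_P\sse\LS'\setminus\ZZ^*$. By \ref{cond:struct-backups}, this costs $\cost_\OPT(C^*)+(1+\newkappa)|C^*|\per(C^*)+O(\eps)$, and weight $1/e$ gives the $\frac1e(1+\newkappa)$ term.

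The main point to check carefully is that these reassignments really are to centers present in each solution, in particular that no backup or substitute lies in $\ZZ^*$. This is exactly the inclusion chain $\LS_P\sse\LS_\sigma$ together with $\ZZ^*\cap\LS_\sigma=\emptyset$. The remaining check is that the error terms are uniformly $O(\eps\err_p)$ per client.

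Summing the per-cluster bounds with weights $(1-1/e)$ and $1/e$, using $\cost(\cdot)\le$ the cost of any assignment, gives the stated inequality.
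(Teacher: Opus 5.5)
Your proposal is correct and follows the paper's proof essentially step for step. The paper also bounds $(1-1/e)\,d(p,\MM^*)+(1/e)\,d(p,\MM^*_\Lambda)$ client by client: it uses $c^*$ itself in $\MM^*$; in $\MM^*_\Lambda$ it uses the dummy $\varphi(c^*)$, with $d(c^*,\varphi(c^*))\le 2r_{\varphi(c^*)}$, or the backup $\psi(c^*)$; and for $D_S\cup D_R$ it uses the substitute $\sigma(c^*)\in\LS_\sigma\subseteq\MM^*\cap\MM^*_\Lambda$. The only cosmetic fix is that your per-cluster ``$+O(\eps)$'' terms should read $O(\eps\,\err_p)$ per client, which sums to $O(\eps)\sopt$.
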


\begin{proof}
For client $p$, define
  $\Phi_p := (1- 1/e)\, d(p,\MM^*) + (1/e)\,
  d(p,\MM^*_{\Lambda})$.  
  First, consider clients in $D_\Lambda$. Let $\pi_\OPT(p) = c^*$, and
  the corresponding cluster be $C^*$. Since $\OPT_\Lambda \sse \MM^*$,
  we get $d(p, \MM^*) \leq d(p, c^*)$ for any $p \in D_\Lambda$;
  moreover, the triangle inequality gives
  $d(p, \MM^*_\Lambda) \leq d(p, c^*) + d(c^*, \varphi(c^*))$.
  In either case, we can use
  \Cref{thm:structural}\ref{cond:struct-dummies} to bound
  \begin{align}
   \sum_{p\in D_\Lambda}\Phi_p
&\leq\cost_\OPT(D_\Lambda)
 +\frac2e\sum_{c^*\in\OPT_\Lambda}|C^*|r_{\varphi(c^*)} \nonumber \\
 &\leq \cost_\OPT(D_\Lambda) + \frac2e\sum_{c^*\in\olamearly}|C^*|\per(C^*)
+\frac2e\sum_{c^*\in\olamlate}(1+\newkappa)|C^*|\per(C^*) + O(\eps) \sopt.\label{eq:mixed-1}
  \end{align}

  For clients in $D_P$, using the map $\psi$
  instead of $\varphi$ and also
\Cref{thm:structural}\ref{cond:struct-backups} gives 
  \begin{align}
   \sum_{p\in D_P}\Phi_p
&\leq \cost_\OPT(D_P) + \frac1e\sum_{c^*\in\OPT_P}(1+\newkappa)|C^*|\per(C^*) + O(\eps) \sopt.\label{eq:mixed-2}
  \end{align}


  For each client $p$ in $D_R \cup D_S$, the substitution map
  sends its optimal center $c^*$ to some point $\sigma(c^*)$ in
  $\LS_\sigma$, which is disjoint from $\ZZ^*$. Since
  $\LS_\sigma \sse \MM^* \cap \MM_\Lambda^*$,  we can
  bound $\Phi_p \leq d(p, \sigma(c^*))$ for each of these clients. 
  Using~\Cref{thm:structural}\ref{cond:struct-subs}\ref{cond:struct-substitution-strong} gives
  \begin{align}
    \sum_{p \in D_S} \Phi_p &\leq (1 + \eta)
                              \cost_{\OPT}(D_S) + O(\eps) \sopt, \quad
                              \label{eq:mixed-3}
  \end{align}
 whereas 
    \Cref{thm:structural}\ref{cond:struct-subs}\ref{cond:struct-substitution-weak}
    and the triangle inequality give
     \begin{align}
  \sum_{p\in D_R}\Phi_p
&\leq\sum_{c^*\in\OPT_R}\sum_{p\in C^*}d(p,\sigma(c^*))  \leq \cost_{\OPT}(D_R) + \sum_{c^*\in\OPT_R}|C^*| \per(C^*) + O(\eps) \sopt. \label{eq:mixed-4}
  \end{align}
  Summing (\ref{eq:mixed-1})-(\ref{eq:mixed-4}) completes
  the proof of the lemma.
\end{proof}

Combining \Cref{lem:g(S) bd} and \Cref{lem:g-bds} gives the first
inequality in \Cref{thm:simple-submod}. The second inequality follows from \Cref{lem:mixed-bd}, which wraps up this section.


%% file: new-local.tex
\section{Local Search Revisited: Proof of \Cref{thm:local-search-bd}}
\label{sec:local-revisited}

\newcommand{\pair}{\mathrm{pair}}
\newcommand{\unpair}{\mathrm{stdom}}

Next, we prove the nuanced local-search guarantee in
\Cref{thm:local-search-bd}, which shows that if we have a cross-free
pair $(\LS,\mu_{\AA}), (\BB,\mu_{\BB})$ with respect to $\lsd$, where $\LS$ is a solution to
local-search with $O(1/\eps)$-swaps, and $\BB$ is a benchmark
solution, then the cost of the local-search solution (using the nearest-center
map $\pi_\LS$) can be related to the cost of the clustering $\mu_{\BB}$,
plus the gap between the costs of $\mu_{\AA}$ and $\pi_\LS$ (plus lower
order terms).

\localsearch*



It will be useful to recall the definition of generalized
cross-freeness (see \Cref{fig:gen-cross-free}):

\GenCF*

Consider the clusterings $\calA, \calB$ of $D$ induced by
$(\LS,\mu_\AA), (\BB, \mu_\BB)$; since they are cross-free with
respect to $\lsd$, \Cref{defn:gen-cross-free} tells us that
induced clusterings $\calA', \calB'$ of the subset
$\lsdres := \cup_{A \in \calA'} A$ are cross-free. 
(Both $\calA', \calB'$ are partitions of $\lsdres$, though both of them
may contain one or more empty sets.)  Define
$\AA' := \LS \setminus \mu_{\AA}(D \setminus \lsd)$ to be the centers
of clusters in $\calA'$. Assumption~(i)
from~\Cref{thm:local-search-bd} along with
$|\AA'| + |\mu_{\AA}(D \setminus \lsd)| = k$ means
$|\BB| \leq |\AA'|$.

Call a cluster $K \sse \lsdres$ belonging to either of the clusterings
$\calA', \calB'$ \emph{dominating} if it is non-empty and a
disjoint union of clusters of the other clustering; let the
\emph{refinement map} $\Pi(K)$ denote the clusters in the other
clustering that partition $K$. A non-empty cluster is \emph{dominated}
if it is contained within a cluster of the other clustering. 
Cross-freeness implies that a cluster from $\calA'$ or
$\calB'$ is empty, dominating, or dominated. A cluster that is
both dominating and dominated is called \emph{paired}; such a cluster
is non-empty and exists in both clusterings. A non-paired dominating
cluster is called \emph{strictly dominating}.  Let
$\calB'_{\pair} = \calA'_{\pair}$ be the paired clusters; let
$\calA'_{\unpair}$ and $\calB'_{\unpair}$ be the strictly dominating
clusters in $\calA'$ and $\calB'$ respectively. Observe that the
clusters in
$\calA'_{\pair} \sqcup \calA'_{\unpair} \sqcup \calB'_{\unpair}$ form
a partition of $\lsdres = \bigcup_{A \in \calA'} A$.

Overloading notation, for any collection of dominating clusters
$\calA'' \sse \calA_{\unpair}'$, we define
$\Pi(\calA'') := \bigcup_{A_c \in \calA''} \Pi(A_c)$, and define
$\Pi(\calB'')$ similarly for $\calB'' \sse \calB_{\unpair}'$.  We
partition
$\calA_{\unpair}' = \calA_{\mathrm{big}}' \sqcup \calA_{\mathrm{small}}'$
into \emph{big} and \emph{small} clusters, where
$\calA_{\mathrm{big}}'$ contains clusters that dominate at least
$2/\eps$ clusters in $\calB'$. Finally, let the nearest-center map
$\pi_\LS$ for the local search solution $\LS$ induce the clustering
$\calL = \{C_c\}_{c \in \LS}$. 


\subsection{A Graph Theoretic Abstraction}
\label{sec:graph-theor-abstr}

In this section we define a graph theoretic abstraction of the problem
of constructing a suitable collection of swaps for our local search
analysis. We then prove that the existence of ``spanning collections''
of graph swaps implies the existence of collections of $k$-median
swaps that certify the nuanced local-search guarantees from
\Cref{thm:local-search-bd}. We finally show how to construct such a
spanning collection of graph swaps in \Cref{lem:graph-swaps}.

Suppose we have a bipartite graph $G = (L, R, E)$, where
$n = |L| = |R|$, and $G$ is a disjoint union of stars, and has no
isolated vertices. Call a vertex $v \in L \cup R$ a \emph{leaf} if it
is the leaf of some star and a \emph{hub} otherwise. For stars made up
of a single edge, we consider both vertices to be leaves.  Call a star
component in $G$ a \emph{left-star} if its hub belongs to $L$ and a
\emph{right-star} otherwise. Let $H_L \subseteq L$ and
$H_R \subseteq R$ denote the sets of hubs of left and right stars,
which are called left-hubs and right-hubs, respectively. For a vertex
$v$, let $N(v)$ denote its neighbors in $G$; for a set $S$ of vertices
which belongs to the same side of the bipartition, let
$N(S) = \cup_{v \in S} N(v)$. For a hub $v$, let $S(v) = N(v)$ denote
the leaves of its star.  The star centered at $v$ is \emph{big} if
$|S(v)| \geq 2/\eps$, and \emph{small} otherwise. For $L' \subseteq L$
and $R' \subseteq R$ with $|L'| = |R'|$, call the pair $(L', R')$ a
\emph{swap}.  %

\begin{definition}[Spanning Collections of Swaps]
  \label{defn:span-swaps}
  A collection $\calS = \{(L_1, R_1), (L_2, R_2), \cdots\}$ of
  swaps is \emph{spanning} if the following properties hold:
  \begin{enumerate}[label = (\alph*),nosep]
  \item %
    For each swap
    $(L_i, R_i) \in \calS$:
    $|L_i| = |R_i| \leq 12/\eps$. \label{cond:swap-small}
    \item \label{cond:nbrs} If vertex $v \in L_i$, then $N(v) \sse R_i$. 
    \item \label{cond:alive-repped} 
      Except for hubs of large left stars, each other vertex appears
      in some swap in $\calS$. 
    \item \label{cond:left-hubs} For each hub on the left or leaf on
      the right---i.e., for each vertex
      $v \in H_L \cup (R \setminus H_R)$---there exists at most one
      swap $(L_i, R_i) \in \calS$ with $v \in L_i \cup R_i$.
\end{enumerate}
\end{definition}

Intuitively, $L$ and $R$ will correspond to some of the centers in
$\AA' \sse \LS$ and $\BB$ respectively; a swap $(L_i, R_i)$ will
correspond to closing down centers at $L_i$ and opening those at
$R_i$, which is why we want $|L_i|=|R_i|$ and this size to be
bounded. We want clients near centers in $L_i$ to have a new center
close-by, which is why we ask for $N(L_i) \sse R_i$.  All leaves
should appear in some swap, since that allows us to bound the quality
of local-search solutions. Finally, we require left-hubs to appear
in at most one swap; this allows us to compare the costs of
nearest-facility and custom assignments.

The following lemma shows the existence of a distribution over
spanning collections of swaps such that each leaf appears in
slightly more than one swap in expectation.

\begin{lemma} \label{lem:graph-swaps} There exists a distribution
  $\calD$ over spanning collections of swaps such that each leaf
  $v \in L \setminus H_L$ appears in at most $(1 + 8\eps)$ swaps in
  expectation. Formally, we have
  \begin{equation}
    \mathop{\mathbb{E}}_{\calS \in \calD}\bigg[\sum_{(L_i, R_i) \in \calS} \mathbf{1}\{v \in L_i\}\bigg] \leq (1 + 8\eps).
  \end{equation}
\end{lemma}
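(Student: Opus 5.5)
The plan is to build the swaps star by star. We match the ``excess'' of the left-stars against the ``deficit'' of the right-stars, and use randomness only to spread the rounding loss uniformly over the left leaves.

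\textbf{Counting.} Since $|L|=|R|$, a left-star with hub $h$ contributes $1$ vertex to $L$ and $|S(h)|$ vertices to $R$. A right-star with hub $g$ contributes $|S(g)|$ vertices to $L$ and $1$ to $R$. A single edge contributes one vertex to each side. Hence
\[
\sum_{h\in H_L}(|S(h)|-1)=\sum_{g\in H_R}(|S(g)|-1)=:m.
\]

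\textbf{Basic swaps.} Single edges $\{u,v\}$ become the swap $(\{u\},\{v\})$. Every left leaf $\ell$ of a right-star with hub $g$ satisfies $N(\ell)=\{g\}$. So any swap containing $\ell$ in its $L$-part only needs $g$ on its $R$-part, and right hubs may be reused arbitrarily often: \Cref{defn:span-swaps}\ref{cond:left-hubs} restricts only left hubs and right leaves.

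Now line up the $m$ ``excess slots'' of the left-stars and the $m$ ``deficit slots'' of the right-stars. Each left-star $h$ owns $|S(h)|-1$ slots, one per leaf beyond the first. Each right-star $g$ owns $|S(g)|-1$ slots, one per leaf beyond a designated first leaf. Match the two sequences of slots in order (a bijection). Each deficit slot corresponds to a left leaf $\ell$ of some right-star, and each excess slot to a right leaf $r$ of some left-star.

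\textbf{Small left-stars.} A small left-star $h$ (with $|S(h)|<2/\eps$) gets one swap:
\begin{itemize}[nosep]
\item $L$-part: $h$ together with the left leaves matched to its slots;
\item $R$-part: $S(h)$ together with the hubs of those leaves.
\end{itemize}
Each added hub also lets us include that right-star's designated first leaf, which keeps $|L_i|=|R_i|$. Condition \ref{cond:nbrs} holds because $N(h)=S(h)$ and $N(\ell)=\{g\}$. The size is $O(1/\eps)$, which gives \ref{cond:swap-small} once we track constants. The hub $h$ and its right leaves appear exactly once.

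\textbf{Big left-stars.} The hub of a big left-star never needs to be closed, so its leaves $r$ are just free $R$-vertices. For each matched pair $(\ell,r)$ we form small swaps of the shape $(\{\ell,\ell'\},\{g,r\})$, or more generally group consecutive leaves of the same right-star $g$ together with $g$.

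\textbf{Big right-stars.} For a right-star with $|S(g)|\ge 2/\eps$, the leaves must be chopped into blocks of size about $1/\eps$ to respect \ref{cond:swap-small}. Each block needs its own copy of $g$ in its $R$-part and hence one extra $L$-vertex to balance. We pay for this by letting one left leaf per block appear in an additional swap, e.g.\ a swap of the form $(\{\ell\},\{g\})$. To make every left leaf pay only $O(\eps)$ in expectation, we choose the block boundaries with a uniformly random offset. Then each leaf is a boundary leaf with probability $O(\eps)$, so each leaf of $L\setminus H_L$ lies in at most $1+8\eps$ swaps in expectation.

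\textbf{Remaining checks and main obstacle.} Condition \ref{cond:alive-repped} follows because every vertex other than big left hubs was placed in some swap. Condition \ref{cond:left-hubs} holds because each slot, and hence each right leaf, is used once. The step I expect to be the main obstacle is the bookkeeping around boundaries: the order-matching of slots can split one left-star's slots across several right-stars, and vice versa. Each such split forces an extra hub, and hence an extra left leaf, into some swap. I would handle this by processing left-stars and right-stars so that a split occurs only at a block boundary of a big star, and then charge each split to the $O(\eps)$ fraction of boundary leaves selected by the random offset. This keeps all swaps of size at most $12/\eps$ and the expected multiplicity at most $1+8\eps$.
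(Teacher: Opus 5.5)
\textbf{There is a genuine gap.} Your construction fails at the step you flagged as the main obstacle. The proposed remedy, confining splits to block boundaries of big stars, cannot work, because the losses come from small stars. Giving each small left-star its own swap forces a constant fraction of extra left-leaf appearances whenever left- and right-star sizes are incompatible.

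Concretely, take no single edges and no big stars: $a$ left-stars with $3$ leaves each and $b$ right-stars with $4$ leaves each, where $2a=3b$ (so $|L|=|R|$).
\begin{itemize}
\item In your scheme every swap has the form $(\{h\}\cup Y,\ S(h)\cup H(Y))$. Here $Y$ is the set of matched leaves plus the designated first leaves, and $H(Y)$ is the set of their right hubs. So $|Y|-|H(Y)|=|S(h)|-1=2$.
\item A whole right-star would contribute $4-1=3$ to $|Y|-|H(Y)|$. Hence every right-star is split across at least two swaps, and every right hub appears at least twice.
\item Each left hub and each right leaf appears exactly once. Comparing $\sum_i|L_i|=\sum_i|R_i|$ therefore gives that the number of left-leaf appearances equals $2a$ plus the number of right-hub appearances. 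This is at least $2a+2b=5b$.
\item These $5b$ appearances are spread over only $4b$ left leaves. So in every realization some left leaf has multiplicity at least $5/4$, and no ordering of slots or random offset can reduce this to $1+8\eps$.
\end{itemize}
The same count goes through for any collection in which each swap contains at most one left hub. Your ``designated first leaf'' device is exactly where the loss appears: it can balance only one of the swaps that touch a given right-star.

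\textbf{The missing idea is batching.} The paper partitions the right leaves of left-stars into pieces $Z$ of size between $1/\eps$ and $3/\eps$. It keeps each small left-star's leaves together and chops the leaves of big left-stars.
\begin{itemize}
\item The swaps are $(Y\cup H(Z),\ Z\cup H(Y))$. Here $Y$ is a consecutive block of left leaves, taken in a cyclic order grouped by right-star, chosen so that $|Y|-|H(Y)|=|Z|-|H(Z)|$.
\item Every piece then has deficit at least $1/(2\eps)$. So there are only about $2\eps(1+\eps)\Delta$ swaps, where $\Delta=|L\setminus H_L|-|H_R|$.
\item Each block boundary splits at most one right-star, costing one extra hub appearance. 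The total overshoot past one full cycle is therefore at most $8\eps|L\setminus H_L|$ leaves.
\item A uniformly random starting offset then puts each left leaf in this doubly covered stretch with probability at most $8\eps$.
\end{itemize}

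A smaller accounting point in your slot bijection: a big left-star, whose hub is never closed, contributes $|S(h)|$ right vertices that need balancing, not $|S(h)|-1$. The paper absorbs this surplus, $|H_{L,\mathrm{big}}|\le(\eps/2)|R\setminus H_R|$, into the same $O(\eps)$ overshoot.
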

Before we prove \Cref{lem:graph-swaps}, let us use it to prove
\Cref{thm:local-search-bd}.

\subsection{Proving the Local Search Guarantee}
\label{sec:proving-LS-guarantee}

\begin{proof}[Proof of \Cref{thm:local-search-bd}]
  Given a cross-free pair $(\LS, \mu_{\AA}),(\BB, \mu_{\BB})$ with respect to $\lsd$
  and corresponding clusterings $\calA'$ and $\calB'$, we can define the
  associated intersection graph on $(\AA', \BB)$, with an edge
  $(c,c^*) \in \AA' \times\BB$ if $A_c \cap B'_{c^*} \neq \emptyset$. The two sides use distinct copies
  of any common center. 
  
  We have two issues to address before being able to apply
  \Cref{lem:graph-swaps}. Some clusters might be empty---$B_{c^*}' \in \calB'$ are restrictions of clusters to the points in
  $\lsdres$, and $A_c \in \calA'$ can be empty to begin with---so the graph might have isolated vertices. Moreover, we
  might have $|\AA'| > |\BB|$. To address these issues, we add some auxiliary vertices: add $|\AA'|-|\BB|$ right vertices,
  then add left super-vertices $\ell_1, \ell_2$ and right
  super-vertices  $r_1, r_2$. Connect every pre-existing isolated
  right-vertex to $\ell_1$ and every pre-existing isolated left-vertex to $r_1$. Also add the edges $(\ell_1, r_2)$ and $(r_1, \ell_2)$. The resulting graph
  $G=(L,R,E)$ has equal sides and is a disjoint union of stars without
  isolated vertices, so \Cref{lem:graph-swaps} applies. 
  
  Consider any swap in the graph $(L_i, R_i) \in \calS$. Discard all
  auxiliary vertices contained in $L_i$ and $R_i$ to obtain a pair
  $(\AA'_i, \BB_i)$ with $\AA'_i \subseteq \AA'$ and
  $\BB_i \subseteq \BB$; define
  the candidate solution to be
  $\LS^{\test} := (\LS \setminus \AA'_i) \cup \BB_i$. By \Cref{defn:span-swaps}\ref{cond:nbrs}, if
  $\ell_1$ or $\ell_2 \in L_i$ then $r_2$ or $r_1 \in R_i$,
  respectively. Hence, at least as many auxiliary right-vertices are
  discarded as left-vertices, and the number of centers $|\BB_i|$
  being added is at most the number $|\AA'_i|$ being deleted, ensuring
  that the new candidate solution has size
  $|(\LS \setminus \AA'_i) \cup \BB_i| \leq k$. (If we insist on
  $\LS^{\test}$ having size exactly $k$, we can add back centers from
  $\AA'_i$, which can only decrease the cost.) Since
  $|\AA'_i|, |\BB_i| \leq |L_i| = |R_i| \leq 12/\eps$ by
  \Cref{defn:span-swaps}\ref{cond:swap-small}, the stopping condition of
the local search algorithm ensures
\begin{equation}
  \cost(\LS^{\test}) \geq
  (1-\eps/k)\cost(\LS).
  \label{eq:local-opt}
\end{equation}
  
  \begin{figure}[t]
    \centering
    \includegraphics[width=3in]{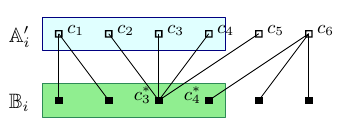}
    \caption{Local Swaps: the colored boxes show one swap
      $(\AA'_i,\BB_i)$ with $4$ centers; the client set
      $D_i$ is precisely $((A_{c_1} \cup A_{c_2} \cup A_{c_3} \cup A_{c_4}) \cup
      B_{c_4^*}) \cap \lsd$. Note that clients in $A_{c_5} \sse B_{c_3^*}$, or in
      $A_{c_6} \setminus B_{c_4^*}$ are not included in $D_i$.}
    \label{fig:swap-to-LS}
  \end{figure}

  We now give an upper bound on the expression on the left-hand side
  by defining a ``test'' assignment of clients to centers in
  $\LS^{\test}$; in turn, this gives an upper bound for
  $\cost(\LS)$. To motivate our assignment, we make a few simple
  observations. (i)~For all $p \in D$,
  $d(p, \pi_{\LS}(p)) \leq d(p, \mu_{\AA}(p))$, because both are
  assignments of clients to $\LS$, and $\pi_{\LS}$ is the
  nearest-center one. (ii)~We can imagine that $\mu_{\BB}$ is a poor
  assignment for clients outside of $\lsd$, given the harsh
  restrictions on $|\BB|$. Finally, (iii)~our bound is non-trivial
  only if $\cost_{\mu_{\BB}}(\lsd) \leq \cost_{\LS}(\lsd)$, in which
  case $\mu_{\BB}$ is better than $\pi_{\LS}$ on $\lsd$ \emph{in
    aggregate}.  We emphasize that, while the observation~(i) holds
  for each client, the observation~(iii) is only an aggregate bound.
  With these observations in mind, we give our ``test'' assignment,
  beginning with the clients assigned to $\BB_i$.

  \begin{enumerate}[label=(\alph*),wide]
  \item By \Cref{defn:span-swaps}\ref{cond:nbrs}, we have that
    $N(L_i) \sse R_i$: this means clients belonging to clusters $A_c$
    with centers $c \in \AA'_i$ can now be assigned to their
    $\mu_{\BB}$-centers, which have been opened in $\BB_i$. There may
    also be some \emph{dominated} clusters $B_{c^*}'$ such that
    $c^* \in \BB_i$, but the center $c$ of its dominating cluster
    $A_c$ is not in $\AA'_i$: clients in $\lsd$ belonging to (the
    unrestricted) $B_{c^*}$ can be ``upgraded'' to their $\mu_{\BB}$
    assignment.  See \Cref{fig:swap-to-LS} for an example of $D_i$.
    (Although we could also assign the rest of the clients in the
    cluster $B_{c^*} \setminus \lsd$ to their $\mu_{\BB}$ assignment,
    we do not---indeed, recall our intuition that such an assignment
    could actually be a downgrade for clients outside of $\lsd$.)
    Set $D_i \subseteq \lsd$ to be the set of clients in either of
    these two cases:
    \[
      D_i := \{p \in \lsd \mid p \in A_c \text{ s.t. } c \in \AA_i'
      \lor p \in B_{c^*} \text{ s.t. } c^* \in \BB_i \text{ and
      } B_{c^*}' \text{ dominated} \}.
    \]

  \item We augment $D_i$ to $D_i^+$ by assigning some more clients to
    their $\mu_{\BB}$ centers. Start with $D_i^+ = D_i$. Then for
    clients $p \in \lsd \setminus \lsdres$ such that $p \in B_{c^*}$
    and whose restriction $B_{c^*}' \in \calB'$ is strictly dominating
    or empty, choose a unique index $i$ such that $c^* \in \BB_i$ and
    add $p$ into $D_i^+$; there exists such a swap since each right
    vertex appears in a swap;
    \Cref{defn:span-swaps}\ref{cond:alive-repped}. (We only assign
    these clients to a single index because
    $\cost_{\mu_{\BB}}(\lsd) \leq \cost_{\LS}(\lsd)$ is an aggregate
    bound, not a client-wise one.)

  \item Next, consider clients in $D \setminus D_i^+$ that are
    assigned to centers in $\AA'_i$ in the \emph{local-search}
    solution $\calL$; call this set
    $D_i' := \{p \in D \setminus D_{i}^+ \mid p \in C \in \calL, c \in
    \AA'_i\}$. Although these clients can no longer be sent to their
    assignments under $\pi_{\LS}$---those centers are closed---they
    can still be sent to their assignment under $\mu_{\AA}$. Indeed,
    if their $\mu_\AA$-center had closed, the definition of $\calA'$
    would put them in $\lsd$, so the first case defining $D_i^+$ would
    already apply.
  \item Finally, all remaining clients are sent to their $\pi_{\LS}$
    assignments.
  \end{enumerate}
  In summary:
  \begin{gather*}
    d(p, \LS^{\test}) \leq \begin{cases}
      d(p, \mu_{\BB}(p)) & \qquad \forall p \in D_i^+ \\
      d(p, \mu_{\AA}(p)) & \qquad \forall p \in D_i' \\
      d(p,\LS) &\qquad \text{otherwise}
    \end{cases}
  \end{gather*}
  This gives a bound on the cost of the post-swap solution:
  \begin{align}
    \cost(\LS^{\test}) &\leq \cost_{\LS}(D \setminus (D_i^+ \cup D_i')) + \cost_{\mu_{\BB}}(D_i^+) + \cost_{\mu_\AA}(D_i'). \label{eq:swap-cost}
  \end{align}
  Next, combining \Cref{eq:local-opt} and \Cref{eq:swap-cost} and rearranging:
  \begin{align}
    \cost_{\LS}(D_i^+)  \leq \cost_{\mu_{\BB}}(D_i^+) + \cost_{\mu_\AA}(D_i') - \cost_{\LS}(D_i') + \frac{\eps}{k}\cost(\LS). \label{eqtn:DiA-bd}
  \end{align}
  Before we sum \Cref{eqtn:DiA-bd} over all swaps in $\calS$, we
  observe that $\lsd \subseteq \cup_{(L_i, R_i) \in \calS}
  D_{i}^+$. Indeed, all leaves and all right vertices appear among
  some swap in $\calS$, by
  \Cref{defn:span-swaps}\ref{cond:alive-repped}. Now each client in
  $\lsdres$ appears in some dominated cluster among $\calA'$ and
  $\calB'$ (which correspond to leaves), so $p$ appears in some
  $D_i^+$. Additionally, each client $p \in \lsd \setminus \lsdres$
  appears in some $D_i^+$ by the additional round of client
  assignments, since $\calB$ is a partition of $D$ and all right
  vertices (i.e., centers of clusters in $\calB$) appear in some
  swap. In summary, every client in $\lsd$ appears in some $D_i^+$, so:
  \begin{align*}
    \cost_{\LS}(\lsd) &\leq \sum_{(L_i, R_i) \in \calS} \cost_{\LS}(D_i^+) 
                 \leq \sum_{(L_i, R_i) \in \calS}(\cost_{\mu_{\BB}}(D_i^+) + \cost_{\mu_\AA}(D_i') - \cost_{\LS}(D_i')) + 3 \eps \cost(\LS).
  \end{align*}
 Here we use that $|\calS| \leq 3k$. Indeed, discarding empty swaps, every swap in the construction from
\Cref{lem:graph-swaps} contains a right leaf appearing in no other
swap. Thus $|\calS|\leq |R|=|\AA'|+2\leq k+2\leq 3k$.
Taking expectations gives:
  \begin{align}
    \cost_\LS(\lsd) &\leq \mathop{\EE}_{\calS \sim \calD} \bigg[
                 \sum_{(L_i, R_i) \in \calS} \cost_{\mu_{\BB}}(D_i^+) \bigg] +
                 \mathop{\EE}_{\calS \sim \calD} \bigg[\sum_{(L_i, R_i) \in \calS}( \cost_{\mu_\AA}(D_i') - \cost_{\LS}(D_i')) \bigg] + 3\eps\cost(\LS). \label{eq:strongest-ls-bd}  
  \end{align}
  We bound the two expectation terms separately. For the first term,
  we start by observing that every client $p\in\lsd\setminus\lsdres$
  appears exactly once. Indeed, if $B_{\mu_\BB(p)}'$ is dominated, its
  right vertex is a leaf appearing in exactly one swap; otherwise, $p$
  appears only in its designated swap. Next, consider a client $p\in\lsdres$, and let $A_c\in\calA'$
and $B_{c^*}'\in\calB'$ be its clusters.
If $A_c$ is strictly dominating, then $c^*$ is a right leaf.
Whenever $c\in\AA_i'$, condition~\ref{cond:nbrs} gives
$c^*\in\BB_i$, so $p\in D_i$ exactly when $c^*\in\BB_i$;
this occurs in exactly one swap.
If the two clusters are paired, their centers are leaves and
appear together in exactly one swap, by
\Cref{defn:span-swaps}\ref{cond:nbrs}--\ref{cond:left-hubs}.
Otherwise, $B_{c^*}'$ is strictly dominating and $c$ is a left leaf.
Then $p\in D_i$ exactly when $c\in\AA_i'$, which occurs in at most
$(1+8\eps)$ swaps in expectation by \Cref{lem:graph-swaps}.
Thus every client in $\lsd$ belongs to at most $(1+8\eps)$ sets
$D_i^+$ in expectation, giving
  \begin{align}
    \mathop{\EE}_{\calS \sim \calD} \bigg[ \sum_{(L_i, R_i) \in
    \calS}\cost_{\mu_{\BB}}(D_i^+) \bigg]  &\leq (1 + 8\eps)
                                           \cost_{\mu_\BB}(\lsd). \label{eq:exp-bd} 
  \end{align} 
  This bounds the first expectation in (\ref{eq:strongest-ls-bd}). To
  bound the second expectation, a client $p \in D$ belongs to $D_i'$
  only if it belongs to a local-search cluster $C \in \calL$ whose
  center $c$ lies in $\AA'_i$. If $c$ is a hub of a left-star (center of
  a dominating cluster $A_c \in \calA'$), then by
  \Cref{defn:span-swaps}\ref{cond:left-hubs}, $p \in D_i'$ for at most
  one swap. If instead $c$ is a left-leaf (i.e., the center of a
  dominated cluster $A_c \in \calA'$ or an originally isolated node), then $p \in D_i'$ only if
  $c \in \AA'_i$, and we can again use \Cref{lem:graph-swaps}. Overall:
  \begin{align}
    \mathop{\EE}_{\calS \sim \calD} \bigg[ \sum_{(L_i, R_i) \in \calS}(\cost_{\mu_\AA}(D_i') - \cost_{\LS}(D_i')) \bigg]  &=  \mathop{\EE}_{\calS \sim \calD} \bigg[ \sum_{(L_i, R_i) \in \calS} \Big( \sum_{p \in D} (d(p, \mu_{\AA}(p)) - d(p, \LS)) \mathbf{1}\{p \in D_i'\} \Big) \bigg] \nonumber \\
                                                                                                                          &= \sum_{p \in D} (d(p, \mu_{\AA}(p)) - d(p, \LS)) \mathop{\EE}_{\calS \sim \calD} \Big[\sum_{(L_i, R_i) \in \calS} \mathbf{1}\{p \in D_i'\} \Big] \nonumber \\
                                                                                                                          &\stackrel{(\star)}{\leq} (1 + 8\eps) \sum_{p \in D} (d(p, \mu_{\AA}(p)) - d(p, \LS)) \\
                                                                                                                          &\leq (1 + 8\eps) (\cost(\mu_{\AA}) - \cost(\LS)). \label{eq:exp-bd2}
  \end{align}
  Observe that the inequality $(\star)$ relies on each of the terms
  $d(p, \mu_{\AA}(p)) - d(p, \LS)$ being non-negative; this holds since $d(p, \mu_{\AA}(p))$ is the distance to some center in $\LS$, whereas $d(p,\LS)$ is
  the distance to the closest such center. Combining
  \Cref{eq:strongest-ls-bd,eq:exp-bd,eq:exp-bd2} gives
  \begin{align*}
    \cost_{\LS}(\lsd) &\leq (1 + 8\eps) (\cost_{\mu_\BB}(\lsd) + \cost(\mu_{\AA}) - \cost(\LS)) + 3\eps \cost(\LS).
  \end{align*}
  Since $\cost(\LS)\leq\cost(\mu_\AA)$, the additional term is
$O(\eps)\cost(\mu_\AA)$. Rearranging gives the desired result.
\end{proof}

\subsection{Constructing the Distribution over Swap Collections}
\label{sec:constr-swaps}

We show how to construct the swap collections satisfying
\Cref{lem:graph-swaps}.  The intuition is quite natural: to ensure
conditions~\ref{cond:swap-small} and \ref{cond:nbrs} of
\Cref{defn:span-swaps}, we force hubs of large left-stars to not
appear in any swaps. We compensate for their lack by allowing leaves
on the left to appear more than once. Indeed,
conditions~\ref{cond:alive-repped} and \ref{cond:left-hubs} mean that
leaves on the left are the only left-vertices that may appear in multiple
swaps. Since at most $\eps$ fraction of the nodes on the left can be
left-hubs of large stars, it is intuitive that left leaves appear
$\approx 1+\eps$ times in expectation. However, implementing this idea
requires some care, as we now see.

Since we can handle single edges $\{u,v\}$ using swaps $(\{u\},\{v\})$
of unit size, let us focus on the case where each star has at least
$2$ leaves.  We additionally assume that $|L| = |R| > 12 /\eps$, else
the collection $\{(L, R)\}$ consisting of a single swap has all of the required
properties. Hence, we have at least $6/\eps$ leaves on both the left
and right, since each star has at least two leaves.  Since
$|L| = |R|$, we get
\begin{equation}
   |H_L| + |L \setminus H_L| = |H_R| + |R \setminus H_R| \quad
   \implies \quad \sum_{v \in H_L} (|S(v)| - 1) = \sum_{w \in H_R} (|S(w)| - 1). \label{eq:budget-balance}
\end{equation}

We begin by defining a partition $\Lambda_R$ of
right-leaves $R \setminus H_R$ into pieces of size between
$1/\eps$ and $3/\eps$. Then we iteratively match these pieces to
subsets of left leaves via a randomized matching process. %
\begin{enumerate}
\item For each hub $v$ of a big left-star, partition its right-leaves $S(v)$ arbitrarily into sets of size
  between $1/\eps$ and $2/\eps$. (This can be done by greedily forming
  sets of size $1/\eps$, adding the remainder of size less than
  $1/\eps$ to the final greedily computed set of size $1/\eps$.) Add
  each such set to $\Lambda_R$.
\item Next, for remaining hubs of small left-stars $v$ (with
  $1 < |S(v)| < 2/\eps$), greedily form sets of size between $1/\eps$
  and $3/\eps$ by taking unions of their right-leaves $S(v)$. Add each
  such set to $\Lambda_R$. (In the edge case that the total number of
  remaining leaves is less than $1/\eps$, merge into the last formed
  set---this is possibly a set containing the leaves of a big star. If the combined size is at least $3/\eps$, split it into two sets
  of size at least $1/\eps$ while keeping right-leaves belonging to the same star in the same set.)
\end{enumerate}
Combined, these two steps yield the partition $\Lambda_R$ of the right
leaves. (Note that if there are no big left-stars, our assumption that
$|L| = |R| > 12/\eps$ ensures at least $6/\eps$ leaves, so the second
step then still outputs a valid partition.) For each
$Z \in \Lambda_R$, define its hub set 
\begin{equation}
  H(Z) := \{ v \in H_L \mid S(v) \subseteq Z, |S(v)| < 2/\eps \}
\end{equation}
to be the set of hubs of small left-stars with all leaves in $Z$. The
swap $(L_i, R_i)$ involving $Z$ will include $Z \subseteq R_i$ and
$H(Z) \subseteq L_i$. Define the \emph{deficit} of $Z$ to be
\begin{equation}
    \deficit(Z):= |Z| - |H(Z)|.
\end{equation}
For a swap $(L_i, R_i)$ with $Z \subseteq R_i$ (and hence
$H(Z) \sse L_i$), this deficit $\deficit(Z)$ is the difference between
the number of vertices that $Z$ contributes to $R_i$ and $L_i$.

We will randomly match sets $Z \in \Lambda_R$ with sets of leaves on
the left. (Note that all small left hubs are already accounted for by
the inclusion of $H(Z)$.) For $Y \subseteq L \setminus H_L$, we define
a complementary notion of \emph{excess}. Define
\begin{equation}
 H(Y) := \{ w \in H_R \mid S(w) \cap Y \neq \emptyset \}
\end{equation}
to be the set of hubs of right-stars with any of their leaves in
$Y$. Any swap $(L_i, R_i)$ with $Y \subseteq L_i$ must include
$H(Y) \subseteq R_i$. Then, the \emph{excess} of $Y$ is the difference
between the number of vertices $Y$ contributes to $L_i$ and $R_i$,
i.e.,
\begin{equation}
    \excess(Y) := |Y| - |H(Y)|.
\end{equation}

All of our swaps will be of the form $(Y \cup H(Z), Z \cup H(Y))$,
where $Y \subseteq L \setminus H_L$ and $Z \in \Lambda_R$: ensuring
that $|Y \cup H(Z)| = |Z \cup H(Y)|$ is equivalent to
$\deficit(Z) = \excess(Y)$. Moreover, since \Cref{lem:graph-swaps} requires
minimizing occurrences of left leaves in swaps, we want to maximize the
excess contributions from $Y \subseteq L \setminus H_L$. As such, we
want to minimize the number of times hubs of right-stars must appear,
and hence to pack their leaves as efficiently as possible. Our
algorithm to match all the $Z \in \Lambda_R$ to sets
$Y \subseteq L \setminus H_L$ appears as \Cref{alg:make-swaps}. 

\bigskip
\begin{algorithm}[h]
  \caption{CreateSwaps}
  \label{alg:make-swaps}
  order the leaves within each right-star \;
  order right-stars and concatenate their left-leaf lists into a
  sequence $y_0,\ldots,y_{b-1}$, where $b=|L\setminus H_L|$. \;
  order the $Z$'s in $\Lambda_R$ arbitrarily. \;
  choose a uniform random starting offset $r_0\in\{0,\ldots,b-1\}$. \;
  initialize index $r \gets r_0$. \;
  \ForEach{$Z \in \Lambda_R$}
  { \label{line:find-Y}take shortest consecutive block $Y = \{y_r, y_{r+1}, \ldots\}$
    such that $\excess(Y)=\deficit(Z)$ \;
    set $r \gets r + |Y| \pmod b$ \;
    add swap $(Y \cup H(Z), Z \cup H(Y))$ to $\calS$.
  }
\end{algorithm}
\bigskip

\subsection{Proving the Swap Properties}
\label{sec:swap-props}

\begin{claim}
  \label{cla:Y-exists}
  For any index $r \in \{0,1, \ldots, b-1\}$ and any set $Z \in \Lambda_R$, there
  exists a consecutive block $Y$ beginning at $r$ with
  $\excess(Y) = \deficit(Z)$. Moreover,
  $|Y| \leq |Y \cup H(Z)| = |H(Y) \cup Z| < 7/\eps \leq 12/\eps$.
\end{claim}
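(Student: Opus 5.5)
We will prove \Cref{cla:Y-exists} by a discrete intermediate-value argument along the cyclic sequence $y_r, y_{r+1}, \ldots$. The plan is to study $e(j) := \excess(\{y_r,\ldots,y_{r+j-1}\})$ as a function of the block length $j$, show it changes by at most one per step, and then locate the first $j$ with $e(j)=\deficit(Z)$.

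\textbf{Step 1: excess moves in unit steps.} Because the left leaves are listed star by star, appending $y_{r+j}$ to the block can only add a new right-hub $w$ to $H(Y)$ when $y_{r+j}$ is the first leaf of $w$'s star met by the block. All later leaves of that star then contribute $+1$ each. So along each right-star, $e$ first drops by $0$ (adding one to both $|Y|$ and $|H(Y)|$) and then rises by $1$ per further leaf. Left-leaves of left-stars, and leaves of single-edge components, contribute $+1$ with no hub. Hence $e(0)=0$, $e(j+1)-e(j)\in\{0,1\}$, and $e$ is non-decreasing.

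\textbf{Step 2: the target is reached.} We need $0\le\deficit(Z)$ and $\deficit(Z)\le e(b)$.
\begin{itemize}[nosep]
\item \emph{Lower bound.} Each $v\in H(Z)$ has at least two leaves in $Z$, so $|H(Z)|\le|Z|/2$ and therefore $\deficit(Z)\ge|Z|/2>0$.
\item \emph{Upper bound.} Over a full cycle, $e(b)=|L\setminus H_L|-|H_R|\ge 6/\eps - |L\setminus H_L|/2$, since right-stars have at least two leaves. By \eqref{eq:budget-balance} this is comparable to the number of right-leaves. As $|L\setminus H_L|\ge 6/\eps$, we get $e(b)\ge|L\setminus H_L|/2\ge 3/\eps\ge|Z|\ge\deficit(Z)$.
\end{itemize}
Since $e$ is non-decreasing with unit steps, the smallest $j$ with $e(j)=\deficit(Z)$ exists and satisfies $j\le b$, so the block does not wrap past its own start.

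\textbf{Step 3: size bounds.} Take $Y$ to be the shortest such block. The equality $|Y\cup H(Z)|=|Z\cup H(Y)|$ is just $\excess(Y)=\deficit(Z)$ rearranged, using disjointness of the four sets. For the size bound, bound the right side: $|Z|\le 3/\eps$, so it suffices to show $|H(Y)|<4/\eps$. Minimality of $Y$ implies that the last vertex added produced a $+1$ step, so it is not the first leaf of a newly touched star. Therefore every hub in $H(Y)$ except possibly the first (partially covered) star's hub is accompanied by at least one $+1$ step beyond its zero step. This gives $|H(Y)|\le e(|Y|)+1=\deficit(Z)+1\le|Z|+1$. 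Hence $|Z\cup H(Y)|\le 2|Z|+1\le 6/\eps+1<7/\eps$. Finally, $|Y|\le|Y\cup H(Z)|$ holds trivially.

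\textbf{Main obstacle.} The delicate point is the careful counting in Step 3, together with the upper end of Step 2. The first star in the block may start mid-list (the offset $r$ can land inside a star), and the block may wrap cyclically back into that same star. In the wrap case the star's hub is counted only once in $H(Y)$, so excess gains extra $+1$ steps there and $e$ is still monotone. I would handle this wrap case explicitly rather than rely on the generic argument.
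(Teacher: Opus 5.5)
Your argument is correct and is essentially the paper's. Both rest on the fact that every right-star entered after the first contributes a zero step (its hub) immediately followed by at least one $+1$ step. Hence the shortest block has $|H(Y)|\le\deficit(Z)+1$ and $|Y|\le 2\deficit(Z)+1$, which gives $|Z\cup H(Y)|\le 2|Z|+1<7/\eps$. Your existence step, via the full-cycle excess $e(b)\ge b/2\ge 3/\eps$, is just a variant of the paper's count $1+2\deficit(Z)<6/\eps\le b$.

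A few minor fixes:
\begin{itemize}
\item In Step 2, drop the inequality ``$\ge 6/\eps-|L\setminus H_L|/2$'' and the appeal to \eqref{eq:budget-balance}. They should read simply $|L\setminus H_L|-|H_R|\ge|L\setminus H_L|/2$, since each right-star has at least two leaves.
\item The ``left-stars/single-edge'' case in Step 1 is vacuous. Single edges were set aside, and every left leaf is a leaf of a right-star.
\end{itemize}
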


\begin{proof}
  Let $w_0 \in H_R$ be the hub of the right-star corresponding to leaf
  $y_r \in S(w_0)$, and let $S'(w_0)$ be the subsequence of $S(w_0)$
  starting at index $r$. Then, $w_0$'s star can contribute any value
  between $0$ and $|S'(w_0)| - 1$ to the excess of $Y$ by taking a
  consecutive block of its leaves. If $\deficit(Z) \leq |S'(w_0)| - 1$,
  we are done. If the next right hubs are ordered $w_1, w_2, \ldots$,
  then each such star can contribute any value $v$ between $0$ and
  $|S(w_i)| - 1 \geq 1$ to the excess of $Y$, while contributing $v+1$
  leaves. In the worst case, $|S'(w_0)| = 1$, and each subsequent
  $|S(w_i)| = 2$, which would cause us to need $1 + 2\deficit(Z)$ new
  leaves before reaching $\excess(Y) = \deficit(Z)$. By construction, each
  $Z \in \Lambda_R$ has $|Z| < 3/\eps$, so
  $\deficit(Z) \leq |Z| < 3/\eps$, and hence
  $1 + 2\deficit(Z) < 6/\eps$. Since we have at least $6/\eps$ left
  leaves, this ensures such a $Y$ exists. To prove the second part of
  the claim, note that
  $|Y \cup H(Z)| \leq 1+ 2\deficit(Z) + |H(Z)| \leq 2|Z|+1 < 7/\eps \leq 12/\eps$.
\end{proof}

\begin{claim}
  \label{cla:loops}
  The index $r$ wraps back to $r_0$ over the course of the swap
  creation, and hence every leaf on the left belongs to a swap.
\end{claim}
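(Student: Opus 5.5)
The plan is a counting argument. We show that the blocks $Y$ chosen by \Cref{alg:make-swaps} have total length at least $b$. Since they are laid down consecutively along the cyclic sequence $y_0,\ldots,y_{b-1}$, the pointer $r$ then passes $r_0$ again, and so every left leaf is covered by at least one block. By \Cref{cla:Y-exists}, each block exists and starts exactly where the previous one ended. The blocks are therefore contiguous arcs of the cycle, and their union is the arc from $r_0$ of length $\sum_Z |Y_Z|$ (taken mod $b$ with wraparound). It thus suffices to prove $\sum_{Z\in\Lambda_R}|Y_Z|\geq b$.

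Since $|Y|\geq \excess(Y)$, we have
\[
\sum_Z |Y_Z| \geq \sum_Z \excess(Y_Z)=\sum_Z \deficit(Z)=\sum_Z |Z|-\sum_Z|H(Z)|.
\]
The sets $Z$ partition the right leaves, so $\sum_Z|Z|=|R\setminus H_R|$. Each small left hub has all its leaves inside a single $Z$, because the construction of $\Lambda_R$ keeps stars together; the one edge case where sets are split also preserves stars. Big hubs lie in no $H(Z)$. Hence
\[
\sum_Z |H(Z)| = |H_L| - h_{\mathrm{big}},
\]
where $h_{\mathrm{big}}$ denotes the number of big left hubs. Using $|L|=|R|$, that is, $|H_L|+b=|H_R|+|R\setminus H_R|$, this gives
\[
\sum_Z\deficit(Z)=b+|H_R|-|H_L|+|H_L|-h_{\mathrm{big}}=b+|H_R|-h_{\mathrm{big}}.
\]

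The main obstacle is that this bound is not at least $b$ if $h_{\mathrm{big}}>|H_R|$. So the crude inequality $|Y|\geq\excess(Y)$ must be improved to the exact identity $|Y|=\excess(Y)+|H(Y)|$. We then need to count how many right hubs are charged across blocks. As $Y$ sweeps through a right-star, its hub is counted once per block that touches the star. The blocks cover a contiguous arc, so every right star fully traversed is charged at least once, which gives
\[
\sum_Z|Y_Z|=\sum_Z\deficit(Z)+\sum_Z|H(Y_Z)|.
\]
To finish, I would argue directly by contradiction. Suppose the total length were less than $b$. Then the blocks never wrap, and each right star is touched by at most... this route gets messy. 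The cleaner route is to compare totals. Suppose the arc covered is of length $m<b$. Then $\sum_Z \excess(Y_Z)\leq$ the excess of the arc, and the arc's excess is at most $(b-\text{untouched leaves})-(\text{stars touched})$. Since each star has at least two leaves, this is less than $b-|H_R|$ minus the untouched contributions, which is at most $b-|H_R|$. On the other hand, $\sum_Z\deficit(Z)=b+|H_R|-h_{\mathrm{big}}$.

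The remaining step is to show $b+|H_R|-h_{\mathrm{big}}\geq b-|H_R|+(\text{slack})$. I would attempt this using $h_{\mathrm{big}}\leq \eps|L|/2$ and the fact that each $Z$ from a big star has size at least $1/\eps$, so that deficits dominate. If the inequality is tight at the boundary, I would use the excess-additivity of consecutive blocks (the excess over an arc is subadditive only through shared hubs) to conclude that the pointer reaches position $r_0+b$. This accounting of shared right hubs at block boundaries is where I expect the real difficulty to lie.
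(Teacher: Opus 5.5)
Your strategy is the same as the paper's. The blocks tile an arc starting at $r_0$, and each block has $\excess(Y_Z)=\deficit(Z)$. So it suffices to show that the total deficit is at least the largest total excess a sweep stopping short of a full cycle can produce. Your bound on the latter is essentially correct. If $u\geq 1$ left leaves are uncovered, then, since every right-star has at least two leaves, at least $|H_R|-u/2$ hubs are charged. Hence the blocks' total excess is at most $b-u-(|H_R|-u/2)<b-|H_R|=\sum_{w\in H_R}(|S(w)|-1)$.

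The gap is in your computation of the total deficit, where both signs are flipped. From $|H_L|+b=|H_R|+|R\setminus H_R|$ one gets $|R\setminus H_R|=b-|H_R|+|H_L|$. Subtracting $\sum_Z|H(Z)|=|H_L|-h_{\mathrm{big}}$ then gives
\[
\sum_{Z\in\Lambda_R}\deficit(Z)=b-|H_R|+h_{\mathrm{big}}=\sum_{w\in H_R}\bigl(|S(w)|-1\bigr)+h_{\mathrm{big}},
\]
not $b+|H_R|-h_{\mathrm{big}}$.

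As a result, the ``remaining step'' you isolate, $2|H_R|\geq h_{\mathrm{big}}+\text{slack}$, is false in general. Take one right-star with $1+6(2/\eps-1)$ leaves and six big left-stars with $2/\eps$ leaves each. Then $|L|=|R|=1+12/\eps$, yet $|H_R|=1<6=h_{\mathrm{big}}$. No argument via $h_{\mathrm{big}}\leq\eps|L|/2$ can establish it, so the proof as written is incomplete.

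With the correct value the proof closes in one line, and this is precisely the paper's inequality $(\star)$. Small left hubs each lie in exactly one $H(Z)$ and big ones in none, so the total deficit is at least $\sum_{v\in H_L}(|S(v)|-1)=\sum_{w\in H_R}(|S(w)|-1)$. A sweep that never returns to $r_0$ has total excess strictly below this, contradicting that total excess equals total deficit.

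Your ``main obstacle'' paragraph is inverted for the same reason. The crude bound $|Y|\geq\excess(Y)$ falls short of $b$ exactly when $|H_R|>h_{\mathrm{big}}$, not when $h_{\mathrm{big}}>|H_R|$. That is why the hub accounting is needed at all.
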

\begin{proof}
  By definition of $H(Z)$, we know that the sum of the deficits in
  $\Lambda_R$ is 
  \begin{equation}
    \sum_{Z \in \Lambda_R} \deficit(Z) = \sum_{Z \in \Lambda_R} (|Z| - |H(Z)|) \stackrel{(\star)}{\geq} \sum_{v \in H_L} (|S(v)| - 1)
    =  \sum_{w \in H_R} (|S(w)| - 1). \label{eq:deficit}
  \end{equation}
  To see inequality $(\star)$, use that each left-star either is small
  and includes its hub in one $H(Z)$, or else it is big and its hub
  never appears in any $H(Z)$. On the other hand,
  $\sum_{w \in H_R} (|S(w)| - 1)$ is the maximum possible total excess
  of the $Y$ sets created if $r$ iterates all the way
  back to $r_0$, and covering a smaller region of the cycle would
  give a smaller total excess. (Indeed, this maximum contribution is
  also only achievable if each right-hub is opened exactly once; each
  right hub needs to be opened at least once to ensure
  \Cref{defn:span-swaps}\ref{cond:alive-repped}.) Since the sum of the
  excesses must equal the sum of the deficits, this means the counter
  must wrap around.
\end{proof}

We claim that the random set $\calS$ is a spanning collection of swaps
satisfying \Cref{lem:graph-swaps}. For any $(L_i, R_i) \in \calS$, let
$L_i = Y_i \cup H(Z_i)$, and $R_i = H(Y_i) \cup Z_i$. To verify the
conditions of \Cref{defn:span-swaps}:
\begin{enumerate}[nosep, label = (\alph*)]
\item $|L_i| - |R_i| = \excess(Y_i) - \deficit(Z_i) = 0$ by construction;
  \Cref{cla:Y-exists} shows that $|L_i| \leq 12/\eps$.
\item Our construction ensures that $N(H(Z_i)) \sse Z_i$, and
  $N(Y_i) = H(Y_i)$, which ensures that $N(L_i) \sse R_i$; this is
  exactly why we did not include hubs of large left-stars in $H(Z_i)$.
\item Each right leaf is contained in some $Z \in \Lambda_R$; the hub
  of a small left star appears in $H(Z)$ when $Z$ contains its
  leaves. \Cref{cla:loops} shows that each leaf on the left is matched
  to some $Z$ during the algorithm, and hence belongs to some $Y$;
  this means its hub on the right appears in $H(Y)$.
\item Each right leaf appears in a unique $Z \in \Lambda_R$;
  if it belongs to a small left-star, then the left-hub belongs to
  only that $H(Z)$. Large left-hubs do not appear in any swaps. 
\end{enumerate}

Finally, we prove that each leaf in $L$ belongs to at most
$(1 + 8\eps)$ swaps in expectation. The idea is natural: using a
complementary argument to that in \Cref{cla:loops}, we show that the
counter $r$ does not go much further after completing a full cycle.
Let $\Delta := |R \setminus H_R| - |H_L| = |L \setminus H_L| - |H_R|$.
Since for all hubs $v \in H_L$ of a small left-star,
$v \in H(Z)$ for some $Z \in \Lambda_R$ by construction, we can
bound the total deficit $D_{\mathrm{total}} := \sum_{Z \in \Lambda_R} \deficit(Z)$ \emph{from above} as
follows:
\begin{align}
  \sum_{Z \in \Lambda_R} \deficit(Z)
  &= \sum_{v \in H_L} (|S(v)| - 1)  + |H_{L, \mathrm{big}}|
  \leq \Delta + (\eps/2)|R \setminus H_R| 
  \leq (1 + \eps)\Delta. \label{eq:deficit2}      
\end{align}
The first inequality uses that each big left-star has at least $2/\eps$ leaves, so there are at most $(\eps/2)|R \setminus H_R|$ big left-hubs forming the set $H_{L, \mathrm{big}}$; the second inequality uses that $|R \setminus H_R| \geq 2|H_L|$, which implies $|R \setminus H_R| \leq 2\Delta$.

We now give a \emph{lower bound} on the total excess---i.e., the sum
of $\excess(Y_i)$---over the blocks $Y_1, \ldots, Y_m$ formed during
the first cycle of the index $r$ around the $b = |L \setminus H_L|$
left-leaves. For this calculation, if the last block crosses the end
of the first cycle, truncate $Y_m$ at that point.
By construction, each $Z \in \Lambda_R$ has $|Z| \geq 1/\eps$;
since each small left-star has at least two leaves,
$|H(Z)| \leq |Z|/2$, and hence every deficit satisfies
$\deficit(Z) \geq 1/(2\eps)$.
Since the sum of all deficits is
$D_{\mathrm{total}} \leq (1+\eps)\Delta$, the total number of sets
in $\Lambda_R$ (and in particular the number of blocks $m$ in the
first cycle) is at most
\begin{equation}
  m \leq |\Lambda_R| \leq \frac{D_{\mathrm{total}}}{1/(2\eps)}
  \leq 2\eps(1+\eps)\Delta.
\end{equation}
As $r$ completes a single cycle, the blocks $Y_1, \ldots, Y_m$
cover all $|L \setminus H_L|$ left-leaves exactly once. Moreover,
since each right-star $w \in H_R$ appears as a contiguous interval
of leaves in the cyclic ordering, a right-hub $w$ appears in multiple
$H(Y_i)$ sets only when a boundary between consecutive blocks, or
the starting offset, falls strictly inside $S(w)$. There are at most
$m$ such boundaries in total, and each boundary splits at most one
right-star, contributing at most $1$ extra hub appearance across all
$m$ blocks. Therefore, $\sum_{i=1}^m |H(Y_i)| \leq |H_R| + m$,
and the total excess generated in the first cycle is at least
\begin{align}
  \sum_{i=1}^m \excess(Y_i)
  = \sum_{i=1}^m |Y_i| - \sum_{i=1}^m |H(Y_i)|
  &\geq |L \setminus H_L| - (|H_R| + m) \nonumber \\
  &= \Delta - m
  \geq (1 - 2\eps - 2\eps^2)\Delta. \label{eq:excess-bd}
\end{align}
Let $\Lambda_{\mathrm{rem}} \subseteq \Lambda_R$ denote the remaining
unprocessed sets after the first cycle of $r$, excluding the set
corresponding to the possibly unfinished block. Since
$\excess(Y_i) = \deficit(Z_i)$ for each completed block, the remaining
deficit to be covered, including the unfinished part of the last
block, is at most
\begin{equation}
  D_{\mathrm{total}}- \sum_{i=1}^m \excess(Y_i)
  \leq (1+\eps)\Delta - (1 - 2\eps - 2\eps^2)\Delta 
  = (3\eps + 2\eps^2)\Delta.
\end{equation}
By \Cref{cla:Y-exists}, each remaining block $Y_Z$ requires at most
$|Y_Z| \leq 1 + 2\deficit(Z)$ leaves. The same bound holds for the
unfinished block, with its deficit replaced by its remaining deficit:
each newly encountered right-star except possibly the last contributes
at least two leaves, and a hub already encountered in its prefix
contributes no new hub appearance within that block.
Since $\deficit(Z) \geq 1/(2\eps)$ for each
$Z \in \Lambda_{\mathrm{rem}}$, we have
\[
  |\Lambda_{\mathrm{rem}}|
  \leq 2\eps\left(D_{\mathrm{total}}
    - \sum_{i=1}^m \excess(Y_i)\right).
\]
Summing over the remaining blocks and including the possibly
unfinished block, the total number of left-leaf visits after the
first cycle is at most
\begin{align}
  |\Lambda_{\mathrm{rem}}|
  + 2\left(D_{\mathrm{total}} - \sum_{i=1}^m \excess(Y_i)\right) + 1
  &\leq (2 + 2\eps)\left(D_{\mathrm{total}}
    - \sum_{i=1}^m \excess(Y_i)\right) + 1 \nonumber \\
  &\leq 2(1+\eps)(3\eps + 2\eps^2)\Delta + 1 \\
  &\leq 8\eps|L \setminus H_L|,
\end{align}
where the last inequality uses $\eps \leq 1/200$,
$\Delta = |L \setminus H_L| - |H_R|$, and
$|L \setminus H_L| \geq 6/\eps$.
In conclusion, all visits after the first cycle lie among the first
$\lfloor 8\eps b\rfloor$ leaves starting at $r_0$. This is of course at most $b$, so no leaves appear more than twice. Moreover, the probability a leaf is covered twice is at most the probability it appears in the first $8\eps$ fraction of leaves in the ordering. This probability is in turn at most $8\eps$ since $r_0$ is a uniform random starting offset in the cyclic ordering of left-leaves. This implies that each left leaf belongs to at most $(1 + 8\eps)$ swaps in expectation
for $(L_i, R_i) \in \calS$, completing the proof of~\Cref{lem:graph-swaps}.


%% file: app-local-search.tex
\section{Proofs for the Local-Search Guarantee}
\label{sec:app-proofs-local-search}

We now prove \Cref{lem:new-maps-cost} showing the existence of a nice
cross-free pair:

\NicePair*

\subsection{Defining Maps $\mu_\AA, \mu_\BB$}

We start by recalling from
\Cref{thm:structural}\ref{cond:struct-subs}, parts
\ref{cond:struct-substitution-weak} and \ref{item:unique-preimage},
which ensure that any $c^* \in \OPT_R$ satisfies
$\sigma(c^*) \in \Gamma(c^*)$---i.e., its substitute belongs among its
satellites---and also the preimage of $\sigma(c^*)$ under $\sigma$ is exactly $c^*$. 
Using these properties, we define a ``reverse'' map $h: \Gamma(\OPT_R) \to \OPT_R$:
\begin{itemize}[nosep]
\item If $c = \sigma(c^*)$ for some $c^* \in \OPT_R$, then $h(c) = c^*$.
\item Else, let $h(c)$ be any $c^*$ such that $c \in \Gamma(c^*)$.
\end{itemize}
Since a local search center can be a satellite for several centers in
$\OPT_R$, the map $h$ allows us to associate each satellite to an
optimal center in some canonical way. Our assignments will also carefully treat clients in $\lsd$ belonging to some satellite cluster; define \margindef{$D_{\Gamma}$}
\[
D_\Gamma:=\{p \in \lsd:\pi_\LS(p)\in\Gamma(\OPT_R)\} \subseteq \lsd.
\]
We can now define assignment maps $\mu_{\BB}$ and $\mu_{\AA}$. 


\textbf{Map $\mu_\BB$:} We describe $\mu_{\BB}$ in two parts: assignments of clients in $\lsd$ and assignments of clients outside of $\lsd$.
\begin{enumerate}[label=($\BB$\arabic*)]
\item Reassign
the clients $p \in D_\Gamma$ as follows: For each center $c \in \Gamma(\OPT_R)$, and for any client $p \in \lsd$
  with $\pi_\LS(p) = c$, define $\mu_\BB(p) := h(c) \in \OPT_R$.
\item For clients $p \in \lsd \setminus D_{\Gamma}$, maintain
  $\mu_\BB(p) = \pi_\BB(p)$, the nearest-center mapping for $\BB$. (Notice that, for these clients, $\mu_{\BB}(p) = \pi_{\OPT}(p)$.)
\item For clients $p \in D \setminus \lsd$, assign the client to $\BB$ arbitrarily. The assignment will not be relevant.
\end{enumerate}
In words, the centers in $\OPT_R$ ``absorb'' clients in $D_\Gamma$ in their
satellites' clusters; if a local search center is a satellite for
multiple optimal centers, it chooses one according to the map $h$.
Let $\calB$ be the clustering associated with $(\BB, \mu_\BB)$.

\textbf{Map $\mu_\AA$:} The approach to define $\mu_\AA$ is similar to
that of the standard local search argument from \Cref{cor:3-approx},
but with some additional care. Since each center $c^* \in \OPT_R$
satisfies $h(\sigma(c^*)) = c^*$, its cluster in the clustering
$\calB$ has absorbed the $D_0$-clients from cluster $C_{\sigma(c^*)}$---and
perhaps others---into its cluster in $\calO$. It may also have lost
clients due to other satellites intersecting $C^*$ but being absorbed
elsewhere. The clusters centered at $\OPT_{\Lambda}^{\mathrm{early}}$
only lose clients in $\lsd$.

We now define $\mu_\AA$ as follows:
\begin{enumerate}[label=($\AA$\arabic*)]

\item For a center $c^* \in \OPT_{\Lambda}^{\mathrm{early}}$ with corresponding cluster $C^* \in \calO$, let center $c$ be the ``non-satellite''
  local search center closest to $c^*$ in
  $\LS \setminus \Gamma(\OPT_R)$. For all points $p$ in 
$C^* \setminus D_{\Gamma}$, define $\mu_\AA(p) = c$.

\item For center $c^* \in \OPT_R$ with clusters $B_{c^*} \in \calB$ and $C^* \in \calO$, do
  the following: For each client $p$ in $B_{c^*} \cap D_\Gamma$---i.e., a
  client in $\lsd$ from a satellite cluster absorbed by $B_{c^*}$---maintain
  $\mu_\AA(p) = \pi_\LS(p)$. For any other client $p$ in
  $C^* \setminus D_{\Gamma}$, define $\mu_\AA(p) = \sigma(c^*)$.
\item For $c^* \in \OPT_S$ with $C^* \in \calO$, for all $p \in C^*$ define $\mu_{\AA}(p) = \sigma(c^*)$.
\item For the remaining centers $c^* \in \OPT \setminus (\OPT_{\Lambda}^{\mathrm{early}} \cup \OPT_{R} \cup \OPT_S) = \OPT_P \cup \OPT_{\Lambda}^{\mathrm{late}}$ with cluster $C^* \in \calO$, let $c$ be the 
  closest local search center to $c^*$. For all $p \in C^*$, define $\mu_{\AA}(p) = c$.
\end{enumerate}
This defines the clustering $\calA$; the cluster around center
$c \in \LS$ is denoted $A_c$.

\begin{figure}[t]
  \centering
  \begin{subfigure}[t]{.45\textwidth} 
    \centering \includegraphics[width=3in,page=1]{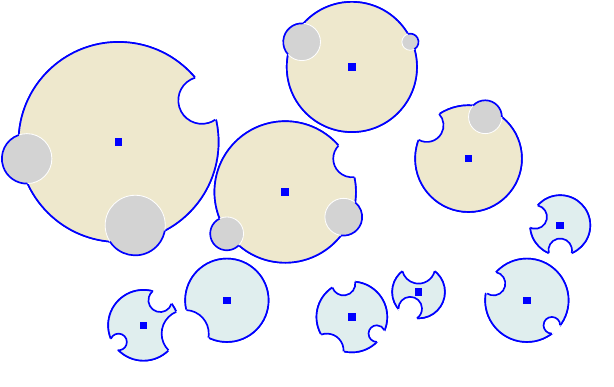}
  \end{subfigure}
  \qquad
  \begin{subfigure}[t]{.45\textwidth} 
    \centering \includegraphics[width=3in,page=2]{images/mua-mub}
  \end{subfigure}
  \caption{Left: the map $\mu_\BB$ on clients in $D_0$. The top four clusters correspond
    to $\OPT_R$ and absorb $D_0$-clients from the satellites (with the
    absorbed local search clusters shown in grey), whereas the rest of
    the clusters only lose clients. Right: the map $\mu_\AA$ on clients in $D_0$. Each distinct color is a cluster of $\calA$; the clusters of
    $\calB$ corresponding to $\OPT_R$ dominate the clusters in
    $\calA$, whereas clusters of $\calA$ are dominating for the remainder centered in $\OPT_{\Lambda}^{\mathrm{early}}$.}
  \label{fig:mua-mub}
\end{figure}

\begin{claim}
  $(\calA, \calB)$ is cross-free with respect to $\lsd$ and $|\BB| + |\mu_{\AA}(D \setminus \lsd)| \leq k$.
\end{claim}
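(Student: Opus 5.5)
The claim has two parts. The cardinality bound is pure bookkeeping on the images of $\mu_\AA$. Cross-freeness follows from a case analysis on the center $c$ of a cluster $A_c\in\calA'$, according to whether $c$ is a satellite and whether it is a substitute $\sigma(c^*)$ for some $c^*\in\OPT_R$.

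\textbf{Cardinality.} Since $\lsd = D^{\early}_\Lambda\cup D_R$, we have $D\setminus\lsd = D_S\cup D_P\cup D^{\late}_\Lambda$, and every client there is assigned by ($\AA$3) or ($\AA$4).
\begin{itemize}[nosep]
\item Rule ($\AA$3) sends the clients of each $c^*\in\OPT_S$ to $\sigma(c^*)$, which contributes at most $|\OPT_S|$ centers.
\item Rule ($\AA$4) sends each cluster of $\OPT_P\cup\olamlate$ to a single center, which contributes at most $|\OPT_P|+|\olamlate|$ centers.
\end{itemize}
Hence $|\mu_\AA(D\setminus\lsd)|\le |\OPT_S|+|\OPT_P|+|\olamlate|$. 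Adding $|\BB|=|\olamearly|+|\OPT_R|$ gives at most $|\OPT|=k$.

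\textbf{Preliminary facts.} First, every client of $D_\Gamma$ satisfies $\mu_\BB(p)=h(\pi_\LS(p))\in\OPT_R$ by ($\BB$1). Such a client is handled by ($\AA$2) inside $B_{h(\pi_\LS(p))}$, so $\mu_\AA(p)=\pi_\LS(p)$. Rule ($\AA$1) only touches early clients outside $D_\Gamma$; thus ($\AA$1) and ($\AA$2) together assign every client of $\lsd$ exactly once.

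Second, consider $A_c\in\calA'$, so $A_c\subseteq\lsd$. The clients of $A_c$ are then exactly the $\lsd$-clients that $\mu_\AA$ sends to $c$. Moreover, $c$ is not the image of any client assigned by ($\AA$3) or ($\AA$4).

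Third, it suffices to show that each $A_c\in\calA'$ either is contained in a single $B_{c^*}$, or is a union of full sets $B_{c^*}\cap\lsd$. Restricting the clusters of $\calB$ to $\lsdres$ preserves containment and disjointness, so either conclusion makes $A_c$ nested with or disjoint from every cluster of $\calB'$.

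\textbf{Case analysis for cross-freeness.}

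\emph{Case 1: $c\notin\Gamma(\OPT_R)$.} Then only ($\AA$1) can send $\lsd$-clients to $c$. So $A_c$ is a union of sets $C^*\setminus D_\Gamma$ over those $c^*\in\olamearly$ whose nearest non-satellite center is $c$. For such $c^*$, rules ($\BB$1)–($\BB$2) give $B_{c^*}\cap\lsd = C^*\setminus D_\Gamma$, because $\mu_\BB=\pi_\OPT$ off $D_\Gamma$ and all of $C^*\cap D_\Gamma$ is moved to $\OPT_R$. Thus $A_c$ is a disjoint union of $\calB'$-clusters, i.e.\ it is dominating.

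\emph{Case 2: $c\in\Gamma(\OPT_R)$ but $c\notin\sigma(\OPT_R)$.} Rule ($\AA$2) only sends a client to $c$ when its local-search center is $c$, so $A_c\subseteq\{p\in\lsd:\pi_\LS(p)=c\}$. All of these clients have $\mu_\BB=h(c)$, so $A_c\subseteq B_{h(c)}$.

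\emph{Case 3: $c=\sigma(c^*)$ with $c^*\in\OPT_R$.} By the unique-preimage property \Cref{thm:structural}\ref{cond:struct-subs}\ref{item:unique-preimage}, no other center of $\OPT_R\cup\OPT_S$ maps to $c$. Hence $A_c$ consists of two kinds of clients: $D_\Gamma$-clients with $\pi_\LS=c$, which have $\mu_\BB=h(c)=c^*$; and clients of $C^*\setminus D_\Gamma$, which have $\mu_\BB=\pi_\OPT=c^*$. So $A_c\subseteq B_{c^*}$.

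\textbf{Main obstacle.} The delicate part is this bookkeeping. One must check that a satellite center never receives clients of early clusters, which holds because ($\AA$1) explicitly uses non-satellite centers. One must also check that uniqueness of $\sigma$-preimages prevents two $\OPT_R$ clusters from sharing a substitute. Finally, a center that also serves $D\setminus\lsd$ clients simply drops out of $\calA'$, so it causes no conflict.
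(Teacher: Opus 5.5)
Your proof is correct and follows essentially the same argument as the paper. The cardinality count via $\mu_\AA(D\setminus\lsd)=\sigma(\OPT_S)\cup\pi_\LS(\OPT_P\cup\olamlate)$ is identical, and the cross-freeness check rests on the same facts: ($\AA$1) avoids satellites, $D_\Gamma$-clients keep $\pi_\LS$ under $\mu_\AA$ and are sent to $h(\pi_\LS(p))$ under $\mu_\BB$, and $h(\sigma(c^*))=c^*$ by uniqueness of preimages. The only difference is organizational: you split cases by the center $c$ of the $\calA'$-cluster, whereas the paper splits by whether the $\calB'$-cluster's center lies in $\olamearly$ or $\OPT_R$, so your version makes the dominating/dominated structure slightly more explicit.
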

\begin{proof}
First we show that $|\BB| + |\mu_{\AA}(D \setminus \lsd)| \leq k$. For this, note that 
\[
\mu_{\AA}(D \setminus \lsd) = \pi_{\LS}(\OPT_P \cup \OPT_{\Lambda}^{\mathrm{late}}) \cup \sigma(\OPT_S).
\]
So, $|\BB| + |\mu_{\AA}(D \setminus \lsd)| \leq |\OPT_R| + |\OPT_{\Lambda}^{\mathrm{early}}| + |\OPT_P| + |\OPT_{\Lambda}^{\mathrm{late}}| + |\OPT_S| = |\OPT| \leq k$, as needed.

Now, let $\calA' := \{A \in \calA \mid A \subseteq \lsd\}$, $\widehat{D}_0$ the set of points in clusters in $\calA'$, and $\calB' := \{B \cap \widehat{D}_0 \mid B \in \calB\}$, as in \Cref{defn:gen-cross-free}. We need to show that $\calA'$ and $\calB'$ are cross-free. Let $A_c \in \calA'$ and $B_{c^*}' \in \calB'$ be non-empty, with $C$ and $C^*$ the corresponding clusters centered at $c$ and $c^*$ in $\calL$ and $\calO$. 

First, suppose $c^* \in \OPT_{\Lambda}^{\mathrm{early}}$. The cluster $B_{c^*}'$ is a subset of $C^* \setminus D_{\Gamma}$ by construction of $\mu_{\BB}$. ($C^* \setminus D_{\Gamma}$ is the restriction of $B_{c^*} \in \calB$ onto $\lsd$.) The clients in $C^* \setminus D_{\Gamma}$ are all sent to the same center $c' = \pi_{\LS \setminus \Gamma(\OPT_R)}(c^*)$ under $\mu_{\AA}$. Then if $c = c'$, $B_{c^*}' \subseteq A_c$ and otherwise $B_{c^*}' \cap A_c = \emptyset$. 

Otherwise, $c^* \in \OPT_{R}$. In this case, the cluster $B_{c^*}'$ is a subset of $C^* \cup D_{\Gamma}$. The clients in $B_{c^*}'$ are assigned to centers $c \in \Gamma(c^*)$ with $h(c) = c^*$ under $\mu_{\AA}$, so $B_{c^*}'$ and $A_c$ are trivially cross-free unless $c \in \Gamma(c^*)$ with $h(c) = c^*$. But, if $c \in \Gamma(c^*)$ with $h(c) = c^*$, then 
\[
A_c \subseteq
\bigl((C^* \setminus D_{\Gamma}) \cup (C \cap \lsd)\bigr)
\cap \widehat{D}_0
\subseteq B_{c^*}'
\]
by construction of $\mu_{\AA}$. Here we are relying on the fact that
$A_c \subseteq \widehat{D}_0$, since $A_c \in \calA'$ is non-empty. This ensures cross-freeness, completing the proof.
\end{proof}

\subsection{Bounding Costs}

\begin{lemma}[Cost for Clustering $\calB$] \label{lem:muB bd}
    $\cost_{\mu_{\BB}}(\lsd) \leq \cost_{\OPT}(\lsd) + O(\eps) \sopt$.
\end{lemma}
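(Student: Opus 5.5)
The idea is to compare $\mu_\BB$ with $\pi_\OPT$ client by client on $\lsd$. Clients in $\lsd\setminus D_\Gamma$ are sent by rule ($\BB$2) to $\pi_\BB(p)=\pi_\OPT(p)$, so they cost exactly $\cost_\OPT$. It therefore suffices to show
\[
\sum_{p\in D_\Gamma} d(p,h(\pi_\LS(p))) \leq \cost_\OPT(D_\Gamma)+O(\eps)\sopt.
\]
Since distances are nonnegative, it is enough to bound the left-hand side by $O(\eps)\sopt$ plus $\cost_\OPT(D_\Gamma)$, and we will in fact get the crude bound $\cost_\OPT(D_\Gamma)+O(\eps)\sopt$ directly.

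Fix $p\in D_\Gamma$ with $c:=\pi_\LS(p)\in\Gamma(\OPT_R)$ and $c^*:=h(c)\in\OPT_R$, so $c\in\Gamma(c^*)$. The triangle inequality gives
\[
d(p,c^*)\leq d(p,c)+d(c,c^*)\leq d(p,\LS)+(1+\newkappa+2\eps)\per(C^*),
\]
by the definition of satellites in \eqref{eq:Gamma}. Summing the first term over $D_\Gamma$ gives at most $\sum_{c\in\Gamma(\OPT_R)}\cost_\LS(C)<\eps^3\sopt$, by \Cref{thm:struct2}\ref{cond:struct2-props-remaining}\ref{cond:struct2-hidden-satellites}.

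The second term, $\sum_{p\in D_\Gamma}(1+\newkappa+2\eps)\per(h(\pi_\LS(p)))$, is the main obstacle. It is a count of clients in hidden satellite clusters times a percentile radius, and the satellite clusters may be large in cardinality even though they are cheap in $\LS$. I would handle it by exploiting cheapness. Each $p\in C$, with $c\in\Gamma(c^*)$, is within $d(p,\LS)$ of $c$. The witness property gives some $q\in C^*$ with $d(q,c)\leq\eps\per(C^*)$. Hence
\[
d(p,\pi_\OPT(p))\leq d(p,c^*)\leq d(p,\LS)+(2+\newkappa)\per(C^*).
\]
This upper bound does not help; what we need is a lower bound on $d(p,\pi_\OPT(p))$ in terms of $\per(C^*)$. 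I would instead bound $d(p,c^*)-d(p,\pi_\OPT(p))$ using the ``far from $\LS$'' property \Cref{thm:struct2}\ref{cond:struct2-not-close-new}.

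Let $o:=\pi_\OPT(p)$. If $o\in\OPT\setminus\OPT_S$, then
\[
d(p,o)\geq d(o,c)-d(p,c) > \eta\avg_\OPT(O)-d(p,\LS).
\]
This lower-bounds $d(p,o)$ but still does not make the difference small. So the honest plan is to charge the whole $\per$ term against $\cost_\OPT$ of these clients up to a factor, together with the counting bound
\[
|D_\Gamma\cap C|\,\per(C^*)\lesssim \cost_\OPT\text{-mass},
\]
where the clients near $c$ have $\OPT$ distance $\gtrsim \eta\per$.

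If this fails, the fallback is to use the freedom in $h$ and in the definition of $\lsd=D_\Lambda^{\early}\cup D_R$. Clients of $D_\Gamma$ lying in $D_R$ whose own optimal center $o$ has $c\in\Gamma(o)$ can be chosen by $h$ consistently. The remaining clients are then controlled via the hidden-core bounds \Cref{thm:struct2}\ref{cond:struct2-props-remaining}\ref{cond:struct2-hidden-core-R} and \Cref{lem:few-not-in-sync}-type counting, which bound the total number of clients involved times $\per$ by $O(\eps)\sopt$.
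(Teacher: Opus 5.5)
Your reduction to $D_\Gamma$ is correct, but the step that actually proves the lemma is missing, and neither route you sketch can supply it. Once you bound $d(p,c^*)\le d(p,\LS)+(1+\newkappa+2\eps)\per(C^*)$, you have thrown away any comparison between $c^*=h(\pi_\LS(p))$ and $\pi_\OPT(p)$. Charging the $\per(C^*)$ term ``up to a factor'' to $\OPT$-distances of order $\eta\per(C^*)$ then gives at best $\sum_{p\in D_\Gamma}d(p,\mu_\BB(p))\le\frac{1+\newkappa+O(\eps)}{\eta-O(\eps)}\cost_\OPT(D_\Gamma)+O(\eps)\sopt$. You do not actually establish that lower bound; the one you derive involves $\avg_\OPT$ of $p$'s own cluster, not $\per(C^*)$. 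Even granting it, this is a multiplicative loss strictly larger than $1$, and that is fatal. Every $p\in\core_\kappa(C^*)$ with $c^*\in\OPT_R$ and $d(p,\LS)\le\eps\per(C^*)$ lies in $D_\Gamma$, because its $\LS$-center is a satellite of $c^*$ with witness $p$. So $\cost_\OPT(D_\Gamma)$ is essentially at least $\cost_\OPT(D_\kappa)$. That quantity may be a constant fraction of $\sopt$, and it is exactly where the nuanced local-search bound must lose nothing.

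The fallback does not help either. The hidden-core and hidden-satellite bounds control $\cost_\LS$, not sums of $\per(C^*)$ over clients, and those sums are comparable to $\cost_\OPT(D_\kappa)$. \Cref{lem:few-not-in-sync} counts clusters, not clients. And $h$ is a function of the satellite (forced on $\sigma(\OPT_R)$), so it cannot be tuned client by client.

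The missing idea is that the witness makes $c^*$ an almost-nearest optimal center to the satellite $c$ itself, up to an additive $2\eps\per(C^*)$. The far-from-$\LS$ property then makes that additive error relatively small. Take $q_w\in C^*$ with $d(q_w,c)\le\eps\per(C^*)$. Since $\pi_\OPT(q_w)=c^*$, for every $o\in\OPT$,
\[
d(c,o)\ \ge\ d(q_w,o)-\eps\per(C^*)\ \ge\ d(q_w,c^*)-\eps\per(C^*)\ \ge\ d(c,c^*)-2\eps\per(C^*).
\]
Hence for $p\in C\cap\lsd$,
\[
d(p,c^*)\le d(p,c)+d(c,\OPT)+2\eps\per(C^*)\le 2d(p,\LS)+d(p,\OPT)+2\eps\per(C^*).
\]
By \Cref{thm:struct2}\ref{cond:struct2-not-close-new} and \eqref{eq:per-bd}, $d(c,c^*)\ge\frac{\eta}{1+2\eps}\per(C^*)$. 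So $d(c,\OPT)\ge\per(C^*)/2$, and therefore $\per(C^*)\le 2\bigl(d(p,\LS)+d(p,\OPT)\bigr)$. This gives $d(p,\mu_\BB(p))-d(p,\OPT)\le 3d(p,\LS)+4\eps\,d(p,\OPT)$. Summing over $D_\Gamma$ and using \Cref{thm:struct2}\ref{cond:struct2-props-remaining}\ref{cond:struct2-hidden-satellites} yields at most $3\eps^3\sopt+4\eps\sopt$. The paper runs the same argument through the client $p_m$ closest to $c$, showing $d(p_m,c^*)\le(1+5\eps)\,d(p_m,\OPT)$.
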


\begin{proof}
  Since on $\lsd$, assignments $\pi_\OPT$ and $\mu_\BB$ differ only on the clients in
  $D_\Gamma$, consider cluster $C \in \calL$ with center
  $c \in \Gamma(\OPT_R)$; let $c^* = h(c)$, and let $C^*$ be its
  cluster in $\calO$. We now bound $d(p, c^{*}) - d(p,\OPT)$ for
  clients $p \in C \cap \lsd$.

\begin{figure}[t]
  \begin{center}
    \includegraphics[width=4in]{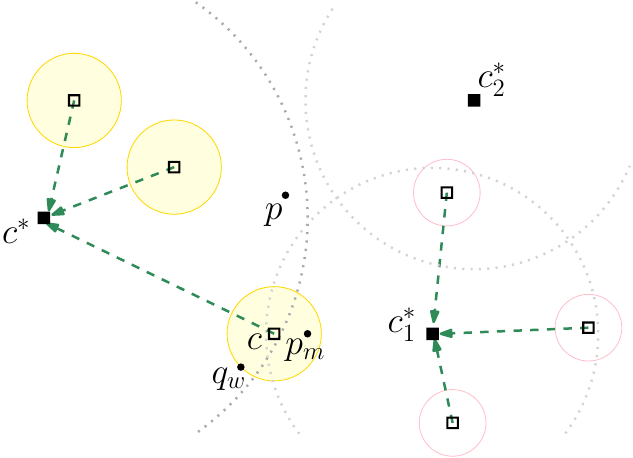}
  \end{center}
  \caption{Bounding the cost for $\mu_\BB$. The dotted balls around each
    center have radius $(1+\newkappa+2\eps) \per(C^*)$, and the green
    arrows indicate the mapping $h: \Gamma(\OPT_R) \to \OPT_R$. For
    the center $c^*$, the yellow balls around centers in $\Gamma(c^*)$
    have radius $\eps \per(C^*)$; the ball around center $c$ has a
    witness $q_w \in C^*$. }
  \label{fig:sats}
\end{figure}

  First, consider client $p_{m} \in D$ with minimum distance to center
  $c$. 
  Since $c$ is a satellite for $c^*$, there exists a witness client
  $q_w \in C^*$ such that $d(q_w, c) \leq \eps \per(C^*)$.  Now, using
  the fact that $p_m$ is the closest client to $c$, and hence
  $d(p_m,c) \leq d(q_w,c) \leq \eps \per(C^*)$, along with the
  triangle inequality,
  \begin{align}
    d(p_m, c^*) &\leq d(p_m,c)+d(c,q_w)+d(q_w,c^*) \implies d(p_m, c^*)
                  \leq d(q_w,c^*) +
    2\eps \per(C^*) \label{eq:5} \\
    d(q_w,c^*) &\leq d(q_w,c)+d(c,p_m)+d(p_m, \OPT) \implies d(p_m, \OPT)
               \geq d(q_w,c^*) - 2\eps \per(C^*). \label{eq:6}
  \end{align}

  Moreover, parts \ref{cond:struct2-not-close-new} and \ref{cond:struct2-props-remaining}\ref{cond:struct2-per-approx-avg}
   of \Cref{thm:struct2} imply that 
  $d(c^*, c) \geq \eta \avg_\OPT(C^*) \geq \frac{\eta}{1 + 2\eps}
  \per(C^*)$; since $d(q_w, c) \leq \eps \per(C^*)$, the triangle
  inequality implies that
  \[ d(q_w, c^*) \geq \Big(\frac{\eta}{1 + 2\eps} - \eps\Big)
    \per(C^*). \]
  Now using~(\ref{eq:5}) and (\ref{eq:6}), 
  \begin{equation}
    \frac{d(p_{m}, c^{*})}{d(p_{m}, \OPT)} \leq
    \frac{d(q_w,c^*) +
      2\eps \per(C^*)}{d(q_w,c^*) -
      2\eps \per(C^*)} \stackrel{(\star)}{\leq} 
    \frac{\eta +
      \eps(1+2\eps)}{\eta - 3\eps(1+2\eps)} \leq 1 + 5\eps, \label{eqtn:close-ratio} 
  \end{equation}
  where $(\star)$ follows by substituting the minimum possible value
  for $d(q_w,c^*)$ in the preceding expression; the last inequality
  uses that $\eps \leq \nf1{200}$.

  Now, to bound $d(p, c^{*}) - d(p, \OPT)$ for an arbitrary client
  $p \in C \cap \lsd$:
  \begin{align*}
    d(p, c^{*}) - d(p, \OPT)
    &\leq \big( d(p,c) + d(c,p_{m}) +
      d(p_{m}, c^{*}) \big) - \big( d(p_{m}, \OPT) -
      d(p_{m},c) - d(c,p)\big) \\
    &\leq 4\,d(p,c) + \big(d(p_{m}, c^{*}) - d(p_m, \OPT)\big) \leq
      4\,d(p,c) + 5\eps \cdot d(p_m, \OPT) \\
    &\leq (4+10\eps)\,d(p,c) + 5\eps \cdot d(p, \OPT),
  \end{align*}
  where we used the triangle inequality and $d(p_m,c) \leq d(p,c)$ for
  the first two inequalities, \eqref{eqtn:close-ratio} for the next
  inequality, and the triangle inequality to infer
  $d(p_m, \OPT) \leq d(p_m, c) + d(c,p) + d(p, \OPT) \leq 2d(p,c) +
  d(p, \OPT)$ for the last inequality. Summing this over all points in
  $D_\Gamma$,  and using $\eps \leq \nf1{200}$, 
  \begin{align}
    \sum_{p \in \lsd} \big( d(p, \mu_\BB(p)) - d(p, \pi_\OPT(p)) \big)
    &\leq \sum_{c \in \Gamma(\OPT_R)} \sum_{p
      \in C \cap \lsd} \big( 5\,d(p,\LS) +5\eps\, d(p, \OPT) \big).
  \end{align}
  Finally, by \Cref{thm:struct2}\ref{cond:struct2-props-remaining}\ref{cond:struct2-hidden-satellites}
the total cost in $\LS$ of these satellite clusters is small, bounding the first term and  
  completing the proof. 
\end{proof}

\begin{lemma}[Cost for Clustering $\calA$] \label{lem:muL cost bd}
We have
\begin{align*}
\cost(\mu_\AA) \leq{}& \cost_\LS(\lsd)
 +(1+\eta)\cost_\OPT(D_S)
 +2\cost_\OPT(D_\Lambda^{\early}) \\
&+\sum_{c^*\in\olamlate\cup\OPT_P}
 \bigl(\cost_\OPT(C^*)+|C^*|\per(C^*)\bigr) \\
&+\sum_{c^*\in\OPT_R}
 \bigl(\cost_\OPT(C^*\setminus D_\kappa)
       +|C^*\setminus D_\kappa|\per(C^*)\bigr)
 +O(\eps)\sopt.
\end{align*}
\end{lemma}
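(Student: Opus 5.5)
We bound $\cost(\mu_\AA)=\sum_p d(p,\mu_\AA(p))$ by splitting the clients according to which rule ($\AA$1)--($\AA$4) defines $\mu_\AA(p)$.
The pieces are $D_\Gamma$; $D^{\early}_\Lambda\setminus D_\Gamma$; $D_R\setminus D_\Gamma$; $D_S$; and $D_P\cup D^{\late}_\Lambda$.
On $D_\Gamma$ we have $\mu_\AA(p)=\pi_\LS(p)$, so these clients contribute exactly $\cost_\LS(D_\Gamma)$.
Since $D_\Gamma\subseteq\lsd$, this is at most $\cost_\LS(\lsd)$ minus the $\LS$-cost of the remaining $\lsd$ clients.
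Hence we may spend the term $\cost_\LS(\lsd)$ on $D_\Gamma$, and charge the $\LS$-cost of any other $\lsd$ client against it as well.

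\textbf{$D_S$ and $D_P\cup D^{\late}_\Lambda$.}
The $D_S$ clients go to $\sigma(c^*)$, so \Cref{thm:structural}\ref{cond:struct-subs}\ref{cond:struct-substitution-strong} gives $(1+\eta)\cost_\OPT(D_S)+O(\eps)\sopt$ directly.
For $c^*\in\OPT_P$ we have $d(c^*,\LS)\le d(c^*,\psi(c^*))\le(1+\newkappa)\per(C^*)+O(\eps\err_p)$, which would give a factor $(1+\newkappa)$ instead of $1$.
To avoid this, I would instead use the averaging bound \eqref{eq:cleanup-nearest-center}: $d(c^*,\LS)\le\per(C^*)+\cost_\LS(\core_\kappa(C^*))/(\eps|C^*|)$.
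Replacing $\core_\kappa$ by $\core_0$ (or $\core_{6\eps}$) there, we get
\[d(c^*,\LS)\le\per(C^*)+\cost_\LS(\core_{6\eps}(C^*))/(\eps|C^*|).\]
Multiplying by $|C^*|$ and summing over $\OPT_P\cup\olamlate$, the error term is at most $\eps^{-1}\cdot\eps^5\sopt$ by \Cref{thm:struct2}\ref{cond:struct2-hidden-core-gen}.
With the triangle inequality this yields $\sum(\cost_\OPT(C^*)+|C^*|\per(C^*))+O(\eps)\sopt$, as claimed.

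\textbf{Early clients.}
For $c^*\in\olamearly$ and $p\in C^*\setminus D_\Gamma$, we need $d(c^*,\LS\setminus\Gamma(\OPT_R))\le\avg_\OPT(C^*)+\avg_\LS(C^*)$-type bounds, averaged appropriately so that the total is $2\cost_\OPT(D^{\early}_\Lambda)$ plus $\LS$-costs that can be charged.
The natural route averages over $q\in C^*$: $d(c^*,\LS\setminus\Gamma)\le d(c^*,q)+d(q,\pi_\LS(q))$ whenever $\pi_\LS(q)\notin\Gamma(\OPT_R)$.
For $q$ whose $\LS$-center is a satellite, we reroute through $\pi_{\LS\setminus\Gamma}$.
\Cref{thm:struct2}\ref{cond:struct2-redund-hidden}/\ref{cond:struct2-hidden-satellites} say satellite clusters are cheap, but not that the distance from a satellite to a non-satellite is small.
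So I would prefer averaging only over $q\in C^*\setminus D_\Gamma$ (where $\pi_\LS(q)$ is a non-satellite), if that set is most of $C^*$.
That gives $|C^*|\cdot d(c^*,\LS\setminus\Gamma)\lesssim\cost_\OPT(C^*)+\cost_\LS(C^*\setminus D_\Gamma)$ up to a $(1+O(\eps))$ factor.
The term $\cost_\LS(C^*\setminus D_\Gamma)$ lies inside $\lsd$ and is exactly what is left over from $\cost_\LS(\lsd)$ after $D_\Gamma$.
This is the main obstacle: justifying that $C^*\cap D_\Gamma$ is negligible (or handling it) for early clusters.
I would try to argue this via \Cref{thm:struct2}\ref{cond:struct2-hidden-satellites}: satellite clusters have tiny total $\LS$-cost, while early cores have non-trivial $\LS$-cost since they were sampled.
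Failing that, I would use the bound $2\cost_\OPT$ per client of $C^*\setminus D_\Gamma$ with the averaging restricted to that set.

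\textbf{$D_R\setminus D_\Gamma$.}
These clients go to $\sigma(c^*)$ with $d(c^*,\sigma(c^*))\le(1+\eps)\per(C^*)$, so each pays at most $d(p,c^*)+\per(C^*)+\eps\per(C^*)$.
For $p\notin D_\kappa$ this is the required $\cost_\OPT(C^*\setminus D_\kappa)+|C^*\setminus D_\kappa|\per(C^*)$.
For core clients $p\in D_\kappa\setminus D_\Gamma$ the budget has no $\OPT$-term, and I would show such clients are negligible.
Such a $p$ has $\pi_\LS(p)\notin\Gamma(c^*)$ even though $p$ is in the core, so either $d(p,\LS)>\eps\per(C^*)$ or the center is far.
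In both cases $d(p,\LS)\ge\Omega(\eps)\per(C^*)$ (the far case uses the triangle inequality and the satellite radius $(1+\newkappa+2\eps)\per(C^*)$).
Then $d(p,\sigma(c^*))\le O(\eps^{-1})d(p,\LS)$, and summing with \Cref{thm:struct2}\ref{cond:struct2-props-remaining}\ref{cond:struct2-hidden-core-R} gives $O(\eps^{-1}\cdot\eps^5)\sopt$.
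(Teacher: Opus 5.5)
Your proposal is correct and matches the paper's proof piece by piece: $D_S$ via \Cref{thm:structural}, the $\core_0$-averaging bound for $\OPT_P\cup\olamlate$, $\sigma(c^*)$ plus the hidden-core argument for $D_R\setminus D_\Gamma$, and $\cost_\LS(D_\Gamma)$ absorbed into $\cost_\LS(\lsd)$. The one exception is the early clients, where the ``main obstacle'' you flag is not real.

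For $c^*\in\olamearly$ and $p\in C^*\setminus D_\Gamma$, we have $p\in\lsd$, so $\pi_\LS(p)\notin\Gamma(\OPT_R)$. Thus $\pi_\LS(p)$ is itself a candidate in the minimization defining $\mu_\AA(p)$. Hence $d(c^*,\mu_\AA(p))\le d(c^*,\pi_\LS(p))\le d(p,c^*)+d(p,\LS)$, and so $d(p,\mu_\AA(p))\le 2d(p,c^*)+d(p,\LS)$ pointwise. This is exactly the paper's argument.

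Your fallback of averaging over $C^*\setminus D_\Gamma$ is the same bound, provided you multiply by $|C^*\setminus D_\Gamma|$ rather than $|C^*|$. That is the number of clients actually routed by ($\AA$1). With that count, no bound on $|C^*\cap D_\Gamma|$ is needed. Such a per-cluster count bound is also not available from the sampling-based hiddenness properties you suggest, which control costs, not cardinalities.
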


\begin{proof}
The clients in $D_S$ are assigned according to $\sigma$, so their
total cost is at most $(1+\eta)\cost_\OPT(D_S)+O(\eps)\sopt$ by
\Cref{thm:structural}\ref{cond:struct-subs}%
\ref{cond:struct-substitution-strong}.

Next, consider $c^*\in\olamlate\cup\OPT_P$. Its clients are assigned
to $\pi_\LS(c^*)$. For every $q\in\core_0(C^*)$, the triangle
inequality gives
\[
d(c^*,\LS)\leq d(c^*,q)+d(q,\LS)
\leq\per(C^*)+d(q,\LS).
\]
Averaging over $\core_0(C^*)$, which contains at least
$\eps|C^*|$ clients, yields
\[
d(c^*,\LS)\leq\per(C^*)+
\frac{\cost_\LS(\core_{6\eps}(C^*))}{\eps|C^*|}.
\]
Consequently,
\[
\cost_{\mu_\AA}(C^*)\leq
\cost_\OPT(C^*)+|C^*|\per(C^*)
+\eps^{-1}\cost_\LS(\core_{6\eps}(C^*)).
\]
Summing over $\olamlate\cup\OPT_P$, the last terms total at most
$\eps^4\sopt$ by
\Cref{thm:struct2}\ref{cond:struct2-hidden-core-gen}.

For $c^*\in\olamearly$ and $p\in C^*\setminus D_\Gamma$, the center
$\pi_\LS(p)$ belongs to $\LS\setminus\Gamma(\OPT_R)$. Since
$\mu_\AA(p)$ is the center in this set closest to $c^*$,
\[
d(p,\mu_\AA(p))\leq d(p,c^*)+d(c^*,\pi_\LS(p))
\leq 2d(p,c^*)+d(p,\LS).
\]
This totals to at most $2\cost_\OPT(D_\Lambda^{\early}) + \cost_{\LS}(D_0 \setminus D_\Gamma)$. The clients in $D_{\Gamma}$ retain their local-search
assignments. Their cost is then $\cost_\LS(D_{\Gamma})$.

Finally, for $c^*\in\OPT_R$ and $p\in C^*\setminus D_\Gamma$,
\Cref{thm:structural}\ref{cond:struct-subs}%
\ref{cond:struct-substitution-weak} gives
\[
d(p,\mu_\AA(p))=d(p,\sigma(c^*))
\leq d(p,c^*)+(1+\eps)\per(C^*).
\]
This gives a total contribution of 
\[
\sum_{c^*\in\OPT_R}
 \bigl(\cost_\OPT(C^*\setminus D_\Gamma)
       +|C^*\setminus D_\Gamma|\per(C^*)\bigr) + O(\eps) \sopt.
\]
We can replace $D_{\Gamma}$ here by $D_{\kappa}$, as needed for our result, only incurring an extra $O(\eps)\sopt$. To see this, consider
$p\in\core_\kappa(C^*)\setminus D_\Gamma$. We have
$d(p,\LS)>\eps\per(C^*)$, since otherwise $\pi_\LS(p)$ would be a
satellite of $c^*$ and hence $p\in D_\Gamma$. Also,
$d(p,c^*)+\per(C^*)\leq3\per(C^*)$. Thus the total error from
these missing core clients is bounded by
\begin{align*}
\sum_{c^*\in\OPT_R}
\sum_{p\in\core_\kappa(C^*)\setminus D_\Gamma}
\bigl(d(p,c^*)+\per(C^*)\bigr)
&\leq\frac3\eps\sum_{c^*\in\OPT_R}
\cost_\LS(\core_\kappa(C^*))\\
&\leq3\eps^4\sopt,
\end{align*}
by \Cref{thm:struct2}\ref{cond:struct2-props-remaining}%
\ref{cond:struct2-hidden-core-R}. Summing these bounds proves the lemma.
\end{proof}

Combining the bounds from \Cref{lem:muB bd,lem:muL cost bd} %
yields \Cref{lem:new-maps-cost}.


%% file: app-submod.tex
\section{Supermodularity of Distance Functions}
\label{sec:supermod}

In this section, we show some functions involving distances from a
point to a set of points are super/submodular. We start with a helper
claim about supermodularity of minima.

\begin{claim}[Supermodularity of Minimums]
  \label{sec:simple-fact}
  Suppose  $a_i \in \mathbb{R}_{\geq 0}$ for all
  $i \in U \cup \Gamma$. For any $X \sse U$, define
  $h(X) := \min_{i \in X \cup \Gamma} \{ a_i \}$. Then
  $h$ is monotone non-increasing and supermodular. 
\end{claim}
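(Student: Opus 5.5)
Monotonicity is immediate from the definition. If $X \subseteq X' \subseteq U$, then $X \cup \Gamma \subseteq X' \cup \Gamma$, so $h(X') \le h(X)$: a minimum over a larger set can only be smaller. (If $X \cup \Gamma = \emptyset$, we adopt the convention $h(\emptyset)=+\infty$, or assume $\Gamma \neq \emptyset$; nothing below depends on this.)

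For supermodularity we use the diminishing-differences form. We must show that for all $X \subseteq X' \subseteq U$ and $j \in U \setminus X'$,
\[
h(X) - h(X \cup \{j\}) \;\ge\; h(X') - h(X' \cup \{j\}).
\]
The key identity is that adding one element changes the minimum by
\[
h(Y) - h(Y \cup \{j\}) = \max\{0,\, h(Y) - a_j\} = (h(Y) - a_j)_+ .
\]
This holds because $h(Y \cup \{j\}) = \min\{h(Y), a_j\}$, and $t - \min\{t, a\} = (t-a)_+$.

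With this identity the inequality to prove reads $(h(X) - a_j)_+ \ge (h(X') - a_j)_+$. This follows from monotonicity, which gives $h(X) \ge h(X')$, together with the fact that $t \mapsto (t - a_j)_+$ is non-decreasing.

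No step here is a real obstacle; the only care needed is the empty-set convention. The equivalence between the diminishing-differences form and the lattice form $h(X) + h(X') \le h(X \cup X') + h(X \cap X')$ is standard. Alternatively, the lattice form can be checked directly by casing on which set attains the minimum of $h(X \cup X')$. If that minimizer lies in $X \cup \Gamma$, then $h(X \cup X') = h(X)$, and the claim reduces to $h(X') \le h(X \cap X')$, which is monotonicity; the case where the minimizer lies in $X' \cup \Gamma$ is symmetric.
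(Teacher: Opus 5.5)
Your proof is correct and follows essentially the same route as the paper: both check the diminishing-differences inequality for a single added element. Your identity $h(Y)-h(Y\cup\{j\})=(h(Y)-a_j)_+$, combined with monotonicity of $t\mapsto(t-a_j)_+$, compresses into one line the paper's two cases ($a_j\ge h(Y)$, where the right-hand difference is $0$, and $a_j<h(Y)\le h(X)$, where both differences are explicit).
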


\begin{proof}
  The function $h$ is monotone non-increasing since we take minima
  over larger sets. To prove supermodularity, consider $X \sse Y$ and
  $e \not \in Y$. Let $m_X$ and $m_Y$ be the minima over
  $X \cup \Gamma$ and $Y \cup \Gamma$ respectively; note that
  $m_X \geq m_Y$ by using monotonicity. Suppose $a_e \geq m_Y$, then
  we have
  \[ h(X) - h(X+e) \geq 0 = h(Y) - h(Y+e),\]
  with the inequality using monotonicity of $h$. Else we have $a_e < m_Y
  \leq m_X$, and hence
  \[ h(X) - h(X+e) = m_X - a_e \geq m_Y - a_e = h(Y) - h(Y+e), \]
  completing the proof of supermodularity.
\end{proof}

\begin{corollary}[Distance Supermodularity] \label{cor:dist-supmod}
  For any point $p$ and set of points $\Gamma$ (which may depend on
  $p$), the function $h(X) := d(p, X \cup \Gamma)$ is monotone
  non-increasing and supermodular.
\end{corollary}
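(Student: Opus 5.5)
Since $\Gamma$ is fixed, the idea is to reduce to \Cref{sec:simple-fact} by identifying $X \cup \Gamma$ with an index set and taking distances as the values $a_i$. Concretely, let $U$ be the ground set of candidate points from which $X$ is drawn (e.g., $\FF$ or $\FF \cup \Lambda$), and let $\Gamma$ be the fixed auxiliary set, which is allowed to depend on $p$. For every $i \in U \cup \Gamma$ we set $a_i := d(p,i) \geq 0$. With this choice,
\[
h(X) = d(p, X \cup \Gamma) = \min_{i \in X \cup \Gamma} d(p,i) = \min_{i \in X \cup \Gamma} a_i ,
\]
so $h$ is exactly the function treated in \Cref{sec:simple-fact}. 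That claim then yields directly that $h$ is monotone non-increasing and supermodular over subsets $X \subseteq U$.

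Two bookkeeping points need checking. First, $U$ and $\Gamma$ may overlap. This causes no trouble, because the claim's proof only uses minima over $X \cup \Gamma$, and a point that lies in both sets still has a single, well-defined value $a_i$. Second, $\Gamma$ could be empty. In that case $h(\emptyset) = d(p,\emptyset)$, which we interpret as $+\infty$. The argument in \Cref{sec:simple-fact} goes through unchanged with values in the extended reals $\RR_{\geq 0} \cup \{+\infty\}$: comparisons still make sense, and differences are taken as $+\infty$ where needed. Alternatively, this case can simply be ruled out, since every application in the paper has $\Gamma \supseteq \MM'$, which is nonempty.

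The dependence of $\Gamma$ on $p$ is harmless because $p$ is fixed throughout the statement. If there is any obstacle at all, it is this minor bookkeeping around empty minima. Beyond that the corollary is immediate. Downstream, summing the corollary over clients $p$ (with nonnegative coefficients) is what gives supermodularity of $g_1$.
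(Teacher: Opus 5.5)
Your proposal is correct and is exactly the paper's argument: write $d(p, X \cup \Gamma) = \min_{i \in X \cup \Gamma} d(p,i)$, take $a_i := d(p,i)$, and apply \Cref{sec:simple-fact}. Your remarks on overlapping $U$ and $\Gamma$ and on the empty-minimum convention are harmless extra bookkeeping that the paper leaves implicit.
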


\begin{proof}
  Note that $d(p,X \cup \Gamma) = \min_{i \in X \cup \Gamma} d(p,i)$
  and use \Cref{sec:simple-fact}.
\end{proof}

\begin{claim}[Submodularity of Threshold Function] \label{cla:submod-thresh}
Suppose we have $a_{ij} \in \mathbb{R}_{\geq 0}$ for all $i \in U \cup \Gamma$ and
  $j \in L$, and define
  $h_j(X) = \min_{i \in X \cup \Gamma} \{a_{ij}\}$. Then, for any $t \geq 0$, the function
  \begin{equation}
      N_X(t) := |\{j \in L \mid h_j(X) \leq t\}|
  \end{equation}
  is monotone non-decreasing and submodular. 
\end{claim}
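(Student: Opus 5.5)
We argue directly from the definition, reducing to a level-set description of the threshold count. For each $j \in L$, let $T_j(t) := \{i \in U \cup \Gamma \mid a_{ij} \leq t\}$ be the set of indices that are ``good enough'' for $j$ at threshold $t$. Since $h_j(X)$ is a minimum over $X \cup \Gamma$, we have $h_j(X) \leq t$ if and only if $(X \cup \Gamma) \cap T_j(t) \neq \emptyset$. Hence
\[
N_X(t) = \bigl|\{ j \in L \mid (X \cup \Gamma) \cap T_j(t) \neq \emptyset \}\bigr|,
\]
which is a coverage function in $X$. Each $j$ is covered by $X$ if $X$ hits $T_j(t)$, and the elements $j$ already covered by $\Gamma$ alone are covered regardless of $X$.

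Monotonicity is immediate: if $X \subseteq Y$, then any $j$ whose set $T_j(t)$ is hit by $X \cup \Gamma$ is also hit by $Y \cup \Gamma$. Equivalently, $h_j(Y) \leq h_j(X)$ by \Cref{sec:simple-fact}, so $h_j(X) \leq t$ implies $h_j(Y) \leq t$.

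For submodularity, fix $X \subseteq Y \subseteq U$ and $e \in U \setminus Y$. The marginal gain $N_{X+e}(t) - N_X(t)$ counts exactly those $j$ with $e \in T_j(t)$ and $(X \cup \Gamma) \cap T_j(t) = \emptyset$, that is, $a_{ej} \leq t < h_j(X)$. The analogous set for $Y$ is $\{j : a_{ej} \leq t < h_j(Y)\}$. Because $h_j(Y) \leq h_j(X)$, the condition $t < h_j(Y)$ implies $t < h_j(X)$. So the $Y$-set is contained in the $X$-set, and the marginal gain at $X$ is at least the marginal gain at $Y$, which is submodularity.

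There is no real obstacle here. The only point needing care is that the equivalence ``$h_j(X) \leq t$ iff some $i \in X \cup \Gamma$ has $a_{ij} \leq t$'' uses a minimum over a finite nonempty set. If $X \cup \Gamma$ is empty we adopt the convention $h_j = +\infty$, so that $j$ is uncovered, and the argument goes through unchanged.
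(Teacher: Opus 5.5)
Your proof is correct and is essentially the paper's argument: the paper also decomposes $N_X(t)$ into per-client indicators $\mathbf{1}(h_j(X)\le t)$ and shows that a gain at $Y$ forces $a_{ej}\le t$ while $h_j(X)\ge h_j(Y)>t$, giving a gain at $X$, which is exactly your containment $\{j : a_{ej}\le t< h_j(Y)\}\subseteq\{j : a_{ej}\le t<h_j(X)\}$. Your coverage-function phrasing repackages this same per-$j$ check rather than giving a different proof.
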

\begin{proof}
To prove submodularity, first note that, for any $X$,
\begin{equation}
    N_X(t) = \sum_{j \in L} \mathbf{1}(h_j(X) \leq t). 
\end{equation}
Hence, it suffices to show that $\mathbf{1}(h_j(X) \leq t)$ is submodular and monotone non-decreasing for each $j$. For simplicity write $h$ to denote $h_j$  and $\mathbf{1}(X)$ to denote $\mathbf{1}(h(X) \leq t)$. Since $h$ is monotone non-increasing, $\mathbf{1}(X)$ is monotone non-decreasing. Now fix $X \subseteq Y$ and $e \not\in Y$. We want to show that
\begin{equation*}
\mathbf{1}(X + e) - \mathbf{1}(X) \geq \mathbf{1}(Y + e) - \mathbf{1}(Y).
\end{equation*}
The only way for this inequality to be violated is when $\mathbf{1}(Y) = 0$ and $\mathbf{1}(Y + e) = 1$.  The former condition implies $\mathbf{1}(X) = 0$ since $\mathbf{1}(\cdot)$ is monotone non-decreasing. Moreover, opening up the definition of $h$, $\mathbf{1}(Y) = 0$ and $\mathbf{1}(Y + e) = 1$ combined imply that $a_{ej} \leq t$. Hence, $\mathbf{1}(X + e) = 1$, completing the proof.
\end{proof}

\begin{claim}[Minimum of a Submodular Function and a Constant] \label{cla:const-min-submod}
Let $g$ be a monotone non-decreasing and submodular function, and let $r \in \RR$. Then, $f = \min\{g, r\}$ is also monotone non-decreasing and submodular.
\end{claim}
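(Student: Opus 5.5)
The plan is to show monotonicity and submodularity separately, each by a short case analysis on where the cap $r$ is attained. Throughout, write $f(X) = \min\{g(X), r\}$.

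\emph{Monotonicity.} Take $X \subseteq Y$. Since $g$ is monotone non-decreasing, $g(X) \leq g(Y)$. The map $t \mapsto \min\{t, r\}$ is non-decreasing, so $f(X) \leq f(Y)$.

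\emph{Submodularity.} Fix $X \subseteq Y$ and $e \notin Y$. The goal is
\[ f(X+e) - f(X) \geq f(Y+e) - f(Y). \]
Monotonicity of $g$ gives $g(X) \leq g(Y) \leq g(Y+e)$ and $g(X) \leq g(X+e)$. I would split into three cases.
\begin{enumerate}
\item \emph{Case $g(Y) \geq r$.} Then $g(Y+e) \geq r$ as well, so $f(Y) = f(Y+e) = r$ and the right-hand side is $0$. The left-hand side is non-negative by monotonicity of $f$, so the inequality holds.
\item \emph{Case $g(Y) < r$.} Then $g(X) \leq g(Y) < r$, so $f(X) = g(X)$ and $f(Y) = g(Y)$. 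The right-hand side is
\[ \min\{g(Y+e), r\} - g(Y) = \min\{g(Y+e) - g(Y),\; r - g(Y)\}. \]
The left-hand side is
\[ \min\{g(X+e) - g(X),\; r - g(X)\}. \]
Compare the two arguments of the minimum term by term.
\begin{itemize}
\item Submodularity of $g$ gives $g(X+e) - g(X) \geq g(Y+e) - g(Y)$.
\item Since $g(X) \leq g(Y)$, we have $r - g(X) \geq r - g(Y)$.
\end{itemize}
Each argument on the left dominates the corresponding argument on the right, so the left minimum is at least the right minimum.
\end{enumerate}

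There is no real obstacle. The only point needing care is to write the marginal gain of the capped function as a minimum of two quantities, and that step is valid only when $g(Y) < r$. This is exactly why the case split is made first, so that $f(X) = g(X)$ and $f(Y) = g(Y)$ hold before the rewriting.
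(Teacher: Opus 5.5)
Your proof is correct and uses the same case analysis as the paper, namely whether the cap $r$ is reached. The only difference is cosmetic: the paper splits your second case further according to whether $g(X+e) \geq r$, whereas you handle both subcases at once by writing each marginal gain as $\min\{g(\cdot+e)-g(\cdot),\, r-g(\cdot)\}$ and comparing the two arguments termwise.
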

\begin{proof}
Since $g$ is monotone non-decreasing, evidently $f$ is as well. Now, for submodularity, consider $X \sse Y$ and
  $e \not \in Y$. Suppose first that $g(Y) \geq r$. Then,  since $f$ is monotone non-decreasing,
  \[
  f(Y + e) - f(Y) = 0 \leq f(X + e) - f(X).
  \]
  Otherwise, suppose that $g(Y) < r$. If $g(X + e) \geq r$, then, again since $g$ is monotone non-decreasing
  \[
  f(Y + e) - f(Y) = r - f(Y) \leq r - f(X) = f(X + e) - f(X).
  \]
  Finally, suppose that $g(Y) < r$ and $g(X + e) < r$. Then, by submodularity 
\[
  f(Y + e) - f(Y) \leq g(Y + e) - g(Y) \leq  g(X + e) - g(X) = f(X +
  e) - f(X). \qedhere
  \]
\end{proof}

\begin{claim}[Minimums of Minimums] \label{cla:minofmin}
  Suppose we have $a_{ij} \in \mathbb{R}_{\geq 0}$ for all $i \in U \cup \Gamma$ and
  $j \in L$, and
  $h_j(X) = \min_{i \in X \cup \Gamma} \{a_{ij}\}$. Let $r \in \mathbb{Z}^+$ with $r \leq |L|$. Define
  \[ g(X) = \min_{R \sse L : |R| = r} \Big\{ \sum_{j \in R} h_j(X)
    \Big\}. \]
  Then $g$ is monotone non-increasing, and supermodular. 
\end{claim}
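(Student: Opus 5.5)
The plan is to reduce \Cref{cla:minofmin} to the two earlier claims, \Cref{cla:submod-thresh} and \Cref{cla:const-min-submod}, through a layer-cake representation of the sum of the $r$ smallest values. Monotonicity is immediate. Each $h_j$ is non-increasing in $X$ by \Cref{sec:simple-fact}. For any fixed $R$, the sum $\sum_{j\in R}h_j(X)$ is therefore non-increasing, and a minimum over a fixed family of non-increasing functions is again non-increasing. The real content is supermodularity. A minimum of supermodular functions is not supermodular in general, so the special structure of "the $r$ smallest entries" has to be used.

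\textbf{Step 1: rewrite $g$ via thresholds.} The minimizing $R$ picks the $r$ smallest values among $\{h_j(X)\}_{j\in L}$. Let $M := \max_{i,j} a_{ij}$, which bounds every $h_j$, and let $N_X(t) := |\{j\in L : h_j(X)\le t\}|$. For a single value $v\in[0,M]$ we have $v=\int_0^M \mathbf{1}(v>t)\,dt$. Consequently, the sum of the $r$ smallest values equals
\[
g(X)=\int_0^M \bigl(r-\min\{N_X(t),r\}\bigr)\,dt .
\]
To justify this, observe that for each $t$ the number of selected (i.e.\ $r$ smallest) values exceeding $t$ is exactly $r-\min\{N_X(t),r\}$. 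If at least $r$ values are $\le t$, then none of the selected ones exceeds $t$. Otherwise all $N_X(t)$ values $\le t$ are selected, and the remaining $r-N_X(t)$ selected values exceed $t$. Ties are harmless because only the sorted multiset of values matters.

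\textbf{Step 2: apply the earlier claims pointwise in $t$.} By \Cref{cla:submod-thresh}, for every fixed $t$ the function $X\mapsto N_X(t)$ is monotone non-decreasing and submodular. By \Cref{cla:const-min-submod}, so is $X\mapsto\min\{N_X(t),r\}$. Hence $X\mapsto r-\min\{N_X(t),r\}$ is monotone non-increasing and supermodular.

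\textbf{Step 3: integrate.} Supermodularity means that for $X\subseteq Y$ and $e\notin Y$,
\[
\phi(X)-\phi(X+e)\ge\phi(Y)-\phi(Y+e).
\]
This is a linear inequality in $\phi$, so it is preserved under nonnegative integration over $t\in[0,M]$. The same holds for monotonicity. Therefore $g$ is monotone non-increasing and supermodular.

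The only step needing care is the identity in Step 1, specifically the counting argument for the selected values exceeding $t$, including tied values. Since the integrand is a step function with finitely many breakpoints, the integral can be replaced by a finite sum over the distinct values of $a_{ij}$ if one prefers to avoid measure-theoretic language.
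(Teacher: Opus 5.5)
Your proposal is correct and matches the paper's proof: both use the layer-cake identity $g(X)=\int_0^\infty \bigl(r-\min\{r,N_X(t)\}\bigr)\,dt$, apply \Cref{cla:submod-thresh} and \Cref{cla:const-min-submod} for each fixed $t$, and integrate to preserve supermodularity. You also prove the identity explicitly, including the handling of ties, and argue monotonicity directly; the paper leaves both of these steps implicit.
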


\begin{proof}
Define helper function $N_X(t) := |\{j \in L \mid h_j(X) \leq t\}|$. Then, note that we have 
\begin{equation} \label{eq:integral}
    g(X) = \int_{0}^{\infty} (r - \min\{r, N_X(t)\}) dt.  
\end{equation}
Note that $N_X(t)$ is monotone non-decreasing and submodular for any fixed $t$, by~\Cref{cla:submod-thresh}. Hence, $r - \min\{r, N_X(t)\}$ is monotone non-increasing. The integral of a monotone non-increasing function is monotone non-increasing, so $g$ is monotone non-increasing. 

Since $N_X(t)$ is monotone submodular, \Cref{cla:const-min-submod} implies that $\min\{r,N_X(t)\}$ is monotone submodular. Hence, $r - \min\{r, N_X(t)\}$ is supermodular for each $t$, and integrating preserves supermodularity.
\end{proof}

\begin{claim} \label{cla:lifting-sub}
Let $U$ and $\overline{U}$ be sets and $\proj: \overline{U} \to U$. For $f : 2^U \to \mathbb{R}$, let $f_\uparrow : 2^{\overline{U}} \to \mathbb{R}$ such that, for $\overline{X} \subseteq \overline{U}$, $f_\uparrow(\overline{X}) = f(\proj(\overline{X}))$. Then, if $f$ is monotone non-decreasing and submodular, so is $f_\uparrow$. 
\end{claim}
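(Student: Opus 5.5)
Monotonicity is immediate from the definition. If $\overline{X} \subseteq \overline{Y}$, then $\proj(\overline{X}) \subseteq \proj(\overline{Y})$, so $f_\uparrow(\overline{X}) \leq f_\uparrow(\overline{Y})$ because $f$ is monotone non-decreasing.

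For submodularity, I would check the diminishing-returns inequality directly. Fix $\overline{X} \subseteq \overline{Y} \subseteq \overline{U}$ and $\bar e \in \overline{U} \setminus \overline{Y}$, and set $X := \proj(\overline{X})$, $Y := \proj(\overline{Y})$ and $e := \proj(\bar e)$. Then $X \subseteq Y$, $\proj(\overline{X} + \bar e) = X \cup \{e\}$ and $\proj(\overline{Y} + \bar e) = Y \cup \{e\}$. The inequality to prove is
\[
f(X \cup \{e\}) - f(X) \geq f(Y \cup \{e\}) - f(Y).
\]
The argument splits into three cases.

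\begin{enumerate}
\item If $e \in Y$ (possible because $\bar e \notin \overline{Y}$ does not prevent $\proj(\bar e) \in Y$), the right-hand side is $0$. The left-hand side is $\geq 0$ by monotonicity of $f$, so the inequality holds.
\item If $e \notin Y$, then also $e \notin X$. The inequality is exactly the standard marginal-gain form of submodularity of $f$ for $X \subseteq Y$ and $e \notin Y$.
\item The case $e \in X \subseteq Y$ makes both sides $0$, so it is already covered by the first case.
\end{enumerate}

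The only subtle point is the first case, where several lifted elements project to the same element. This is exactly where monotonicity of $f$ is needed in addition to submodularity. Without monotonicity the left-hand side could be negative there and the argument would fail. Non-negativity, used in \Cref{lem:lifting-app}, also transfers trivially, since $f_\uparrow$ takes the same values as $f$.
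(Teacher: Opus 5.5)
Your proof is correct and follows the same route as the paper: monotonicity comes from $\proj(\overline{X}) \subseteq \proj(\overline{Y})$, and submodularity comes from rewriting the lifted marginal gains as marginal gains of $f$ on the projected sets. Your separate case $\proj(\bar e) \in \proj(\overline{Y})$, where the inequality follows from monotonicity rather than submodularity of $f$, makes explicit a step that the paper's one-line chain of (in)equalities covers only by citing submodularity.
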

\begin{proof}
Fix $\overline{X} \subseteq \overline{Y} \subseteq \overline{U}$. Note that this implies that $\proj(\overline{X}) \subseteq \proj(\overline{Y}) \subseteq U$. Hence, $f_\uparrow(\overline{X}) = f(\proj(\overline{X})) \leq f(\proj(\overline{Y})) = f_\uparrow(\overline{Y})$, using that $f$ is monotone non-decreasing.

For submodularity, let $e \in \overline{U}$. Using submodularity of $f$, we have 
\begin{align*}
    f_\uparrow(\overline{Y} + e) - f_\uparrow(\overline{Y}) &= f(\proj(\overline{Y} + e)) - f(\proj(\overline{Y})) \\
    &= f(\proj(\overline{Y})  + \proj(e)) - f(\proj(\overline{Y})) \\
    &\leq f(\proj(\overline{X}) + \proj(e)) - f(\proj(\overline{X})) \\
    &= f(\proj(\overline{X} +e)) - f(\proj(\overline{X})) \\
    &= f_\uparrow(\overline{X} + e) - f_\uparrow(\overline{X}). \qedhere
\end{align*}
\end{proof}


%% file: cost-bd-lemma.tex
\section[Proof of the win-win lemma]{Proof of \Cref{lem:winwin-numerical}}
\label{sec:winwin-technical-proof}
In this section we prove \Cref{lem:winwin-numerical}. We recall the statement of the lemma.
\winwin*
\begin{proof}[Proof of \Cref{lem:winwin-numerical}]
Fix an optimal cluster $C^*$, and abbreviate $x=x(C^*)$,
$y_\kappa =y_\kappa(C^*)$, and $\kappa=\newkappa$; recall that
\[
x :=\frac{\per(C^*)}{\avg_\OPT(C^*)},
\qquad
y_\kappa :=
\frac{\cost_\OPT(\core_\kappa(C^*))
      +|\core_\kappa(C^*)|\per(C^*)}{\cost_\OPT(C^*)}.
\]
In this notation, (\ref{eq:radius-diameter-condition}) is the inequality $(2+\kappa)x
\geq(2+6\eps)$ and \Cref{tab:winwin-bounds} is as follows:
\begin{center}
\(
\begin{array}{c|c|c}
 &\LS&\TT\\ \hline
\OPT_S&1+\eta&1+\eta\\[4pt]
\olamearly&3&1+2x/e\\[4pt]
\olamlate\cup\OPT_P&1+x&1+2(1+\kappa)x/e\\[4pt]
\OPT_R&2+x-y_\kappa&1+x
\end{array}
\)
\end{center}

Let $t$ be $x$ truncated to the interval $[\eta,1]$:
\[
t:=\max\{\eta,\min\{1,x\}\}.
\]
By \eqref{eq:per-bd}, $x\leq1+2\eps$, so $x\leq t+2\eps$.

Increasing $\kappa$ enlarges the core and improves the residual
bound, but increases the late-cluster bound. We set $\kappa$ to balance these bounds. Specifically, using the convenient bound
$(1-a)2/e<13/20$, we choose $\kappa$ to make the latter bound
tight at $t$:
\begin{equation}\label{eq:winwin-kappa-choice}
at+\frac{13}{20}(1+\kappa)t=\eta,
\qquad
1+\kappa=\frac{89-12t}{65t}.
\end{equation}
Since $\eta\leq t\leq1$, this gives
\[
\kappa\in\left[\frac{12}{65},\frac{23}{65}\right]
\subseteq\left[\frac1\eta-1,\frac12\right].
\]

\mypara{The technical condition.}
Suppose $x\geq\eta/(1+\eps)$. The definition of $t$ then gives
$t\leq(1+\eps)x$, and hence
\[
(2+\kappa)x
\geq\frac{(2+\kappa)t}{1+\eps}
=\frac{89+53t}{65(1+\eps)}
\geq\frac{153\eta}{65(1+\eps)}
>2+6\eps,
\]
where the last inequality uses $\eps<1/200$.
This proves \eqref{eq:radius-diameter-condition}.

\mypara{The first three rows.}
The weighted coefficient in the $\OPT_S$ row is exactly
$1+\eta$. For the early and late rows, respectively, the
weighted coefficients minus $1$ satisfy
\begin{align*}
2a+(1-a)\frac{2x}{e}
&\leq 2a+\frac{13}{20}t+O(\eps)
 \leq\eta+O(\eps),\\
ax+(1-a)\frac{2(1+\kappa)x}{e}
&\leq at+\frac{13}{20}(1+\kappa)t+O(\eps)
 =\eta+O(\eps).
\end{align*}
Here we used $x\leq t+2\eps$, $2a+13/20=\eta$, and
\eqref{eq:winwin-kappa-choice}.

\mypara{The residual row.}
Suppose $c^*\in\OPT_R$. By
\Cref{thm:struct2}\ref{cond:struct2-props-remaining}%
\ref{cond:struct2-per-approx-avg},
$x\geq\eta/(1+\eps)$. Together with the bounds above, this gives
\[
x=t+O(\eps).
\]
The residual coefficient is $1+a+x-ay_\kappa$. To control it, we
need a lower bound on the core saving $y_\kappa$. The key observation
is that too many clients outside the core would make the
average cost too large relative to the percentile.

Let $r:=|\core_\kappa(C^*)|/|C^*|$. All but at most
$\eps|C^*|$ clients have distance at least $\per(C^*)$ from
$c^*$, and clients outside the core have distance greater than
$(1+\kappa)\per(C^*)$. Dividing the resulting cost bounds by
$\cost_\OPT(C^*)$ gives
\[
1\geq\bigl((r-\eps)+(1+\kappa)(1-r)\bigr)x,
\qquad
y_\kappa\geq(r-\eps)x+rx=(2r-\eps)x.
\]
The first inequality bounds the core contribution:
\[
rx\geq\frac{(1+\kappa)x-1-\eps x}{\kappa}.
\]
Substituting into the second yields
\[
y_\kappa\geq
\frac{2((1+\kappa)x-1)}{\kappa}
-\eps x\left(1+\frac2\kappa\right)
=
\frac{2((1+\kappa)x-1)}{\kappa}-O(\eps),
\]
using $x\leq1+2\eps$ and $\kappa\geq12/65$.
Since $x=t+O(\eps)$, we conclude that
\[
1+a+x-ay_\kappa
\leq
1+a+t-\frac{2a}{\kappa}\bigl((1+\kappa)t-1\bigr)
+O(\eps).
\]

It therefore remains to show
\[
a+t-\frac{2a}{\kappa}\bigl((1+\kappa)t-1\bigr)\leq\eta.
\]
Our choice of $\kappa$ gives $t=89/(77+65\kappa)$.
Substituting this expression, multiplying by the positive
denominator, and collecting terms reduces the desired
inequality to
\[
\left(\frac12+\frac1{2000}\right)\kappa
+\frac{2a^2}{\kappa}
\geq2a-\frac1{2000}.
\]
This follows from the arithmetic--geometric mean inequality
\[
\frac{\kappa}{2}+\frac{2a^2}{\kappa}\geq2a.
\qedhere
\]
\end{proof}